\pgfplotsset{compat=1.16}
 \newcommand{\authnote}[3]{{\color{#3} {\bf  #1:} #2}}
 \newcommand{\authnote}[3]{}
\newcommand{\pnote}[1]{\authnote{ András}{#1}{cyan}}
\newcommand{\ci}{\mathbf{i}}
\newcommand{\vecone}{\vec{1}}
\renewcommand{\pmb}[1]{{#1}}
  \def\equationautorefname~#1\null{Equation~(#1)\null}
\newtheorem{theorem}{Theorem}
\newcommand{\addalias}[3][theorem]{
  \newaliascnt{#2}{#1}
  \newtheorem{#2}[#2]{#3}
  \aliascntresetthe{#2}
  \expandafter\def\csname #2autorefname\endcsname{#3}
}
\newcommand{\refif}[1]{%
  \ifthenelse { \equal {#1} {} }%
              {}%
              {, \ref{#1}}}
\def\reptitletemp{}
\newtheorem*{reptheorembase}{\reptitletemp}
\newenvironment{reptheorem}[3]{\def\reptitletemp{Theorem \ref{#1}\refif{#2}\refif{#3}}\begin{reptheorembase}}{\end{reptheorembase}}%
    \def\01{\{0,1\}}
    \newcommand{\ceil}[1]{\left\lceil{#1}\right\rceil}
    \newcommand{\nrm}[1]{\left\| #1 \right\|}
    \newcommand{\trm}[1]{\left({#1}\right)}
    \newcommand{\Tr}{\mbox{\rm Tr}}
	\newcommand{\tr}[1]{\Tr\left(#1\right)}
	\newcommand{\ipc}[2]{\left\langle #1 , #2 \right\rangle}
	\newcommand{\bigO}[1]{\mathcal{O}\left( #1 \right)}
	\newcommand{\bigOt}[1]{\widetilde{\mathcal{O}}\left( #1 \right)}	
    \newcommand{\diag}{\mbox{\rm diag}}
    \newcommand{\polylog}{\mbox{\rm polylog}}
    \newcommand{\amp}{\ensuremath{\textrm{amp}}}
    \newcommand{\Prob}[1]{\mathbb{P}\left(#1\right)}
    \renewcommand{\P}{\ensuremath{\mathbb{P}}}
    \newcommand{\R}{\ensuremath{\mathbb{R}}}
    \newcommand{\C}{\ensuremath{\mathbb{C}}}
    \newcommand{\Z}{\ensuremath{\mathbb{Z}}}
    \newcommand{\N}{\ensuremath{\mathbb{N}}}
    \newcommand{\eps}{\varepsilon}
	\DeclareMathOperator{\sinc}{sinc}    
	\newcommand{\vertiii}[1]{{\left\vert\kern-0.25ex\left\vert\kern-0.25ex\left\vert #1 
		\right\vert\kern-0.25ex\right\vert\kern-0.25ex\right\vert}}
    \newcommand{\ket}[1]{|#1\rangle}
    \newcommand{\bra}[1]{\langle#1|}
    \DeclarePairedDelimiterX\ketbra[2]{| }{|}{#1 \delimsize\rangle\!\delimsize\langle #2}
\DeclarePairedDelimiterX\braket[2]{\langle}{\rangle}{#1 \delimsize\vert #2}
\DeclarePairedDelimiterX\dotp[2]{\langle}{\rangle}{#1, #2}
\title{Quantum tomography using state-preparation unitaries}
\author[1,2]{Joran van Apeldoorn}
\author[1]{Arjan Cornelissen} 
\author[3]{Andr\'as Gily\'en}
\author[4]{Giacomo Nannicini}
\affil[1]{QuSoft, UvA, Amsterdam, the Netherlands. }
\affil[2]{IViR, UvA, Amsterdam, the Netherlands. }
\affil[3]{Alfr\'ed R\'enyi Institute of Mathematics, Budapest, Hungary.}
\affil[4]{IBM Quantum, IBM T.J.~Watson research center, Yorktown Heights, NY, USA.}
\begin{document}

\maketitle

\begin{abstract}
We describe algorithms to obtain an approximate classical description of a $d$-dimensional quantum state when given access to a unitary (and its inverse) that prepares it. For pure states we characterize the query complexity for $\ell_q$-norm error up to logarithmic factors. As a special case, we show that it takes $\widetilde{\Theta}(d/\varepsilon)$ applications of the unitaries to obtain an $\varepsilon$-$\ell_2$-approximation of the state.

For mixed states we consider a similar model, where the unitary prepares a purification of the state. In this model we give an efficient algorithm for obtaining Schatten $q$-norm estimates of a rank-$r$ mixed state, giving query upper bounds that are close to optimal. In particular, we show that a trace-norm ($q=1$) estimate can be obtained with $\widetilde{\mathcal{O}}(dr/\varepsilon)$ queries. This improves (assuming our stronger input model) the $\varepsilon$-dependence over the algorithm of Haah et al. (2007) that uses a joint measurement on $\widetilde{\mathcal{O}}(dr/\varepsilon^2)$ copies of the state.

To our knowledge, the most sample efficient results for pure state tomography come from setting the rank to $1$ in generic mixed state tomography algorithms, which can be computationally demanding. We describe sample-optimal algorithms for pure states that are easy and fast to implement.

Along the way we show that an $\ell_\infty$-norm estimate of a normalized vector induces a (slightly worse) $\ell_q$-norm estimate for that vector, without losing a dimension-dependent factor in the precision. We also develop an unbiased and symmetric version of phase estimation, where the probability distribution of the estimate is centered around the true value. Finally, we give an efficient method for estimating multiple expectation values, improving over the recent result by Huggins et al.~(2021) when the measurement operators do not fully overlap. 
More specifically, we show that for $E_1,\dots,E_m$ normalized measurement operators, all expectation values $\Tr(E_j\rho)$ can be efficiently learned up to error $\varepsilon$ with $\widetilde{\mathcal{O}}(\sqrt{\|\sum_j E_j^2\|}/\varepsilon)$ applications of a state-preparation unitary for a purification of $\rho$. 	
\end{abstract}

 \clearpage

 \section{Introduction}

 Quantum state tomography is the process of obtaining a classical description of a quantum state. Tomography is a fundamental tool in quantum information science, where it finds numerous applications. In the context of quantum algorithms, pure quantum state tomography can be used to retrieve a classical description of the final state of the algorithm, e.g., the solution of a linear system \cite{harrow2009QLinSysSolver} or the evolution of a quantum system \cite{lloyd1996UnivQSim}. 
 The more general mixed quantum state tomography finds applications in quantum information theory, and in the simulation of quantum thermodynamic systems. In some settings we are not interested in the full state, but only in its expectation value under a certain set of (possibly overlapping) measurements. This was first introduced by Aaronson~\cite{aaronson2017shaddow} under the name shadow tomography, and has since received a lot of attention in the literature, e.g.,~\cite{Huang2020shaddow,acharya2021informationally,hu2022logical}.

 Most of the existing work on this topic has focused on the sample complexity of these problems: how many copies of the state are needed to perform tomography? In this paper we consider the problem under a different input model: we assume access to a unitary (and its inverse) that prepares the state. This model is very natural when the state is the output of a quantum algorithm, but it has received little attention so far. The main improvements in this model come from the ability to use techniques related to amplitude estimation to reduce the dependence on the error parameter, but attaining such quadratic improvements requires the development of several new tools, and the analysis does not follow from a simple application of amplitude estimation.

 Throughout the paper we consider either a $d$-dimensional pure state $\ket{\psi} = \sum_{j=0}^{d-1} \alpha_j \ket{j}$ or a rank-$r$ mixed state $\rho\in\C^{d\times d}$. We are interested in learning the state up to error $\eps$ in some $\ell_q$-norm or Schatten $q$-norm, often with some probability of failure $\leq \delta$. In the introduction we often use $\bigOt{\cdots}$ notation to hide polylogarithmic factors in the parameters $d$, $r$, $1/\eps$, and $1/\Delta$, even if these parameters do not appear polynomially in the $\bigOt{\cdots}$. For more precise complexity statements we refer to the relevant theorems in the main text.

\paragraph{Related work.} Classical algorithms that estimate probabilities generally depend quadratically on $1/\eps$, as that many samples are required to bring down the variance. In certain settings quantum algorithms can improve on this classical complexity. Brassard et al.~\cite{brassard2002AmpAndEst} introduced the amplitude estimation algorithm, and showed that it can estimate an amplitude (or probability) with a $1/\eps$ dependence, if a state-preparation unitary and its inverse are available.

Van~Apeldoorn~\cite{apeldoorn2021QProbOraclesMulitDimAmpEst} generalized this for finding an $\ell_\infty$-norm estimate of a discrete probability distribution. In the model of van Apeldoorn, access to the distribution is given by a state-preparation oracle (and its inverse), such that the probability distribution corresponds to computational-basis measurements of the prepared state. Van Apeldoorn~\cite{apeldoorn2021QProbOraclesMulitDimAmpEst} showed that $\tilde{O}(1/\eps)$ applications of the input unitary are sufficient to compute the desired $\ell_\infty$-norm estimate. In the same paper the question was posed whether you can also speed-up the estimation of multiple expectation values over the same distribution. A lower bound of $\Omega(\min\{\sqrt{m}/\eps,1/\eps^2\})$ was given when $m$ expectation values need to be estimated op to precision $\eps$. It was later shown by Huggins et al.~\cite{huggins2021QAlgMultipleExpectationValues} that $\bigOt{\sqrt{m}/\eps}$ queries are sufficient even when estimating expectation values of observables on a pure quantum state.

Kerenidis and Prakash~\cite{kerenidis2018QIntPoint} gave a sampling-based approach for estimating the real-valued amplitudes resulting from a quantum linear system solver, including their sign, taking $\tilde{O}(d/\eps^2)$ applications of a (controlled) state-preparation unitary to compute an $\ell_2$-norm estimate. We subsume their approach, and show that besides estimating real-valued amplitudes, one can even estimate complex amplitudes with the same sample complexity.

Besides these few results for pure quantum state tomography, the most frequently studied setting is that of mixed-state tomography. In this setting we want to determine how many copies are necessary to obtain a classical description with a given maximum error $\eps$ in trace-norm; it is often assumed that some upper bound $r$ on the rank of the state is known (if the state is pure, $r = 1$). An algorithm of Gross et al.~\cite{gross2010}, that applies measurements on one copy of the state at once, achieves $O(dr^2/\eps^2)$ sample complexity. Haah et al.~\cite{haah2017OptTomography} show that the bound is optimal when the measurements are on a single copy at a time, and Chen et al.~\cite{chen2022tight} complete our understanding of this setting by showing that the bound cannot be improved even with adaptive measurements schemes, as long as we require single-copy measurements. A better sample complexity can be achieved if we allow joint measurements on multiple copies of the state: with this more powerful access model, the best algorithm for tomography is also due to Haah et al.~\cite{haah2017OptTomography}, and it requires $\tilde{O}(dr/\eps^2)$ copies of the quantum state; see also \cite{odonnell2016EfficientQuantumTomography}. Haah et al.~also show matching lower bounds up to polylogarithmic factors (these polylogarithmic factors are eliminated by Yuen~\cite{yuen2022improved}), therefore their algorithm is essentially optimal. The main drawbacks of their approach are that it not only requires joint measurements on many states at once, but it also has time complexity exponential in $d$. 

\paragraph{Our results.}
We then start our discussion on quantum state tomography for pure states. Our analysis can be divided into two settings: the sampling-based setting, in which copies of the state are available, and the state-preparing unitary setting, in which we require controlled access to a state-preparation unitary and its inverse. 

To give optimal algorithms for other $\ell_q$-norms as well, we prove a norm-conversion lemma relating estimates in different $\ell_q$-norms. The standard approach for norm conversion is to decrease the allowed error $\eps$ by a factor $d^{1/q}$, but this introduces a dependence on the dimension that can be suboptimal. We show that a dimension-independent norm conversion is possible for normalized vectors, and therefore for quantum states. We also relate estimates of the amplitudes to estimates of the corresponding probability distribution.

To our suprise little seems to be known about pure quantum state tomgography using samples. Some results can be obtained by setting $r=1$ for the mixed-state case, but these methods are highly impractical from a computational standpoint, or require the implementation of random measurements. 
We cover three different models with our sampling based pure-state tomography results, and for each give an easy to implement tomography algorithm:
\begin{enumerate}
    \item \emph{Classical samples}. In this model we are given classical samples from computational-basis measurements. As we cannot recover information about the phases, we aim to produce an estimate of $|\alpha|$, the vector of absolute values of the amplitudes.
    \item \emph{Copies of the state}. In this model we are given copies of the quantum state, and aim to give an estimate of $\alpha$ up to a global phase. Our algorithm does not require joint measurements on different copies, but the algorithm is adaptive in the sense that it proceeds in two phases, where the outcomes of the first phase are used to transform the state before subsequent measurements. 
    \item \emph{Conditional copies of the state}. In this model we are given copies of $(\ket{0}\ket{\psi}+\ket{1}\ket{0})/\sqrt{2}$, and aim to give an estimate of $\alpha$, including the global phase. This model is inspired by controlled usage of a state-preparation unitary (but not its inverse), as this allows us to produce such samples. 
\end{enumerate}
Our algorithms for these three models all give the same sample complexity, up to polylogarithmic factors:
\begin{reptheorem}{thm:state-sampling}{thm:condsamp}{thm:uptophase}
 (Informal) In all three sampling input models $\tilde{O}(1/\eps^2)$ samples are sufficient to obtain an $\ell_\infty$-norm estimate with error at most $\eps$. For $\ell_q$-norm error ($q \ge 2$) the sample complexity\footnote{Here, and in the rest of the paper, when working with norms we use $1/0=\infty$ and $1/\infty = 0$. If one of the terms in the $\min\{\dots\}$ goes to $\infty$ due to this, then the complexity is simply the other term.} is $\tilde{O}\trm{\min\left\{\trm{\frac{3}{\eps}}^{\frac{1}{\frac{1}{2}-\frac{1}{q}}},\frac{d^{\frac{2}{q}}}{\eps^2}\right\}}$.
\end{reptheorem}

All our sampling-based approaches require a number of samples that scales quadratically with $1/\eps$ to obtain an $\ell_\infty$-norm estimate. For sampling approaches this error dependence is optimal even when estimating a single amplitude only. However, when estimating a single amplitude with access to a state-preparation unitary and its inverse, amplitude estimation can be used to improve this dependence to linear~\cite{brassard2002AmpAndEst}. Van~Apeldoorn~\cite{apeldoorn2021QProbOraclesMulitDimAmpEst} shows that this can be generalized to estimate all probabilities in the corresponding distribution with linear dependence. Unfortunately, for amplitudes it is impossible to get an $O(1/\eps)$ error dependence that is independent of the dimension. However, for the high-precision regime there is still an improvement.

\begin{reptheorem}{thm:quantum_euclidean_norm_est}{}{}
 (Informal) Given controlled access to a state-preparation unitary for $\ket{\psi}$ and its inverse, $\tilde{O}\trm{\min\left\{\frac{\sqrt{d}}{\eps}, \frac{1}{\eps^2}\right\}}$ uses of these unitaries are sufficient to estimate the vector $\alpha$ with $\ell_\infty$-norm error at most $\eps$. For $\ell_q$-norm error ($q \ge 2$) this bound becomes $\tilde{O}\trm{ \min\left\{\trm{\frac{3}{\eps}}^{\frac{1}{\frac12 - \frac1q}}, \frac{d^{\frac12 +\frac1q}}{\eps}\right\}}$.
\end{reptheorem}

In the final section of our paper we show matching lower bounds for the above sample and query complexities. We show a $\tilde{\Omega}(d/\eps^2)$ bound for $\ell_1$-norm estimation of the probability distribution induced by a state $\ket{\psi}$ given access to copies of $\frac{\ket{0}\ket{\psi}+\ket{1}\ket{0}}{\sqrt{2}}$, using a communication complexity argument.  We also show that with access to a state-preparation unitary, this requires $\tilde \Omega\trm{\frac{d}{\eps}}$ applications of the input unitary, with a reduction from the problem of determining an unknown bit string via queries to a fractional phase oracle. Using our results on the relation between different norms (and between probability estimates and amplitude estimates), we obtain the following result.
\begin{reptheorem}{thm:lowerbound-sample}{thm:lowerboundpure}{}
 (Informal) All the upper bounds on pure-state tomography given in this paper are optimal, up to polylogarithmic factors.
\end{reptheorem}

\begin{table}[h!]
\centering
\begin{tabular}{||c | c | c||} 
 \hline
  & Sampling models  & Unitary model\\ 
 \hline\hline
 $\ell_\infty$-norm & $\frac{1}{\eps^2}$ & $\min\{\frac{1}{\eps^2},\frac{\sqrt{d}}{\eps}\}$  \\[.5ex]
 \hline
  $\ell_2$-norm & $\frac{d}{\eps^2}$  &  $\frac{d}{\eps}$ \\[.5ex]
 \hline
  $\ell_q$-norm & $\min\{\trm{\frac{3}{\eps}}^{\frac{1}{\frac12-\frac1q}},\frac{d^{\frac2q}}{\eps^2}\}$ & $ \min\{\trm{\frac{3}{\eps}}^{\frac{1}{\frac12-\frac1q}}, \frac{d^{\frac12+\frac{1}{q}}}{\eps}\} $ \\[.5ex]
 \hline
\end{tabular}
\caption{Sample and query complexities of recovering a $d$-dimensional pure quantum state up to error $\eps$ in the specified norm, for the different models. All results are $\tilde{\Theta}$., i.e., they are tight up to polylogarithmic factors in $d$, $1/\eps$ and $1/\delta$, where $\delta$ is the maximum failure probability.} 
\label{table:1}
\end{table}

We then turn to mixed quantum states of rank at most $r$. We show how to find an entry-wise $\eps$-approximation using $\bigOt{\frac{\sqrt{d}}{\eps}}$ samples, and that this yields an $\eps$-operator norm estimate if we set the entry-wise error to $\eps/\sqrt{d}$. This leads to the following result.

\begin{reptheorem}{thm:mixedTomoSch}{}{}
  (Informal) Given controlled access to a state-preparation unitary (and its inverse) for a purification of a rank-$r$ quantum state $\rho \in \C^{d\times d}$, $\tilde{O}\trm{\frac{d}{\eps}}$ uses of these unitaries are sufficient to estimate $\rho$ in operator norm. For trace norm error $\tilde{O}\trm{\frac{dr}{\eps}}$ uses suffice.
\end{reptheorem}

To obtain this result we first need two new intermediate results of independent interest: unbiased and symmetric phase estimation, and shadow tomography with state-preparation unitaries. The unbiased version of phase estimation is required for the conversion between entry-wise error and operator-norm error mentioned above (if all entry-wise errors go in the same direction then the best possible conversion would give a factor $d$, not $\sqrt{d}$). We show that phase estimation can be made unbiased and symmetric by adding a random phase before applying the inverse quantum Fourier transform, then removing this phase from the estimate.

\begin{reptheorem}{thm:boostedSupressed}{}{}
 (Informal) Quantum phase estimation can be used to give an unbiased and symmetric estimator of the phase.
\end{reptheorem}

Second, we implement a version of shadow tomography when given access to a state-preparation unitary for a purification of the state. Huggins et al.~\cite{huggins2021QAlgMultipleExpectationValues}
show that we can learn the expectation value of $m$ normalized measurement operators using $\bigOt{\sqrt{m}/\eps}$ queries to the state-preparation unitary. We improve on this for the case where the measurement operators do not fully overlap, while recovering the same bound for the general case.
\begin{reptheorem}{thm:expectationValues}{}{}
  (Informal) Let $E_1,\dots E_m$ be measurement operators with operator norm at most $1$. Given controlled access to a state-preparation unitary (and its inverse) for a purification of a quantum state $\rho \in \C^{d\times d}$, $\tilde{O}\trm{\frac{\sqrt{\nrm{\sum_j E_j^2}}}{\eps}}$ uses of these unitaries are sufficient to estimate all $\tr{\rho E_j}$ up to error $\eps$. 
\end{reptheorem}

Finally, we prove lower bounds on the estimation of a density matrix given (inverse) access to a unitary that prepares a purification of it. The lower bound proof on high level consists of three steps. First, we embed a bit string of length $rd$ into a family of density matrices. Then, we quantify how much information about the embedded bit string can be obtain by an algorithm that recovers any of these density matrices up to the specified precision. We conclude by arguing that obtaining this amount of information about the bit string requires a particular number of queries to the state-preparing unitary. Our results are tight in the small error regime, in the Frobenius norm case. \autoref{table:2} gives an overview of the other results that can be derived from it.

\begin{table}[h!t]
\centering
\begin{tabular}{||c | c | c||} 
 \hline
    & \multicolumn{2}{c||}{Unitary model} \\ \cline{2-3}
    & Upper bound & Lower bound \\
 \hline\hline
 Max-norm  & $\frac{\sqrt{d}}{\eps}$ & $\frac{1}{\eps}$ \\[.5ex]
 \hline
  Operator norm &   $\frac{d}{\eps}$ & $\frac{d}{\eps}$ \\[.5ex]
  \hline
    Frobenius norm &   $\min\{\frac{d\sqrt{r}}{\eps}, \frac{d}{\eps^2}\}$ & $\frac{d\sqrt{r}}{\eps} \quad (\eps = o(\frac{1}{dr}))$ \\[.5ex]
    \hline
      Trace norm & $\frac{dr}{\eps}$ & $\frac{d\sqrt{r}}{\eps}$ when $\eps = o(\frac{1}{dr})$ \\[.5ex]
           &  & $dr/\log(dr)$ when $\eps = \Theta(1)$ \\[.5ex]

 \hline
\end{tabular}
\caption{Our upper and lower bounds on the query complexities of recovering a $d$-dimensional, rank-$r$ mixed quantum state up to error $\eps$ in the specified norm, when given access to unitary that prepares its purification. All upper bound results are $\tilde{O}$, i.e., they are given up to polylogarithmic factors in $d$, $1/\eps$ and $1/\delta$, where $\delta$ is the maximum failure probability. The lower bound results are $\Omega$, and we observe that our results are tight for constant failure probability when the desired precision is w.r.t.\ the operator norm or Frobenius norm.}
\label{table:2}
\end{table}

  \section{Preliminaries}
  Many of the algorithms presented in this paper are built on top of the block-encoding framework, and rely on a version of Jordan's gradient algorithm \cite{jordan2005QuantGrad,gilyen2017OptQOptAlgGrad}. In this section we introduce some notation, our computational model, and give a brief overview of the two components mentioned above.
  
\subsection{Notation and computational model}
  For any integer $j$, we define $[j] := \{0,\dots,j-1\}$. 
  Let $\oplus$ denote the direct sum, i.e., $A \oplus B = \begin{pmatrix} A & 0 \\ 0 & B \end{pmatrix}$. We write $\vecone$ for the all-ones vector and $J$ for the all-ones matrix, with dimensions that will be clear from context. We write $\Delta^d$ for the set of $d$-dimensional probability distributions. We write $[a,\infty]$ for the set $[a,\infty)\cup\{\infty\}$. All logarithms are base 2 unless otherwise indicated.
  
  Given a vector $v$, we write $\nrm{v}_q$ for the standard $\ell_q$-norm. We use the convention that $1/0 = \infty$ and $1/\infty = 0$ in calculations involving the value of $q$ for an $\ell_q$-norm. Although the letter $p$ is commonly used to denote norms (i.e., $\ell_p$-norms), in this paper we use $p$ to denote vectors containing the entries of a discrete probability distribution; hence, we use different letters for norms. For a matrix $M$ we use write $\nrm{M}_q$ for the Schatten $q$-norm, i.e., the $\ell_q$-norm of the vector of singular values. For operator norm (Schatten $\infty$-norm) we just write $\nrm{M}$. We write that $\tilde{\alpha}$ is an $\eps$-$\ell_q$-norm estimate of $\alpha$ if $\nrm{\alpha - \tilde{\alpha}}_q \le \eps$. For a vector $\alpha$, we denote by $|\alpha|$ the vector with entries given by the modulus of the entries of $\alpha$.
  
  We assume that the quantum computer is controlled by a classical computer (with a RAM) that can change the gates run depending on intermediate measurement results. For simplicity, we neglect the cost of any classical computation as long as it is only a polylogarithmic factor (in all input parameters) slower than the quantum gate complexity. Our gate set consists of all single-qubit gates and CNOT. To simplify the statements of our results we assume access to a QRAM-like gate, the indexed-SWAP gate. This gate acts on a state with many qubits as follows:
  \[
   \text{indexed-SWAP} \ket{i}\ket{j}\ket{x_1}\dots\ket{x_d} = \ket{i}\ket{j}\text{SWAP}_{i,j}\left(\ket{x_1}\dots\ket{x_d}\right)
  \]
  where $\text{SWAP}_{i,j}$ swaps the $i$-th and $j$-th qubit. Such a gate can be built using $\bigO{d}$ gates, and $\log(d)$ depth, see \autoref{apd:qram} for details on this implementation. We always state the number of calls to such a gate and the size of the memory it acts on.
  
  \subsection{Block-encodings}
	We begin by listing the technical results that we need to efficiently manipulate matrices given via block-encoding circuits. For more background see~\cite{gilyen2018QSingValTransfThesis}. First we define a block-encoding as follows.
	
	\begin{definition}[Block-encoding]
		A unitary $U$ is an $a$-qubit block-encoding of $A$ if the top-left block of the unitary~$U$ is $A$:
		\[
		A= \left(\bra{0}^{\otimes a}\otimes I\right) U \left(\ket{0}^{\otimes a}\otimes I\right)\Longleftrightarrow U= \left[\begin{array}{cc} A & .
		\\ .
		& .
		\end{array}\right].
		\]
	\end{definition}
	Note that we are simplifying the block-encoding framework: traditionally, block-encodings are defined with three parameters (normalization factor, number of additional qubits, error of the implementation), but in this paper the normalization factor and error of the implementation are easily tracked without additional notation. Thus, we use a simpler presentation. Readers familiar with block-encodings can easily restate our results using the more familiar notation.
	
	Although we do not use POVMs directly, we mention the following lemma to showcase that the block-encoding framework is applicable in large generality. In particular, thanks to the following lemma, some of our results in the block-encoding framework are directly applicable to POVMs.
	\begin{lemma}\cite{apeldoorn2018ImprovedQSDPSolving}
		\label{lem:povmtoblock}
		If a two-outcome POVM denoted by $E$ can be coherently implemented on a quantum computer using $a$ ancillary qubits via the unitary $U$, then an $(a+1)$-block-encoding of $E$ can be implemented using a single call to $U$, $U^{\dagger}$, and a CNOT gate.
	\end{lemma}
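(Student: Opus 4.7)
The plan is to realise $E$ as the top-left $d\times d$ block of the product $U^\dagger \Pi_{\mathrm{acc}} U$, where $\Pi_{\mathrm{acc}}$ is the projector onto the POVM's acceptance subspace on the ancilla register, and then block-encode this object via a linear-combination-of-unitaries (LCU) circuit built around the reflection $R := U^\dagger (I - 2\Pi_{\mathrm{acc}}) U$.

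First I would do a matrix rewriting. Coherent implementation of the POVM by $U$ means that for every system state $\ket{\psi}$, the probability of the ``accept'' outcome when applying $U$ to $\ket{0}^{\otimes a}\ket{\psi}$ equals $\bra{\psi}E\ket{\psi}$. Choosing without loss of generality the convention that acceptance corresponds to a designated single ancilla qubit reading $\ket{1}$, so $\Pi_{\mathrm{acc}} = \ket{1}\bra{1}$ on that qubit tensored with identity elsewhere, this reads
\[
 E \;=\; (\bra{0}^{\otimes a}\otimes I)\, U^\dagger \Pi_{\mathrm{acc}} U \,(\ket{0}^{\otimes a}\otimes I),
\]
so with $P := U^\dagger \Pi_{\mathrm{acc}} U$ we see that $E$ is the top-left $d\times d$ block of $P$ relative to the ancilla basis state $\ket{0}^{\otimes a}$.

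The second step is to block-encode $P = \tfrac12(I-R)$ where $R = I - 2P = U^\dagger(I-2\Pi_{\mathrm{acc}})U$ using one extra ancilla qubit $q$. The crucial simplification is that $I-2\Pi_{\mathrm{acc}}$ is just the Pauli $Z$ on the accept qubit; therefore a controlled-$R$ operation (controlled by $q$) can be realised by sandwiching a single controlled-$Z$ between $U$ and $U^\dagger$, since when $q=\ket{0}$ this collapses to $U^\dagger U = I$, and when $q=\ket{1}$ to $R$ itself. A CZ equals a CNOT conjugated by Hadamards on the target, so the ``non-free'' cost is exactly one call to $U$, one to $U^\dagger$, and one CNOT. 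Wrapping this controlled-$R$ between Hadamards on $q$ (with an $X$ on $q$ to fix the sign in $\tfrac12(I\pm R)$) produces a unitary whose $\ket{0}$-projection on $q$ implements $P$ on the remaining $a+\log d$ qubits; further projecting the original $a$ ancillas onto $\ket{0}^{\otimes a}$ returns $E$. This is precisely an $(a+1)$-qubit block-encoding of $E$.

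The only real subtlety, and the place where I would double-check carefully, is the LCU sign convention (making sure the top-left block is $\tfrac12(I-R)=P$ and not $\tfrac12(I+R)=I-P$) together with the choice of the acceptance convention: it is essential that ``accept'' be pinned to a single qubit, so that $I-2\Pi_{\mathrm{acc}}$ is a single-qubit Pauli $Z$ and its controlled version is a genuine two-qubit Clifford gate rather than a multi-controlled-$Z$. If the original implementation of the POVM uses a different convention (e.g.\ accept $=\ket{0}^{\otimes a}$ on the whole ancilla register), one can absorb this into $U$ by copying the acceptance flag onto a designated qubit with a CNOT-chain, but doing so would change the ancilla count; the statement of the lemma implicitly relies on the single-qubit convention, and once that is fixed the bookkeeping is straightforward.
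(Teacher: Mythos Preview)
The paper does not supply its own proof of this lemma; it is simply cited from \cite{apeldoorn2018ImprovedQSDPSolving}. Your argument is correct: the identity $E=(\bra{0}^{\otimes a}\otimes I)\,U^\dagger\Pi_{\mathrm{acc}}U\,(\ket{0}^{\otimes a}\otimes I)$ holds, the factorisation $(I_q\otimes U^\dagger)\cdot\text{CZ}_{q,\mathrm{acc}}\cdot(I_q\otimes U)=\ket{0}\!\bra{0}_q\otimes I+\ket{1}\!\bra{1}_q\otimes R$ is exactly controlled-$R$, and sandwiching between Hadamards on $q$ places $\tfrac12(I+R)=I-P$ (or $\tfrac12(I-R)=P$ after a single-qubit $Z$ on $q$) in the top-left block, which after the further $\ket{0}^{\otimes a}$ projection yields $E$. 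Your caveat about the single-qubit acceptance convention is also well placed.

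That said, your route is more elaborate than the construction in the cited reference. There one does not pass through the reflection/LCU picture at all: one simply applies $U$, copies the accept qubit onto the fresh ancilla $q$ with a single CNOT, and applies $U^\dagger$. Projecting $q$ onto $\ket{0}$ then leaves $U^\dagger\Pi_{0}U$ on the remaining registers, whose $\ket{0}^{\otimes a}$ block is $I-E$ (or $E$, after an $X$ on $q$ or with the opposite accept convention). This is the same gate budget as yours but conceptually more direct: the extra qubit records the measurement outcome, and uncomputing $U$ coherently ``forgets'' everything except that record. Your LCU detour buys nothing extra here, though it is a perfectly legitimate proof and the kind of rewriting that generalises nicely when $\Pi_{\mathrm{acc}}$ is replaced by something that is not a projector.
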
 

	The following two lemmas show how to add and amplify block-encodings, which we use repeatedly for mixed-state tomography.

	\begin{lemma}[Linear combination of block encodings]\label{lem:linCombBlocks}\cite{gilyen2018QSingValTransf,apeldoorn2018ImprovedQSDPSolving}
	    Let $E=\sum_{j=0}^{m}y_j E_j$ be a $w$-qubit operator for $y\in \R^m$, and let $\beta\geq \nrm{y}_1$. If $U_y$ is a state-preparation oracle for $\frac{1}{\sqrt{\beta}} \sum_j \sqrt{y_j} \ket{j}\ket{0}+\ket{\psi}\ket{1}$ for some unnormalized state $\ket{\psi}$, and $U_{E}$ implements an $a$-block-encoding of $E_j$ conditioned on $j$, then a $(a+\ceil{\log(m)}+1)$-block-encoding of $E/\beta$ can be implemented with a single use of $U_y$, $U_y^\dagger$, and $U_E$, and a single two-qubit gate.
	\end{lemma}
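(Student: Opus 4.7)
The plan is to adapt the standard linear combination of unitaries (LCU) construction to absorb the ``failure'' branch of $U_y$. Label the ancillas by three registers: the index register $I$ ($\lceil\log m\rceil$ qubits), the flag $F$ (one qubit), and the block-encoding ancilla $A$ of $U_E$ ($a$ qubits), which together total $a+\lceil\log m\rceil+1$ qubits beyond the $w$-qubit system $S$. I would build the circuit $V$ in four steps: (i) apply $U_y$ on $IF$; (ii) apply $U_E$ on $IAS$, controlled on $F=\ket{0}$; (iii) apply a single CNOT from $F$ to an arbitrary qubit of $A$; (iv) apply $U_y^\dagger$ on $IF$. The CNOT is the single extra two-qubit gate, and $U_y,U_y^\dagger,U_E$ are each used once.

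\textbf{Verification.} To confirm that $V$ is a block-encoding of $E/\beta$, I would compute its top-left block by tracking the ``good'' and ``bad'' parts of $U_y\ket{0}_I\ket{0}_F=\tfrac{1}{\sqrt{\beta}}\sum_j\sqrt{y_j}\ket{j}_I\ket{0}_F+\ket{\psi}_I\ket{1}_F$ separately. The good part triggers the controlled $U_E$ and becomes $\tfrac{1}{\sqrt{\beta}}\sum_j\sqrt{y_j}\ket{j}_I\ket{0}_F(\ket{0}^a_A E_j\ket{\phi}+\ket{\perp_{j,\phi}})$, with $\ket{\perp_{j,\phi}}$ orthogonal to $\ket{0}^a_A$ on $A$. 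The CNOT is inert on this part because $F=\ket{0}$. After $U_y^\dagger$ and projection onto $\ket{0}_I\ket{0}_F\ket{0}^a_A$, the LCU identity $\bra{0,0}U_y^\dagger\ket{j,0}=\sqrt{y_j/\beta}$ yields $\sum_j(y_j/\beta)E_j\ket{\phi}=E\ket{\phi}/\beta$. The bad part $\ket{\psi}_I\ket{1}_F\ket{0}^a_A\ket{\phi}$ is untouched by the controlled $U_E$, but the CNOT flips a qubit of $A$, producing $\ket{\psi}_I\ket{1}_F\ket{10^{a-1}}_A\ket{\phi}$, so its subsequent overlap with $\ket{0}^a_A$ vanishes regardless of what $U_y^\dagger$ does on $IF$.

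\textbf{Main obstacle.} The nontrivial step is (iii). Without the CNOT, the bad part would survive $U_y^\dagger$ with overlap $\bra{0,0}U_y^\dagger\ket{\psi,1}=\nrm{\psi}^2=1-\nrm{y}_1/\beta$ on $IF$ and full overlap on $A$, adding a spurious $(1-\nrm{y}_1/\beta)$ times the identity to the top-left block and giving $E/\beta+(1-\nrm{y}_1/\beta)\cdot\mathrm{id}$ instead of $E/\beta$. The CNOT marks the bad branch with a nonzero value on $A$, which the final projection annihilates, at the cost of only a single two-qubit gate and without any new ancilla qubit. One still needs to check that this CNOT is harmless on the good branch, but this is immediate since $F=\ket{0}$ there. (Negative components of $y$ can be absorbed into the corresponding $E_j$, so one may assume $y\geq 0$ and $\sqrt{y_j}$ real in the analysis above.)
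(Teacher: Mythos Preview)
The paper does not give its own proof of this lemma; it is quoted from \cite{gilyen2018QSingValTransf,apeldoorn2018ImprovedQSDPSolving} without argument, so there is nothing in the paper to compare against. That said, your construction is correct and is precisely the adaptation of the standard LCU sandwich $U_y^\dagger\cdot(\text{select})\cdot U_y$ to the flagged oracle $U_y$: you correctly identify that the failure branch $\ket{\psi}_I\ket{1}_F\ket{0}^a_A$ survives the controlled-$U_E$ and $U_y^\dagger$ with amplitude $\nrm{\psi}^2=1-\nrm{y}_1/\beta$ on $\ket{0}_{IF}$, and that a single CNOT from $F$ into $A$ moves this branch off $\ket{0}^a_A$ while leaving the good branch untouched.

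Two small caveats worth flagging. First, your step (ii) uses $U_E$ \emph{controlled} on $F=\ket{0}$; this is standard in the block-encoding literature (controlled access to the select oracle is always assumed), but strictly speaking the lemma says ``a single use of $U_E$'' rather than ``a single use of controlled $U_E$,'' so you are reading the statement with the usual convention. Second, your CNOT targets a qubit of $A$, which tacitly assumes $a\geq 1$; in the degenerate case $a=0$ there is no register to absorb the bad branch and one would need an extra ancilla, bumping the count to $\lceil\log m\rceil+2$. Neither point affects the substance of your argument.
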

	
	\begin{lemma}[Uniform amplification of block-encodings, {\cite{low2017HamSimUnifAmp},\cite[Theorem 33]{gilyen2018QSingValTransfArXiv}}] 
	  \label{lem:ampBlock}
	  Let $U$ be an $a$-block-encoding of $A$, and let $\nrm{A} \leq \beta \leq 1$. Then a $(a+1)$-block-encoding of $A / (2\beta)$ can be implemented, up to operator norm error $\eps$, using $\bigO{\beta \log(\beta/\eps)}$ applications of $U$ and $U^\dagger$, and $\bigO{a \beta \log(\beta/\eps)}$ additional gates.
	\end{lemma}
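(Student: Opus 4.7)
The plan is to invoke the quantum singular value transformation (QSVT) framework applied to an appropriately chosen odd polynomial that amplifies small singular values roughly linearly up to the desired amplification factor, while remaining bounded by $1$ on $[-1,1]$ so that the resulting transformation can itself be represented as a block of a unitary. Concretely, I would look for an odd polynomial $P$ of real coefficients such that $|P(x)|\leq 1$ for all $x\in[-1,1]$ and $|P(x)-x/(2\beta)|\leq \eps$ for all $x\in[0,\beta]$, and then apply QSVT to $U$ with this polynomial; the output of QSVT is then a block-encoding of $P^{(\text{SV})}(A)$, which approximates $A/(2\beta)$ in operator norm up to $\eps$ provided the singular values of $A$ lie in $[0,\beta]$, which is guaranteed since $\nrm{A}\leq \beta$.

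The heart of the argument is the construction of $P$. I would build it from a smooth approximation to the linear ramp $x\mapsto x/(2\beta)$ clipped at $\pm \tfrac12$, which in turn can be obtained by integrating a suitable bump (or by composing Chebyshev-polynomial approximations to the sign function with a shifted and scaled identity). Standard constructions of this sort, the same ones used to implement uniform singular-value amplification and fixed-point amplitude amplification, yield such a $P$ of degree on the order of $\beta^{-1}\log(\beta/\eps)$. I would then appeal directly to the QSVT machinery (as packaged in Gilyén et al., Theorem 33): applying QSVT with a degree-$d$ real odd polynomial to an $a$-qubit block-encoding $U$ uses $d$ applications of $U$ and $U^\dagger$, a single extra ancilla qubit (hence the jump from $a$ to $a+1$), and $O(a\, d)$ extra two-qubit gates implementing the controlled reflections about the block-encoding projectors, all interleaved with appropriately chosen single-qubit phase rotations determined from $P$.

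The last step is to combine these two ingredients: plug the degree bound from the polynomial construction into the QSVT cost, giving the claimed number of applications and additional gates, and verify that the error bound $\|P^{(\text{SV})}(A)-A/(2\beta)\|\leq \eps$ follows because the singular values of $A$ lie in $[0,\beta]$, which is the domain on which we controlled $|P(x)-x/(2\beta)|$. The bound $|P(x)|\leq 1$ on $[-1,1]$ is what ensures $P^{(\text{SV})}(A)$ can appear as a block of a unitary, i.e., that the QSVT circuit is well-defined.

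The main obstacle, and essentially the only nontrivial content of the proof, is establishing the existence of the amplifying polynomial $P$ with the stated degree and with both the uniform boundedness and the approximation properties simultaneously. This is where one typically invokes either an explicit Chebyshev-series construction or the shifted-sign-function approach of Low and Chuang, and where the logarithmic factor in the degree arises from the need to resolve the transition region near $x=\beta$ to precision $\eps$ without overshooting the bound $|P(x)|\leq 1$ elsewhere. Once this polynomial is in hand, the rest of the proof is a direct bookkeeping application of the QSVT cost accounting.
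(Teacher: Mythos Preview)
The paper does not prove this lemma; it is simply quoted from the cited references \cite{low2017HamSimUnifAmp} and \cite[Theorem 33]{gilyen2018QSingValTransfArXiv}. Your sketch is exactly the argument those references give: construct an odd polynomial that linearly amplifies on $[-\beta,\beta]$ while staying bounded by $1$ on $[-1,1]$, then apply QSVT and read off the gate and query counts from the degree bound. So the approach is correct and matches the intended (cited) proof.

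One point worth flagging: you state the required polynomial degree is of order $\beta^{-1}\log(\cdot)$, which is indeed what the QSVT amplification construction yields, since the amplification factor is $1/(2\beta)$. That is consistent with how the paper actually \emph{uses} this lemma (see the proof of \autoref{lem:linCombEj}, where amplifying by a factor $\nu/\sigma'$ costs $\bigO{\frac{\nu}{\sigma'}\log(\frac{\nu}{\sigma'\eps})}$ calls), but it is at odds with the lemma as literally stated here, which has $\bigO{\beta\log(\beta/\eps)}$ with $\beta\leq 1$. This appears to be a typo in the paper's transcription of the cited result; the correct dependence is $\bigO{\frac{1}{\beta}\log(\frac{1}{\beta\eps})}$, and your $\beta^{-1}$ scaling is the right one. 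Your log factor $\log(\beta/\eps)$ should likewise be $\log(1/(\beta\eps))$ to match the cited theorem.
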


	One of the main motivations for defining block-encodings is the following Hamiltonian simulation result, that we use for implementing ``phase oracles'' required for gradient computation.

	\begin{lemma}[Hamiltonian simulation using block-encodings, {\cite{low2016HamSimQubitization},\cite[Corollary 63]{gilyen2018QSingValTransfArXiv}}]
	  \label{lem:blockHamSim}
	  Let $U$ be an $a$-block-encoding of $A$. Then a $(a+2)$-blockencoding of $e^{\ci t A}$ can be implemented, up to operator norm error $\eps$, using $\bigO{t+\log(1/\eps)}$ applications of $U$ and $U^{\dagger}$, and $\bigO{a(t+\log(1/\eps))}$ additional gates with depth $\bigO{\log(a)(t+\log(1/\eps))}$.
	\end{lemma}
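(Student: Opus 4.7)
The plan is to build $e^{\mathrm{i}tA}$ as a linear combination of a cosine and sine component, each realized via quantum singular value transformation (QSVT) applied to the given block-encoding $U$. Concretely, I would approximate the functions $\cos(tx)$ and $\sin(tx)$ on $[-1,1]$ by real polynomials $P_{\cos}$ and $P_{\sin}$ of definite parity (even and odd respectively), with the property that $\nrm{P_{\cos}-\cos(t\cdot)}_{[-1,1]},\nrm{P_{\sin}-\sin(t\cdot)}_{[-1,1]}\leq \eps/4$, and then combine $P_{\cos}(A)+\mathrm{i}\,P_{\sin}(A)$ using a one-qubit LCU. Since the Jacobi--Anger expansion yields a truncation at degree $d=\bigO{t+\log(1/\eps)}$ that achieves this accuracy (this is the standard bound used e.g.\ in the qubitization / QSVT Hamiltonian simulation results), the query complexity will immediately follow.

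Next I would invoke the QSVT machinery to turn the $a$-block-encoding $U$ of $A$ into an $(a+1)$-block-encoding of $P_{\cos}(A)$ and one of $P_{\sin}(A)$. For a degree-$d$ polynomial of fixed parity, QSVT uses $d$ alternating applications of $U$ and $U^\dagger$ interleaved with single-qubit phase rotations acting on one extra ancilla that controls the projector-controlled rotations on the $a$ block-encoding ancillas. Each such projector-controlled rotation can be implemented with $\bigO{a}$ gates in depth $\bigO{\log(a)}$ using a tree of CNOTs to compute the OR of the $a$ ancillas. This gives, per polynomial, $\bigO{d}=\bigO{t+\log(1/\eps)}$ queries to $U,U^\dagger$ and $\bigO{a(t+\log(1/\eps))}$ extra gates in depth $\bigO{\log(a)(t+\log(1/\eps))}$, matching the claimed resource counts.

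To obtain $e^{\mathrm{i}tA}$ I would then apply \autoref{lem:linCombBlocks} with the two-term combination $P_{\cos}(A)+\mathrm{i}\,P_{\sin}(A)$ weighted uniformly, which costs one additional qubit (bringing the total block-encoding ancilla count to $a+2$), a constant number of two-qubit gates for the state-preparation on the LCU qubit, and a single use of each of the two QSVT circuits conditioned on the LCU qubit. Since $\nrm{P_{\cos}(A)+\mathrm{i}\,P_{\sin}(A)-e^{\mathrm{i}tA}}\leq \eps/2$ by the triangle inequality and our polynomial error bounds, and the LCU is exact, the resulting $(a+2)$-block-encoding is within operator-norm error $\eps$ of $e^{\mathrm{i}tA}$ after minor rescaling; taking $\eps\to\eps/2$ at the polynomial-approximation stage absorbs the normalisation constants.

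The main obstacle I anticipate is the polynomial-approximation step: justifying cleanly that $\cos$ and $\sin$ admit degree-$\bigO{t+\log(1/\eps)}$ approximants that (i) are bounded in modulus by $1$ on $[-1,1]$ (needed so QSVT applies), (ii) have the correct parity, and (iii) achieve the desired uniform error. This is classical (truncated Chebyshev/Jacobi--Anger expansion together with standard tail estimates for Bessel functions $J_k(t)$), so I would simply cite the existing analysis rather than rederive it. Everything else is routine bookkeeping: tracking the $+1$ ancilla from QSVT and the $+1$ ancilla from the LCU to reach $a+2$, and noting that the parallel implementation of the projector-controlled NOTs on the $a$ ancillas is what produces the $\log(a)$ factor in the depth rather than an $a$ factor.
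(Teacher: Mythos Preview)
The paper does not prove this lemma; it is quoted verbatim from the cited references \cite{low2016HamSimQubitization,gilyen2018QSingValTransfArXiv} without argument. Your sketch is essentially the standard proof that appears in those references, so in that sense it matches.

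One small gap worth flagging: combining $P_{\cos}(A)$ and $\mathrm{i}\,P_{\sin}(A)$ via \autoref{lem:linCombBlocks} (which, incidentally, is stated only for real weights $y_j$, though the extension to complex weights is immediate) produces a block-encoding of $\tfrac{1}{2}\bigl(P_{\cos}(A)+\mathrm{i}\,P_{\sin}(A)\bigr)$, not of $P_{\cos}(A)+\mathrm{i}\,P_{\sin}(A)$ itself, because the LCU normalisation is $\nrm{y}_1=2$. Under the paper's simplified block-encoding convention (no explicit subnormalisation parameter) this is a block-encoding of $e^{\mathrm{i}tA}/2$, not of $e^{\mathrm{i}tA}$. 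The fix is standard: since the target $e^{\mathrm{i}tA}$ is unitary, a single round of oblivious amplitude amplification removes the factor $1/2$ at constant multiplicative overhead, preserving all the stated complexities. Alternatively (and this is what the cited Corollary~63 actually does), QSVT can implement the complex polynomial approximant of $e^{\mathrm{i}tx}$ directly with one extra ancilla, bypassing the cosine/sine split and the LCU entirely; that route lands on an $(a+2)$-block-encoding of $e^{\mathrm{i}tA}$ without the subnormalisation detour.
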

	
	Finally, we will use the following lemma to construct block-encodings for gradient computation:
	
	\begin{lemma}[Block-encoding inner products with controlled state-preparation unitaries]
	  \label{lem:blockInnerProd}
	  Let $U:=\sum_x U_x \otimes \ketbra{x}{x}$ and $V:=\sum_x V_x \otimes \ketbra{x}{x}$ be controlled (by the second register) state-preparation unitaries, where $U_x\colon \ket{0}\ket{0}^{\otimes a}\mapsto \ket{0}\ket{\psi_x}+\ket{1}\ket{\tilde{\psi}_x}$ and $V_x\colon \ket{0}\ket{0}^{\otimes a}\mapsto \ket{0}\ket{\phi_x}+\ket{1}\ket{\tilde{\phi}_x}$ are $(a+1)$-qubit state-preparation unitaries for some (subnormalized) $a$-qubit quantum states $\ket{\psi_x}, \ket{\phi_x}$. Then $(I_1 \otimes V^\dagger)\cdot (\text{SWAP} \otimes I) \cdot (I_1 \otimes U)$ is an $(a+2)$-block-encoding of the diagonal matrix  $\diag({\{\ipc{\phi_x}{\psi_x}\}})$, where $I_1$ acts on a single qubit and the $\text{SWAP}$ gate acts on the first two qubits. 
	\end{lemma}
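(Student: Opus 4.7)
I would verify the block-encoding property by a direct computation on basis states: evaluate the action of the three factors on $\ket{0}^{\otimes a+2}\ket{x}$ and extract the amplitude of $\ket{0}^{\otimes a+2}\ket{x}$ in the output. To keep the registers straight, I label them $A$ (the newly introduced qubit on which the outer $I_1$ factors act), $B$ (the flag qubit targeted by $U_x$ and $V_x$), $C$ (the $a$-qubit work register where $\ket{\psi_x}$ and $\ket{\phi_x}$ live), and $D$ (the system register carrying $\ket{x}$). The central SWAP acts on $A$ and $B$, so the $a+2$ ancillas are precisely $ABC$, matching the claim. Nothing beyond the definitions of $U$, $V$, and block-encoding is needed.

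\textbf{The computation.} Starting from $\ket{0}_A\ket{0}_B\ket{0}_C^{\otimes a}\ket{x}_D$, the factor $I_1\otimes U$ produces $\ket{0}_A\trm{\ket{0}_B\ket{\psi_x}_C+\ket{1}_B\ket{\tilde\psi_x}_C}\ket{x}_D$ by definition of $U$. The SWAP on $AB$ then gives $\ket{0}_A\ket{0}_B\ket{\psi_x}_C\ket{x}_D+\ket{1}_A\ket{0}_B\ket{\tilde\psi_x}_C\ket{x}_D$. Next $I_1\otimes V^\dagger$ acts as $V_x^\dagger$ on $BC$ conditioned on $D=x$ and leaves $A$ untouched. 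Projecting on $\ket{0}^{\otimes a+2}$ on $ABC$, the second summand vanishes because $A$ is in $\ket{1}$; the first contributes $\bra{0}\bra{0}^{\otimes a}V_x^\dagger\ket{0}\ket{\psi_x}$. Taking the adjoint of the defining equation $V_x\ket{0}\ket{0}^{\otimes a}=\ket{0}\ket{\phi_x}+\ket{1}\ket{\tilde\phi_x}$ gives $\bra{0}\bra{0}^{\otimes a}V_x^\dagger=\bra{0}\bra{\phi_x}+\bra{1}\bra{\tilde\phi_x}$, and the second term kills against $\ket{0}\ket{\psi_x}$, leaving the overlap $\ipc{\phi_x}{\psi_x}$. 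Thus the top-left block equals $\sum_x\ipc{\phi_x}{\psi_x}\ketbra{x}{x}=\diag(\{\ipc{\phi_x}{\psi_x}\})$, as required.

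\textbf{Obstacle.} There is no conceptual obstacle; the only real pitfall is bookkeeping. Because the central SWAP permutes the two leftmost qubits, the ``flag'' qubit of $U_x$ and the ``flag'' qubit of $V_x^\dagger$ end up being different physical qubits, and the qubit on which the outer $I_1$ factors act is not the same qubit throughout. Once the registers $A,B,C,D$ are labelled explicitly, the lemma reduces to a one-line unrolling of the definition of block-encoding together with the state-preparation equations for $U_x$ and $V_x$.
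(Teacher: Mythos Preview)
Your proposal is correct and follows essentially the same route as the paper: a direct computation of the top-left block on basis states, using the defining equations for $U_x$ and $V_x$ and the action of the SWAP on the two flag qubits. The paper sandwiches between $\bra{0}^{\otimes a+2}\bra{x}$ and $\ket{0}^{\otimes a+2}\ket{y}$ to pick up the $\delta_{xy}$ explicitly, whereas you track the action on $\ket{0}^{\otimes a+2}\ket{x}$ and observe that the control register $D$ is never altered; these are equivalent one-line unrollings of the same computation.
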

	\begin{proof}
	\begin{align*}
	    \bra{0}^{\otimes a+2}\bra{x}(I \otimes V^\dagger)&\cdot (\text{SWAP} \otimes I) \cdot (I \otimes U)\ket{0}^{\otimes a+2}\ket{y} \\
	    &=\bra{0}\trm{\bra{0}\bra{\phi_x}+\bra{1}\bra{\tilde{\phi}_x}}\bra{x} \trm{\text{SWAP} \otimes I} \ket{0}\trm{\ket{0}\ket{\psi_x}+\ket{1}\ket{\tilde{\psi}_x}}\ket{y} \\
	    &=(\bra{00}\bra{\phi_x}+\bra{01}\bra{\tilde{\phi}_x})\bra{x}  (\ket{00}\ket{\psi_x}+\ket{10}\ket{\tilde{\psi}_x})\ket{y} \\
	    &=\ipc{\phi_x}{\psi_y}\delta_{xy} \qedhere
	\end{align*}
	\end{proof}

\subsection{Quantum gradient computation}
\label{subsec:gradient}

	In this section we briefly review Jordan's algorithm for estimating the gradient and provide a generic analysis of its behavior. Before describing the algorithm, we introduce appropriate representation of our qubit strings suitable for fixed-point arithmetics.
	\begin{definition}[{\cite[Definition 5.1]{gilyen2017OptQOptAlgGrad}}]\label{def:labelDefiniton}
		For every $b\in\{0,1\}^n$, let $j^{(b)}\in \{0,\ldots,2^n-1\}$ be the integer corresponding to the binary string $b=(b_1,\ldots,b_n)$.
		We label the $n$-qubit basis state $\ket{b_1}\ket{b_{2}}\cdots\ket{b_n}$ by $\ket{x^{(b)}}$, where 
		\begin{equation*}
			x^{(b)}=\frac{j^{(b)}}{2^n}-\frac{1}{2}+2^{-n-1}.
		\end{equation*}
		We denote the set of corresponding labels as $G_n:=\left\{\frac{j^{(b)}}{2^n}-\frac{1}{2}+2^{-n-1} : j^{(b)}\in \{0,\ldots,2^n-1\}  \right\}$. Note that there is a bijection between $\{j^{(b)}\}_{b\in \{0,1\}^n}$ and $\{x^{(b)}\}_{b\in \{0,1\}^n}$, so we will use $\ket{x^{(b)}}$ and $\ket{j^{(b)}}$ interchangeably.
	\end{definition}
	Following \cite[Definition 5.2]{gilyen2017OptQOptAlgGrad} for $x\in G_n$ we define the Fourier transform of a state $\ket{x}$ as
	\begin{equation*}
		QFT_{G_n}: \ket{x}\mapsto \frac{1}{\sqrt{2^n}}\sum_{k\in G_n}e^{2\pi i 2^n x k}\ket{k}.
	\end{equation*}	
    In \cite[Claim 5.1]{gilyen2017OptQOptAlgGrad} it is shown that this unitary is the same as the usual quantum Fourier transform up to conjugation with a tensor product of $n$ single-qubit unitaries.		

	Let us prove a simplified version of \cite[Lemma 5.1]{gilyen2017OptQOptAlgGrad} in order to give some intuition about Jordan's gradient computation algorithm that can be viewed as a continuous extension of the Bernstein-Vazirani algorithm~\cite{bernstein1993QuantComplTheory}.
	\begin{lemma}[The core of Jordan's gradient computation algorithm]\label{lemma:genericJordan}
	Let $N=2^n$, and $\pmb{g}\in \R^d$ such that $\nrm{\pmb{g}}_\infty\leq 1/3$. If $\nrm{\left(\text{QFT}_{G_n}^{-1}\right)^{\otimes d}\ket{\psi} - \left(\text{QFT}_{G_n}^{-1}\right)^{\otimes d}\frac{1}{\sqrt{N^d}}\sum_{\pmb{x} \in G_n^d}e^{2\pi i N \ipc{\pmb{g}}{\pmb{x}}} \ket{\pmb{x}}}\leq \frac{1}{12}$, then measuring the state
	\begin{equation}\label{eq:closeFunctionApx}
		\left(\text{QFT}_{G_n}^{-1}\right)^{\otimes d} \ket{\psi}
	\end{equation}
	in the computational basis yields an estimate $\pmb{k}\in G_n^d$ such that 
    $$\Pr\left[|k_i-g_i|>\! 3/N\right]\leq 1/3 \quad \text{ for every  } i\in[d].$$
	\end{lemma}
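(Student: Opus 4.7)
The plan is to reduce the problem to a per-coordinate standard phase estimation analysis, then handle the perturbation via a total variation argument. First I would observe that $\left(\text{QFT}_{G_n}^{-1}\right)^{\otimes d}$ is a unitary, so the hypothesis is equivalent to the cleaner inequality $\nrm{\ket{\psi} - \ket{\phi}} \leq 1/12$, where $\ket{\phi} := \frac{1}{\sqrt{N^d}} \sum_{\pmb{x} \in G_n^d} e^{2\pi i N \ipc{\pmb{g}}{\pmb{x}}} \ket{\pmb{x}}$ is the ideal phase-encoding state. Since $\ipc{\pmb{g}}{\pmb{x}} = \sum_i g_i x_i$, the exponential factorises, so $\ket{\phi} = \bigotimes_{i=1}^d \ket{\phi_i}$ with $\ket{\phi_i} := \frac{1}{\sqrt{N}}\sum_{x \in G_n} e^{2\pi i N g_i x}\ket{x}$, and the measurement distribution of $\left(\text{QFT}_{G_n}^{-1}\right)^{\otimes d}\ket{\phi}$ is exactly the product of the single-coordinate distributions obtained from $\text{QFT}_{G_n}^{-1}\ket{\phi_i}$.

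Next I would invoke a textbook phase estimation accuracy bound on each coordinate. After absorbing the offset inherent in the $G_n$ labeling into a global phase, $\ket{\phi_i}$ becomes the standard $n$-qubit phase state encoding $g_i$, and $\text{QFT}_{G_n}$ agrees with the ordinary QFT up to the per-qubit conjugations mentioned in the preliminaries. The assumption $\nrm{\pmb{g}}_\infty \leq 1/3$ places $g_i$ safely inside the representable range $[-1/2, 1/2)$, so there is no wrap-around ambiguity around the nearest grid point. The Nielsen--Chuang style accuracy bound then yields
\[
\Pr\bigl[|k_i - g_i| > 3/N\bigr] \leq \frac{1}{2(3-1)} = \frac{1}{4}
\]
for the ideal measurement outcome of $\text{QFT}_{G_n}^{-1}\ket{\phi_i}$.

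Finally I would transport this bound to the actual state $\ket{\psi}$. Trace distance between pure states is bounded by their Euclidean distance, total variation of measurement outcome distributions is in turn bounded by trace distance, and marginalising to a single coordinate cannot increase total variation. Hence the $i$-th marginal of the actual outcome distribution differs from the ideal marginal by at most $1/12$, and so the actual per-coordinate failure probability is at most $1/4 + 1/12 = 1/3$, as claimed.

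The main obstacle, and the place to spend care, is the bookkeeping around the centered grid $G_n$: I need to verify that the labelling shift between $G_n$ and $\{0,\dots,N-1\}$ contributes only a global phase to $\ket{\phi_i}$ (so that ordinary phase estimation applies verbatim), and that the chosen accuracy statement leaves exactly enough room---an ideal failure probability of at most $1/4$---for the $1/12$ perturbation slack to close at the advertised $1/3$.
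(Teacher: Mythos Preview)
Your proposal is correct and follows essentially the same argument as the paper: factor the ideal state as a product, apply the Nielsen--Chuang phase estimation bound coordinate-wise to get failure probability at most $1/4$, then absorb the $1/12$ state perturbation into the measurement statistics to reach $1/3$. The paper's proof is identical in structure, citing the same $1/4$ bound and the same perturbation step, and it likewise flags the $G_n$ labelling shift (handled via conjugation by single-qubit unitaries) and the role of $\nrm{\pmb{g}}_\infty\le 1/3$ in avoiding wrap-around.
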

	\begin{proof}
		The proof is analogous to that of \cite[Lemma 5.1]{gilyen2017OptQOptAlgGrad}.
		Observe that the ``ideal'' state is a product state
		\begin{equation*}
			\bigotimes_{i=1}^d\text{QFT}_{G_n}^{-1}\frac{1}{\sqrt{N}}\sum_{x_i \in G_n}e^{2\pi i N g_ix_i} \ket{x_i}=		\bigotimes_{i=1}^d\frac{1}{N}\sum_{x_i,k_i\in G_n}e^{2\pi i Nx_i(g_i-k_i)}\ket{k_i}.
		\end{equation*}
		Thus, after the measurement we obtain some coordinate-wise independent outcome $(k_1,\ldots,k_d)$. In the analysis of phase estimation \cite{nielsen2002QCQI}, it can be shown\footnote{Note that this is where we use the assumption $\nrm{\pmb{g}}_\infty\leq 1/3$ in order to convert the phases to the intervals $[-\frac13,\frac13]$. Also note that the Fourier transform we use is slightly altered, but the same argument still holds as in \cite[(5.34)]{nielsen2002QCQI}. One can also directly translate the result by considering the conjugation of the ordinary quantum Fourier transform with a tensor product of $n$ single-qubit unitaries.} that for every $i\in [d]$, the following holds:
		$$
		\Pr\left[|k_i-g_i|> \frac{3}{N}\right]\leq \frac{1}{4} \quad \text{ for every } i\in [d].
		$$
		Since we work with the state $\ket{\psi}$ instead of the ``ideal'' state, the measurement statistics might differ. On the other hand the closeness condition $\nrm{\ket{\psi} - \frac{1}{\sqrt{N^d}}\sum_{\pmb{x} \in G_n^d}e^{2\pi i N \ipc{\pmb{g}}{\pmb{x}}} \ket{\pmb{x}}}\leq \frac{1}{12}$ guarantees that the probability of the above event changes by at most $\frac{1}{12}$ (see for example \cite[Exercise 4.3]{wolf2019QCLectureNotes}).
	\end{proof}
	We will extensively use the follow corollary of for estimating various quantities.
	
	\begin{corollary}[Almost linear block-Hamiltonian to gradient]\label{cor:blockToGrad}
		Let $\eps,\delta\in\!(0,\frac{1}{6}]$, $b:=\lceil \log_2(\frac{24}{\eps}) \rceil$, $B=2^b$ and  $\beta:=\frac{1}{48}$. Suppose that we have an $a$-block-encoding $W$ of a diagonal matrix with diagonal entries $f(\pmb{x})\in \R$ for $\pmb{x} \in G_b^d$ satisfying $|f(\pmb{x}) - \ipc{\pmb{x}}{\pmb{g}}|\leq \frac{\eps\beta}{6\pi}$ for at least a $(1- \beta^2)$ fraction of the points in $G_b^d$. Then with $\bigO{\frac{1}{\eps}\log(\frac{d}{\delta})}$ (controlled) uses of $W$ (and its inverse) and 
		$\bigO{\left(d\log(\frac{1}{\eps})\log\log(\frac{1}{\eps})+\frac{a}{\eps}\right)\log(\frac{d}{\delta})}\!$ other gates with circuit depth $\bigO{\frac{\log(a)}{\eps}}\!$
		we can compute a vector $\pmb{k}\in [-4,4]^d$ such that 
		$\Pr\left[\nrm{\pmb{k}-\pmb{g}}_\infty> \eps\right]\leq \delta$.			    
	\end{corollary}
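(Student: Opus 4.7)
The approach is to realize Jordan's algorithm ``as black-boxedly as possible'' on top of the diagonal block-encoding $W$. Concretely, I would (i) prepare a uniform superposition $\frac{1}{\sqrt{N^d}}\sum_{\pmb{x}\in G_b^d}\ket{\pmb{x}}$ by applying $nd$ Hadamards, (ii) use Hamiltonian simulation (\autoref{lem:blockHamSim}) on $W$ for time $t=2\pi N$ (recall $N=2^b=\Theta(1/\eps)$) to realize a block-encoding of $e^{2\pi i N\,\mathrm{diag}(f)}$ up to operator-norm error $\eps'$ for some absolute constant $\eps'$, (iii) apply the block-encoding to the uniform superposition (using $\ket{0}^{\otimes(a+2)}$ ancillas), which on the ``good'' branch $\ket{0}^{\otimes(a+2)}$ produces the state $\frac{1}{\sqrt{N^d}}\sum_{\pmb{x}\in G_b^d} e^{2\pi i N f(\pmb{x})}\ket{\pmb{x}}$ up to small error, (iv) apply $(\text{QFT}_{G_b}^{-1})^{\otimes d}$, and (v) measure the coordinate registers in the computational basis. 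By \autoref{lemma:genericJordan}, provided the produced state is $\tfrac{1}{12}$-close in norm to the ``ideal'' phase-state $\frac{1}{\sqrt{N^d}}\sum_{\pmb{x}\in G_b^d} e^{2\pi i N\ipc{\pmb{x}}{\pmb{g}}}\ket{\pmb{x}}$, each coordinate $k_i$ of the measurement outcome satisfies $|k_i-g_i|\leq 3/N\leq \eps/8$ with probability at least $2/3$.

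The main technical step is the $\tfrac{1}{12}$-closeness argument. I would split the error budget into three parts: (a) Hamiltonian simulation error, controlled by choosing $\eps'$ to be a small absolute constant, which by \autoref{lem:blockHamSim} costs $\bigO{N+\log(1/\eps')}=\bigO{1/\eps}$ calls to $W$; (b) the phase error on ``good'' $\pmb{x}$'s, where $|e^{2\pi i N f(\pmb{x})}-e^{2\pi i N\ipc{\pmb{x}}{\pmb{g}}}|\leq 2\pi N\cdot \tfrac{\eps\beta}{6\pi}=\tfrac{N\eps\beta}{3}=\bigO{\beta}$ since $N=\Theta(1/\eps)$; and (c) the contribution from the $\beta^2$-fraction of ``bad'' $\pmb{x}$'s, which contributes at most $\sqrt{\beta^2\cdot 4}=2\beta$ to the $L_2$-norm. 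With $\beta=1/48$, summing (b) and (c) and a small slack for (a) fits comfortably under $\tfrac{1}{12}$. The subtle point is that this bound is an $L_2$-norm bound on the superposition and therefore on the post-QFT state, which is exactly what \autoref{lemma:genericJordan} requires.

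To upgrade the per-coordinate success probability $2/3$ to the uniform guarantee $\Pr[\nrm{\pmb{k}-\pmb{g}}_\infty>\eps]\leq \delta$, I would run the above procedure $O(\log(d/\delta))$ times independently, then take the coordinate-wise median. A standard Chernoff bound on each coordinate combined with a union bound over the $d$ coordinates gives the claimed $\delta$ failure probability, and the outputs can be clipped to $[-4,4]$ since the true value lies in $[-1/3,1/3]$. The gate count follows by summing: $\bigO{\log(d/\delta)}$ repetitions, each costing $\bigO{1/\eps}$ calls to $W$ (and $W^\dagger$) for the Hamiltonian simulation plus $\bigO{nd\log n}=\bigO{d\log(1/\eps)\log\log(1/\eps)}$ gates for state preparation and the $d$ inverse QFTs over $G_b$, and $\bigO{a/\eps}$ additional gates from the block-encoding overhead in Hamiltonian simulation; the circuit depth is dominated by the Hamiltonian simulation, giving $\bigO{\log(a)/\eps}$.

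The main obstacle I anticipate is carefully tracking the interaction between the block-encoding ancilla and the ideal-state analysis of \autoref{lemma:genericJordan}: because $W$ produces the phased state only on the $\ket{0}^{\otimes(a+2)}$ branch, I need to argue that either post-selecting is unnecessary (since $e^{2\pi i N\,\mathrm{diag}(f)}$ is the exact top-left block up to simulation error, and for a diagonal Hermitian $f$ the Hamiltonian-simulation block-encoding leaves this block of operator norm $1$, hence its action on the ancilla-$\ket{0}$ subspace is genuinely the desired unitary up to $\eps'$) or equivalently that the junk branch contributes $\bigO{\eps'}$ additional $L_2$-error, which is absorbed into (a) above. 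Once this bookkeeping is done, the rest is routine combination of the stated lemmas.
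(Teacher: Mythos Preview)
Your overall approach matches the paper's: prepare the uniform superposition, use block-Hamiltonian simulation to imprint the approximate phase $e^{2\pi i N f(\pmb{x})}$, invoke \autoref{lemma:genericJordan}, then boost by taking coordinate-wise medians over $O(\log(d/\delta))$ independent runs. However, there is a genuine gap.

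You assert without justification that ``the true value lies in $[-1/3,1/3]$'', but the corollary carries no such hypothesis---indeed the output range $\pmb{k}\in[-4,4]^d$ already signals that $\pmb{g}$ may be larger. \autoref{lemma:genericJordan} \emph{requires} $\|\pmb{g}\|_\infty\le 1/3$, so you cannot apply it directly to $\pmb{g}$. The paper first argues, from the block-encoding assumption (which forces $|f(\pmb{x})|\le 1$) together with approximate linearity on a $(1-\beta^2)$-fraction of points, that $\|\pmb{g}\|_\infty\le 8/3$; it then applies \autoref{lemma:genericJordan} to the rescaled gradient $\pmb{g}/8$ with precision $\eps/8$, using simulation time $2\pi\cdot B/8$ rather than $2\pi B$. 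This rescaling is not cosmetic: with your choice $t=2\pi N$ the per-point phase deviation on good $\pmb{x}$ is $2\pi N\cdot\tfrac{\eps\beta}{6\pi}=\tfrac{N\eps\beta}{3}$, and since $N=2^b\in[24/\eps,48/\eps]$ this lies between $8\beta$ and $16\beta$. Thus $\|\ket{\psi'}-\ket{\phi}\|^2\ge 64\beta^2$ and $\|\ket{\psi'}-\ket{\phi}\|\ge 8\beta=1/6>1/12$, so your error budget cannot close without the factor-of-$8$ rescaling.

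A secondary point: to obtain the stated $O(d\log(1/\eps)\log\log(1/\eps))$ gate count for the inverse QFTs, the paper does \emph{not} control the total QFT error across all $d$ registers (which would require $O(1/d)$-precision per QFT and cost an extra $\log d$ factor). Instead it observes that the marginal measurement statistics on the $j$-th register are unaffected by the unitaries on the other registers, so for each coordinate only a single approximate QFT error enters the $1/12$ budget, and constant-precision QFTs suffice. Your accounting does not mention this, so as written your gate complexity would be slightly worse than claimed.
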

	\begin{proof}
		The main idea is to apply \autoref{lemma:genericJordan} with preparing the (approximate) initial state via block-Hamiltonian simulation \autoref{lem:blockHamSim}.
		The first step is to prepare a uniform superposition over the grid $G_b^d$ by applying a Hadamard gate to all $d\cdot b$ qubits, that are initially in the $\ket{0}$ state. 
		
		Note that due to the assumptions in the statement we have that $|f(\pmb{x})|\leq 1$ and so $|\ipc{\pmb{x}}{\pmb{g}}|\leq 1+\eps\leq \frac{7}{6}$ for at least $1-2^{-10}$ fraction of points in $G_b^d$ since $\beta\leq 2^{-5}$, in turn implying that $\nrm{\pmb{g}}_\infty\leq \frac{8}{3}$. Indeed, let us assume that $g_j> \frac{8}{3}$, we show that this would imply that for at least half of the points with $x_j\geq \frac{7}{16}$ we have that $\ipc{\pmb{x}}{\pmb{g}}> \frac{7}{6}$. First, clearly $x_j\cdot g_j > \frac{7}{6}$. Let $\pmb{\bar{g}}\in\R^{d-1}$ be the vector we get from $\pmb{g}$ by removing its $j$-th coordinate. Then for any $\pmb{\bar{x}}\in G_b^{d-1}$ we have that $\ipc{\pmb{\bar{x}}}{\pmb{\bar{g}}}=-\ipc{-\pmb{\bar{x}}}{\pmb{\bar{g}}}$ 
		so at least one of $\ipc{\pmb{\bar{x}}}{\pmb{\bar{g}}}, \ipc{-\pmb{\bar{x}}}{\pmb{\bar{g}}}$ is greater than or equal zero. Since $b\geq 4$ at least a $\frac{1}{16}$ fraction of points $x \in G_b$ satisfy $x\geq \frac{7}{16}$, and so for at least a $\frac{1}{32}$ fraction of points $x \in G_b^d$ we would get $\ipc{\pmb{x}}{\pmb{g}}> \frac{7}{6}$. Therefore, we will apply \autoref{lemma:genericJordan} to the gradient $\frac{\pmb{g}}{8}$ with precision $\frac{\eps}{8}$.
		
		First let us assume that we have access to a perfect phase oracle $P:=\sum_{\pmb{x}\in G_b^d} \ketbra{\pmb{x}}{\pmb{x}} e^{2\pi i \frac{B}{8} f(\pmb{x})}$ so that we can prepare the sate $\ket{\psi'}=\frac{1}{\sqrt{B^d}}\sum_{\pmb{x}\in G_b^d}\ket{\pmb{x}}e^{2\pi i \frac{B}{8} f(\pmb{x})}$. First let us bound the difference from the ideal state $\ket{\phi}=\frac{1}{\sqrt{N^d}}\sum_{\pmb{x} \in G_b^d}e^{2\pi i N \ipc{\pmb{g}}{\pmb{x}}} \ket{\pmb{x}}$ analogously to the proof of \cite[Lemma 5.1]{gilyen2017OptQOptAlgGrad}. Let $S\subseteq G_b^d$ be the set of points for which $|f(\pmb{x}) - \ipc{\pmb{x}}{\pmb{g}}|\leq \frac{\eps\beta}{4\pi}$ holds, then
		\begin{align*}
		\nrm{\ket{\psi'}-\ket{\phi}}^2\!
		&=\!\frac{1}{B^d}\sum_{\pmb{x}\in G_b^d}\left|e^{2\pi i \frac{B}{8} f(\pmb{x})}-e^{2\pi i \frac{B}{8} \ipc{\pmb{x}}{\pmb{g}}}\right|^2\\
		&=\!\frac{1}{B^d}\!\sum_{\pmb{x}\in S}\left|e^{2\pi i \frac{B}{8} f(\pmb{x})}-e^{2\pi i \frac{B}{8} \ipc{\pmb{x}}{\pmb{g}}}\right|^2
		\!\!+\!\frac{1}{B^d}\!\!\sum_{\pmb{x}\in G_b^d\setminus S}\!\left|e^{2\pi i \frac{B}{8} f(\pmb{x})}-e^{2\pi i \frac{B}{8} \ipc{\pmb{x}}{\pmb{g}}}\right|^2\\
		&\leq \!\frac{1}{B^d}\!\sum_{\pmb{x}\in S}\left|2\pi \frac{B}{8} f(\pmb{x})-2\pi \frac{B}{8} \ipc{\pmb{x}}{\pmb{g}}\right|^2
		\!\!+\!\frac{1}{B^d}\!\!\sum_{\pmb{x}\in G_b^d\setminus S}\!4 \tag{$|e^{iz}-e^{iy}|\leq |z-y|$}\\
		&=\!\frac{1}{B^d}\!\sum_{\pmb{x}\in S}(2\pi \frac{B}{8} )^2\left|f(\pmb{x})-\ipc{\pmb{x}}{\pmb{g}}\right|^2
		\!\!+4\frac{|G_b^d\setminus S|}{B^d} \\
		&\leq\!\frac{1}{B^d}\!\sum_{\pmb{x}\in S}4\beta^2+4\beta^2 \tag{by the assumptions of the corollary}\\	
		&\leq 8\beta^2.
		\end{align*}
		We can implement a $(3-2\sqrt{2})\beta$-approximation $\widetilde{P}$ of the perfect phase oracle $P$ by applying block-Hamiltonian simulation \autoref{lem:blockHamSim} to $W$.\footnote{An $\eps$-precise $(a+2)$-block-encoding of $e^{itH}$ is $\bigO{\sqrt{\eps}}$-close in operator norm to a perfect Hamiltonian simulation unitary $U$ of the form $\ketbra{0}{0}^{\otimes a+2}\otimes e^{\ci t A} + V$, where $V(\ket{0}^{a+2}\otimes I)=0$.} 
		This enables us to prepare an approximate state $\ket{\tilde{\psi}}$ such that $\nrm{\ket{\tilde{\psi}}-\ket{\psi'}}\leq (3-2\sqrt{2})\beta$ and so in turn $\nrm{\ket{\tilde{\psi}}\!-\!\ket{\phi}}\leq3\beta$.
		
		Let $j\in [d]$ be arbitrary an let us assume that we replace the $j$-th inverse quantum Fourier transform by an approximate circuit $\widetilde{Q}$ such that $\nrm{\widetilde{Q}-\text{QFT}_{G_b}^{-1}}\leq \beta$. Accordingly let us define $\ket{\psi^{(j)}}:=I_{G_b}^{\otimes[j-1]}\otimes\left(\text{QFT}_{G_b}\cdot\widetilde{Q}\right)\otimes I_{G_b}^{\otimes[d]\setminus[j]}$, then clearly $\nrm{\ket{\psi^{(j)}}-\ket{\phi}}\leq4\beta=\frac{1}{12}$, so that we can apply \autoref{lemma:genericJordan} to conclude that $\Pr\left[|k_j-g_j|> \frac{3}{N}\right]\leq \frac{1}{4}$. On the other hand the measurement statistics of the $j$-th register is not affected by unitaries that are applied on the other registers, so this conclusion holds even if we replace all inverse quantum Fourier transform by $\widetilde{Q}$. Thus if measure the state $\widetilde{Q}^{\otimes d}\ket{\tilde{\psi}}$ we have for every $j\in [d]$ that 	
		$$
		\Pr\left[|k_j-g_j|> \frac{3}{B}\right]\leq \frac{1}{3}.
		$$
			
		Finally, we repeat the entire procedure $2m+1$-times for $m:=\ceil{10\ln(\frac{d}{\delta})}$ and take the median of the estimates for each coordinate $j\in [d]$. If $|k_j-g_j|\leq \frac{3}{B}$ holds for at least $m+1$ estimates, then the median will give an $\frac{3}{B}\leq \frac{\eps}{8}$-precise estimate for $g_j$.
		We bound the probability of failure using the Chernoff-Hoeffding theorem~\cite [Theorem 1]{hoeffding1963ProbIneqSumsOfBoundedRVs} showing that the probability that $|k_j-g_j|> \frac{3}{B}$ holds for at least $m+1$ out of $2m+1$ estimates is at most $\exp(-D(\frac12\Vert \frac13)(2m+1))\leq\exp(-\frac{1}{20}(2m+1))\leq \exp(-\frac{m}{10})\leq\frac{\delta}{d}$, where $D(x\Vert y)=x\ln(\frac{x}{y})+(1-x)\ln(\frac{1-x}{1-y})$.
		This implies that the $\Pr\left[\nrm{\pmb{k}-\pmb{g}}_\infty> \eps\right]\leq \delta$.

		The query complexity follows from the fact that we prepare the state $\ket{\tilde{\psi}}$ a total of $\bigO{\log(\frac{d}{\delta})}$ times, each time making $\bigO{B}=\bigO{\frac{1}{\eps}}$ (controlled) queries to $W$. The additional gate complexity of preparing $\ket{\tilde{\psi}}$ is $\bigO{a}$ times the query complexity plus the number of initial Hadamard gates. The biggest contribution to the gate complexity comes from the implementation of the approximate (inverse) quantum Fourier transform $\widetilde{Q}$ \cite{barenco1996ApproxQFourierTrafo}.
		The gate complexity of $\widetilde{Q}$ can be bounded by $\bigO{b\log(b)}=\bigO{\log(\frac{1}{\eps})\log\log(\frac{1}{\eps})}$ while its depth by $\bigO{\log(b)}=\bigO{\log\log(\frac{1}{\eps})}$ as shown by~\cite{cleve2000FastParallelQFT}. The additional classical computation can also be performed in parallel with depth $\bigO{\mathrm{poly}(b,m)}$ which is $\bigO{\polylog({\frac{d}{\delta\eps}})}$, since $m=\bigO{\log(\frac{d}{\delta})}$, and $b=\bigO{\log(\frac{1}{\eps})}$.
	\end{proof}
    In \autoref{sec:unbiasedPhaseEst} we improve upon the above \autoref{lemma:genericJordan} and \autoref{cor:blockToGrad}
	by making them (essentially) unbiased, by using our new unbiased phase estimation subroutine instead of just applying $\left(\text{QFT}_{G_n}^{-1}\right)^{\otimes n}$ to each coordinate in Jordan's algorithm. Those improvements play a vital role for our mixed-state tomography results, but they are not necessary for pure-state tomography. Since the unbiased version has some additional $\log \log$ factors, we use the simpler routine for now.

 \section{Relations between vector estimates}
\label{s:normstuff}
In this section we prove two lemmas that relate different types of estimates for vectors. The first lemma shows a relation between estimates of the vector of amplitudes, and of the vector of corresponding probabilities. The second lemma relates $\ell_q$-estimates for different values of $q$ when the vector is normalized (as is the case with amplitudes and probabilities). Together, these two lemmas allow us to upper bound the complexity of giving $\ell_q$-norm estimates for both amplitudes and probabilities, starting from an $\ell_\infty$-norm estimate for amplitudes. Similarly, with these lemmas a lower bound on the complexity of finding an $\ell_1$-norm estimate for probabilities translates to a lower bound on all other cases.

\subsection{Relation between amplitude and probability estimation}\label{sec:amp2prob}

For a classical probability distribution, we learn all aspects of the distribution by estimating it in $\ell_1$-norm error, i.e., total variation distance. For pure quantum states the $\ell_2$-norm error plays a similar role. It is natural to ask how an $\ell_2$-norm estimate of a quantum state relates to an $\ell_1$-norm estimate of the probability distribution given by computational-basis measurements on that state. In a similar fashion, we want to understand this question also when using $\ell_\infty$-norm error on the quantum state. We provide answers to these questions by showing a relation between $\ell_q$-norm error on a state and $\ell_r$-norm error on the corresponding probability distribution; the special case $q=2, r=1$ is also discussed in \cite[Lemma~3.6]{bernstein1993QuantComplTheory}.

\begin{lemma}
\label{lem:linftol2}
  Let $\ket{\psi} = \sum_{j \in [d]} \alpha_j \ket{j}$ be a quantum state. Let $p\in \mathbb{R}^d$ given by $p_j = |\alpha_j|^2$ be the probability distribution arising from a computational-basis measurement. Let $q\in [2,\infty]$ and let  $t := \frac{1}{\frac{1}{q} + \frac{1}{2}} \in [1,2]$. An $\eps$-$\ell_{q}$-norm estimate $|\tilde{\alpha}|$ of $|\alpha|$ can be used to compute a $4\eps$-$t$-norm estimate $\tilde{p}$ of $p$, using $O(d)$ gates.
\end{lemma}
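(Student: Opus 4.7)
The natural definition is $\tilde p_j := |\tilde\alpha_j|^2$, which matches $p_j = |\alpha_j|^2$ and requires only $O(d)$ arithmetic operations. The proof plan rests on the algebraic factorization
\[
  \tilde p_j - p_j \;=\; \bigl(|\tilde\alpha_j|-|\alpha_j|\bigr)\bigl(|\tilde\alpha_j|+|\alpha_j|\bigr)
\]
combined with H\"older's inequality. The exponents $q$ and $2$ satisfy $\tfrac{1}{q}+\tfrac{1}{2}=\tfrac{1}{t}$ by the definition of $t$, so H\"older gives $\|xy\|_t \le \|x\|_q\|y\|_2$; applying this with $x=|\tilde\alpha|-|\alpha|$ and $y=|\tilde\alpha|+|\alpha|$ and using the triangle inequality for the second factor yields
\[
  \|\tilde p - p\|_t \;\le\; \bigl\||\tilde\alpha|-|\alpha|\bigr\|_q\cdot\bigl\||\tilde\alpha|+|\alpha|\bigr\|_2 \;\le\; \eps\cdot\bigl(\||\tilde\alpha|\|_2 + 1\bigr).
\]
Thus the claimed factor $4$ reduces to the single estimate $\||\tilde\alpha|\|_2 \le 3$.

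For $q = 2$ this is immediate: $\||\tilde\alpha|\|_2 \le \||\alpha|\|_2 + \eps = 1 + \eps$, which is at most $3$ whenever $\eps \le 2$; and if $\eps > 2$ the trivial output $\tilde p = 0$ already satisfies $\|\tilde p - p\|_t \le \|p\|_1 = 1 < 4\eps$, so nothing is lost. For general $q\in(2,\infty]$ the triangle inequality in $\ell_q$ only controls $\||\tilde\alpha|\|_q$, not $\||\tilde\alpha|\|_2$, so I insert a preprocessing step before squaring: first replace $|\tilde\alpha_j|$ by $\min(|\tilde\alpha_j|,1)$ componentwise (this does not increase the $\ell_q$-distance to $|\alpha|$ because $|\alpha_j|\in[0,1]$), and then, if the clipped vector still has $\ell_2$-norm larger than $2$, rescale it uniformly to have $\ell_2$-norm equal to $2$. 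The scaling factor $c=\min\bigl(1,2/\||\tilde\alpha|\|_2\bigr)$ satisfies
\[
  \bigl\|c|\tilde\alpha|-|\alpha|\bigr\|_q \;\le\; c\bigl\||\tilde\alpha|-|\alpha|\bigr\|_q + (1-c)\||\alpha|\|_q \;\le\; c\eps + (1-c),
\]
where I used $\||\alpha|\|_q \le \||\alpha|\|_2 = 1$ for $q\ge2$. All these manipulations are componentwise or a single rescaling, hence $O(d)$ gates.

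The main obstacle is to absorb the $(1-c)$ term from the rescaling step into the factor $4$. The idea is a short case split: either $\eps$ is large enough (e.g.\ $\eps \ge 1/4$) that even the trivial estimate $\tilde p = 0$ meets the target $\|\tilde p - p\|_t \le \|p\|_1 \le 1 \le 4\eps$, or $\eps$ is small, in which case I argue that $\||\tilde\alpha|\|_2$ cannot greatly exceed $2$ without forcing a contradiction with the $\ell_q$-guarantee (via componentwise bounds $||\tilde\alpha_j|-|\alpha_j||\le\eps$ extracted from the $\ell_q$-norm), and so $1-c$ is itself $O(\eps)$. This bookkeeping is the only delicate part of the proof; the remainder is the direct application of the algebraic identity and H\"older's inequality above.
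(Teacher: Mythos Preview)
Your approach is the same as the paper's: factor $\tilde p_j-p_j=(|\tilde\alpha_j|-|\alpha_j|)(|\tilde\alpha_j|+|\alpha_j|)$, apply H\"older with the exponent pair $(q,2)$, and reduce to a bound on $\||\tilde\alpha|+|\alpha|\|_2$. The paper handles the last step by first renormalising $|\tilde\alpha|$ and arguing that the $\ell_q$-error at most doubles; you clip componentwise and then rescale. The core calculation is identical.

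Your ``bookkeeping'' step has a genuine gap. The claim that the componentwise bounds $\bigl||\tilde\alpha_j|-|\alpha_j|\bigr|\le\eps$ force $\||\tilde\alpha|\|_2\le 2$ (after clipping) whenever $\eps$ is small is false. Take $q=\infty$, $|\alpha|=e_1$, and $|\tilde\alpha|=(1,\eps,\ldots,\eps)\in\R^d$ with $\eps=10^{-2}$ and $d=10^{6}$. This is a valid $\eps$-$\ell_\infty$-estimate, clipping does nothing, yet $\||\tilde\alpha|\|_2=\sqrt{1+(d-1)\eps^2}\approx 10$. Your rescaling gives $c\approx 0.2$, so $1-c\approx 0.8$ is not $O(\eps)$; the first coordinate of $(c|\tilde\alpha|)^2$ is about $0.04$ while $p_1=1$, so your output misses $p$ by almost $1$ in $\ell_2$-norm, far exceeding $4\eps=0.04$. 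Neither branch of your case split covers the regime of small $\eps$ with $d\eps^2$ large. (The paper's normalisation step runs into the same example: dividing this $|\tilde\alpha|$ by its $\ell_2$-norm sends the first coordinate to $\approx 0.1$, which is not within $2\eps$ of $|\alpha_1|=1$, so the asserted preservation of $\ell_q$-closeness under normalisation is also problematic for $q>2$.) For $q=2$ both your argument and the paper's go through without difficulty.
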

\begin{proof}
     We first note that for any $\ell_q$-norm estimate $|\tilde{\alpha}|$ of a unit vector $|\alpha|$ we may assume $\||\tilde{\alpha}|\|_2 = 1$, that is, the vector represents a pure state. Indeed, if this is not the case, we can instead use $|\tilde{\alpha}| / \nrm{|\tilde{\alpha}|}_q$, which we can compute using $O(d)$ gates, and which satisfies
     \begin{align*}
       \nrm{|\alpha| - \frac{|\tilde{\alpha}|}{\nrm{|\tilde{\alpha}|}_q}}_q &\leq \nrm{|\alpha| - |\tilde{\alpha}|}_q + \nrm{|\tilde{\alpha}| - \frac{|\tilde{\alpha}|}{\nrm{|\tilde{\alpha}|}_q}}_q \\
       &\leq \eps + \nrm{|\tilde{\alpha}|}_q \cdot \trm{1 - \frac{1}{\nrm{|\tilde{\alpha}|}_q}} \\
       &\leq \eps + (1+\eps) \cdot \trm{1 - \frac{1}{1+\eps}} \\
       &= 2\eps.
     \end{align*}
     
     By assumption, $q = \frac{t}{1-t/2}$. Using H\"older's inequality for $\lambda_1,\lambda_2\geq 1$ with $\frac{1}{\lambda_1}+\frac{1}{\lambda_2} = 1$, we obtain the following upper bound on the $\ell_t$-norm error in an estimate of $p$:
  \begin{align*}
  \nrm{p-\tilde{p}}_t &=  \trm{ \sum_{j \in [d]} |p_j - \tilde{p}_j|^t }^{1/t}\\
  &=  \trm{ \sum_{j \in [d]} ||\alpha_j|^2 - |\tilde{\alpha}|^2_j|^t }^{1/t}\\
  &\leq  \trm{ \sum_{j \in [d]} \left||\alpha_j| - |\tilde{\alpha}|_j\right|^t \left||\alpha_j| + |\tilde{\alpha}|_j\right|^t  }^{1/t}\\
  &\leq  \trm{ \sum_{j \in [d]} \left||\alpha_j| - |\tilde{\alpha}|\right|^{t\lambda_1}  }^{\frac{1}{t\lambda_1}}  \trm{ \sum_{j \in [d]} \left||\alpha_j| + |\tilde{\alpha}|_j\right|^{t\lambda_2}  }^{\frac{1}{t\lambda_2}} \\
  &= \nrm{|\alpha|-|\tilde{\alpha}|}_{t\lambda_1}\nrm{|\alpha|+|\tilde{\alpha}|}_{t\lambda_2}.
  \end{align*}
  (Recall that $\lambda_1, \lambda_2, t \geq 1$, so $t\lambda_1 \ge 1$ and $t\lambda_2 \ge 1$.) Pick $\lambda_2 = 2/t$. We then have
  \[
  \nrm{|\alpha|+|\tilde{\alpha}|}_{t\lambda_2} = \nrm{|\alpha|+|\tilde{\alpha}|}_2  \leq \nrm{|\alpha|}_2+\nrm{|\tilde{\alpha}|}_2 \leq 2.
  \]
  Combining this with $\lambda_1 = \frac{1}{1-\frac{1}{\lambda_2}} = \frac{1}{1-\frac{t}{2}}$, we get
  \begin{align*}
  \nrm{p-\tilde{p}}_t &\leq   \nrm{|\alpha|-|\tilde{\alpha}|}_{t\lambda_1}\nrm{|\alpha|+|\tilde{\alpha}|}_{t\lambda_2}\\
    & \leq  \nrm{|\alpha|-|\tilde{\alpha}|}_{\frac{t}{1-t/2}} \cdot 2\\
    &= 2\nrm{|\alpha|-|\tilde{\alpha}|}_{q}\\
    &\leq 4\eps. 
  \end{align*}
\end{proof}

Note that the reverse does not hold, and in particular $\eps$-$\ell_1$-norm estimates of $p$ are not equivalent to  $\Theta(\eps)$-$\ell_2$-norm estimates of $\alpha$ . This is not only due to the information about the phases being lost: even for the case $d=2$ where the $\alpha_j$ are positive reals, estimating the probabilities is not enough to learn the amplitudes to a similar error. In particular, let  $p_0 = \alpha_0^2 = \eps \leq 1$ and let $\tilde{p}_0 = p_0+\eps$ be an $\eps$-estimate for the probability. The amplitude satisfies 
\[
\tilde{\alpha}_0 = \sqrt{\tilde{p}_0} = \sqrt{p_0+\eps} = \sqrt{2\eps} \geq \trm{1+\frac14}\sqrt{\eps} = \alpha_0 + \frac14\sqrt{\eps}.
\]
Thus, the precision gets quadratically worse for small amplitudes.

\subsection{Dimension-independent norm conversion for normalized vectors}
If we have an estimate of a vector with error at most $\eps$ in, for example, the $\ell_{\infty}$-norm, we can use norm conversion to show that this is also an estimate with error at most $\eps d^{1/q}$ in the $\ell_q$-norm. However, this bound is poor for large $d$. Here, we show that we can do better if we know that the vector we are estimating is normalized in some $\ell_s$-norm, using the fact that such a vector cannot have too many large entries. In fact, we obtain a norm conversion lemma that does not depend on the dimension at all. We first prove a very general version of the following lemma; for results in subsequent sections of the paper we always use $s=2$ (for quantum states) or $s=1$ (for probability distributions), and set $\gamma = 1$.

 \begin{lemma}\label{lem:new-norm-conversion}
  Let $\alpha \in \C^d$ be such that $\nrm{\alpha}_s \leq \gamma$. Let $\tilde{\alpha}\in \C^d$ be such that $\nrm{\alpha-\tilde{\alpha}}_{\infty}\leq \eta$. Let $\bar{\alpha} \in \C^d$ be the vector defined as $\bar{\alpha}_j = \tilde{\alpha}_j$ if $|\tilde{\alpha}_j| \ge 2 \eta$, $\bar{\alpha}_j = 0$ otherwise. Then for all $q\in (s,\infty)$ we have $\nrm{\alpha-\bar{\alpha}}_q \leq \min\{ 4\eta^{\frac{q-s}{q}}\gamma^{\frac sq}, 3 d^{1/q}\eta\}$.
 \end{lemma}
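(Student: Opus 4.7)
\medskip

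\noindent\textbf{Proof proposal.} The plan is to first establish a pointwise comparison between the error vector and $\alpha$ itself, which will immediately transfer the $\ell_s$-norm bound on $\alpha$ to the error, and then interpolate against an easy $\ell_\infty$ bound. Let $\beta := \alpha - \bar{\alpha}$ and partition the coordinates into $S := \{j : |\tilde{\alpha}_j|\geq 2\eta\}$ (where $\bar{\alpha}_j = \tilde{\alpha}_j$) and its complement $S^c$ (where $\bar{\alpha}_j = 0$).

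The key step is to show $|\beta_j|\leq |\alpha_j|$ for every $j$. For $j\in S^c$ this is immediate since $\beta_j = \alpha_j$. For $j\in S$, the reverse triangle inequality with $\|\alpha-\tilde\alpha\|_\infty\leq \eta$ gives $|\alpha_j|\geq |\tilde\alpha_j|-\eta\geq \eta$, while simultaneously $|\beta_j|=|\alpha_j-\tilde\alpha_j|\leq \eta\leq |\alpha_j|$. This instantly yields $\|\beta\|_s\leq \|\alpha\|_s\leq \gamma$. In parallel, I would bound $\|\beta\|_\infty$: for $j\in S$ we already have $|\beta_j|\leq \eta$, and for $j\in S^c$ we have $|\beta_j|=|\alpha_j|\leq |\tilde\alpha_j|+\eta < 3\eta$, so $\|\beta\|_\infty\leq 3\eta$.

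With these two facts both bounds follow by a one-line interpolation. Writing $|\beta_j|^q = |\beta_j|^{q-s}\cdot |\beta_j|^s \leq \|\beta\|_\infty^{q-s}|\beta_j|^s$ and summing over $j$,
\[
  \|\beta\|_q^q \;\leq\; \|\beta\|_\infty^{q-s}\,\|\beta\|_s^s \;\leq\; (3\eta)^{q-s}\gamma^s,
\]
so $\|\beta\|_q \leq 3^{(q-s)/q}\eta^{(q-s)/q}\gamma^{s/q}\leq 4\eta^{(q-s)/q}\gamma^{s/q}$ since $(q-s)/q\in[0,1]$. For the second bound in the minimum, I would use the trivial $\|\beta\|_q\leq d^{1/q}\|\beta\|_\infty\leq 3d^{1/q}\eta$.

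There is no real obstacle here; the only subtlety worth highlighting is that the threshold $2\eta$ is chosen precisely so that every surviving coordinate satisfies $|\alpha_j|\geq \eta$, which is what makes the pointwise domination $|\beta|\leq|\alpha|$ (and hence the transfer of the $\ell_s$ bound) work. Had the threshold been smaller, coordinates with $|\alpha_j|$ below the per-coordinate error $\eta$ could leak through and spoil this step.
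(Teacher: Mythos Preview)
Your proof is correct and in fact a bit cleaner than the paper's. The paper also splits into $J=\{j:|\tilde\alpha_j|\ge 2\eta\}$ and $J^c$, but it handles the two pieces separately: on $J^c$ it uses $|\alpha_j|<3\eta$ and the interpolation $|\alpha_j|^q\le(3\eta)^{q-s}|\alpha_j|^s$, while on $J$ it uses the cardinality bound $|J|\le\gamma^s/\eta^s$ (from $|\alpha_j|\ge\eta$ there) together with $|\alpha_j-\bar\alpha_j|\le\eta$, and then adds the two $\ell_q$ contributions to get $(3^{(q-s)/q}+1)\eta^{(q-s)/q}\gamma^{s/q}\le 4\eta^{(q-s)/q}\gamma^{s/q}$.

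Your route replaces the counting argument on $J$ by the single pointwise observation $|\beta_j|\le|\alpha_j|$, which immediately transfers the $\ell_s$ bound to $\beta$ and then lets one interpolation handle both pieces at once. This even yields the slightly sharper constant $3^{(q-s)/q}\le 3$ instead of $4$. The paper's decomposition, on the other hand, makes the role of the threshold more explicit via the containments $\{|\alpha_j|\ge 3\eta\}\subseteq J\subseteq\{|\alpha_j|\ge\eta\}$ and the resulting size bound on $J$, which is conceptually useful elsewhere but not needed for the bare inequality.
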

 \begin{proof}
 The second term in the $\min$ follows from the standard norm conversion and the fact that $\bar{\alpha}$ is an $3\eta$-$\ell_\infty$-approximation; thus, we only need to prove the first term.

    We know that $\nrm{\alpha-\tilde{\alpha}}_\infty\leq \eta$. Let $J = \{j\in [d] : |\tilde{\alpha}_j| \geq 2\eta\}$. Then 
     \[
         \{j\in [d] : |\alpha_j| \geq 3 \eta\} \subseteq J \subseteq \{j\in [d] : |\alpha_j| \geq \eta \}.
     \]
     And, as $\nrm{\alpha}_s \leq \gamma$, we have $|J| \leq \frac{\gamma^s}{\eta^s}$.

     Now, let $\bar \alpha$ be $\tilde{\alpha}$ on all $j\in J$ and $0$ everywhere else. On the indices in $J$ we know that $\bar \alpha$ is an $\eta$ estimate of $\alpha$. Then 
     \allowdisplaybreaks
     \begin{align*}  
         \nrm{\alpha-\bar{\alpha}}_q &= \left( \sum_{j\in[d]} |\alpha_j-\bar{\alpha}_j|^q\right)^{1/q}\\
          &\leq \left( \sum_{j\not\in J} |\alpha_j-\bar{\alpha}_j|^q\right)^{1/q} +  \left( \sum_{j\in J} |\alpha_j-\bar{\alpha}_j|^q\right)^{1/q}\\
          &\leq \left( \sum_{j\not\in J} |\alpha_j|^{q-s}|\alpha_j|^s\right)^{1/q} +  \left( |J|\max_j |\alpha_j-\bar{\alpha}_j|^q\right)^{1/q}\\
          &\leq \left( \sum_{j\not\in J} (3\eta)^{q-s}|\alpha_j|^s\right)^{1/q} +  \left( \frac{\gamma^s}{\eta^s}\eta^q\right)^{1/q}\\
           &= (3\eta)^{\frac{q-s}{q}} \left( \sum_{j\not\in J} |\alpha_j|^s\right)^{1/q} +  \eta^{\frac{q-s}{q}}\gamma^{\frac sq}\\
           &\leq 4\eta^{\frac{q-s}{q}}\gamma^{\frac sq}. \qedhere
     \end{align*}  
 \end{proof}

The lemma stated above is very general, but we only use it with several very specific parameter settings. Thus, we present the following simplified statement.

\begin{corollary}\label{cor:norm-conversion}
 Let $\eps \in (0,1]$, $s \leq q$, and let $y$ be an $\ell_s$-normalized complex vector. In order to obtain an $\eps$-$\ell_q$-norm estimate of $y$, an $\eta$-$\ell_\infty$-norm estimate suffices for
 \[
    \eta = \max\left\{\frac{1}{3} \trm{\frac{\eps}{3} }^{\frac{1}{1-\frac{s}{q}}},\frac{\eps}{ d^{\frac{1}{q}}}\right\}.
 \]
\end{corollary}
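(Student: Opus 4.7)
The plan is to specialize \autoref{lem:new-norm-conversion} with $\gamma=1$, since by assumption $\|y\|_s\leq 1$. Given an $\eta$-$\ell_\infty$-estimate $\tilde{\alpha}$ of $y$, that lemma produces a thresholded vector $\bar{\alpha}$ whose $\ell_q$-error is at most $\min\{4\eta^{(q-s)/q},\,3d^{1/q}\eta\}$. To make this error at most $\eps$ it suffices that at least one of the two terms is, so the largest acceptable $\eta$ is the larger of the two thresholds obtained by solving each term separately for $\eps$; this is exactly why the corollary's formula is a $\max$. I will treat the two regimes in turn and supply the algorithm (thresholded or raw estimator) appropriate to each.

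For the dimension-dependent regime, rather than using the thresholded $\bar{\alpha}$ (which would only yield $\eta\leq\eps/(3d^{1/q})$, a factor $3$ worse than claimed), I will just output $\tilde{\alpha}$ itself. By the standard norm conversion, $\|y-\tilde{\alpha}\|_q\leq d^{1/q}\|y-\tilde{\alpha}\|_\infty\leq d^{1/q}\eta$, and this is at most $\eps$ whenever $\eta\leq\eps/d^{1/q}$, matching the second term inside the $\max$.

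For the dimension-independent regime, I will revisit the proof of \autoref{lem:new-norm-conversion} to extract the sharper intermediate inequality $\|y-\bar{\alpha}\|_q\leq(3^{(q-s)/q}+1)\eta^{(q-s)/q}$ that appears just before the authors round the constant $3^{(q-s)/q}+1$ up to $4$. Substituting $\eta=\tfrac{1}{3}(\eps/3)^{1/(1-s/q)}$ gives $\eta^{(q-s)/q}=(1/3)^{(q-s)/q}\cdot\eps/3$, so the $\ell_q$-error is at most $(1+3^{-(q-s)/q})\eps/3\leq 2\eps/3\leq\eps$. Taking the maximum of the two thresholds derived in the two paragraphs gives the stated formula. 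The one mildly subtle point, and the only place where care is required, is that the looser constant $4$ stated in \autoref{lem:new-norm-conversion} is not sharp enough to yield $\leq\eps$ as $q\to s^+$ with the corollary's clean choice of $\eta$, so one must use the refined intermediate bound from inside the lemma's proof rather than its final stated form.
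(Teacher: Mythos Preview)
Your proof is correct and takes essentially the same approach as the paper: invoke \autoref{lem:new-norm-conversion} with $\gamma=1$ for the first term in the $\max$, and use the standard norm inequality $\|\cdot\|_q\leq d^{1/q}\|\cdot\|_\infty$ on the raw estimate for the second. In fact you are more careful than the paper's one-line argument: you correctly observe that the rounded constant $4$ in the lemma's final statement does not quite deliver the corollary's stated $\eta$ when $q$ is close to $s$ (since $4\cdot 3^{-(q-s)/q}\cdot(\eps/3)$ can exceed $\eps$ as $(q-s)/q\to 0$), and you go back to the sharper intermediate bound $(3^{(q-s)/q}+1)\eta^{(q-s)/q}$ from inside the lemma's proof to verify the exact constant.
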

\begin{proof}
The first term follows from \autoref{lem:new-norm-conversion} by letting $\gamma = 1$. The second term in the max comes from a standard norm-conversion on the vector of errors, as a $v\in [-\eps,\eps]^d$ has $q$-norm at most $\eps d^{1/q}$.
\end{proof}

This corollary has an immediate consequence. If one is only interested in finding an $\ell_q$-estimate of the vector of probabilities $p_j$ prepared by some unitary operation $U : \ket{0} \mapsto \sum_{j=1}^d \sqrt{p_j}\ket{j}$, then one can directly apply \autoref{cor:norm-conversion} in conjunction with the $\tilde{O}(1/\eps)$-query algorithm for $\ell_{\infty}$-algorithm from \cite{apeldoorn2021QProbOraclesMulitDimAmpEst}. The number of controlled and inverse calls to $U$ then becomes \[
\widetilde{O}(1/\eta) = \widetilde{O}\left(\min\left\{\left(\frac{3}{\eps}\right)^{\frac{1}{1-\frac1q}}, \frac{d^{\frac1q}}{\eps}\right\}\right),
\]
which we show to be optimal up to polylogarithmic factors in \autoref{subsec:lb-with-inverses}.

\section{Pure-state tomography using copies}
\label{s:classicaltomo}
In this section we present and analyze pure-state tomography algorithms that use very little quantum power. The first algorithm that we describe, in \autoref{subsec:classical-samples}, is part of the folklore: we just take measurements in the computational basis, and obtain the absolute values of the amplitudes from the measurement outcomes. We are not aware of a specific reference for the sample complexity of this method, hence we provide a proof for completeness. Then, we add the ability to perform some operations on the quantum state, in \autoref{subsec:cond-samples} and \autoref{subsec:attempt}: the first section simplifies the analysis of the tomography algorithm given in \cite{kerenidis2018QIntPoint}, the second one relaxes some of the assumptions with a slight increase in the sample complexity. The strongest model in this section relates to the case where we have access to a state-preparation unitary and its controlled version, but not its inverse. We discuss the setting where the inverse is available in \autoref{s:quantumtomo}.

\subsection{Absolute values using computational-basis measurements}
\label{subsec:classical-samples}

Given classical samples via computational-basis measurements, how many samples do we need for an $\ell_q$-norm estimate of $\alpha$? Clearly we cannot learn the phases, so we have to limit ourselves to the absolute values of the amplitudes. Even then, the remark at the end of \autoref{sec:amp2prob} seems discouraging: on the surface, estimation of the related distribution $p$ seems the best that we can do with computational-basis measurements, and converting the error bound from probabilities to amplitudes makes the precision quadratically worse.  However, as we discuss next, $\tilde O(\frac{1}{\eps^2})$ samples suffice and are optimal to give an $\ell_\infty$-norm estimate of $|\alpha|$.

\begin{proposition}\label{lem:sample-infty-norm}
  Let $0 < \eps, \delta < 1$. Let $\ket{\psi} = \sum_{j \in [d]} \alpha_j \ket{j}$ be a quantum state with $\alpha_j\in \C$, and let $p\in \mathbb{R}^d$, defined by $p_j = |\alpha_j|^2$, be the probability distribution of the outcomes of a computational-basis measurement. Then, $O(\log(d/\delta)/\eps^2)$ measurements of $\ket{\psi}$ in the computational basis suffice to learn an $\eps$-$\ell_{\infty}$-norm estimate $|\tilde{\alpha}|$ of $|\alpha|$, with success probability at least $1-\delta$.
\end{proposition}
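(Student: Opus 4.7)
The approach is to take $N = C \log(2d/\delta)/\eps^2$ independent computational-basis measurements of $\ket{\psi}$ for a sufficiently large absolute constant $C$, let $N_j$ be the number of times outcome $j$ is observed, set $\tilde p_j := N_j/N$, and define $|\tilde\alpha_j| := \sqrt{\tilde p_j}$. As the discussion following \autoref{lem:linftol2} illustrates, Hoeffding's additive bound $|\tilde p_j - p_j| = O(\eps)$ is too weak here: converting it to an amplitude error via $|\sqrt{a} - \sqrt{b}| \leq \sqrt{|a - b|}$ only yields $O(\sqrt{\eps})$ when $p_j$ is small, so a square-root has to be absorbed back into the $\eps$ and Hoeffding cannot supply it.

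The key step is to use Bernstein's inequality, which exploits that $N_j \sim \mathrm{Binomial}(N,p_j)$ has variance at most $N p_j$. Applied per coordinate and combined with a union bound over $j \in [d]$, it yields that, with probability at least $1-\delta$, simultaneously for all $j$,
\[
 |\tilde p_j - p_j| \leq O\bigl(\sqrt{p_j}\, \eps + \eps^2\bigr),
\]
with the implicit constants shrinking as $C$ grows. I would then condition on this event and turn the probability estimate into an amplitude estimate using the identity $|\sqrt{a} - \sqrt{b}| = |a-b|/(\sqrt{a} + \sqrt{b})$, splitting on the size of $p_j$.

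If $p_j \geq \eps^2$, we have $\sqrt{p_j} \geq \eps$, so
\[
 \bigl| |\alpha_j| - |\tilde\alpha_j| \bigr| \leq \frac{|p_j - \tilde p_j|}{\sqrt{p_j}} \leq O(\eps) + \frac{O(\eps^2)}{\sqrt{p_j}} \leq O(\eps).
\]
If $p_j < \eps^2$, then $|\alpha_j| < \eps$, and the Bernstein bound forces $\tilde p_j \leq p_j + O(\eps^2) = O(\eps^2)$, hence $|\tilde\alpha_j| = \sqrt{\tilde p_j} = O(\eps)$; the triangle inequality then gives $\bigl| |\alpha_j| - |\tilde\alpha_j| \bigr| \leq O(\eps)$. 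Taking $C$ large enough absorbs all constants into the desired error $\eps$.

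The only substantive obstacle is recognising that Hoeffding is insufficient and that a variance-sensitive concentration bound is required, together with the case split that handles the non-Lipschitz behaviour of $\sqrt{\cdot}$ near zero.
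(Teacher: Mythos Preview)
Your proposal is correct and takes essentially the same approach as the paper: both arguments replace Hoeffding by a variance-sensitive concentration bound and then handle the non-Lipschitzness of $\sqrt{\cdot}$ near zero via a case split. The paper phrases the concentration step via the Chernoff--KL bound $D(x\Vert y)\geq (x-y)^2/(2\max\{x,y\})$ and chooses the explicit coordinate-wise target $\eps_j=\eps|\alpha_j|/2+(\eps/2)^2$, whereas you invoke Bernstein directly to get $|\tilde p_j-p_j|=O(\sqrt{p_j}\,\eps+\eps^2)$; these yield the same qualitative bound, and the subsequent case analyses (the paper's $p_j\lessgtr\eps_j$ versus your $p_j\lessgtr\eps^2$) are equivalent up to constants.
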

\begin{proof}
 Let us consider a single coordinate $\alpha_j$ with associated probability $p_j = |\alpha_j|^2$. Our goal is to estimate $|\alpha_j|$. We take $k$ samples to find an estimate $\tilde{p}_j$ of $p_j$. The Chernoff bound tells us that for the error $\eps_j$ in this coordinate we have
  \[
    \P[\tilde{p}_j >  p_j+\eps_j] \leq e^{-D(p_j+\eps_j||p_j)k}, \qquad \text{if } p_j + \eps_j < 1,
  \]
  \[
    \P[\tilde{p_j} >  p_j-\eps_j] \leq e^{-D(p_j-\eps_j||p_j)k}, \qquad \text{if } p_j - \eps_j > 0,
  \]
 where $D(x||y)$ is Kullback–Leibler divergence. We need the conditions shown on the right-hand side because the Kullback-Leibler divergence is only defined for $x,y \in (0,1)$, but it is easily observed that if the conditions on the right are not satisfied, the probabilities on the left-hand side trivially become $0$. Since $D(x||y) \geq \frac{(x-y)^2}{2\max\{x,y\}}$, for all $x,y \in (0,1)$, we get
 \[
     \P[|\tilde{p}_j - p_j | > \eps_j] \leq 2e^{-\frac{\eps_j^2}{2(p_j+\eps_j)}k},
 \]
 and it is easily checked that this bound also holds whenever $p_j + \eps_j \geq 1$, or $p_j - \eps_j \leq 0$. Hence, picking $k \geq \frac{2(p_j + \eps_j)\ln(2/\delta')}{\eps_j^2} = \frac{2(|\alpha_j|^2 + \eps_j)\ln(2/\delta')}{\eps_j^2}$ ensures that $ P[|\tilde{p}_j - p_j | \geq \eps_j] \leq \delta'$. 
 
 We now pick $\eps_j = \eps|\alpha_j|/2 + (\eps/2)^2$. Note that we do not actually know this value, as it depends on the yet-to-be-estimated $\alpha_j$, but with this choice we find that
 \[
   \frac{2(|\alpha_j|^2 + \eps_j)\ln(\frac{2}{\delta'})}{\eps_j^2} = \frac{2(|\alpha_j|^2 + \frac{\eps|\alpha_j|}{2} + (\frac{\eps}{2})^2)\ln(\frac{2}{\delta'})}{(\frac{\eps|\alpha_j|}{2} + (\frac{\eps}{2})^2)^2} \leq \frac{2(|\alpha_j|^2 + \eps|\alpha_j| + (\frac{\eps}{2})^2)\ln(\frac{2}{\delta'})}{(\frac{\eps}{2})^2(|\alpha_j| + \frac{\eps}{2})^2} = \frac{8\ln(\frac{2}{\delta'})}{\eps^2}.
 \]
 Thus, it suffices to choose $k = 8\ln(2/\delta')/\eps^2$. Letting $\delta' = \delta / d$ and applying the union bound, we have that, with probability $1-\delta$, for all $j\in[d]$, the resulting estimates $\tilde{p}_j$ satisfy $|\tilde{p}_j - p_j| \leq \eps_j$. First, this implies
 \begin{align*}
  |\tilde{\alpha}_j| - |\alpha_j| &\leq \sqrt{p_j+\eps_j} - |\alpha_j| \\
  &= \sqrt{|\alpha_j|^2 + \frac{\eps|\alpha_j|}{2} + \trm{\frac{\eps}{2}}^2} - |\alpha_j| \\
  &\leq \sqrt{|\alpha_j|^2 + \eps|\alpha_j| + \trm{\frac{\eps}{2}}^2} - |\alpha_j| \\
  &= |\alpha_j| + \frac{\eps}{2} - |\alpha_j| \\
  &= \frac{\eps}{2} \\
  &< \eps.
 \end{align*}

 Next we show that $|\alpha_j| - |\tilde{\alpha}_j| < \eps$. First consider the case where $p_j \leq \eps_j$. In that case, we have
 \[
    |\alpha_j|^2 = p_j \leq \frac{\eps|\alpha_j|}{2} + \trm{\frac{\eps}{2}}^2 \qquad \Leftrightarrow \qquad \trm{\frac{2|\alpha_j|}{\eps}}^2 \leq \frac{2|\alpha_j|}{\eps} + 1 \qquad \Leftrightarrow \qquad |\alpha_j| \leq \frac{1+\sqrt{5}}{4}\eps.
 \]
 Hence, we find that
 \[
    |\alpha_j| - |\tilde{\alpha_j}| \leq |\alpha_j| \leq \frac{1+\sqrt{5}}{4}\eps < \eps.
 \]
 On the other hand, if $p_j > \eps_j$, we have
 \begin{align*}
  |\alpha_j| - |\tilde{\alpha}_j| &\leq |\alpha_j| - \sqrt{p_j-\eps_j} = |\alpha_j| - \sqrt{|\alpha_j|^2 - \frac{\eps|\alpha_j|}{2} - \trm{\frac{\eps}{2}}^2} \\
  &= |\alpha_j| - \trm{|\alpha_j| - \frac{\eps}{4}} \sqrt{1 - \frac{5(\frac{\eps}{2})^2}{4(|\alpha_j| - \frac{\eps}{4})^2}} \\&\leq |\alpha_j| - \trm{|\alpha_j| - \frac{\eps}{4}} \trm{1 - \frac{5(\frac{\eps}{2})^2}{4(|\alpha_j| - \frac{\eps}{4})^2}} \\
  &= \frac{\eps}{4} + \frac{5(\frac{\eps}{2})^2}{4(|\alpha_j| - \frac{\eps}{4})} \leq \frac{\eps}{4} + \frac{5\eps^2}{16\frac{\sqrt{5}}{4}\eps} \\& = \frac{\eps}{4} + \frac{\sqrt{5}\eps}{4} \\&< \eps,
 \end{align*}
 where in the last line, we used that $|\tilde{\alpha}_j| > (1+\sqrt{5})\eps/4$.
 Thus, we can compute a vector $|\tilde{\alpha}|$ that is an $\eps$-$\ell_\infty$-norm estimate of $|\alpha|$ with $O\trm{\log(d/\delta)/\eps^2}$ samples.
 \end{proof}

We can use the above theorem to approximate the vector of absolute values of the amplitudes in other norms as well.
 
\begin{theorem}\label{thm:state-sampling}
  Let $0 < \delta < 1$, $\eps > 0$, $d \in \N$ and $q \in [2,\infty]$. Let $\ket{\psi} = \sum_{j \in [d]} \alpha_j \ket{j}$ be a quantum state with $\alpha_j\in \C$. Then,
  \[
    O\trm{\min\left\{\trm{\frac{3}{\eps}}^{\frac{1}{\frac12 - \frac1q}}, \frac{d^{\frac2q}}{\eps^2}\right\} \cdot \log\trm{\frac{d}{\delta}}}
  \]
  computational-basis measurements of $\ket{\psi}$ suffice to learn an $\eps$-$\ell_{q}$-norm estimate $|\tilde{\alpha}|$ of $|\alpha|$, with success probability $1-\delta$.
\end{theorem}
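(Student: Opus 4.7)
The plan is to combine \autoref{lem:sample-infty-norm} with the dimension-independent norm conversion of \autoref{cor:norm-conversion}. The key observation is that $|\alpha|$ is an $\ell_2$-normalized nonnegative vector, since $\sum_j |\alpha_j|^2 = \langle \psi | \psi \rangle = 1$, so the norm-conversion corollary applies with $s = 2$.

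First, I would apply \autoref{cor:norm-conversion} with $s=2$ to reduce the problem of obtaining an $\eps$-$\ell_q$-norm estimate of $|\alpha|$ to that of obtaining an $\eta$-$\ell_\infty$-norm estimate, where
\[
\eta = \max\!\left\{\frac{1}{3}\trm{\frac{\eps}{3}}^{\frac{1}{1-\frac{2}{q}}},\ \frac{\eps}{d^{1/q}}\right\}.
\]
Then I would invoke \autoref{lem:sample-infty-norm}, which produces such an $\eta$-$\ell_\infty$-norm estimate using $O(\log(d/\delta)/\eta^2)$ computational-basis measurements with success probability at least $1-\delta$.

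The remaining step is purely arithmetic: since $1/\max\{A,B\} = \min\{1/A,1/B\}$, we get
\[
\frac{1}{\eta^2} = \min\!\left\{9\trm{\frac{3}{\eps}}^{\frac{2}{1-\frac{2}{q}}},\ \frac{d^{2/q}}{\eps^2}\right\} = \min\!\left\{9\trm{\frac{3}{\eps}}^{\frac{1}{\frac{1}{2}-\frac{1}{q}}},\ \frac{d^{2/q}}{\eps^2}\right\},
\]
using $2/(1-\tfrac{2}{q}) = 1/(\tfrac{1}{2}-\tfrac{1}{q})$. The leading constant $9$ is absorbed into the $O(\cdot)$, yielding the claimed bound. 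The conventions $1/0 = \infty$ and $1/\infty = 0$ handle the extreme cases: at $q = \infty$ the exponent equals $2$ and the first term reduces to $O(1/\eps^2)$ while the second also equals $1/\eps^2$, recovering \autoref{lem:sample-infty-norm}; at $q = 2$ the exponent is infinite so the first term drops out and we recover the standard $\widetilde{O}(d/\eps^2)$ bound for $\ell_2$-norm tomography.

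There is no real obstacle: both ingredients are already established, and the whole proof is essentially a one-line invocation of the two lemmas followed by a substitution. The only thing to double-check carefully is the algebraic identity of the exponents and that the success probability from \autoref{lem:sample-infty-norm} transfers directly (which it does, since \autoref{cor:norm-conversion} is a deterministic post-processing step that uses only $O(d)$ classical gates to truncate small entries).
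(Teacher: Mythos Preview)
Your proposal is correct and follows essentially the same approach as the paper: apply \autoref{cor:norm-conversion} with $s=2$ to reduce to an $\eta$-$\ell_\infty$ estimate, then invoke \autoref{lem:sample-infty-norm} and substitute $1/\eta^2$. The paper's proof is exactly this two-line combination, and your additional remarks on the exponent algebra and the $q=2,\infty$ edge cases are accurate elaborations.
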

\begin{proof}
Since the vector $|\alpha|$ is normalized in $\ell_2$-norm, we know from \autoref{cor:norm-conversion}, that in order to obtain an $\eps$-$\ell_q$-norm estimate of $|\alpha|$, it suffices to find an $\eta$-$\ell_{\infty}$-estimate of $|\alpha|$, where
\[\eta = \max\left\{\frac13\trm{\frac{\eps}{3}}^{\frac{1}{1-\frac{2}{q}}}, \frac{\eps}{d^{\frac1q}}\right\}.\]
From \autoref{lem:sample-infty-norm}, we now find that this can be done using
\[O\trm{\frac{\log\trm{\frac{d}{\delta}}}{\eta^2}} = O\trm{\min\left\{\trm{\frac{3}{\eps}}^{\frac{1}{\frac12 - \frac1q}}, \frac{d^{\frac2q}}{\eps^2}\right\} \cdot \log\trm{\frac{d}{\delta}}}.\]
computational-basis measurements.
\end{proof}

\subsection{Recovering the phase information using conditional samples}
\label{subsec:cond-samples}

Our discussion above shows how to estimate the vector of the absolute values of the amplitudes $\alpha_j$ with $O(\log d / \eps^2)$ copies of the quantum state. In this section, we consider having conditional samples of the state $\ket{\psi}$, by which we mean states of the form
\[
\frac{\ket{0}\ket{\psi}+\ket{1}\ket{0}}{\sqrt{2}},
\]
and we consider the problem of recovering all complex amplitudes of $\ket{\psi}$, including the phases.

We note here that if we have access to a controlled state-preparation unitary, we can prepare such a conditional sample of the state with one call to this operation. Crucially, we do not need the inverse of the state-preparation unitary -- if we have access to that as well, then the results from \autoref{s:quantumtomo} improve over those presented here. 

The algorithm is based on the Hadamard test, as described in the next result. 
\begin{lemma}\label{lem:Htrick}
Let $\ket{\psi_0} = \sum_{j \in [d]} \alpha_j \ket{j}$, $\ket{\psi_1} = \sum_{j \in [d]} \beta_j \ket{j}$, and let
\[
  \ket{\phi} = \frac{\ket{0}\ket{\psi_0} + \ket{1}\ket{\psi_1}}{\sqrt{2}}.
\]
Using $O(\log(d/\delta)/\eps^2)$ copies of $\ket{\phi}$, we can, with success probability at least $1-\delta$, compute an $\eps$-$\ell_\infty$-norm estimate of the $2d$-dimensional vector containing entries $|\alpha_j \pm \beta_j|$, for all $j=0,\dots,d-1$.
\end{lemma}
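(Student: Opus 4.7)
The plan is to reduce the lemma to a direct application of \autoref{lem:sample-infty-norm}. The natural thing to do with $\ket{\phi}$ is the Hadamard test: apply a Hadamard gate on the first qubit, turning $\ket{\phi}$ into
\[
\frac{1}{2}\bigl(\ket{0}(\ket{\psi_0}+\ket{\psi_1}) + \ket{1}(\ket{\psi_0}-\ket{\psi_1})\bigr).
\]
Measuring both registers in the computational basis then yields a $2d$-outcome distribution $q_{b,j} = \tfrac14 |\alpha_j + (-1)^b \beta_j|^2$ for $b\in\{0,1\}$, $j\in[d]$. This is a valid probability distribution since $\sum_{b,j} q_{b,j} = \tfrac12(\nrm{\psi_0}^2 + \nrm{\psi_1}^2) = 1$.

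The $2d$-dimensional state $\sum_{b,j}\tfrac12|\alpha_j+(-1)^b\beta_j|\ket{b}\ket{j}$ has amplitude moduli exactly $\tfrac12|\alpha_j\pm \beta_j|$, so estimating these moduli reduces to the setting of \autoref{lem:sample-infty-norm}. I would apply that result with dimension $2d$, precision $\eps/2$, and failure probability $\delta$, obtaining an $(\eps/2)$-$\ell_\infty$-norm estimate of the vector $(\tfrac12|\alpha_j + (-1)^b \beta_j|)_{b,j}$ using
\[
O\!\left(\frac{\log(2d/\delta)}{(\eps/2)^2}\right) = O\!\left(\frac{\log(d/\delta)}{\eps^2}\right)
\]
Hadamard-test experiments, each consuming one copy of $\ket{\phi}$. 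Multiplying the resulting entries by $2$ scales the $\ell_\infty$ error by $2$, yielding an $\eps$-$\ell_\infty$-norm estimate of the desired vector with entries $|\alpha_j\pm\beta_j|$.

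There is essentially no obstacle — the only thing to double-check is that the proof of \autoref{lem:sample-infty-norm} goes through for arbitrary distributions whose amplitudes we wish to estimate (not just those arising as $|\alpha_j|^2$ for a quantum state), but inspection of that proof shows it only uses the Chernoff bound on empirical frequencies of independent samples from an arbitrary distribution, combined with a square-root analysis of the induced error on the amplitudes; no normalization or quantum-specific property is used beyond $p_j \in [0,1]$. Therefore, the reduction is clean and the stated sample complexity and success probability follow.
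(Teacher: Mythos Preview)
Your proof is correct and follows essentially the same approach as the paper: apply a Hadamard to the first qubit, measure in the computational basis, and invoke \autoref{lem:sample-infty-norm} on the resulting $2d$-dimensional state with precision $\eps/2$. Your treatment is in fact slightly more explicit about the factor-of-$2$ rescaling than the paper's own proof.
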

\begin{proof}
Note that
\[
\ket{\phi} = \frac{1}{\sqrt{2}}\trm{ \ket{0}\sum_{j \in [d]} \alpha_j \ket{j} + \ket{1} \sum_{j \in [d]} \beta_j \ket{j} }.
\]
Applying a Hadamard gate to the first qubit yields
\[
\frac{1}{2}\trm{ \ket{0}\sum_{j \in [d]} (\alpha_j +\beta_j) \ket{j} + \ket{1} \sum_{j \in [d]} (\alpha_j-\beta_j) \ket{j} }.
\]
We can now perform computational-basis measurements on this new state, and build up a histogram of the observed outcomes. The proposition then follows from \autoref{lem:sample-infty-norm}, by setting the precision to $\eps/2$.
\end{proof}

Inspired by the method used by Kerenidis and Prakash~\cite{kerenidis2018QIntPoint}, we apply the above proposition to two states with amplitudes $\alpha$ and $|\alpha|$. The algorithm of Kerenidis and Prakash is only concerned with real amplitudes, hence it only needs to estimate the sign of each large $\alpha_j$. Clearly this can be learned from a sufficiently precise estimate of $|\alpha_j-|\alpha_j||$. Since we consider general phases we need to be more careful, as we need to distinguish between a very small positive complex component $\eps\ci$ and a very small negative complex $-\eps\ci$. To this end we also apply the proposition to $\alpha$ and $\ci|\alpha|$, and we give a more careful geometric analysis.

\begin{figure}[ht]

\centering
\includegraphics[width=.5\textwidth]{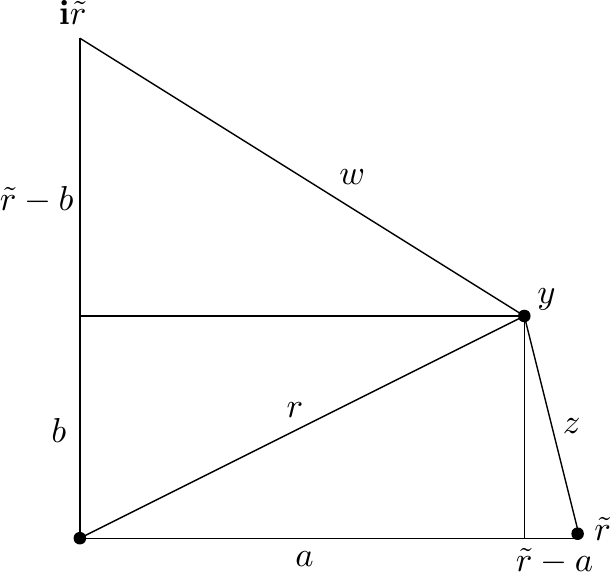}
\caption{Geometry of the different points in $\C$ involved in the proof of \autoref{prop:sammplisch-upper-bound}. Subscripts ``$j$'' have been dropped for clarity.}\label{fig:geometric}
\end{figure}

\begin{proposition}\label{prop:sammplisch-upper-bound}
  Let $0 < \eps, \delta < 1$, and let $\ket{\psi} = \sum_{j \in [d]} \alpha_j \ket{j}$ be a quantum state with $\alpha_j\in \C$. Then $O(\log(d/\delta)/\eps^2)$ copies of $(\ket{0}\ket{\psi} + \ket{1}\ket{0})/\sqrt{2}$ suffice to compute an $\eps$-$\ell_{\infty}$-norm estimate $\tilde{\alpha}$ of $\alpha$, with success probability at least $1-\delta$.
\end{proposition}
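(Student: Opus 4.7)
The plan is to combine the absolute-value estimator of \autoref{lem:sample-infty-norm} with two applications of the Hadamard-trick from \autoref{lem:Htrick}, using reference states explicitly constructed from an initial absolute-value estimate, and then to reconstruct each complex $\alpha_j$ from four distance-type measurements.

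First, measuring the first qubit of a quarter of the given conditional samples and post-selecting on outcome $\ket{0}$ yields copies of $\ket{\psi}$ (the outcome occurs with probability $\tfrac12$), so by \autoref{lem:sample-infty-norm}, $O(\log(d/\delta)/\eps^2)$ samples produce a vector $|\tilde\alpha|\in\R_{\geq 0}^d$ with $\||\tilde\alpha|-|\alpha|\|_\infty\leq \eps/c_1$ for a constant $c_1$ to be fixed later. I would then truncate: for every $j$ with $|\tilde\alpha_j|\leq 3\eps/c_1$, set $\tilde\alpha_j:=0$, which incurs error at most $4\eps/c_1$ on those coordinates since there $|\alpha_j|\leq 4\eps/c_1$. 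Let $\ket{\tilde\psi_R}:=\sum_j|\tilde\alpha_j|\ket{j}/\||\tilde\alpha|\|_2$ be an explicit reference state, prepared by a classically known unitary $V_R$ from $\ket{0}$, and let $V_I$ similarly prepare $\ket{\tilde\psi_I}:=\ci\ket{\tilde\psi_R}$.

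Applying the controlled unitary $\ketbra{0}{0}\otimes I+\ketbra{1}{1}\otimes V_R$ to a second batch of conditional samples converts each into $(\ket{0}\ket{\psi}+\ket{1}\ket{\tilde\psi_R})/\sqrt{2}$; similarly with $V_I$ on a third batch. By \autoref{lem:Htrick} applied with precision $\eps/c_2$, $O(\log(d/\delta)/\eps^2)$ samples per batch suffice to obtain estimates $\tilde s^\pm_j$ and $\tilde t^\pm_j$ of $|\alpha_j\pm a_j|$ and $|\alpha_j\pm \ci a_j|$ respectively, where $a_j:=|\tilde\alpha_j|/\||\tilde\alpha|\|_2$ agrees with $|\tilde\alpha_j|$ up to a $1\pm O(\eps)$ normalization factor. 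A direct expansion gives the identities
\[
    |\alpha_j\pm a_j|^2=|\alpha_j|^2+a_j^2\pm 2 a_j\,\mathrm{Re}(\alpha_j), \qquad |\alpha_j\pm \ci a_j|^2=|\alpha_j|^2+a_j^2\pm 2 a_j\,\mathrm{Im}(\alpha_j),
\]
so for each surviving $j$ I would set
\[
    \tilde\alpha_j := \frac{(\tilde s^+_j)^2-(\tilde s^-_j)^2}{4 a_j}+\ci\,\frac{(\tilde t^+_j)^2-(\tilde t^-_j)^2}{4 a_j},
\]
matching the geometric picture in \autoref{fig:geometric}.

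The hard part will be bounding the error introduced by the division by $a_j$, which can be as small as $\Theta(\eps)$ at the truncation threshold. The saving observation is that squaring the Hadamard-trick estimate inflates the $\eps/c_2$-error only by a factor proportional to the quantity being squared, namely $|\alpha_j\pm a_j|\leq|\alpha_j|+a_j\leq 2 a_j+O(\eps)=O(a_j)$ on the surviving coordinates; hence the numerators above are known to absolute error $O(a_j\eps/c_2)$, and dividing by $4 a_j$ yields per-coordinate error $O(\eps/c_2)$ that does \emph{not} depend on how small $a_j$ is. Choosing $c_1,c_2$ to be sufficiently large absolute constants therefore guarantees $\|\tilde\alpha-\alpha\|_\infty\leq\eps$, and a union bound with per-batch failure probability $\delta/3$ over the three subroutines gives the claimed overall sample complexity $O(\log(d/\delta)/\eps^2)$ and success probability $1-\delta$.
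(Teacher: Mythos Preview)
Your approach is essentially the same as the paper's: estimate $|\alpha|$, build a real (and $\ci$-rotated) reference state from it, apply the Hadamard trick of \autoref{lem:Htrick} against this reference, and solve the resulting triangle-geometry problem coordinate by coordinate. The paper uses only the single distance $z_j=|\alpha_j-\tilde r_j|$ together with the magnitude estimate and a gradient bound on $a_j=\tfrac12\tilde r_j+\tfrac12 r_j^2/\tilde r_j-z_j^2/(2\tilde r_j)$, whereas you use both $|\alpha_j\pm a_j|$ in a difference-of-squares formula; both variants are valid and give the same complexity.

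One small inaccuracy worth flagging: your claim that $\||\tilde\alpha|\|_2=1\pm O(\eps)$ is false in general, since truncation can remove a nontrivial fraction of the $\ell_2$ mass (e.g., when $\ket{\psi}$ is nearly uniform). Fortunately your error argument does not actually rely on this; it only needs $|\alpha_j|=O(a_j)$ and $a_j=\Omega(\eps)$ on surviving coordinates, and both follow because after truncation only coordinates with $|\alpha_j|>2\eps/c_1$ survive, whence $|\tilde\alpha_j|\leq\tfrac32|\alpha_j|$ and $\||\tilde\alpha|\|_2\leq\tfrac32$.
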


\begin{proof}
Let us consider a single amplitude $\alpha_j = (a_j,b_j)$. Let $r_j = |\alpha_j|$.
Using \autoref{lem:sample-infty-norm} we get an $\eps /32$ estimate $\tilde{r}_j$ of $r_j$ with $O(\log(d/\delta)/\eps^2)$ samples. We only consider the elements where $\tilde{r}_j \geq \eps/2$, as we can set the rest to $0$ without introducing too much error.

Let $z_j = |\alpha_j-\tilde{r}_j|$, and $w_j = |\alpha_j-\ci \tilde{r}_j|$; see \autoref{fig:geometric}. Using \autoref{lem:Htrick} we can find an $\eps / 32$-approximation of $z_j$ with the same number of samples as before, and similarly for $w_j$, because we can construct a controlled unitary that transforms $(\ket{0}\ket{\psi} + \ket{1}\ket{0})/\sqrt{2}$ into $(\ket{0}\ket{\psi} + \ket{1}\sum_{j} \tilde{r}_j \ket{j})/\sqrt{2}$, and similarly with an extra phase (we assume w.l.o.g.\ that $\nrm{\tilde{r}}_2 = 1$, as in the proof of \autoref{lem:linftol2}).

We now show how to find $a_j$ from $r_j$ and $z_j$, should we know them exactly; note that $z_j$ is still defined using measured value $\tilde{r}_j$, i.e., we are in the situation of \autoref{fig:geometric}. The squared length of the vertical line down from $\alpha_j$ is $b_j^2 = r_j^2-a_j^2$ by Pythagoras. Applying Pythagoras again, we find that 
\[
z_j^2 = b_j^2+ (\tilde{r}_j-a_j)^2 = r_j^2-a_j^2 +(\tilde{r}_j-a_j)^2 = r_j^2+\tilde{r}_j^2 - 2\tilde{r}_ja_j,
\]
hence $a_j = \frac12 \tilde{r}_j + \frac12 \frac{r_j^2}{\tilde{r}_j}- \frac{z_j^2}{2\tilde{r}_j}$. Thus, we have a formula for $a_j$, and we just need to bound the error that may affect $a_j$ when we use the estimates $\tilde{r}_j$ and $\tilde{z}_j$ instead of $r_j$ and $z_j$. Here it is important that we used our estimate $\tilde{r}_j$ for the $\beta_j$ in \autoref{lem:Htrick}, and hence $\tilde{r}_j$ is the exact length of the horizontal line, not an estimate of it.

To give an upper bound on the error induced by the error in our estimates $\tilde{r}_j$ and $\tilde{z}_j$, we consider the gradient of the function $f_{a_j}(r_j, z_j) = \frac12 \tilde{r}_j + \frac12 \frac{r_j^2}{\tilde{r}_j}- \frac{z_j^2}{2\tilde{r}_j}$ in terms of $r_j$ and $z_j$: 
\[
    \nabla f_{a_j} = \trm{\frac{r_j}{\tilde{r}_j}, -\frac{z_j}{\tilde{r}_j}}.
\]
We bound the $\ell_1$-norm of the gradient on the box defined by the constraints $|r_j-\tilde{r}_j| \le \eps/32$, $|z_j-\tilde{z}_j| \le \eps/32$. We know that $\eps/2 \leq \tilde{r}_j$, and hence that $r_j+\eps/32\leq \tilde{r}_j+\eps/16\leq 2\tilde{r_j}$, so the first coordinate is upper bounded in absolute value by $2$. Furthermore, $z_j\leq \tilde{r}_j+r_j\leq 3\tilde{r}_j$ by the triangle inequality, so $|-\frac{z_j+\eps/32}{\tilde{r}_j}|\leq 4$.  Hence, the sum of absolute values of the entries of the gradient is upper bounded by $8$ over the whole box, therefore the additive error on $a_j$ is at most 8 times the additive error on $r_j$ and $z_j$. Since $|r_j - \tilde{r}_j| \le \eps/16$ and $|z_j - \tilde{z}_j| \le \eps/16$, we have $|a_j - (\tilde{r}_j - \frac{\tilde{z}_j^2}{2\tilde{r}_j}) | \le 8\eps/16 = \eps/2$.

With a similar argument, but using $w_j$ instead of $z_j$, we also find $b_j$ up to error $\eps/2$, and hence $\alpha_j$ up to error $\eps$.
\end{proof}
As a controlled state-preparation unitary can be used to prepare the conditional samples we immediately get the following corollary.
\begin{corollary}
  \label{cor:controlled_state_linf}
  Let $0 < \eps,\delta < 1$, and let $\ket{\psi} = \sum_{j \in [d]} \alpha_j \ket{j}$ be a quantum state with $\alpha_j\in \C$. Then $O(\log(d/\delta)/\eps^2)$ applications (in parallel) of a controlled state-preparation unitary for $\ket{\psi}$ suffice to compute an $\eps$-$\ell_{\infty}$-norm estimate $\tilde{\alpha}$ of $\alpha$, with success probability at least $1-\delta$.
\end{corollary}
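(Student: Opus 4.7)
The plan is simply to reduce the corollary to \autoref{prop:sammplisch-upper-bound} by showing that a single conditional sample of the form $(\ket{0}\ket{\psi} + \ket{1}\ket{0})/\sqrt{2}$ can be prepared with one call to the controlled state-preparation unitary (plus $O(1)$ single-qubit gates), and that these sample preparations are independent and therefore can be executed in parallel.

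First, let $cU$ denote the controlled state-preparation unitary, acting as $cU\ket{c}\ket{0} = \ket{0}\ket{0}$ if $c=0$ and $cU\ket{c}\ket{0} = \ket{1}\ket{\psi}$ if $c=1$. Starting from $\ket{0}\ket{0}$, I would apply a Hadamard gate to the control qubit to obtain $\frac{1}{\sqrt{2}}(\ket{0} + \ket{1})\ket{0}$, then apply $cU$, producing $\frac{1}{\sqrt{2}}(\ket{0}\ket{0} + \ket{1}\ket{\psi})$. Finally, a single $X$ gate on the control flips the labeling to yield exactly $\frac{1}{\sqrt{2}}(\ket{0}\ket{\psi} + \ket{1}\ket{0})$, which is the conditional-sample state needed by \autoref{prop:sammplisch-upper-bound}.

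Next, repeating this preparation $O(\log(d/\delta)/\eps^2)$ times — each using a fresh set of qubits so that there is no interference between different copies — gives the required number of conditional samples. Each such preparation uses one application of the controlled state-preparation unitary, and since the preparations act on disjoint registers, they can all be performed in parallel. Feeding the resulting samples into the algorithm from \autoref{prop:sammplisch-upper-bound} then yields, with success probability at least $1-\delta$, an $\eps$-$\ell_\infty$-norm estimate of $\alpha$.

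There is essentially no obstacle here: the whole content of the corollary is the elementary observation that a controlled state-preparation unitary suffices to prepare the conditional samples used in the previous proposition. The only thing to be mildly careful about is the precise form of the conditional sample (the $\ket{0}$ and $\ket{\psi}$ branches might be swapped depending on conventions), but this is fixed by at most one extra $X$ gate on the control qubit as described above.
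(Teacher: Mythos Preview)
Your proposal is correct and follows exactly the paper's approach: the paper's own proof is just the one-line observation that a controlled state-preparation unitary can be used to prepare the conditional samples, so the corollary follows immediately from \autoref{prop:sammplisch-upper-bound}. You have simply spelled out the (Hadamard, controlled-$U$, $X$) construction and the parallelizability in more detail than the paper bothers to.
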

The sample complexity of \autoref{cor:controlled_state_linf} is asymptotically the same as in the algorithm of Kerenidis and Prakash \cite{kerenidis2018QIntPoint}, but our analysis is simpler thanks to \autoref{lem:sample-infty-norm}, we estimate both the real and the imaginary part, and we directly get $O(\log \frac{1}{\delta})$ dependence on the probability of failure (as opposed to probability of success $1-1/d^c$ for some constant $c$ in \cite{kerenidis2018QIntPoint}).

Using the above result, we can also construct algorithms that approximate $\alpha$ in other $\ell_q$-norms.
\begin{theorem}\label{thm:condsamp}
  Let $0 < \eps, \delta < 1$, and let $\ket{\psi} = \sum_{j \in [d]} \alpha_j\ket{j}$ be a quantum state with $\alpha_j \in \C$. Then,
  \[
    O\trm{\min\left\{\trm{\frac{3}{\eps}}^{\frac{1}{\frac12-\frac1q}}, \frac{d^{\frac2q}}{\eps^2}\right\} \cdot \log\frac{d}{\delta}}
  \]
  copies of the state $(\ket{0}\ket{\psi} + \ket{1}\ket{0})/\sqrt{2}$ suffice to compute an $\eps$-$\ell_q$-norm estimate $\tilde{\alpha}$ of $\alpha$, with success probability at least $1-\delta$.
\end{theorem}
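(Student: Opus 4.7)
The plan is to mirror the strategy used in the proof of Theorem \ref{thm:state-sampling}, replacing the purely classical sampling primitive by the conditional-sample primitive of Proposition \ref{prop:sammplisch-upper-bound}. The key observation is that the norm-conversion machinery of Corollary \ref{cor:norm-conversion} applies to arbitrary $\ell_s$-normalized complex vectors, and a pure-state amplitude vector $\alpha \in \C^d$ satisfies $\nrm{\alpha}_2 = 1$. So the proof reduces to: first use Corollary \ref{cor:norm-conversion} to translate the desired $\ell_q$-norm accuracy $\eps$ into an $\ell_\infty$-norm accuracy $\eta$, then invoke Proposition \ref{prop:sammplisch-upper-bound} at precision $\eta$.

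Concretely, I would set $s = 2$ in Corollary \ref{cor:norm-conversion}, which tells us that it suffices to obtain an $\eta$-$\ell_\infty$-norm estimate $\tilde{\alpha}$ of $\alpha$ where
\[
\eta \;=\; \max\!\left\{\tfrac{1}{3}\trm{\tfrac{\eps}{3}}^{\frac{1}{1-\frac{2}{q}}},\ \tfrac{\eps}{d^{\frac{1}{q}}}\right\}.
\]
Note that the vector $\bar{\alpha}$ constructed in the proof of Lemma \ref{lem:new-norm-conversion} (by zeroing out small entries of the $\ell_\infty$-estimate) is computable classically in $O(d)$ gates from $\tilde{\alpha}$, so no additional quantum resources are required for the conversion step. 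By Proposition \ref{prop:sammplisch-upper-bound}, producing $\tilde{\alpha}$ with success probability $1-\delta$ uses
\[
O\!\trm{\frac{\log(d/\delta)}{\eta^2}} \;=\; O\!\trm{\min\!\left\{\trm{\tfrac{3}{\eps}}^{\frac{1}{\frac12-\frac1q}},\ \tfrac{d^{\frac{2}{q}}}{\eps^2}\right\}\cdot \log\!\tfrac{d}{\delta}}
\]
copies of $(\ket{0}\ket{\psi}+\ket{1}\ket{0})/\sqrt{2}$, which is the claimed bound.

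There is essentially no hard step: the work was already done in establishing Proposition \ref{prop:sammplisch-upper-bound} (the geometric argument turning two Hadamard-test estimates plus the absolute values into complex amplitudes) and in Corollary \ref{cor:norm-conversion} (the dimension-independent norm-conversion). The only minor point to double-check is that Corollary \ref{cor:norm-conversion} is indeed stated for complex vectors — it is, since Lemma \ref{lem:new-norm-conversion} takes $\alpha\in\C^d$ — and that the $\ell_2$-normalization used there matches the normalization of a quantum state, which it does. Thus the full proof is a two-line substitution once the two prerequisites are in place.
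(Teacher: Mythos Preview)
Your proposal is correct and follows essentially the same approach as the paper: combine Corollary~\ref{cor:norm-conversion} (with $s=2$) and Proposition~\ref{prop:sammplisch-upper-bound}, exactly as was done for Theorem~\ref{thm:state-sampling}. The paper's proof is in fact just a one-line reference to this combination.
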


\begin{proof}
  The proof follows from \autoref{cor:norm-conversion} combined with \autoref{prop:sammplisch-upper-bound}, in exactly the same way as in the proof of \autoref{thm:state-sampling}.
\end{proof}

\subsection{Amplitudes up to a global phase with only copies of the state}
\label{subsec:attempt}

Finally we consider the model in which we simply have access to copies of the pure state. Although this model is conceptually simple, the estimation algorithm is more complicated than before. The number of samples required is still $\tilde{O}(1/\eps^2)$, but slightly worse in polylogarithmic factors. In fact the method is very similar to the proof of \autoref{prop:sammplisch-upper-bound}, but instead of comparing $\alpha_j$ to $|\alpha_j|$, we have to compare the amplitudes to each other. 
\begin{proposition}\label{prop:state-reconstruction}
  Let $0 < \eps,\delta < 1$, and let $\ket{\psi} = \sum_{j \in [d]} \alpha_j \ket{j}$ be a quantum state with $\alpha_j\in \C$. Then $O(\log(d)\log(d/\delta)/\eps^2)$ copies of $\ket{\psi}$, with the ability to perform unitary operations on each copy before measurement, suffice to compute an $\eps$-$\ell_{\infty}$-norm estimate $\tilde{\alpha}$ of $\alpha$, up to global phase, with success probability at least $1-\delta$.
\end{proposition}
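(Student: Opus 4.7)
The plan is to adapt the approach of \autoref{prop:sammplisch-upper-bound} to the case when conditional samples are not available, by using one of the amplitudes of $\ket{\psi}$ itself (specifically, the one with the largest estimated magnitude) as the reference against which all other phases are measured.

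\textbf{First phase -- magnitudes.} Apply \autoref{lem:sample-infty-norm} to $\bigO{\log(d/\delta)/\eps^2}$ computational-basis measurements of $\ket{\psi}$ to obtain an $(\eps/c)$-$\ell_\infty$-estimate $|\tilde{\alpha}|$ of $|\alpha|$, for a small constant $c>0$. Since $\sum_j|\alpha_j|^2=1$, we have $\max_j|\alpha_j|\geq 1/\sqrt{d}$, so taking $j^\ast=\arg\max_j|\tilde{\alpha}_j|$ gives a reference index with $|\alpha_{j^\ast}|\geq 1/\sqrt{d} - \eps/c$. Fix the global phase by declaring $\tilde{\alpha}_{j^\ast}\in \R_{\geq 0}$.

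\textbf{Second phase -- relative phases via Hadamard tests.} Write $d=2^n$ and view $\ket{\psi}$ as an $n$-qubit state. For each qubit $k\in [n]$, apply the single-qubit Hadamard $H_k$ to $\bigO{\log(d/\delta)/\eps^2}$ fresh copies and measure in the computational basis; by \autoref{lem:sample-infty-norm} this yields $\eps$-$\ell_\infty$-estimates of all $d$ amplitudes $|\beta_y|=|\alpha_{y_0}\pm\alpha_{y_1}|/\sqrt{2}$, where $(y_0,y_1)$ are the two basis states differing only in bit $k$. Repeat with $H_k S_k$ (phase followed by Hadamard) on the same number of copies to obtain $\eps$-$\ell_\infty$-estimates of $|\alpha_{y_0}\pm i\alpha_{y_1}|/\sqrt{2}$. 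Combined with the magnitudes from the first phase, a geometric reconstruction analogous to that of \autoref{prop:sammplisch-upper-bound} gives, for every edge $(y_0,y_1)$ of the Boolean hypercube on $\{0,1\}^n$, an estimate $\hat{c}_{y_0,y_1}$ of the complex inner product $\overline{\alpha_{y_0}}\alpha_{y_1}$ up to additive error $\bigO{\eps}$. The total sample complexity so far is $\bigO{\log(d)\cdot\log(d/\delta)/\eps^2}$, as required.

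\textbf{Third phase -- reconstruction.} Zero out $\tilde{\alpha}_j$ whenever $|\tilde{\alpha}_j|$ lies below a fixed small multiple of $\eps$, which contributes at most $\bigO{\eps}$ error per coordinate. For each of the remaining coordinates, propagate the phase from $j^\ast$ along a short path in the hypercube (of length at most $\log_2 d$) from $j^\ast$ to $j$, using the edge estimates $\hat{c}$ to update the running phase at each step. The final $\tilde{\alpha}_j$ is obtained by attaching this propagated phase to the magnitude estimate $|\tilde{\alpha}_j|$ from the first phase.

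\textbf{Main obstacle.} The technical heart of the argument is the error analysis of the reconstruction. A naive bound on the phase error along a path of length $\log d$ would produce an extra $\log d$ blowup that forces a worse per-edge precision and ruins the sample bound. The saving relies on two observations: first, to achieve coordinate-wise amplitude error $\eps$ it suffices to estimate the phase of $\alpha_j$ with error $\bigO{\eps/|\alpha_j|}$, so amplitudes of smaller magnitude tolerate larger phase errors; second, the propagation path from $j^\ast$ to $j$ can be chosen adaptively to pass through intermediate vertices whose magnitudes are sufficiently large compared to $|\alpha_j|$, using the fact that only a few amplitudes can be large since $\sum_j|\alpha_j|^2=1$. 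Combining these observations keeps the accumulated coordinate-wise error within $\bigO{\eps}$ without increasing the sample complexity beyond $\bigO{\log(d)\log(d/\delta)/\eps^2}$.
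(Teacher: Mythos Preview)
Your approach shares the paper's broad outline --- estimate magnitudes, apply single-qubit Hadamard and phase--Hadamard tests to learn pairwise data, propagate phases along hypercube paths --- but it omits a step that is essential. The paper does \emph{not} apply the Hadamard tests in the original basis: after learning the magnitudes it permutes the computational basis on all remaining copies so that the coordinates with $|\tilde{\alpha}_j|\geq\eps/2$ occupy the first $k$ indices, relabelled in decreasing order of $|\tilde{\alpha}_j|$, and only then applies single-qubit Hadamards on the $m=\lceil\log_2 k\rceil$ low-order qubits. This guarantees that the canonical path from index $0$ to any $j<k$ (flipping bits from most significant) stays entirely inside the block of large amplitudes. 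Without the permutation your scheme can fail completely. Take $\ket{\psi}=(\ket{0^n}+e^{i\theta}\ket{1^n})/\sqrt{2}$: every hypercube path between $0^n$ and $1^n$ passes through vertices of amplitude zero, and indeed the computational-basis measurement statistics after applying $H_k$ (or $H_kS_k$) to any single qubit are independent of $\theta$, so your second phase learns nothing about the relative phase no matter how many copies are used --- yet $|\alpha_{1^n}|=1/\sqrt{2}$ must be recovered with its phase. Your proposed ``adaptive path through large-magnitude vertices'' cannot help here, because no such path exists in the original hypercube; the paper's permutation makes $0^n$ and $1^n$ adjacent, after which a single Hadamard reveals $\theta$.

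A secondary issue: your sketch of why adaptive routing avoids the $\log d$ precision loss is not an argument. The paper makes no such claim; instead it estimates each $r_j$ and each $s_{j,h}$ to precision $\eps/\Theta(m)$ so that the $m$ per-edge errors simply sum to $\eps$. Even after the permutation, the paper's gradient analysis already exploits that the path runs from larger to smaller magnitudes (via the decreasing relabelling) and still incurs the factor $m$; your two observations do not combine to remove it, since the last edge of any path lands on $\alpha_j$ itself and already contributes $\Theta(\eps)$ to the coordinate error.
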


\begin{proof}
  As before, let $r_j = |\alpha_j|$ and let $\alpha_j = a_j+b_j\ci$.  We first use \autoref{lem:sample-infty-norm} to compute an $\eps/(16m)$-estimate $\tilde{r}_j$ of $r_j$. We only consider the coordinates where $\tilde{r}_j \geq \eps/2$, and permute the basis states in all remaining copies of $\ket{\psi}$ in such a way that these form the first $k$ coordinates of the state. Let $m$ be the smallest value such that $k< 2^m$, i.e., we are only interested in the amplitudes for basis states where all but the last $m$ qubits are $0$. For ease of notation we consider the sub-normalized state $\ket{\phi}$ corresponding to this part and relabel the indices so the $\tilde{r}_j$ are in decreasing order. Note that the remaining part of $\ket{\psi}$ has $\ell_\infty$-norm less than $\eps/2$, hence it can be ignored for our estimation. 
  
  For $h \in [m]$ we consider the state resulting from applying a Hadamard gate to the $h$-th qubit of $\ket{\phi}$ (below, for an $m$-digit binary string $j \in [2^m]$, we write $j_h$ to denote the $h$-th binary digit):
  \[
  I^{\otimes h-1}\otimes H\otimes I^{\otimes m-h} \ket{\phi} = \frac{1}{\sqrt{2}}\sum_{\substack{j\in[2^m]\\j_h=0}}(\alpha_j + \alpha_{j+2^h}) \ket{j} + (\alpha_j-\alpha_{j+2^h})\ket{j+2^h}.
  \]
  Hence we can learn $\eps/(16m)$-estimates of $s_{j,h} = |\alpha_j - \alpha_{j+2^h}|$ using computational-basis measurements with success probability at least $1-\delta/(2m)$. Note that this can be interpreted as an application of \autoref{lem:Htrick},
  where the $h$-th qubit is considered the flag. Repeating this with an additional phase gate also gives estimates $t_{j,h} = |\alpha_j - \ci \alpha_{j+2^h}|$.
 
  We now consider a single value of $h$ and aim to learn $\alpha_{j+2^h}$ relative to $\alpha_j$. For now, we assume that $\alpha_j\in \R_{\geq 0}$ and show how to give an estimate of $\alpha_j+2^h$; we discuss how to relax the assumption subsequently. Note that:
  \[
      s_{j,h}^2 = |\alpha_j|^2 + |\alpha_{j + 2^h}|^2 - 2 \Re(\alpha_j^{\dag} \alpha_{j+2^h}),
  \]
  and therefore, due to our assumption on $\alpha_j$, we have:
  \[
    a_{j+2^h} =  \frac{r_j^2 + r_{j+2^h}^2 - s_{j,h}^2}{2 r_j},
  \]
  with a similar argument as in \autoref{prop:sammplisch-upper-bound}. As in \autoref{prop:sammplisch-upper-bound}, we consider this estimate of $a_{j+2^h}$ as a function of $r_j, r_{j+2^h}$, and $s_{j,h}$, and compute its gradient, which now consists of three partial derivatives; we then bound the $\ell_1$-norm of the gradient over the possible values for $s_{j,h}, r_j, r_{j+2^h}$:
  \begin{align*}
      \left|\frac{\partial a_{j+2^h}}{\partial r_{j+2^h}}\right| &=  \left|\frac{r_{j+2^h}}{r_j}\right| \le 1 + \frac{3}{16m} \le \frac{3}{2}, \\
       \left|\frac{\partial a_{j+2^h}}{\partial s_{j,h}}\right| &=  \left|-\frac{s_{j,h}}{r_j}\right| \le 3,\\
\left|\frac{\partial a_{j+2^h}}{\partial r_j}\right| &=  \left|\frac{1}{2} + \frac{s_{j,h}^2 - r_{j+2^h}^2}{4r_j^2}\right| \leq \frac{1}{2}+\frac{1}{4} \left|\frac{s_{j,h}}{r_j}\right|^2 +\frac{1}{4} \left|\frac{r_{j+2^h}}{r_j}\right|^2\le \frac{7}{2}.
  \end{align*}
  For the first inequality above, we used the fact that $r_j,r_{j+2^h}\geq \eps/3$ and $r_{j}+\eps/(16m) \geq r_{j+2^h}$. The middle inequality follows from $s_{j,h} \leq r_j+r_{j+2^h} \leq 2r_j  +\eps/(16m) \le 3r_j$. Hence the gradient's $\ell_1$-norm is upper-bounded by $8$, implying that our estimate for $a_{j+2^h}$ is $\eps/(2m)$-close. Repeating this argument using $t_{j,h}$ shows how to obtain an $\eps/(2m)$-estimate of $b_{j+2^h}$, resulting in an $\eps/m$-estimate of $\alpha_{j+2^h}$. Recall that so far we assumed that $\alpha_j \in \R_{\geq 0}$; this implies that we learned $\alpha_{j+2^h}$ only up to the phase of $\alpha_j$\footnote{It is possible to fully restate our argument without this assumption in the first place, but the resulting derivation is considerably longer and less elegant.}, and we have to reconcile the different relative phases for pairs of amplitudes computed with this procedure.
   
  Thus, we now combine our estimates for the different values of $h$, to learn the entire state up to a global phase. We arbitrarily assume that one amplitude is real, say $\alpha_0\in \R_{\geq 0}$. Consider some index $j$ with Hamming weight $w$, i.e., there are $w$ positions in the binary representation of $j$ that are $1$. We start at the all-zero string, and, proceeding from the most significant bit, flip bits to obtain $j$. This yields a path of length $w$ over indices $j'$, with $w \leq m$. For all $j'$ we have $\tilde{r}_{j'} \geq \eps/2$, hence we know all amplitudes $\alpha_{j'}$ up to the phase of the previous amplitude in the path, with precision $\eps/m$ each. By the triangle inequality and the union bound we can therefore estimate $\alpha_j$ up to the phase of $\alpha_0$, with precision $\eps$ for all $j<k$. Since $k\leq d$ and hence $m\leq \log(d)$, we get the stated complexity.
\end{proof}

We can also use the above result to derive algorithms that estimate the vector $\alpha$ up to different norms, just like in the previous subsections. This results in the following theorem.

\begin{theorem}\label{thm:uptophase}
  Let $0 < \delta < 1$, $\eps > 0$, $d \in \N$, $q \in [2,\infty]$, and let $\ket{\psi} = \sum_{j \in [d]} \alpha_j \ket{j}$ be a quantum state with $\alpha_j \in \C$. Then,
  \[
    O\trm{\min\left\{\trm{\frac{3}{\eps}}^{\frac{1}{\frac12 - \frac1q}}, \frac{d^{\frac2q}}{\eps^2}\right\} \cdot \log d \log\frac{d}{\delta}}
  \]
  copies of $\ket{\psi}$, with the ability to perform unitary operations on each copy, suffice to find an $\eps$-$\ell_q$-norm estimate $\tilde{\alpha}$ of $\alpha$, up to global phase, with success probability at least $1-\delta$.
\end{theorem}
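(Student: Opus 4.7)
The proof strategy is essentially identical to that of \autoref{thm:state-sampling} and \autoref{thm:condsamp}: reduce from $\ell_q$-norm estimation to $\ell_\infty$-norm estimation via the dimension-independent norm conversion, and then invoke the sample-based $\ell_\infty$ result we already have, which in this model is \autoref{prop:state-reconstruction}.

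More concretely, I would first observe that since $\ket{\psi}$ is a quantum state, the vector $\alpha \in \C^d$ is $\ell_2$-normalized, so \autoref{cor:norm-conversion} applies with $s = 2$. This tells us that to produce an $\eps$-$\ell_q$-norm estimate of $\alpha$, it suffices to produce an $\eta$-$\ell_\infty$-norm estimate of $\alpha$ for
\[
\eta \;=\; \max\!\left\{\tfrac{1}{3}\trm{\tfrac{\eps}{3}}^{\frac{1}{1-\frac{2}{q}}},\; \tfrac{\eps}{d^{1/q}}\right\}.
\]
The norm conversion is stated for vectors in $\C^d$, not for vectors defined only up to a global phase, but this is harmless: a global phase factor is an isometry for every $\ell_q$-norm, so applying any fixed representative of the equivalence class gives the same error bound.

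Next I would invoke \autoref{prop:state-reconstruction} at precision $\eta$ and failure probability $\delta$. That proposition gives an $\eta$-$\ell_\infty$-norm estimate of $\alpha$ (up to a global phase) using $O(\log(d)\log(d/\delta)/\eta^2)$ copies of $\ket{\psi}$ together with the ability to apply unitaries on each copy before measurement. Substituting the value of $\eta$ above, we get a sample count of
\[
O\!\trm{\frac{\log(d)\log(d/\delta)}{\eta^2}} \;=\; O\!\trm{\min\!\left\{\trm{\tfrac{3}{\eps}}^{\frac{1}{\frac12-\frac1q}},\, \frac{d^{2/q}}{\eps^2}\right\}\cdot \log d\,\log\!\tfrac{d}{\delta}},
\]
which matches the claimed bound.

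There is no real obstacle here since both tools we need are already proved: the only minor point to keep in mind is that \autoref{prop:state-reconstruction} recovers $\alpha$ only up to a global phase, so the final estimate inherits this caveat, exactly as stated in the theorem. Thus the proof is just a one-line application of \autoref{cor:norm-conversion} together with \autoref{prop:state-reconstruction}, in complete analogy with the proofs of \autoref{thm:state-sampling} and \autoref{thm:condsamp}.
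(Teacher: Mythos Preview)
Your proposal is correct and follows exactly the same approach as the paper, which simply says the result follows from combining \autoref{cor:norm-conversion} with \autoref{prop:state-reconstruction} in the same manner as the proof of \autoref{thm:state-sampling}. Your added remark about the global phase being an isometry for every $\ell_q$-norm is a nice clarification but not a departure from the paper's argument.
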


\begin{proof}
  The result follows immediately from combining \autoref{cor:norm-conversion} and \autoref{prop:state-reconstruction}, in exactly the same way as in the proof of \autoref{thm:state-sampling}.
\end{proof}

\section{Pure-state tomography using phase estimation}\label{s:quantumtomo}

In this section we turn to the strongest input model of this paper, where we have access to a state-preparation unitary and its inverse. This allows us to reduce the dependence on the error parameter $\eps$ from $1/\eps^2$ to $1/\eps$. We rely on the framework introduced in \autoref{subsec:gradient}.

\subsection{State preparation for amplitude encoding}
Let $x \in [-1,1]^d$, so $\nrm{x}_2\leq \sqrt{d}$. Define \[
\ket{\amp(x)} := \frac{1}{\sqrt{d}}\sum_{j \in [d]} x_j \ket{j}\ket{0}+ \frac{1}{\sqrt{d}}\sum_{j \in [d]} \sqrt{1-x_j^2} \ket{j}\ket{1}.
\]
In this section we give a simple subroutine that as input takes a binary description of $x$ and constructs the state $\ket{\amp(x)}$.
\begin{lemma}
    \label{lem:uamp}
    Let $\ket{x}$ be a binary encoding of an $x \in [-1,1]^d$ where each $x_j$ can be written exactly with $b$ bits of precision. There is a quantum algorithm $U_{amp}$ that acts as
    \[
    U_{amp}\ket{x}\ket{0} = \ket{x}\ket{\amp(\tilde{x})}
    \]
    where $\nrm{\tilde{x}-x}_{\infty}\leq \eps$. $U_{amp}$ uses $\bigO{\log(d)+\log(1/\eps)\log^2\log(1/\eps)}$ gates, and $2 \min\{b, \log(2/\eps) \}$ indexed-SWAP gates acting on $d$ bits.
\end{lemma}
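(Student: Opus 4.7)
I would construct $U_{amp}$ by decomposing the output register into an index register $I$ on $n := \lceil \log_2 d\rceil$ qubits, a value register $V$ on $c := \min\{b,\lceil \log_2(2/\eps)\rceil\}$ qubits, and a single flag qubit $F$. Let $\tilde x_j$ denote $x_j$ truncated to $c$ bits. The procedure has four steps.

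First, apply Hadamards to each qubit of $I$ to obtain the state $\ket{x}\cdot\frac{1}{\sqrt d}\sum_{j\in[d]}\ket{j}_I\ket{0}_V\ket{0}_F$, costing $n=\bigO{\log d}$ gates. Second, coherently load the $c$ relevant bits of $x_j$ into $V$: using register $I$ as the index $j$, apply indexed-SWAP once per bit position, swapping the appropriate qubit of the input register $\ket x$ with the corresponding qubit of $V$. This costs $c$ indexed-SWAP calls on the $d$-qubit input memory and produces $\frac{1}{\sqrt d}\sum_{j\in[d]}\ket{x^{(j)}}\ket{j}_I\ket{\tilde x_j}_V\ket{0}_F$, where $\ket{x^{(j)}}$ denotes the input register with the loaded bits of position $j$ replaced by zeros.

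Third, implement the controlled rotation $\ket{\tilde x_j}_V\ket{0}_F\mapsto \ket{\tilde x_j}_V\bigl(\tilde x_j\ket{0}_F+\sqrt{1-\tilde x_j^2}\ket{1}_F\bigr)$, up to operator-norm error $\eps/2$. A standard way is to compute $\theta_j:=\arccos(\tilde x_j)$ to $\lceil \log_2(4/\eps)\rceil$ bits of precision into an ancilla register by reversible arithmetic, apply a sequence of controlled $R_y$ rotations indexed by the bits of $\theta_j$ to the flag qubit $F$, and then uncompute the ancilla. Using fast reversible multiplication together with $\bigO{\log\log(1/\eps)}$ Newton-type iterations each of cost $\bigO{\log(1/\eps)\log\log(1/\eps)}$, the overall gate cost of this step is $\bigO{\log(1/\eps)\log^2\log(1/\eps)}$. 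Fourth, run the indexed-SWAPs of step two in reverse to restore $\ket x$ exactly in the input register, using another $c$ indexed-SWAP calls.

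The total indexed-SWAP count is $2c = 2\min\{b,\log(2/\eps)\}$ as claimed, and the remaining gates sum to $\bigO{\log d + \log(1/\eps)\log^2\log(1/\eps)}$. Correctness is by the triangle inequality on two error sources: truncation gives $|x_j-\tilde x_j|\leq 2^{-c}\leq \eps/2$ coordinatewise, and the approximate rotation adds another $\eps/2$, so $\nrm{\tilde x - x}_\infty\leq \eps$. The only nonroutine ingredient is engineering the reversible arccosine circuit within the stated $\log(1/\eps)\log^2\log(1/\eps)$ budget; I would invoke a standard fast-arithmetic library rather than re-derive it, since the other three steps are immediate.
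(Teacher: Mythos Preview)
Your proposal is correct and follows essentially the same construction as the paper: create the uniform superposition over $[d]$, swap in the relevant $\min\{b,\log(2/\eps)\}$ bits of $x_j$ via indexed-SWAP, reversibly compute an inverse trigonometric function to the required precision, perform controlled single-qubit rotations on the flag, then uncompute and swap back. The paper uses $\arcsin$ where you use $\arccos$, which is immaterial; it also cites a reference for the $\bigO{\log(1/\eps)\log^2\log(1/\eps)}$ arithmetic cost, which you defer to a library. One small notational clash: you let $\tilde x_j$ denote the \emph{truncated} value, whereas the lemma reserves $\tilde x$ for the amplitudes actually realized after the approximate rotation; the paper keeps these separate (calling the truncation $\bar x_i$ and the final amplitude $\sin(\tilde a_i)$), which makes the two-stage triangle-inequality bound cleaner.
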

\begin{proof}
The algorithm is as follows, starting from $\ket{x}\ket{0}\ket{0}\ket{0}\ket{0} = \ket{x_1}\dots\ket{x_d}\ket{0}\ket{0}\ket{0}\ket{0}$
\begin{itemize}
    \item Use $\bigO{\log(d)}$ gates to setup a uniform superposition over $[d]$:
    \[
        \ket{x}\frac{1}{\sqrt{d}}\sum_{i=1}^d \ket{i}\ket{0}\ket{0}\ket{0}
    \]
   \item Swap in the first $\min\{b,\log(1/\eps)\}$ bits of $x_i$, conditioned on the 3th to last register, using  $\min\{b,\log(2/\eps)\}$ indexed-SWAP gates on $d$ bits each:
    \[
        \frac{1}{\sqrt{d}}\sum_{i=1}^d   \ket{\underline{x}^{(i)}} \ket{i}\ket{\bar{x}_i}\ket{0}\ket{0}
    \]
    where $\bar{x}_i$ is the cut-off version of $x_i$, so $|x_i - \bar{x}_i|\leq \eps/2$ and $\underline{x}^{(i)}$ is the remaining part of $x$.
    \item Approximate $\tilde{a}_i = \arcsin(\bar{x}_i)$ up to $\eps/2$ precision using $\bigO{\log(1/\eps)\log^2\log(1/\eps)}$ gates (see \cite{wikiFunctions} for the complexity):
    \[
      \frac{1}{\sqrt{d}}\sum_{i=1}^d   \ket{\underline{x}^{(i)}}  \ket{i}\ket{\bar{x}_i}\ket{\tilde{a}_i}\ket{0}
    \]
    \item Use  $\bigO{\log(1/\eps)}$ rotations with exponentially decreasing angle, controlled on the bits of $\tilde{a}_i$ to rotate the last qubit:
    \[
         \frac{1}{\sqrt{d}}\sum_{i=1}^d   \ket{\underline{x}^{(i)}}  \ket{i}\ket{\bar{x}_i}\ket{\tilde{a}_i}\left(\tilde{a}_i\ket{0}+\sqrt{1-\tilde{a}_i^2}\ket{1}\right)
    \]
    \item Uncompute $\tilde{a}_i$ and swap back $\bar{x}_j$:
        \[
        \ket{x} \frac{1}{\sqrt{d}}\sum_{i=1}^d  \ket{i}\left(\sin(\tilde{a}_i)\ket{0}+\sqrt{1-\sin(\tilde{a}_i)^2}\ket{1}\right)
    \]
\end{itemize}
As $|\tilde{a}_i-\arcsin(\bar{x}_j)|\leq \eps/2$, and the $\sin$ function is $1$-Lipschitz, we get that $|\bar{x}_j - \sin(\tilde{a}_j)|\leq \eps/2$. Combining this with $|x_j-\bar{x}_j|\leq \eps/2$ gives the desired precision.
\end{proof}

\subsection{Pure-state tomography}

With the quantum circuit of \autoref{lem:uamp} for preparing $\ket{\amp(\tilde{x})}$ we have all the ingredients for our pure-state tomography algorithm. We only state it for the estimation of the real part of $\ket{\psi}$, but one can also extract the imaginary part with the same running time simply by applying the algorithm to the quantum state $\ci\ket{\psi}$.
\begin{proposition}
\label{prop:quantum_infty_norm_est}
Let $\ket{\psi} = \sum_{j \in [d]} \alpha_j \ket{j}$ be a quantum state, and $U\ket{0} = \ket{\psi}$. There is a quantum algorithm that, with probability at least $1-\delta$, outputs $\tilde{\alpha} \in \R^{d}$ such that $\nrm{ \Re(\alpha) - \tilde{\alpha}}_\infty \le \eps$, using 
\[
\bigO{\frac{\sqrt{d}}{\eps}\log(\frac{d}{\delta})}
\]
applications of $U$ and $U^\dagger$, $\bigO{\frac{\sqrt{d}}{\eps}\log (\frac{d}{\delta})\log(\frac{d}{\eps})}$ indexed-SWAP gates acting on $d$ bits, and $\bigOt{d + \frac{\sqrt{d}}{\eps}}$ additional gates.  If $\eps \ge \frac{1}{\sqrt{d}}$, the number of applications of $U$ can be reduced to $\tilde{O}(\frac{1}{\eps^2})$ (while potentially increasing the gate complexity to $\tilde{O}(\frac{1}{\eps^4})$).
\end{proposition}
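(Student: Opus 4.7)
The strategy is to view $\Re(\alpha)$ as the gradient (up to a dimensional rescaling) of a function whose values can be coherently extracted by an inner-product-type block-encoding, and then invoke the gradient-estimation machinery of \autoref{cor:blockToGrad}. Specifically, let $x\in[-1,1]^d$ and observe that
\[
 \ipc{\amp(x)}{\psi\otimes 0} \;=\; \frac{1}{\sqrt{d}}\sum_{j\in[d]}x_j\,\alpha_j,
\]
so that $\Re\ipc{\amp(x)}{\psi\otimes 0}=\frac{1}{\sqrt{d}}\ipc{x}{\Re(\alpha)}$ is a linear function of $x$ with gradient $\frac{1}{\sqrt{d}}\Re(\alpha)$. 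Thus estimating this gradient with $\ell_\infty$-precision $\eps/\sqrt{d}$ and rescaling by $\sqrt{d}$ yields the desired $\eps$-estimate of $\Re(\alpha)$.

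Next I would build the required block-encoding. Using \autoref{lem:uamp}, we have a unitary $U_{amp}$ that, on the second register, prepares $\ket{\amp(\tilde x)}$ controlled by the $x$-register (where $\tilde x$ is a $b$-bit truncation of $x$ sufficient for the desired precision). Setting $V_x:=U_{amp}$ (whose output splits as $\ket{0}\ket{\phi_x}+\ket{1}\ket{\tilde\phi_x}$ with $\ket{\phi_x}=\frac{1}{\sqrt d}\sum_j \tilde x_j\ket{j}$, after a reordering of the flag qubit) and $U_x:=U$ (with a fresh $\ket{0}$-flag, so $\ket{\psi_x}=\ket{\psi}$ independent of $x$), \autoref{lem:blockInnerProd} produces a block-encoding $W$ of the diagonal matrix $\diag\bigl(\{\frac{1}{\sqrt d}\ipc{\tilde x}{\alpha}\}_{x\in G_b^d}\bigr)$. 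To obtain its Hermitian (real) part, I would apply \autoref{lem:linCombBlocks} to $\tfrac12 W+\tfrac12 W^\dagger$, yielding a block-encoding of $\diag\bigl(\{\frac{1}{\sqrt d}\ipc{\tilde x}{\Re(\alpha)}\}\bigr)$. The truncation error $\|\tilde x-x\|_\infty\le\eps/\sqrt d$ (achievable with $b=O(\log(d/\eps))$ bits) translates into a pointwise error of at most $\frac{1}{\sqrt d}\|\Re(\alpha)\|_1\cdot\tfrac{\eps}{\sqrt d}\le\tfrac{\eps}{\sqrt d}$, well within the tolerance required by \autoref{cor:blockToGrad}.

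I would then invoke \autoref{cor:blockToGrad} with target precision $\eta=\eps/\sqrt d$ on the block-encoding above to produce $\tilde k\in\R^d$ with $\|\tilde k-\frac{1}{\sqrt d}\Re(\alpha)\|_\infty\le\eps/\sqrt d$ with probability $\ge 1-\delta$; setting $\tilde\alpha:=\sqrt d\,\tilde k$ then gives the claimed $\ell_\infty$-estimate. The query count is $O(\tfrac{1}{\eta}\log\tfrac{d}{\delta})=O(\tfrac{\sqrt d}{\eps}\log\tfrac{d}{\delta})$ calls to $W$, each in turn using one application of $U$, $U^\dagger$, and $U_{amp}$. The indexed-SWAP count and additional-gate count follow by combining the per-query cost of $U_{amp}$ (from \autoref{lem:uamp}, with $b=O(\log(d/\eps))$) with the $O(\tfrac{\sqrt d}{\eps}\log\tfrac{d}{\delta})$ query bound, together with the one-time preparation of a uniform superposition on $d\cdot b$ qubits by Hadamards, accounting for the $\widetilde{O}(d+\tfrac{\sqrt d}{\eps})$ extra-gate term.

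For the regime $\eps\ge 1/\sqrt d$, the $\tilde O(1/\eps^2)$ bound is obtained by a different route entirely: instead of gradient estimation, apply \autoref{cor:controlled_state_linf} (or equivalently \autoref{prop:sammplisch-upper-bound}) to the state $(\ket{0}\ket{\psi}+\ket{1}\ket{0})/\sqrt{2}$, which requires only $O(\log(d/\delta)/\eps^2)$ controlled applications of $U$ and no inverses; since this bound is independent of $d$ and only beats the gradient bound when $\sqrt d/\eps>1/\eps^2$, i.e.\ when $\eps\ge 1/\sqrt d$, the two together give the stated complexity. The main delicacy of the argument is accounting for all error sources (truncation in $U_{amp}$, imperfect block-Hamiltonian simulation inside \autoref{cor:blockToGrad}, and the scaling factor $\sqrt d$) and verifying that the linearization assumption of \autoref{cor:blockToGrad}, namely $|f(x)-\ipc{x}{g}|\le\eps\beta/(6\pi)$, holds on all of $G_b^d$ with appropriate parameters; this is where careful bookkeeping of the truncation bit-width $b$ enters.
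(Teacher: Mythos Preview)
Your argument for the main $\bigO{\frac{\sqrt{d}}{\eps}\log\frac{d}{\delta}}$ bound is essentially the paper's proof: build a block-encoding of $\diag(\{\frac{1}{\sqrt d}\ipc{x}{\alpha}\})$ from $U$ and $U_{amp}$ via \autoref{lem:blockInnerProd}, symmetrize with \autoref{lem:linCombBlocks} to extract the real part, and invoke \autoref{cor:blockToGrad} at precision $\eps/\sqrt{d}$.

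For the regime $\eps\ge 1/\sqrt{d}$ you take a genuinely different route. The paper does \emph{not} fall back on the conditional-sample tomography of \autoref{cor:controlled_state_linf}; instead it first takes $O(\log(d/\delta)/\eps^2)$ computational-basis measurements of $\ket{\psi}$ to identify all indices $j$ with $|\alpha_j|>\eps$ (there are at most $1/\eps^2$ of them), and then re-runs the gradient-estimation algorithm above on this reduced dimension $d'=O(1/\eps^2)$, giving $\tilde O(\sqrt{d'}/\eps)=\tilde O(1/\eps^2)$ queries. The $\tilde O(1/\eps^4)$ gate-complexity caveat in the statement comes precisely from this approach: $\tilde O(1/\eps^2)$ indexed-SWAP gates each acting on $d'=O(1/\eps^2)$ bits. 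Your route via \autoref{cor:controlled_state_linf} is correct and arguably simpler, uses only controlled-$U$ (no inverse) for this regime, and has a different gate profile; the paper's route has the feature that it reuses the same core subroutine and gives an explicit handle on the effective dimension.
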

\begin{proof}
We first describe the algorithm that calls $U$ $\tilde{O}(\frac{\sqrt{d}}{\eps})$ times. Let $f(x) = \ipc{\Re(\alpha)}{x}$. Taking $U':=(I\otimes U)$ and $V:=U_{amp}$, \autoref{lem:blockInnerProd} gives us an approximate block-encoding $W$ of the diagonal matrix $\ipc{x}{\alpha}/\sqrt{d}$, and from this we get an approximate block-encoding of $f(x)/\sqrt{d}$ averaging $W$ and $W^\dagger$ via \autoref{lem:linCombBlocks} so we can apply \autoref{cor:blockToGrad}. Then we get an $\eps$-$\ell_\infty$-norm estimate of $\Re(\alpha)/\sqrt{d}$ (where the denominator $\sqrt{d}$ comes from the normalization in $\ket{\amp(x)}$) with $\bigO{\frac{1}{\eps}\log (\frac{d}{\delta})}$ uses of the block-encoding of $(W+W^{\dagger})/2$, the construction of which requires a constant number of calls to $U$ and $U_{amp}$. To obtain the desired estimate of $\Re(\alpha)$ we elevate the precision to $\eps/\sqrt{d}$, which brings the total number of uses of $U$ and $U_{amp}$ to $\bigO{\frac{\sqrt{d}}{\eps}\log (\frac{d}{\delta})}$. The gate complexity
is $\bigO{\frac{\sqrt{d}}{\eps}\log (\frac{d}{\delta})\log (\frac{d}{\eps})}$ indexed-SWAP gates acting on $d$ bits, and $$\bigO{\left(d \log (\frac{1}{\eps}) \log \log(\frac{1}{\eps}) + \frac{\sqrt{d}}{\eps} \log \frac{d}{\eps}\log d\right)\log \frac{d}{\delta}}$$
additional gates, where the first term in the summation comes from the additional gates of \autoref{cor:blockToGrad}, whereas the second term comes from the cost of $U_{amp}$.

Next, we show how to improve the algorithm when $\eps \ge \frac{1}{\sqrt{d}}$. Any $j$ such that $|\alpha_j| \le \eps$ can be ignored because of the $\ell_\infty$-norm objective, so we can simply set $\tilde{\alpha}_j = 0$. Since we are only interested in $j$ such that $|\alpha_j| > \eps$, we note that there are at most $1/\eps^2$ such $j$ because $\sum_j |\alpha_j|^2 = 1$. After taking $O(\frac{\log(n/\delta)}{\eps^2})$ measurements of $\ket{\psi}$ in the computational basis, the probability that all such $j$ are observed is at least $1-\delta$. We can then apply the algorithm described above to obtain $\alpha_j$ only for those $j$. As this set has cardinality $O(1/\eps^2)$, the quantum algorithm requires $\tilde{O}(\frac{1}{\eps^2})$ applications of $U$. This concludes the proof.
\end{proof}
Note that the algorithm in \autoref{prop:quantum_infty_norm_est} with complexity $\bigOt{\frac{\sqrt{d}}{\eps}}$ can be made essentially unbiased by using \autoref{cor:unbiasedBlockToGrad} instead of \autoref{cor:blockToGrad}.
\begin{theorem}
\label{thm:quantum_euclidean_norm_est}
Let $\ket{\psi} = \sum_{j \in [d]} \alpha_j \ket{j}$ be a quantum state, $\alpha \in \C^d$ the vector with elements $\alpha_j$, and $U\ket{0} = \ket{\psi}$. Then, for $q \ge 2$
\begin{equation*}
    O\trm{\min\left\{ \trm{\frac{3}{\eps}}^{\frac{1}{\frac12-\frac1q}},\frac{d^{\frac12+\frac1q}}{\eps}\right\} \log \frac{d}{\delta}}
\end{equation*}
conditional applications of $U$ and its inverse suffice to compute an $\eps$-$\ell_q$-norm estimate $\tilde{\alpha}$ of $\alpha$, with success probability at least $1-\delta$. 
\end{theorem}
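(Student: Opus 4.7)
The plan is to reduce $\ell_q$-norm estimation to $\ell_\infty$-norm estimation via the dimension-independent norm conversion of \autoref{cor:norm-conversion}, and then invoke \autoref{prop:quantum_infty_norm_est} for the latter. Since $\ket{\psi}$ has unit norm we have $\nrm{\alpha}_2 = 1$, so applying \autoref{cor:norm-conversion} with $s=2$ tells us that an $\eps$-$\ell_q$-norm estimate of $\alpha$ can be produced from an $\eta$-$\ell_\infty$-norm estimate with
\[
\eta \;=\; \max\!\left\{\tfrac{1}{3}\trm{\tfrac{\eps}{3}}^{\frac{1}{1-2/q}},\; \tfrac{\eps}{d^{1/q}}\right\}.
\]

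First I would obtain an $\eta$-$\ell_\infty$-norm estimate of the complex vector $\alpha$ from \autoref{prop:quantum_infty_norm_est}. That proposition is stated for $\Re(\alpha)$, but running it twice — once on $U$ and once on the phase-shifted preparation unitary $(\ket{0}\!\bra{0}\otimes I + \ket{1}\!\bra{1}\otimes \ci I)\cdot(H\otimes I)\cdot U$ or, more simply, on the state $\ci\ket{\psi}$ — yields an $\eta$-$\ell_\infty$-norm estimate of $\Im(\alpha)$ with the same complexity. Combining the two estimates with precision $\eta/\sqrt{2}$ each gives an $\eta$-$\ell_\infty$-norm estimate $\tilde{\alpha}\in\C^d$, using $\tilde{O}\trm{\min\{\sqrt{d}/\eta,\,1/\eta^2\}\log(d/\delta)}$ conditional uses of $U$ and $U^\dagger$, where the $1/\eta^2$ branch is obtained from the second regime in \autoref{prop:quantum_infty_norm_est}.

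The last step is a short substitution. Plugging $\eta \ge \eps/d^{1/q}$ into the $\sqrt{d}/\eta$ branch yields the bound $\tilde{O}(d^{1/2+1/q}/\eps)$; plugging $\eta \ge \tfrac13(\eps/3)^{q/(q-2)}$ into the $1/\eta^2$ branch yields $\tilde{O}((3/\eps)^{2q/(q-2)}) = \tilde{O}((3/\eps)^{1/(1/2-1/q)})$. Taking the min over the two available algorithms then recovers precisely the complexity stated in the theorem. (In the regime where the $1/\eta^2$ branch would violate its applicability condition $\eta \ge 1/\sqrt{d}$ the $\sqrt{d}/\eta$ branch is already tighter, so the min is legitimate throughout.)

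There is no real obstacle here: the nontrivial work has been done in \autoref{prop:quantum_infty_norm_est} (quadratic-in-$\eps$ speed-up for the $\ell_\infty$ case via Jordan's gradient trick) and in \autoref{cor:norm-conversion} (the dimension-free norm conversion exploiting $\ell_2$-normalization). The only point requiring a bit of care is the factor of two arising from independently estimating real and imaginary parts, and the verification that the two complexity regimes of \autoref{prop:quantum_infty_norm_est} line up with the two terms of the $\min$, both of which are routine.
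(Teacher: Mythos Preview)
Your proposal is correct and follows essentially the same route as the paper: reduce to $\ell_\infty$ via \autoref{cor:norm-conversion} and invoke \autoref{prop:quantum_infty_norm_est}. The only difference is how the dimension-independent term $(3/\eps)^{1/(1/2-1/q)}$ is obtained: you derive it from the $\tilde{O}(1/\eta^2)$ regime of \autoref{prop:quantum_infty_norm_est}, whereas the paper appeals to \autoref{thm:condsamp} (the conditional-samples result) for that term. Both are valid; the paper's choice avoids the extra $\log(1/\eps)$ factor hidden in the $\tilde{O}$ of \autoref{prop:quantum_infty_norm_est}'s secondary regime, so it matches the stated $O(\cdot)$ bound exactly, while yours gives $\tilde{O}(\cdot)$---a cosmetic difference given that the paper treats all these bounds up to polylogarithmic factors anyway.
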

\begin{proof}
The first term follows from \autoref{prop:quantum_infty_norm_est} and \autoref{cor:norm-conversion}. The second term follows from \autoref{thm:condsamp}.
\end{proof}

\subsection{Tomography for sparse vectors}
To conclude this section, we show that the tomography algorithm based on phase estimation can be improved if we know that the quantum state contains at most $s < d$ large amplitudes. We proceed by finding the large elements first, then applying the tomography algorithm only to extract a description of only the corresponding part of the quantum state.

\begin{proposition}
\label{prop:sparse_ell_infty_norm}
Let $\ket{\psi} = \sum_{j \in [d]} \alpha_j \ket{j}$ be a quantum state, and $U\ket{0} = \ket{\psi}$. Let $0 < \delta < 1$, and let $R$ be such that $\{j \in [d] : |\alpha_j| \geq \eps\}\subseteq R  \subseteq [d]$ be a subset of the indices that contains all large elements.
Let $P:= \sum_{i\not\in R} |\alpha_i|^2$  and $s = |R|$. There is a quantum algorithm that, with probability at least $1-\delta$, outputs an $\bigO{(s+P/\eps^2)\log(s)\log(1/\delta)}$-sparse $\tilde{\alpha} \in \R^{d}$ such that $\nrm{ \alpha - \tilde{\alpha}}_\infty \le \eps$ using \[
\bigO{ \trm{\trm{\frac{\sqrt{s}}{\eps} +\frac{\sqrt{P}}{\eps^2}} + \log\frac{\log(s + P/\eps)}{\delta} } \log(s)\log\trm{\frac{s+P/\eps}{\delta}}}
\]
applications of $U$ and its inverse, $\bigO{ \trm{\trm{\frac{\sqrt{s}}{\eps} +\frac{\sqrt{P}}{\eps^2}} + \log \frac{\log(\frac{s + P}{\eps})}{\delta} } \log(s)\log\trm{\frac{s+P/\eps}{\delta}}\log\trm{\frac{s+P}{\eps}}}$ indexed-SWAP gates acting on $\bigO{\left(s+\frac{P}{\eps^2}\right)\log(s)\ln(1/\delta)}$ bits, and $\bigOt{s+ \frac{\sqrt{s}}{\eps} + \frac{P}{\eps^2} }$ additional gates.
\end{proposition}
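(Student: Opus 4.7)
The plan is to split the task into two phases. In the first, \emph{discovery} phase, we assemble a set $S \supseteq R$ of size $\widetilde{O}(s + P/\eps^2)$ that contains every index with $|\alpha_j| \ge \eps$. In the second, \emph{tomography} phase, we invoke the machinery of \autoref{prop:quantum_infty_norm_est}, but with the amplitude-encoding state supported on $S$ instead of $[d]$, so that its normalization factor becomes $\sqrt{|S|}$ rather than $\sqrt{d}$. For $j \notin S$ we simply set $\tilde{\alpha}_j = 0$, which incurs at most $\eps$ in $\ell_\infty$-error because every index with $|\alpha_j| \ge \eps$ has been placed in $S$.

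For the discovery phase, the assumption $\{j : |\alpha_j| \ge \eps\}\subseteq R$ already guarantees that the tail $[d] \setminus R$ contains only small coordinates; nevertheless, we describe a robust procedure that identifies the elements of $S\setminus R$ without exploiting this fact, so that the claimed sparsity bound holds unconditionally. Use amplitude amplification of $U$ against the projector $\Pi_{[d]\setminus R}$ (whose check subroutine is implemented by a comparison against a table of $R$ stored via indexed-SWAP) to prepare the normalized tail state $\ket{\psi_{\mathrm{tail}}} = \Pi_{[d]\setminus R}\ket{\psi}/\sqrt{P}$, at a cost of $\widetilde{O}(1/\sqrt{P})$ applications of $U$ and $U^\dagger$ per preparation; the unknown value of $P$ is handled by the standard doubling schedule, which contributes only a logarithmic overhead. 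Measuring $\ket{\psi_{\mathrm{tail}}}$ in the computational basis $\widetilde{O}(P/\eps^2)$ times reveals, with high probability, every tail index $j$ with $|\alpha_j|^2 \ge \eps^2$, because each such $j$ carries probability $\ge \eps^2/P$ under $\ket{\psi_{\mathrm{tail}}}$. By the normalization of $\ket{\psi}$ there are at most $P/\eps^2$ such indices, so the observed set $T$ satisfies $|T| = \widetilde{O}(P/\eps^2)$, and the total cost of this phase is $\widetilde{O}(\sqrt{P}/\eps^2)$ queries to $U$.

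For the tomography phase, modify the circuit $U_{\amp}$ of \autoref{lem:uamp} so that the initial uniform superposition ranges over $S = R \cup T$ rather than $[d]$, using an indexed-SWAP table with $|S|$ entries. The resulting circuit prepares $\ket{\amp_S(x)} = \tfrac{1}{\sqrt{|S|}}\sum_{j\in S}\trm{x_j\ket{j}\ket{0} + \sqrt{1-x_j^2}\ket{j}\ket{1}}$ for any $x$ supported on $S$. Combining this with $U$ (padded by $\ket{0}$ on the amp ancilla) through \autoref{lem:blockInnerProd}, and symmetrizing via \autoref{lem:linCombBlocks} to extract the real part, yields a block-encoding of $\diag\bigl(\{\ipc{x}{\Re(\alpha_S)}/\sqrt{|S|}\}_x\bigr)$. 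Feeding this into \autoref{cor:blockToGrad} with target precision $\eps/(4\sqrt{|S|})$ returns an $\eps/2$-$\ell_\infty$-estimate of $\Re(\alpha_S)$ at a cost of $\widetilde{O}(\sqrt{|S|}/\eps) = \widetilde{O}(\sqrt{s}/\eps + \sqrt{P}/\eps^2)$ applications of $U$ and $U^\dagger$; running the same procedure with $U$ replaced by $\ci U$ recovers $\Im(\alpha_S)$ to the same precision, completing the construction of $\tilde{\alpha}$.

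The main obstacle is the discovery phase, which must cope with the unknown value of $P$: amplifying too aggressively wastes queries when $P$ is large, while amplifying too little misses tail mass when $P$ is small. This is resolved by the standard doubling schedule for amplitude amplification, verifying progress through short batches of measurements before committing to further rounds. Combined with $\Theta(\log(1/\delta))$ independent repetitions for median boosting of both phases' success probabilities, this produces the logarithmic overhead visible in both the query complexity and in the output sparsity $\widetilde{O}(s + P/\eps^2)$. Finally, the gate counts and the size of the indexed-SWAP register follow by tallying the $\widetilde{O}(\sqrt{|S|}/\eps)$ calls to the modified $U_{\amp}$ in the tomography phase, each acting on a table of $|S|$ bits, together with the $\widetilde{O}(\sqrt{P}/\eps^2)$ amplification calls in the discovery phase.
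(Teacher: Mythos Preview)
Your tomography phase is essentially the paper's approach (run \autoref{prop:quantum_infty_norm_est} restricted to a small index set), and that part is fine. The gap is in the discovery phase: you treat $R$ as a known input---projecting onto $[d]\setminus R$, storing $R$ in an indexed-SWAP table---but in the paper's setting $R$ is \emph{not} available to the algorithm. The set $R$, along with $s=|R|$ and $P$, appears only in the complexity analysis; the algorithm may use bounds on $s$ and $P$ but not $R$ itself (see the corollary immediately following, where $R=\{j:|\alpha_j|\ge\eps\sqrt{s/d}\}$ is defined in terms of the unknown amplitudes). Under your reading the discovery phase is pointless anyway: since by hypothesis every index with $|\alpha_j|\ge\eps$ already lies in $R$, you could take $S=R$, skip the tail sampling entirely, and obtain strictly better sparsity and query counts than the statement claims. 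You even note this yourself (``the tail contains only small coordinates; nevertheless\dots''), which is a signal that the purpose of the discovery phase has been misidentified.

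The paper's discovery phase works without knowing $R$. Repeatedly measure $U\ket{0}$; after each observed index, mark it as ``seen'' and, before the next measurement, amplitude-amplify onto the \emph{unseen} indices (not onto $[d]\setminus R$). If $j$ of the $k\le s$ large elements have been found so far, the remaining good mass is at most $(s-j)\eps^2+P$, so the cost of the next hit is $O(1/\sqrt{(s-j)\eps^2+P})$ and the chance that it is a new large element is at least $\Omega((k-j)\eps^2/((s-j)\eps^2+P))$. Summing gives an expected query cost of $O((\sqrt{s}/\eps+\sqrt{P}/\eps^2)\log s)$ to collect all large elements, and an expected $O((s+P/\eps^2)\log s)$ distinct indices seen along the way---the $P/\eps^2$ term arises because samples \emph{accidentally} land in the tail during this blind search, not because the tail is deliberately targeted. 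Markov's inequality converts expectations to constant-probability stopping rules, and $O(\log(1/\delta))$ independent repetitions give the stated high-probability guarantee.
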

\begin{proof}
  We start by finding all (at most $s$) elements in $R$ that are at least $\eps$ in size using amplitude amplification. We can then ignore all other elements and apply our state tomography algorithm on the relevant elements.

  Let $k = |\{j \in [d]: |\alpha_j|\geq \eps\}|$ be the number of large elements. So there are $s-k$ elements in $R$ that are smaller than $\eps$. We start by simply measuring the state and observing an index $j$. Note that with probability at least $k\eps^2$ this is one of the relevant entries, although this is unknown to us. After observing a single entry, we mark all other entries as ``good'' and amplify the ``good'' part of the state before measuring again, to avoid seeing an element twice. We repeat this until we have seen $T$ different elements, for some $T$ to be determined later.

  If at some point we have seen $j$ large elements (which we do not know), then the probability on the ``good'' elements is at most $(s-j)\eps^2+P$. Hence, this part can be amplified to find a new element with probability $\geq 2/3$ using $\bigO{\frac{1}{\sqrt{(s-j)\eps^2+P}}}$ queries. The probability that this new element is one of the large ones is at least $\frac{2(k-j)\eps^2}{3 (s-j)\eps^2+P}$. Thus, the expected number of samples before seeing a new large element is at most $\frac{3 (s-j)\eps^2+P}{2(k-j)\eps^2}$. For the expected number of queries needed before seeing all large elements we then get
  \begin{align*}
    \bigO{  \sum_{j=0}^{k-1} \frac{1}{\sqrt{(s-j)\eps^2+P}} \frac{ (s-j)\eps^2+P}{(k-j)\eps^2} } &=
    \bigO{  \sum_{j=1}^{k} \left( \frac{\sqrt{(s-k+j)\eps^2}}{j\eps^2}  +  \frac{\sqrt{P}}{j\eps^2}\right) }\\
    &= \bigO{ \left(\frac{\sqrt{s}}{\eps}+\frac{\sqrt{P}}{\eps^2}\right)\log(s)}.
  \end{align*}
  By Markov's equality we can stop the algorithm after $6$ times the expected number of queries and still be successful with probability $\geq 5/6$.

  We also need to ensure that we do not return too many elements. The expected number of elements found is equal to the expected number of samples, hence it is 
  \[
     \sum_{j=0}^{k-1}  \frac{ (s-j)\eps^2+P}{(k-j)\eps^2}  \leq  \left(s+\frac{P}{\eps^2}\right)\log(s).
  \]
  Again, by Markov's inequality we can stop the algorithm if we see more then $6$ times this number of samples, and still succeed with probability at least $5/6$. By the union bound both conditions are met with probability at least $2/3$. 
  Repeating $\bigO{\ln(1/\delta)}$ times and taking all elements seen in runs with not too many samples gives us as success probability at least $1-\delta/2$. The query complexity then becomes \[
\bigO{ \left(\frac{\sqrt{s}}{\eps}+\frac{\sqrt{P}}{\eps^2}\right)\log(s)\log(1/\delta)}
\]
  for this entire procedure.
  
  We now assume the last step was successful, so we have a set $I$ of indices such that $I$ contains all large elements and $|I| = \bigO{\left(s+\frac{P}{\eps^2}\right)\log(s)\ln(1/\delta)}$.
  Applying \autoref{thm:quantum_euclidean_norm_est} on just these indices gives the query and gate complexity from the lemma.
  
\end{proof}

The above proposition gives an improvement if the state is close to a sparse state in $\ell_2$-norm. We can directly get a bound on this closeness if almost all elements are small.
\begin{corollary}
\label{cor:quantum_sparse_infinity_norm}
Let $\ket{\psi} = \sum_{j \in [d]} \alpha_j \ket{j}$ be a quantum state, and $U\ket{0} = \ket{\psi}$. Let $0 < \delta < 1$, and let $s$ be such that $|\{j \in [d] :|\alpha_j| \geq \eps \sqrt{\frac{s}{d}}\}|\leq s$. There is a quantum algorithm that, with probability at least $1-\delta$, outputs an $\bigO{s\log(s)\log(1/\delta)}$-sparse $\tilde{\alpha} \in \R^{d}$ such that $\nrm{ \alpha - \tilde{\alpha}}_\infty \le \eps$ using \[
\bigO{ \trm{\frac{\sqrt{s}}{\eps} + \log \frac{\log s}{\delta}} \log(s)\log\trm{\frac{s}{\delta}}}
\]
applications of $U$ and its inverse, and $...$ additional gates. 
  \end{corollary}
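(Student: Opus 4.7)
The plan is to reduce directly to \autoref{prop:sparse_ell_infty_norm} by choosing the set $R$ to be $R := \{j \in [d] : |\alpha_j| \geq \eps\sqrt{s/d}\}$, rather than the ``obvious'' set $\{j : |\alpha_j| \geq \eps\}$. Since $\sqrt{s/d} \leq 1$ (as $s \leq d$), this $R$ is a valid choice in the proposition: it contains every $j$ with $|\alpha_j| \geq \eps$. By the hypothesis of the corollary, $|R| \leq s$.

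Next, I would bound the tail mass $P = \sum_{i \notin R} |\alpha_i|^2$. By definition of $R$, every $i \notin R$ satisfies $|\alpha_i|^2 < \eps^2 s/d$, and there are at most $d$ such indices, so
\[
P \;\leq\; d \cdot \frac{\eps^2 s}{d} \;=\; \eps^2 s.
\]
Consequently $\sqrt{P}/\eps^2 \leq \sqrt{s}/\eps$ and $P/\eps^2 \leq s$.

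Plugging these bounds into \autoref{prop:sparse_ell_infty_norm} gives a sparsity of $O((s + P/\eps^2)\log(s)\log(1/\delta)) = O(s\log(s)\log(1/\delta))$, a query complexity of
\[
O\!\trm{\trm{\frac{\sqrt{s}}{\eps} + \frac{\sqrt{P}}{\eps^2} + \log\frac{\log(s + P/\eps)}{\delta}} \log(s)\log\!\trm{\frac{s+P/\eps}{\delta}}} \;=\; \widetilde{O}\!\trm{\frac{\sqrt{s}}{\eps}\log(s)\log(1/\delta)},
\]
and the claimed gate complexity after absorbing the $P \leq s\eps^2 \leq s$ estimates into the logarithmic factors (using $\log((s+P/\eps)/\delta) = O(\log(s/(\eps\delta)))$ when $\eps \leq 1$).

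The only real content beyond bookkeeping is the choice of the threshold $\eps\sqrt{s/d}$ rather than $\eps$: this choice is what lets the hypothesis of the corollary feed cleanly into both the $|R| \leq s$ bound and the $P \leq s\eps^2$ bound simultaneously. There is no substantive obstacle; the calculation is mechanical once the threshold is fixed.
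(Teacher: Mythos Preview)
Your proposal is correct and matches the paper's proof essentially line for line: the paper also sets $R = \{j : |\alpha_j| \geq \eps\sqrt{s/d}\}$, bounds $P \leq d \cdot \eps^2 s/d = \eps^2 s$, and invokes \autoref{prop:sparse_ell_infty_norm}. Your additional remark that the threshold $\eps\sqrt{s/d}$ (rather than $\eps$) is the key choice is exactly the point, and the bookkeeping on the logarithmic factors is fine (in fact $s + P/\eps \leq s(1+\eps) \leq 2s$, so you can drop the $1/\eps$ inside the logs entirely).
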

\begin{proof}
  Let $R = \{j \in [d]:|\alpha_j| \geq \eps \sqrt{\frac{s}{d}}\}$. Then, $P = \sum_{i\not \in R} |\alpha_i|^2 \leq d \eps^2 \frac{s}{d} \leq \eps^2 s$. Applying \autoref{prop:sparse_ell_infty_norm} with this choice of $R$ and the above bound on $P$ gives the desired result.
 \end{proof}

As usual, we now convert the above bound for the $\ell_\infty$-norm to other norms.

\begin{theorem}
\label{thm:quantum_euclidean_norm_est_sparse}
Let $\ket{\psi} = \sum_{j \in [d]} \alpha_j \ket{j}$ be a quantum state, $\alpha \in \C^d$ the vector with elements $\alpha_j$, and $U\ket{0} = \ket{\psi}$. Let $0 < \delta < 1$, let $q \ge 2$ and let $s$ be such that $|\{j \in [d] : |\alpha_j| \ge \eps \sqrt{\frac{s}{d}}\}| \le s$. Then we can compute an $\eps$-$\ell_q$-norm estimate $\tilde{\alpha}$ of $\alpha$ using
\begin{equation*}
    O\trm{\trm{\min\left\{ \trm{\frac{3}{\eps}}^{\frac{1}{\frac12-\frac1q}},\frac{s^{\frac12+\frac1q}}{\eps}\right\} + \log \frac{\log s}{\delta}}\log s \log \frac{s}{\delta}}
\end{equation*}
conditional applications of $U$ and its inverse, with success probability at least $1-\delta$. 
\end{theorem}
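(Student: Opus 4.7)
The plan is to mirror the proof of \autoref{thm:quantum_euclidean_norm_est}, substituting the sparse $\ell_\infty$-norm routine \autoref{cor:quantum_sparse_infinity_norm} for the dense \autoref{prop:quantum_infty_norm_est} wherever sparsity helps, and invoking the norm-conversion lemma \autoref{cor:norm-conversion} to pass from an $\ell_\infty$-estimate to an $\ell_q$-estimate. The minimum of two terms in the theorem corresponds to two separate choices of $\eta$ in \autoref{cor:norm-conversion}, and I would simply take whichever is smaller.

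The first term $(3/\eps)^{1/(1/2-1/q)}$ is dimension-free and does not use sparsity: I would apply \autoref{prop:quantum_infty_norm_est} at precision $\eta = \tfrac{1}{3}(\eps/3)^{1/(1-2/q)}$ (the first branch of the max in \autoref{cor:norm-conversion}) and invoke its $1/\eta^2$ mode, which gives the claimed query bound directly. The second term $s^{1/2+1/q}/\eps$ comes from invoking \autoref{cor:quantum_sparse_infinity_norm} at the finer precision $\eta = \eps/s^{1/q}$, which is what the second branch of the max in \autoref{cor:norm-conversion} becomes once the ambient dimension $d$ is replaced by the output sparsity $\tilde{O}(s)$; the $\tilde{O}(\sqrt{s}/\eta)$ query cost then evaluates to $\tilde{O}(s^{1/2+1/q}/\eps)$, as claimed.

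The step I expect to be the main obstacle is showing that an $\eta$-$\ell_\infty$-estimate $\tilde\alpha$ with $\tilde{O}(s)$-sparse support is actually an $\eps$-$\ell_q$-estimate of the full vector $\alpha\in\C^d$, in the spirit of \autoref{lem:new-norm-conversion}. On the support of $\tilde\alpha$ the argument is immediate: there are only $\tilde{O}(s)$ coordinates, each contributing at most $\eta^q$ to $\|\alpha-\tilde\alpha\|_q^q$, so this part is $\tilde{O}(s\eta^q) = \tilde{O}(\eps^q)$. Off the support every coordinate satisfies $|\alpha_j|\le\eta$, and the inequality $|\alpha_j|^q\le\eta^{q-2}|\alpha_j|^2$ combined with $\|\alpha\|_2=1$ bounds the tail contribution by $\eta^{q-2}$, which is $O(\eps^q)$ for the chosen $\eta$. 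I would also need to re-verify that the hypothesis of \autoref{cor:quantum_sparse_infinity_norm} remains valid at this finer precision $\eta<\eps$; this is available because the proof of that corollary rests on the tail-mass bound $P\le\eps^2 s$, which only uses the given assumption $|\{j: |\alpha_j|\ge \eps\sqrt{s/d}\}|\le s$ and is unaffected by the smaller precision used inside the subsequent $\ell_\infty$-estimation step.
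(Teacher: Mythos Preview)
Your overall template matches the paper's (one-line) proof, but the detailed execution for the second term of the $\min$ has two concrete gaps.

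\textbf{Cost of the sparse routine at precision $\eta=\eps/s^{1/q}$.} You invoke \autoref{cor:quantum_sparse_infinity_norm} (i.e.\ \autoref{prop:sparse_ell_infty_norm} with $R=\{j:|\alpha_j|\ge\eps\sqrt{s/d}\}$ and $P\le\eps^2 s$) at this finer precision and quote the cost as $\tilde O(\sqrt{s}/\eta)=\tilde O(s^{1/2+1/q}/\eps)$. But the query bound of \autoref{prop:sparse_ell_infty_norm} also contains a $\sqrt{P}/\eta^2$ term, which here evaluates to $\eps\sqrt{s}\cdot s^{2/q}/\eps^2=s^{1/2+2/q}/\eps$ and dominates. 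Correspondingly the output sparsity is $\tilde O(s+P/\eta^2)=\tilde O(s^{1+2/q})$, not $\tilde O(s)$, so your on-support count $\tilde O(s)\cdot\eta^q$ is also off. If instead you keep the finding phase at threshold $\eps$ (so the found set stays $\tilde O(s)$-sized and the finding cost stays $\tilde O(\sqrt{s}/\eps)$) and only sharpen the subsequent estimation to precision $\eta$, then off the found set you only know $|\alpha_j|<\eps$, not $|\alpha_j|\le\eta$, so the premise of your tail bound no longer holds.

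\textbf{The off-support tail.} Even granting $|\alpha_j|\le\eta$ off support, your bound $\sum_{j\notin\mathrm{supp}}|\alpha_j|^q\le\eta^{q-2}$ equals $\eps^{q-2}/s^{1-2/q}$ for $\eta=\eps/s^{1/q}$, and this is $O(\eps^q)$ only when $s\ge(1/\eps)^{2q/(q-2)}$. That is precisely the regime in which the \emph{first} term of the $\min$ is the smaller one, so where the second term is actually relevant your tail bound does not close.

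The paper's proof says only that the statement ``follows from \autoref{prop:sparse_ell_infty_norm} and \autoref{cor:quantum_sparse_infinity_norm}'' and does not spell out how these points are handled. A finer argument --- e.g.\ splitting the off-support coordinates into the at most $s$ coordinates inside $R$ (each bounded by the finding threshold) and the coordinates outside $R$ (each bounded by $\eps\sqrt{s/d}$) --- is needed, and even then the second contribution $d\cdot(\eps\sqrt{s/d})^q=\eps^q s^{q/2}/d^{q/2-1}$ is not obviously $O(\eps^q)$ without some additional relation between $d$ and $s$.
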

\begin{proof}
Follows from \autoref{prop:sparse_ell_infty_norm} and \autoref{cor:quantum_sparse_infinity_norm}.
\end{proof}

\section{First intermezzo: unbiased phase estimation}
\label{sec:unbiasedPhaseEst}
We now describe a method for phase estimation that is unbiased, more precisely symmetric in the sense that for a phase $\phi$ it provides an estimate $\varphi$ such that the probability of getting estimate $\phi+\epsilon$ is the same as getting estimate $\phi-\epsilon$ (modulo $2\pi$) for all $\eps$. Note that this is not satisfied by ordinary phase estimation, but this property is highly desirable, as we showcase in our applications. In particular, we need unbiased phase estimation to recover unbiased estimates of the entries of a density matrix, allowing us to give tighter error bounds with high probability.

Our method is based on adding and later subtracting a random phase shift; this idea is not new, see, e.g., \cite[Section 3.2]{linden2021AvgCaseQFTtoWorstCaseQFT}. The first step in our analysis is to show that the resulting estimator is symmetric. We subsequently show how to boost the precision of this symmetric estimator in a symmetric way. Since the problem is invariant under shifting by $2\pi$ we can always interpret phases $\phi,\varphi$ modulo $2\pi$; in particular for phases we define the distance modulo $2\pi$ introducing the notation $|\phi-\varphi|_{2\pi}:=\min\{|\phi-\varphi-2\pi\ell| \colon \ell\in \Z \}$. In this section, ``digit'' always refers to ``binary digit'', i.e., all numbers are expressed in fixed-point binary encoding; for example, $0.b_1 b_2\ldots b_n$ where $b_i\in\{0,1\}$ for $i\in[n]$ is the $n$-digit encoding of $b$. Recall that the function $\sinc(x)$ is a complex entire function defined as $\sin(x)/x$ for $x\neq 0$ and $\sinc(0)=1$.

\begin{algorithm}[!ht]
	\caption{Suppressed-Bias Phase Estimation}\label{alg:unbiased}
	\begin{algorithmic}[1] 
		\STATEx	{\bf Input:} $\ket{\psi(\phi)}=\frac{1}{\sqrt{M}}\sum_{k=0}^{M-1}e^{i \phi k}\ket{k}$ (for unknown $\phi$), and a parameter $n\in \N$
		\STATE Sample a uniformly random $n$-digit binary number $u\in[0,1)$ and define $\xi:=\frac{2\pi u}{M}$ \label{line:randomXi}
		\STATE Apply multi-phase gate $\sum_{k=0}^{M-1}e^{-i\xi k}\ketbra{k}{k}$ to $\ket{\psi(\phi)}$ \label{line:applyPhase}
		\STATE Perform inverse Fourier transform over $\mathbb{Z}_M$ and measure the state, yielding outcome $j$
		\STATE	{\bf Return} $\varphi:=\frac{2\pi j}{M}+\xi=\frac{2\pi}{M}(j+u)$ \label{line:correctPhase} 
	\end{algorithmic}
\end{algorithm}	
\begin{theorem}[Unbiased Phase Estimation]\label{thm:unbiased}
	If we run \autoref{alg:unbiased} with $n=\infty$ in Line~\ref{line:randomXi}, then it returns a random phase $\varphi\in[0,2\pi)$ with probability density function
	\begin{equation} \label{eq:probDens}
	f(\varphi):=\frac{M}{2\pi}\frac{\sinc^2(\frac{M}{2}|\phi-\varphi|_{2\pi})}{\sinc^2(\frac{1}{2}|\phi-\varphi|_{2\pi})}.
	\end{equation}
\end{theorem}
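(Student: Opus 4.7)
The plan is to compute the conditional distribution of the output $\varphi$ given the random shift $\xi$, then marginalise over $\xi$ via a clean change-of-variables. First I would observe that after Line \ref{line:applyPhase} the state becomes $\frac{1}{\sqrt{M}}\sum_{k=0}^{M-1} e^{i(\phi-\xi)k}\ket{k}$, which is exactly the input to standard phase estimation over $\mathbb{Z}_M$ for the effective unknown phase $\phi-\xi$. The textbook inverse-QFT-and-measure calculation (evaluating a finite geometric sum) gives
\[
\Pr[j\mid\xi] \;=\; \frac{1}{M^{2}}\left|\sum_{k=0}^{M-1}e^{i(\phi-\xi-2\pi j/M)k}\right|^{2} \;=\; \frac{1}{M^{2}}\cdot\frac{\sin^{2}\!\bigl(\tfrac{M}{2}(\phi-\xi-2\pi j/M)\bigr)}{\sin^{2}\!\bigl(\tfrac{1}{2}(\phi-\xi-2\pi j/M)\bigr)}.
\]
Substituting the returned estimate $\varphi = \xi + 2\pi j/M$ from Line \ref{line:correctPhase} collapses both arguments to $\tfrac{1}{2}(\phi-\varphi)$ and $\tfrac{M}{2}(\phi-\varphi)$ respectively.

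Next, with $n=\infty$ the variable $u$ is uniform on $[0,1)$, so $\xi = 2\pi u/M$ has density $M/(2\pi)$ on $[0,2\pi/M)$. The map
\[
(\xi,j)\in[0,2\pi/M)\times\{0,\dots,M-1\}\;\longmapsto\;\varphi = \xi + 2\pi j/M\in[0,2\pi)
\]
is a bijection with unit Jacobian. Applying the change-of-variables formula to the joint (mixed continuous/discrete) distribution yields
\[
f(\varphi) \;=\; \frac{M}{2\pi}\cdot\Pr[j(\varphi)\mid\xi(\varphi)] \;=\; \frac{1}{2\pi M}\cdot\frac{\sin^{2}\!\bigl(\tfrac{M}{2}(\phi-\varphi)\bigr)}{\sin^{2}\!\bigl(\tfrac{1}{2}(\phi-\varphi)\bigr)},
\]
and the identity $\sin^{2}(Mx/2)/\sin^{2}(x/2)=M^{2}\,\sinc^{2}(Mx/2)/\sinc^{2}(x/2)$ rewrites this as the claimed density \textup{(\ref{eq:probDens})}.

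Finally, to justify replacing $\phi-\varphi$ by the $2\pi$-distance $|\phi-\varphi|_{2\pi}$ in the claimed formula, I would invoke three elementary observations: the input state $\ket{\psi(\phi)}$ depends on $\phi$ only modulo $2\pi$; both $\sin^{2}(\tfrac{M}{2}\cdot)$ and $\sin^{2}(\tfrac{1}{2}\cdot)$ are $2\pi$-periodic (the first has period $2\pi/M$, hence also $2\pi$); and both are even functions of their argument. Together these let us replace $\phi-\varphi$ by the representative of smallest absolute value, i.e.\ $|\phi-\varphi|_{2\pi}$, without altering the density.

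The calculation is essentially mechanical; the only subtlety is the bookkeeping for the mixed continuous/discrete change of variables, i.e.\ verifying that the density $f$ so produced is indeed a well-defined probability density on $[0,2\pi)$. This is automatic: the total integral equals $\int_{0}^{2\pi/M}\sum_{j=0}^{M-1}\tfrac{M}{2\pi}\Pr[j\mid\xi]\,d\xi = \int_{0}^{2\pi/M}\tfrac{M}{2\pi}\,d\xi = 1$, which doubles as a consistency check.
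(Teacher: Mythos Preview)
Your proposal is correct and follows essentially the same approach as the paper: compute the conditional probability $\Pr[j\mid\xi]$ via the standard geometric-sum/Fej\'er-kernel calculation, substitute $\varphi=\xi+2\pi j/M$, and then invoke uniformity of $\xi$ over $[0,2\pi/M)$ to obtain the density. You are somewhat more explicit than the paper about the change-of-variables bookkeeping, the $\sin^2\to\sinc^2$ rewrite, and the justification for passing to $|\phi-\varphi|_{2\pi}$, but the argument is the same.
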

This probability density function is normalized so that $\int_{0}^{2\pi} f(\varphi) d\varphi=1$,
moreover it only depends on $|\phi-\varphi|_{2\pi}$ showing that this procedure satisfies our criterion for unbiasedness, see \autoref{fig:densityPlot}.
\begin{figure}[ht]
	\begin{center}
		\begin{tikzpicture}[scale=1.]
		\begin{axis}[xlabel={$x$}, axis lines=middle, samples=200, domain=0:
		pi, ymin=0, ymax=2.65,
		thick, trig format plots=rad,no markers,
		xtick = {-pi,-0.5*pi,0,0.5*pi,pi},
		xticklabels = {$-\pi$,$-\frac{\pi}{2}$,$0$,$\frac{\pi}{2}$,$\pi$},
		ytick = {1,2},
		yticklabels = {$1$,$2$},
		]
		\addplot[blue,domain=-pi:pi] {8/pi*sin(8*x)*sin(8*x)/256/sin(x/2)/sin(x/2)};				
		\end{axis}
		\end{tikzpicture}
	\end{center}
	\caption{
		Plot of \autoref{eq:probDens} for $x=\phi-\varphi$ and $M=16$.
	}\label{fig:densityPlot}
\end{figure}
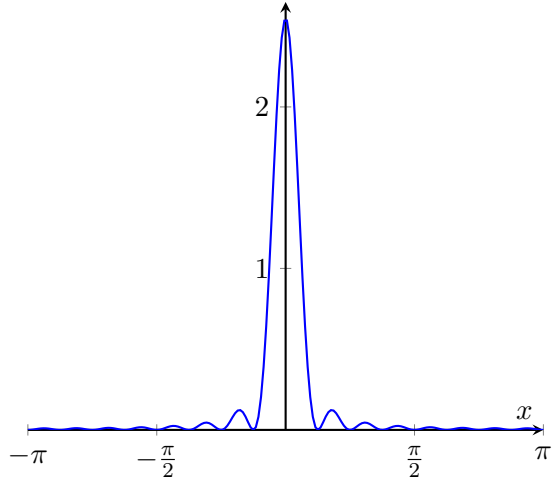
\begin{proof}
Suppose we have a quantum state $\frac{1}{\sqrt{M}}\sum_{k=0}^{M-1}e^{i \phi k}\ket{k}$ and we wish to estimate the phase $\phi$. Then by applying the inverse quantum Fourier transform over $\mathbb{Z}_M$ and measuring we get outcome $j$, giving rise to estimate $\varphi=\omega_j=\frac{2\pi j}{M}$ with probability (using \cite[Lemma 10]{brassard2002AmpAndEst}):
\begin{align} \label{eq:shiftProbDens}
 \left|\frac{1}{M}\sum_{k=0}^{M-1}e^{-i \omega_j k}e^{i\phi k}\right|^2
=\left|\frac{1}{M}\sum_{k=0}^{M-1}e^{i(\phi-\omega_j) k}\right|^2
=\left|\frac{1}{M}\sum_{k=0}^{M-1}e^{i|\phi-\omega_j| k}\right|^2
=\frac{\sinc^2(\frac{M}{2}|\phi-\omega_j|_{2\pi})}{\sinc^2(\frac{1}{2}|\phi-\omega_j|_{2\pi})}.
\end{align}
Now let us modify this procedure by first choosing a uniformly random phase $\xi\in[0,\frac{2\pi}{M})$ and applying phase estimation to the state $\frac{1}{\sqrt{M}}\sum_{k=0}^{M-1}e^{i (\phi k-\xi k)}\ket{k}$ then outputting $\varphi=\omega_j+\xi$ for the resulting $j$. Then, for a fixed $\xi$ the probability of outputting $\varphi=\omega_j+\xi$ is 
\begin{equation} \label{eq:preProbDens}
\frac{\sinc^2(\frac{M}{2}|\phi-\xi-\omega_j|_{2\pi})}{\sinc^2(\frac{1}{2}|\phi-\xi-\omega_j|_{2\pi})}
=\frac{\sinc^2(\frac{M}{2}|\phi-\varphi|_{2\pi})}{\sinc^2(\frac{1}{2}|\phi-\varphi|_{2\pi})}.
\end{equation}
Since the choice of $\xi$ is uniformly random over $[0,\frac{2\pi}{M})$ this implies that the probability density function of getting estimate $\varphi\in[0,2\pi)$ is 
given by \autoref{eq:probDens}.
\end{proof}

According to \autoref{eq:probDens} the probability of getting an outcome $\varphi$ with error at most $\frac{c}{M}$ for some $c\leq \pi M$ is
\begin{align}
\Pr[|\phi-\varphi|_{2\pi}\leq \frac{c}{M}]
&=\int_{-\frac{c}{M}}^{\frac{c}{M}}\frac{M}{2\pi}\frac{\sinc^2(M x/2)}{\sinc^2(x/2)} dx \tag*{substitute $y= \frac{M}{2} x \Rightarrow$}\\
&=\frac{1}{\pi}\int_{-\frac{c}{2}}^{\frac{c}{2}}\frac{\sinc^2(y)}{\sinc^2(y/M)} dy \tag*{use $|\sinc(z)|\leq 1 \Rightarrow$}\\
&\geq\frac{1}{\pi} \int_{-\frac{c}{2}}^{\frac{c}{2}}\sinc^2(y) dy.
\end{align}
In particular one can compute the value of this bound for $c=1,2,3$ resulting in:
\begin{align}\label{eq:concreteBounds}
\kern-1mm\Pr[|\phi-\varphi|_{2\pi}\leq \frac{1}{M}]\geq 0.30\ldots & & 
\Pr[|\phi-\varphi|_{2\pi}\leq \frac{2}{M}]\geq 0.57\ldots & & 
\Pr[|\phi-\varphi|_{2\pi}\leq \frac{3}{M}]\geq 0.75\ldots & & 
\end{align}
Note the increased accuracy compared to ordinary phase estimation: the difference between two distinct phase estimates is at least $\frac{2\pi}{M}$, so in case the true phase we try to estimate is, say, $\frac{\pi}{M}$, then ordinary phase estimation always has an error at least $\frac{\pi}{M}>\frac{3}{M}$. On the other hand, here we get $\frac{3}{M}$-accuracy with probability greater than $\frac{3}{4}$.

\subsection{Unbiased boosting}

Now we show how to boost this procedure so that it gives an unbiased estimate that is also $\frac{6}{M}$-accurate with exponentially high probability. We achieve this essentially by the usual median trick, except some care is needed because the median is ill-defined modulo $2\pi$.

\begin{algorithm}[!ht]
	\caption{Boosted Unbiased Phase Estimation}\label{alg:boostedUnbiased}
	\begin{algorithmic}[1] 
		\STATEx	{\bf Input:} $(2m+1)$ copies of $\ket{\psi(\phi)}=\frac{1}{\sqrt{M}}\sum_{k=0}^{M-1}e^{i \phi k}\ket{k}$ (for unknown $\phi$)
		\STATE {\bf For} $j=1$ to $2m+1$
		\State  ~~~ Run \autoref{alg:unbiased} setting $n=\infty$ on the $j$-th copy of $\ket{\psi(\phi)}$ and record the estimate $\varphi_j$
		\STATE 	Find the shortest interval $I=[a,b]\subseteq [-2\pi,2\pi]$ such that $I\cup (I+2\pi)$ contains at least $m+1$ of the estimates $\varphi_j$  \label{line:shortest}
		\STATE 	{\bf If} $a+b\geq 0$ {\bf then return} $\overline{\varphi}:=\frac{a+b}{2}$ {\bf else return} $\overline{\varphi}:=\frac{a+b}{2}+2\pi$					
	\end{algorithmic}
\end{algorithm}	
\begin{theorem}\label{thm:boostedUnbiased}
		\autoref{alg:boostedUnbiased} returns an unbiased $\overline{\varphi}\in[0,2\pi]$ s.t. $\Pr[|\phi-\overline{\varphi}|_{2\pi}\leq \frac{6}{M}]\geq 1-\exp(-\frac{m}{4})$.
\end{theorem}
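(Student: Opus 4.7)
The plan has two parts: an accuracy bound of the form $|\phi-\overline{\varphi}|_{2\pi}\le 6/M$ with exponentially small failure probability, and a symmetry (``unbiasedness'') argument that the distribution of $\overline{\varphi}-\phi$ is symmetric about $0$ modulo $2\pi$. For accuracy, I would start from \autoref{eq:concreteBounds}, which gives $\Pr[|\phi-\varphi_j|_{2\pi}\le 3/M]\ge 3/4$ for each of the $2m+1$ independent estimates produced in Line~2. Letting $X$ count these ``good'' estimates, Hoeffding's inequality yields $\Pr[X\le m]\le \exp(-2(2m+1)(3/4-m/(2m+1))^2)$; expanding $(2m+3)^2\ge 4m^2+2m$ shows the exponent is at least $m/4$, so $\Pr[X\le m]\le \exp(-m/4)$.

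Conditional on $X\ge m+1$, all $X$ good estimates lie in the circular arc of length $6/M$ centered at $\phi$; lifting this arc to $[-2\pi,2\pi]$ exhibits an interval of length $\le 6/M$ containing at least $m+1$ of the $\varphi_j$, so the shortest such interval $I=[a,b]$ returned by Line~3 satisfies $b-a\le 6/M$. Since at most $m$ of the $2m+1$ samples are bad, $I$ must contain at least one good estimate, and because any two points of $I$ are within $6/M$ of each other, the midpoint $(a+b)/2$ lies within $3/M$ of a good estimate and hence within $6/M$ of $\phi$ in the circular metric. The conditional $+2\pi$ correction in Line~4 merely chooses a representative in $[0,2\pi]$ and therefore does not change $|\phi-\overline{\varphi}|_{2\pi}$.

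For the symmetry part, I would exploit that by \autoref{thm:unbiased} the density of each $\varphi_j$ depends only on $|\phi-\varphi_j|_{2\pi}$, so the joint law of $(\varphi_1,\ldots,\varphi_{2m+1})$ is invariant under the coordinate-wise circular reflection $\sigma\colon \varphi\mapsto 2\phi-\varphi\pmod{2\pi}$. It then suffices to check that the post-processing in Lines~3--4 is equivariant under $\sigma$ on the circle: the shortest-arc criterion is a circular notion, the midpoint commutes with reflection, and the $+2\pi$ adjustment is a canonical choice of representative. Hence $\overline{\varphi}$ and $\sigma(\overline{\varphi})$ are equidistributed, so $\overline{\varphi}-\phi$ has distribution symmetric about $0$ modulo $2\pi$, which is the claimed unbiasedness.

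The main obstacle will be the equivariance check, since the algorithm is phrased on the real line using an interval $I\subseteq [-2\pi,2\pi]$ and a tie-breaking rule ``$a+b\ge 0$'' that both depend on a global choice of origin rather than on $\phi$. I would address this by arguing that these choices only fix a representative of the circular midpoint, so the equivalence class $\overline{\varphi}\bmod 2\pi$ is intrinsically defined, and by noting that the $\varphi_j$ are continuously distributed (by \autoref{thm:unbiased}) so that degenerate ties in the ``shortest interval'' selection occur with probability zero and can be resolved by any measurable convention without affecting either claim.
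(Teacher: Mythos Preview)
Your proposal is correct and follows essentially the same approach as the paper's proof. The paper also argues accuracy via a Chernoff--Hoeffding bound (using the KL-divergence form $\exp(-D(\tfrac12\Vert\tfrac14)(2m+1))\le\exp(-m/4)$ rather than your additive-Hoeffding computation, but both reach the same $\exp(-m/4)$) and then observes that the shortest interval has length at most $6/M$ and must overlap the arc $[\phi-3/M,\phi+3/M]$, forcing $|\phi-\overline{\varphi}|_{2\pi}\le 6/M$. For unbiasedness the paper is considerably terser than you: it simply asserts that the distribution of shortest intervals modulo $2\pi$ is symmetric about $\phi$, notes that ties have probability zero (continuous density) and can otherwise be broken uniformly at random; your explicit equivariance argument under the reflection $\varphi\mapsto 2\phi-\varphi$ spells out what the paper takes for granted.
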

\begin{proof}
	In Line~\ref{line:shortest} almost surely there is a unique shortest interval (modulo $2\pi$), since the endpoints $a,b$ of the shortest interval must come from the $2m+1$ estimates (modulo $2\pi$), which themselves come from the continuous distribution of \autoref{eq:probDens}. Alternatively, if there are multiple shortest intervals (modulo $2\pi$) we can just choose one uniformly at random. This algorithm is naturally unbiased as the distribution of the shortest intervals (modulo $2\pi$) is symmetric with respect to $\phi$. 

	Moreover, the probability that there are at least $m+1$ estimates $\varphi_j$ such that $|\phi-\varphi_j|_{2\pi}\leq \frac{3}{M}$ is at least $1-\exp(-\frac{m}{4})$ due to the Chernoff bound. Indeed, the probability that $|\phi-\varphi|_{2\pi}> \frac{3}{M}$ is at most $\frac{1}{4}$ by \autoref{eq:concreteBounds}. So by the Chernoff-Hoeffding theorem~\cite [Theorem 1]{hoeffding1963ProbIneqSumsOfBoundedRVs} the probability that $|\phi-\varphi|_{2\pi}> \frac{3}{M}$ holds for at least $m+1$ out of $2m+1$ estimates is at most $\exp(-D(\frac12\Vert \frac14)(2m+1))\leq\exp(-\frac{1}{8}(2m+1))\leq \exp(-\frac{m}{4})$, where $D(x\Vert y)=x\ln(\frac{x}{y})+(1-x)\ln(\frac{1-x}{1-y})$.
	This implies that the shortest interval has length at most $\frac{6}{M}$ and it also must overlap with the interval $[\phi-\frac{3}{M},\phi+\frac{3}{M}]$ (modulo $2\pi$), so in particular $|\phi-\overline{\varphi}|_{2\pi}\leq \frac{6}{M}$.
\end{proof}

\subsection{Unbiased estimators of \texorpdfstring{$e^{i\phi}$}{exp(i phi)}}\label{subsec:unbiasedComplex}

Our unbiased phase estimators can be used for constructing unbiased estimators of the complex number $e^{i\phi}$. The unbiased nature of our phase estimates $\varphi$ means that $\mathbb{E}[e^{i\varphi}]=\lambda e^{i\phi}$, for some $\lambda\in [-1,1]$. \pnote{Note that aprioir it could be computationally difficult to compute the value of $\lambda$ to sufficient precision.} Moreover, due to the shift invariance of $f(\varphi)$, i.e., the fact that $f(\varphi)$ depends only on $|\phi-\varphi|_{2\pi}$, we have that $\lambda$ only depends on $M$ (and $m$ in the boosted case). Therefore $e^{i\varphi}/\lambda$ is an unbiased estimator of $e^{i\phi}$. One can also compute the value
\begin{align*}
\lambda(M) = \int_{-\pi}^\pi \cos(x)\cdot\frac{M}{2\pi}\frac{\sinc^2(Mx/2)}{\sinc^2(x/2)}dx
&=\int_{-\pi}^\pi(1-2\sin^2(x/2))\cdot\frac{1}{2M\pi}\frac{\sin^2(Mx/2)}{\sin^2(x/2)}dx\\
&=1-\frac{1}{M\pi}\int_{-\pi}^\pi\sin^2(Mx/2)dx\\
&=1-\frac{1}{M}.
\end{align*}
To compute the variance it is useful to note that for $M\geq 2$
\begin{align*}
\int_{-\pi}^\pi \sin^2(x)\cdot\frac{M}{2\pi}\frac{\sinc^2(Mx/2)}{\sinc^2(x/2)}dx
&=\int_{-\pi}^\pi4\sin^2(x/2)\cos^2(x/2)\cdot\frac{1}{2M\pi}\frac{\sin^2(Mx/2)}{\sin^2(x/2)}dx\\
&= \frac{2}{M\pi}\int_{-\pi}^\pi\cos^2(x/2)\sin^2(Mx/2)dx\\
&=\frac{1}{M}.
\end{align*}
Thus we have that $(1+\frac{1}{M-1})e^{i\varphi}$ is an unbiased estimator of $e^{i\phi}$ with variance $\Theta(\frac{1}{M})$.
	
For the boosted version it is harder to compute the value of $\lambda(M,m)$, but due to the concentration proven in \autoref{thm:boostedUnbiased} we know that its value must be $\lambda(M,m)=1-\bigO{\frac{1}{M^2}+\exp(-\frac{m}{4})}$. So $e^{i\overline{\varphi}}/\lambda(M,m)$ is an unbiased estimator of $e^{i\phi}$ with variance $\bigO{\frac{1}{M^2}+\exp(-\frac{m}{4})}$.

\subsection{Unbiased probability estimation}

Unbiased estimators for $\lambda e^{i\phi}$ give us the possibility of modifying the standard amplitude estimation algorithm~\cite{brassard2002AmpAndEst}, so that we estimate the squared amplitude without bias. To that end, suppose that we have access to a state-preparation unitary that prepares the state $\sqrt{1-p}\ket{\psi_0}\ket{0} + \sqrt{p}\ket{\psi_1}\ket{1}$, and our goal is to estimate $p$. Recall that the amplitude estimation algorithm runs phase estimation on the Grover iterate, which is a $2$-dimensional rotation with eigenvalues $e^{\pm2i\theta}$, where $\theta = \arcsin\sqrt{p}$. Consequently, it obtains an estimate for $2\theta$ or $-2\theta$, both with probability $1/2$.

If we now substitute our unbiased phase estimation algorithm into this procedure, we obtain an unbiased estimate of either $\lambda e^{2i\theta}$ or $\lambda e^{-2i\theta}$, both with probability $1/2$. In either case, taking the real part of our estimate now estimates $\lambda \cos(2\theta) = \lambda\cos(2\arcsin\sqrt{p}) = \lambda(1 - 2p)$ without bias. Thus, if we denote the outcome of the unbiased phase estimation algorithm by $Z = e^{2\pi i\varphi}$, then
\[
    \mathbb{E}\left[\frac12 - \frac{\mathrm{Re}[Z]}{2\lambda}\right] = \frac12 - \frac{\lambda(1-2p)}{2\lambda} = p.
\]

We can crudely bound the variance of this estimator to be $\mathrm{Var}[\mathrm{Re}[Z]]/(2\lambda)^2 \leq \mathrm{Var}[Z]/(2\lambda)^2 = \mathcal{O}(1/M^2 + \exp(-m/4))$. However, if $p$ is very close to $0$ or $1$, then the probability distribution of $Z$ will be very tightly concentrated around $1$ or $-1$ on the unit circle in the complex plane, where the unit circle runs perpendicular to the real axis. Thus, in this regime taking the real part of $Z$ intuitively squashes samples much closer together, and as a result the variance of $\mathrm{Re}[Z]$ can be much smaller than that of $Z$.

Quantitatively, if $\theta \leq 3/(2M)$, then the endpoints of the interval of concentration for $\mathrm{Re}[Z]$, as derived in \autoref{thm:boostedUnbiased}, are $\cos(2\theta + 3/M)$ and $1$, which means that the length of the interval is $1 - \cos(2\theta + 3/M) = \mathcal{O}((\theta + 1/M)^2) = \mathcal{O}(1/M^2)$. Thus, the variance in this case is $\mathcal{O}(1/M^4 + \exp(-m/4))$. A similar analysis holds true in the case where $\theta \geq \pi/2 - 3/(2M)$.

On the other hand, if $3/(2M) < \theta < \pi/2 - 3/(2M)$, then the endpoints of the interval of concentration for $\mathrm{Re}[Z]$ are $\cos(2\theta + 3/M)$ and $\cos(2\theta - 3/M)$. This implies that the length of the concentration interval is $\cos(2\theta - 3/M) - \cos(2\theta + 3/M) = 2\sin(3/M)\sin(2\theta) \leq 12\sin(\theta)\cos(\theta)/M = \mathcal{O}(\sqrt{p(1-p)}/M)$. Thus, the variance becomes $\mathcal{O}(p(1-p)/M^2 + \exp(-m/4))$.

Putting both cases together, we obtain that the variance for unbiased probability estimation using boosted unbiased phase estimation is $\mathcal{O}(p(1-p)/M^2 + 1/M^4 + \exp(-m/4))$. Note that from this variance bound, one can essentially recover the precision that is obtained by Brassard et al.~\cite{brassard2002AmpAndEst}, up to constant factors. Thus, this way of estimating the probability gives one an unbiased estimator, while maintaining the precision attained by traditional techniques.

Notice that this estimation procedure for the probability might output estimates that are outside the interval $[0,1]$. Indeed, for example if $p=1$ any non-trivial unbiased estimator must eventually produce estimates that are larger than $1$.

Finally, note that computing the value of $\lambda(M,m)$ to high precision might be computationally difficult for larger $m$ values. However, we can compute the value approximately by Monte Carlo simulation. One can generate $2m+1$ samples corresponding to $\phi=0$ using the density function \autoref{eq:probDens}, and run \autoref{alg:boostedUnbiased} finally outputting $\Re(e^{i\overline{\varphi}})$. Clearly $\mathbb{E}[\Re(e^{i\overline{\varphi}})]=\mathbb{E}[e^{i\overline{\varphi}}]=\lambda(M,m)$. 
On the other hand as we have shown above the variance of $\Re(e^{i\overline{\varphi}})$ is $\mathcal{O}(1/M^4 + \exp(-m/4))$. Intuitively speaking this means that the computation of $\lambda(M,m)$ should not prohibit applications of this result, especially considering that $\lambda(M,m)$ can be pre-computed ahead of time.

\subsection{Implementation with finite precision}

\begin{theorem}[Suppressed-Biased Phase Estimation]\label{thm:supbiased}
	If we run \autoref{alg:unbiased} with some finite $n$ in Line~\ref{line:randomXi}, then it returns a random phase $\varphi\in[0,2\pi)$ of the form $\frac{2\pi}{M}\left(j+\frac{\ell}{2^n}\right)$ for some $j\in\{0,1,\ldots, M-1\}$ and $\ell\in\{0,1,2,3,\ldots,2^n-1\}$ such that the distribution of the outcome is $\frac{2\pi}{2^n}$-close in total variation distance to the distribution
	\begin{equation} \label{eq:probDensFinite}
	\Pr\left[\varphi=\frac{2\pi}{M}\left(j+\frac{\ell}{2^n}\right)\right]=\int_{\frac{2\pi}{M}\left(j+\frac{\ell}{2^n}\right)}^{\frac{2\pi}{M}\left(j+\frac{\ell+1}{2^n}\right)}f(x) dx,
	\end{equation}
	where $f(x)$ is defined in \autoref{eq:probDens}.
\end{theorem}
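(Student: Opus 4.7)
The plan is to prove the total-variation bound via a coupling between the finite-precision and ideal ($n = \infty$) versions of \autoref{alg:unbiased}, which differ only in how $u$ is sampled in Line~\ref{line:randomXi}. Sample $u' \in [0,1)$ uniformly at random (as in the ideal version) and set $u := \lfloor 2^n u' \rfloor/2^n$; then $u$ has the desired discrete uniform distribution, and with $\xi := 2\pi u/M$ and $\xi' := 2\pi u'/M$ we have $0 \leq \xi' - \xi < 2\pi/(M 2^n)$. Run both versions against a single copy of $\ket{\psi(\phi)}$: the finite-precision version applies the phase gate with shift $\xi$ and measures to produce $j$, while the ideal version uses $\xi'$ and produces $j'$. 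For each realization of $(u, u')$ couple $j$ and $j'$ optimally, so that the conditional disagreement probability equals the total-variation distance between their outcome distributions.

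Under this construction, the finite-precision output has distribution $P_1(j,\ell) = \frac{1}{2^n} g(\varphi_{j,\ell})$ by \autoref{eq:shiftProbDens}, where I abbreviate $g(\varphi) := \sinc^2(\tfrac{M}{2}|\phi-\varphi|_{2\pi})/\sinc^2(\tfrac{1}{2}|\phi-\varphi|_{2\pi})$ and $\varphi_{j,\ell} := \tfrac{2\pi}{M}(j + \ell/2^n)$. The paired ideal-then-bin output has probability at $(j',\ell)$ equal to $\int_{\ell/2^n}^{(\ell+1)/2^n} g\bigl(\tfrac{2\pi j'}{M} + \tfrac{2\pi u'}{M}\bigr)\, du'$; changing variables to $x := \tfrac{2\pi j'}{M} + \tfrac{2\pi u'}{M}$ and using $f = \tfrac{M}{2\pi} g$ from \autoref{eq:probDens}, this equals $\int_{\varphi_{j',\ell}}^{\varphi_{j',\ell} + 2\pi/(M 2^n)} f(x)\, dx$, which is exactly the target $P_2(j',\ell)$ from \autoref{eq:probDensFinite}. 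By the coupling inequality, the total-variation distance satisfies $\mathrm{TV}(P_1, P_2) \leq \Pr[j \neq j']$.

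Finally, I bound the quantum coupling failure using the fact that the TV between measurement outcomes is at most the norm difference of the pre-measurement states:
\begin{align*}
\bigl\| \ket{\psi(\phi-\xi)} - \ket{\psi(\phi-\xi')}\bigr\|^2
&= \frac{2}{M}\sum_{k=0}^{M-1}\bigl(1 - \cos((\xi'-\xi)k)\bigr) \\
&\leq \frac{(\xi'-\xi)^2}{M}\sum_{k=0}^{M-1} k^2 \leq \frac{M^2 (\xi'-\xi)^2}{3} \leq \frac{4\pi^2}{3 \cdot 4^n},
\end{align*}
using $1 - \cos t \leq t^2/2$ and $|\xi' - \xi| \leq 2\pi/(M 2^n)$. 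Taking square roots and averaging over $(u, u')$ yields $\Pr[j \neq j'] \leq 2\pi/(\sqrt{3}\cdot 2^n) < 2\pi/2^n$, proving the theorem.

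The main technical step will be the change-of-variables identifying the ideal-then-bin marginal with the binned integral $P_2$, requiring careful tracking of the relationship between $u$, $u'$, and $\xi'$. Formally composing the conditional optimal quantum couplings into one joint law is a standard measure-theoretic detail; alternatively, one can bypass it entirely by bounding $\sum_{j,\ell} |P_1(j,\ell) - P_2(j,\ell)|$ directly as a left-endpoint Riemann-sum error for $\int_0^{2\pi} f$, controlled by the (easily estimated) total variation of $f$.
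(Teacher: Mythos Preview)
Your proof is correct and takes essentially the same approach as the paper: both identify the target distribution \eqref{eq:probDensFinite} as the output of the ideal ($n=\infty$) algorithm with $u$ truncated to $n$ digits only in Line~\ref{line:correctPhase}, and then bound the effect of also truncating $u$ in Line~\ref{line:applyPhase} via the resulting state perturbation. The only cosmetic differences are that the paper bounds the operator-norm change of the multi-phase gate directly (obtaining $2\pi/2^n$) rather than computing the state difference explicitly, and does not phrase the argument as a coupling.
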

\begin{proof}
	First let us consider running \autoref{alg:unbiased} with $n=\infty$, except in Line~\ref{line:correctPhase} truncating $u$ to have only $n$ binary digits. Then it follows from \autoref{thm:unbiased} that the output distribution is given by \autoref{eq:probDensFinite}.
	
	Next consider further modifying this algorithm by truncating $u$ to $n$ binary digits in Line~\ref{line:applyPhase} as well, bringing us to the finite-$n$ version of \autoref{alg:unbiased}. This introduces a change in the applied unitary with magnitude (in terms of operator norm) no greater than $\frac{2\pi}{2^{n}}$. Thus a perturbation is induced on the state in the algorithm of magnitude (in the $\ell_2$-norm) no greater than $\frac{2\pi}{2^{n}}$, ultimately changing the measurement statistics by no more than $\frac{2\pi}{2^{n}}$ in total variation distance, cf.~\cite[Exercise 4.3]{wolf2019QCLectureNotes}.
\end{proof}

Overall we can conclude that the output distribution of the discretized \autoref{alg:unbiased} gets a perturbation that is at most $\bigO{2^{-n}}$ in the Wasserstein-$1$ distance compared to the infinite-precision version of \autoref{alg:unbiased}.

Similarly, we believe that using the discretized version of \autoref{alg:unbiased} within \autoref{alg:boostedUnbiased} would exponentially suppresses the bias, and the conclusion about boosting should not be affected. However, discretizing the proof appears to be difficult, because it heavily relies on a symmetry argument -- ultimately breaking due to the non-symmetric discretization errors. The main difficulty is that even small perturbations to the $\varphi_j$ values can induce some large jumps in the shortest interval in some edge cases. For this reason below we introduce a slightly more complicated version of \autoref{alg:boostedUnbiased} that avoids such large jumps, and so we can formally analyze its discretized version.

\begin{algorithm}[!ht]
	\caption{Boosted Suppressed-Bias Phase Estimation}\label{alg:boostedSuppressed}
	\begin{algorithmic}[1] 
		\STATEx	{\bf Input:} $(2m+1)$ copies of $\ket{\psi(\phi)}=\frac{1}{\sqrt{M}}\sum_{k=0}^{M-1}e^{i \phi k}\ket{k}$ (for unknown $\phi$)
		\STATE {\bf For each} $j\in [2m+1]$
		\STATE  ~~~ Run \autoref{alg:unbiased} with a fixed $n$ on the $j$-th copy of $\ket{\psi(\phi)}$ and record the estimate $\varphi_j$\label{line:estiamtes}
		\STATE {\bf For each} $j\in [2m+1]$\label{line:computeDsitances}
		\STATE  ~~~ Compute $d_j\!\!$, the $m$-th smallest distance in the (multi)set $\big\{|\varphi_j\!-\!\varphi_k|_{2\pi}\colon k\in [2m\!+\!1]\setminus\{j\}\big\}\!$
		\STATE  ~~~ Define $w_j:=\exp(-\frac{mM}{4}d_j)$
		\STATE	{\bf Return} $\overline{\varphi}:=\varphi_j$ with probability $\frac{w_j}{W}$ where $W=\sum_{j \in [2m+1]}w_j$ \label{line:sampleOutput}					
	\end{algorithmic}
\end{algorithm}	
\begin{theorem}\label{thm:boostedSupressed}
	If $n\geq \log_2(\pi m)$, then \autoref{alg:boostedSuppressed} returns a $\overline{\varphi}\in[0,2\pi]$ such that
	\begin{equation}\label{eq:boosted}
	\Pr[|\phi-\overline{\varphi}|_{2\pi}\leq \frac{10}{M}(1+2^{-n})]\geq 1-2 e^{-\frac{m}{4}}- 4\pi(m+1)2^{-n},
	\end{equation}
	and
	\begin{equation}\label{eq:suppressed}
	|\mathbb{E}[\phi-\overline{\varphi}]|\leq 32\pi (m+1)2^{-n},
	\end{equation}
	where we interpret $\phi-\overline{\varphi}$ as a number in $[-\pi,\pi)$.
\end{theorem}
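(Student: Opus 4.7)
The plan is to first establish both statements in the idealized infinite-precision setting (where $u$ is genuinely continuous in Line~\ref{line:randomXi}), and then pass to finite precision via a coupling that is lossy by $\bigO{(m+1)2^{-n}}$ in total variation. In the infinite-precision case the concentration will hold with radius exactly $10/M$ and failure probability $\leq 2e^{-m/4}$, while the bias will be exactly zero.

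For the concentration analysis, I call an estimate $\varphi_j$ \emph{good} if $|\phi-\varphi_j|_{2\pi}\leq 3/M$. By \autoref{thm:unbiased} and \eqref{eq:concreteBounds} each $\varphi_j$ is good with probability at least $3/4$, so the same Chernoff-Hoeffding argument as in the proof of \autoref{thm:boostedUnbiased} yields that at least $m+1$ of the $2m+1$ estimates are good except with probability $e^{-m/4}$. Conditioning on this event, for any good $j$ the $m$ other good estimates lie within $6/M$ of $\varphi_j$ by the triangle inequality, so $d_j\leq 6/M$ and $w_j\geq e^{-3m/2}$, giving $W\geq (m+1)e^{-3m/2}$. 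Conversely, for any $j$ with $|\phi-\varphi_j|_{2\pi}>x$, a pigeonhole count (at most $m-1$ other bad indices exist) forces at least one of the $m$ closest estimates to $\varphi_j$ to be good, so $d_j\geq x-3/M$ by the reverse triangle inequality, and hence $w_j\leq e^{-m(Mx-3)/4}$. Taking $x=10/M$, the total bad weight is at most $m\,e^{-7m/4}$, and the conditional probability of outputting a bad estimate is at most $e^{-m/4}$, giving the infinite-precision concentration bound $1-2e^{-m/4}$.

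For unbiasedness of the infinite-precision algorithm, I would use a direct symmetry argument: the single-sample density \eqref{eq:probDens} depends only on $|\phi-\varphi|_{2\pi}$, so the joint reflection $(\varphi_1,\ldots,\varphi_{2m+1})\mapsto(2\phi-\varphi_1,\ldots,2\phi-\varphi_{2m+1})$ is measure-preserving. It fixes every pairwise distance $|\varphi_j-\varphi_k|_{2\pi}$ and hence every $d_j$ and $w_j$, so performing the weighted sampling of Line~\ref{line:sampleOutput} with a shared external random source sends the output $\overline{\varphi}$ to $2\phi-\overline{\varphi}$. Measure preservation then forces $\mathbb{E}[\overline{\varphi}]=\phi$.

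The transfer to finite precision uses a single coupling for both bounds. By \autoref{thm:supbiased} each marginal differs from its infinite-precision counterpart by at most $2\pi/2^n$ in TV, so by a hybrid argument the joint TV distance of $(\varphi_1,\ldots,\varphi_{2m+1})$ is at most $(2m+1)\cdot 2\pi/2^n\leq 4\pi(m+1)/2^n$. On the coupled event, each $\varphi_j$ is additionally displaced from its infinite-precision counterpart by at most $2\pi/(M 2^n)$ (from truncating $\xi$), so the concentration window widens by the factor $(1+2^{-n})$, yielding \eqref{eq:boosted}. For \eqref{eq:suppressed}, the data-processing inequality bounds the TV distance of the output distributions by the joint TV distance, and since $\overline{\varphi}-\phi$ takes values in an interval of length $2\pi$, its expectation shifts by at most $2\pi\cdot 4\pi(m+1)/2^n=8\pi^2(m+1)/2^n\leq 32\pi(m+1)/2^n$. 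The main obstacle is ensuring the coupling is strong enough: \autoref{thm:supbiased} only provides marginal TV control, but since $d_j$ is $1$-Lipschitz in each coordinate and $w_j=\exp(-\tfrac{mM}{4}d_j)$ is smooth in $d_j$, the small per-sample displacements aggregate cleanly under the union bound without disrupting the median-distance weighted sampling.
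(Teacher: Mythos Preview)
Your infinite-precision analysis (concentration via the weight estimates and unbiasedness via the reflection symmetry) is correct and matches the paper. The gap is in the transfer to finite precision, where you conflate two distinct reference distributions. \autoref{thm:supbiased} bounds the TV distance between the finite-$n$ sample law (call it $A$) and the \emph{discretized} density of \eqref{eq:probDensFinite} (call it $B$), not the continuous density $f$ (call it $C$). Your symmetry argument proves only that $C$ is unbiased; $B$ is not, because truncating $\xi$ shifts each $\varphi_j$ by up to $2\pi/(M2^n)$, which perturbs the $d_j$'s and hence the sampling weights $w_j=\exp(-\tfrac{mM}{4}d_j)$ by a \emph{multiplicative} factor $\exp(\pm m\pi/(2\cdot2^n))$. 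So your data-processing step only carries you from $A$ to $B$, while the $B$-output law can still differ from the $C$-output law by as much as $\exp(m\pi\,2^{-n})-1$ in total variation. Bounding this multiplicative blow-up is exactly where the hypothesis $n\geq\log_2(\pi m)$ enters, yet your argument never invokes it; the closing ``aggregate cleanly under the union bound'' is not a substitute for this estimate. The same issue undermines the concentration transfer: widening the window by $(1+2^{-n})$ accounts for the displacement of the returned value, not for the change in which index gets sampled.

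The paper avoids the conflation by a three-step hybrid rather than a single coupling: (i) truncate only the \emph{returned} value, so the sampling weights are still computed from the continuous $\varphi_j$'s and only the output is displaced --- this widens the radius to $\tfrac{10}{M}(1+2^{-n})$ and requires a separate wraparound analysis at $\phi\pm\pi$ for the bias; (ii) truncate the $\varphi_j$'s used in the distance computation, where the multiplicative-weight bound together with $n\geq\log_2(\pi m)$ gives a TV perturbation $\leq 3m\,2^{-n}$ on the sampling law; (iii) truncate in the phase gate, where \autoref{thm:supbiased} yields the $(2m+1)\cdot 2\pi\,2^{-n}$ term you already identified. Decoupling the effects in this order lets the displacement feed only into the concentration radius while the TV terms feed only into the failure probability and bias.
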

\begin{proof}
	We start by analyzing \autoref{alg:boostedSuppressed} in the infinite precision ($n=\infty$) case. Due to symmetry, the estimate $\overline{\varphi}$ is unbiased; interpreting $\phi-\overline{\varphi}$ as a number in $[-\pi,\pi)$ (rather than, say $(-\pi,\pi]$) does not introduce bias either, since the probability density of the estimate $\overline{\varphi}$ is continuous in the case $n=\infty$, and so $\Pr[\overline{\varphi}=\phi\pm \pi]=0$.
	Thus, \autoref{eq:suppressed} trivially holds.

	As in the proof of \autoref{thm:unbiased}, the probability that there are at least $m+1$ estimates $\varphi_j$ such that $|\phi-\varphi_j|_{2\pi}\leq \frac{3}{M}$ is at least $1-\exp(-\frac{m}{4})$ due to the Chernoff bound.
	In that case for each $\varphi_j \in [\phi-\frac{3}{M},\phi+\frac{3}{M}]$ (modulo $2\pi$) we have that $d_j\leq 6/M$ so $w_j\geq\exp(-\frac{3m}{2})$ and thus $W\geq (m+1)\exp(-\frac{3m}{2})$. Further, since there are at most $m$ remaining $\varphi_i\notin [\phi-\frac{3}{M},\phi+\frac{3}{M}]$, for any such $i$ the $m$ shortest distances $|\varphi_i-\varphi_k|_{2\pi}$ must include some $\varphi_k\in [\phi-\frac{3}{M},\phi+\frac{3}{M}]$. Therefore, if $|\varphi_i-\phi|_{2\pi}\geq 10/M$ then $d_i^{(m)}\geq 7/M$ and $w_i\leq \exp(-\frac{7m}{4})$. We can conclude the proof of \autoref{eq:boosted} for the case $n=\infty$ by using the union bound, observing that 
	\begin{align}\label{eq:infBoosted}
	\Pr[|\phi-\overline{\varphi}|_{2\pi}\geq 10/M]
	\leq \exp(-m/4) + m\exp(-7m/4)/W
	\leq 2\exp(-m/4).
	\end{align}
	
	To analyze the version of \autoref{alg:boostedSuppressed} with finite $n$, we proceed similarly to the proof of \autoref{thm:supbiased}: we start from $n=\infty$ and in several steps consider using the truncated $u$ in various lines of \autoref{alg:boostedSuppressed} until we get to the final algorithm that uses only $n$ binary digits of $u$ throughout. For this let us introduce the notation $\overline{\varphi}^{(\ell)}$, meaning the output of the modified infinite precision \autoref{alg:boostedSuppressed} where $u$ is truncated to $n$ digits just before executing Line $\ell$.
	
	First, let us consider truncating $u$ in Line \ref{line:sampleOutput}, thus returning $\overline{\varphi}^{(6)}$, i.e., $\varphi_j$ but represented with only finite precision. This introduces a change in the output $|\overline{\varphi}^{(6)}-\overline{\varphi}|_{2\pi}$ that is at most $\frac{2\pi}{M}2^{-n}<\frac{10}{M}2^{-n}$, thereby \autoref{eq:infBoosted} implies
	\begin{equation}\label{eq:infBoosted2}
	\Pr[|\phi-\overline{\varphi}^{(6)}|_{2\pi}\geq \frac{10}{M}(1+2^{-n})]
	\leq \Pr[|\phi-\overline{\varphi}|_{2\pi}\geq \frac{10}{M}]	
	\leq 2\exp(-m/4).
	\end{equation}
	Also, the value of $\phi-\overline{\varphi}^{(6)}\in [-\pi,\pi)$ changes by at most $\frac{2\pi}{M}2^{-n}$ compared to $\phi-\overline{\varphi}\in [-\pi,\pi)$, unless $\overline{\varphi}\in[\phi+\pi,\phi+\pi+\frac{2\pi}{M}2^{-n}]$ modulo $2\pi$ (when the change might be as large as $2\pi$). The probability of the latter happening can be bounded by
	\begin{align*}
	\Pr[\overline{\varphi}\in[\phi+\pi,\phi+\pi+\frac{2\pi}{M}2^{-n}]]
	&\leq \Pr[\exists j \colon \varphi_j\in[\phi+\pi,\phi+\pi+\frac{2\pi}{M}2^{-n}]]\\
	&\leq (2m+1)\Pr[\varphi\in[\phi+\pi,\phi+\pi+\frac{2\pi}{M}2^{-n}]]\\
	&\leq (2m+1)\frac{2\pi}{M}2^{-n}\sup_\varphi f(\varphi)\\
	&\leq (2m+1)2^{-n}, \tag{by \autoref{eq:preProbDens}}
	\end{align*}
	thus we get
	\begin{equation}\label{eq:SuppBias2}
	|\mathbb{E}[\phi-\overline{\varphi}^{(6)}]|\leq \frac{2\pi}{M}2^{-n} + 2\pi  \Pr[\overline{\varphi}\in[\phi+\pi,\phi+\pi+\frac{2\pi}{M}2^{-n}]]
	\leq 2\pi (2m+2)2^{-n}.
	\end{equation}	
	
    Second, let us consider truncating $u$ in Line \ref{line:computeDsitances}. This will change the distances $d_j$ by at most $\frac{2\pi}{M}2^{-n}$ since every pair of distances $|\varphi_j\!-\!\varphi_k|_{2\pi}$ is changed by no more than $\frac{2\pi}{M}2^{-n}$. Thus every weight $w_j $ is perturbed by a multiplicative factor $\exp(\pm\frac{mM}{4}\frac{2\pi}{M}2^{-n})=\exp(\pm\frac{m\pi}{2}2^{-n})$ and consequently $W$ gets a multiplicative perturbation up to $\exp(\pm\frac{m\pi}{2}2^{-n})$. This induces a perturbation of the probabilities $p_j:=w_j/W$ by a multiplicative factor up to $\exp(\pm m\pi2^{-n})$, resulting in an up to $(\exp(m\pi2^{-n})-1)/2$-perturbation in total variation distance to the sampling distribution in Line \ref{line:sampleOutput}. If $n\geq \log_2(\pi m)$ this can be upper bounded by $3m2^{-n}$. Sine the only change between the outputs $\overline{\varphi}^{(6)}$ and $\overline{\varphi}^{(3)}$ is due to the change in the sampling distribution in Line \ref{line:sampleOutput} we get by \autoref{eq:infBoosted2} 
	\begin{equation}\label{eq:infBoosted3}
	\Pr[|\phi-\overline{\varphi}^{(3)}|_{2\pi}\geq \frac{10}{M}(1+2^{-n})] 
	\leq \Pr[|\phi-\overline{\varphi}^{(6)}|_{2\pi}\geq \frac{10}{M}(1+2^{-n})] + 3m2^{-n}
	\leq 2\exp(-m/4) + 3m2^{-n},
	\end{equation}
	and by \autoref{eq:SuppBias2}
	\begin{equation}\label{eq:SuppBias3}
	|\mathbb{E}[\phi-\overline{\varphi}^{(3)}]|\leq |\mathbb{E}[\phi-\overline{\varphi}^{(6)}]| + 2\pi\cdot 3m2^{-n}
	\leq 2\pi (2m+5)2^{-n}.
	\end{equation}	
	
	Finally, let us truncate $u$ right at the beginning. This only affects the unitary applied in Line  \ref{line:applyPhase} of \autoref{alg:unbiased}. As we showed in the proof of \autoref{thm:supbiased} this introduces a change in the applied unitary with magnitude (in terms of operator norm) no greater than $2\pi2^{-n}$, ultimately changing the measurement statistics by no more than $2\pi2^{-n}$ in total variation distance. Since \autoref{alg:unbiased} is repeated $(2m+1)$ times, the overall perturbation in total variation distance can be bounded by $(2m+1)2\pi2^{-n}$.
    Thus we get by \autoref{eq:infBoosted3} 
	\begin{equation*}\label{eq:infBoosted4}
	\Pr[|\phi-\overline{\varphi}^{(1)}|_{2\pi}\geq \frac{10}{M}(1+2^{-n})] 
	\leq \Pr[|\phi-\overline{\varphi}^{(3)}|_{2\pi}\geq \frac{10}{M}(1+2^{-n})] + (2m+1)\frac{2\pi}{2^{n}}
	\leq 2e^{-\frac{m}4} + \frac{4\pi(m+1)}{2^{n}},
	\end{equation*}
	and by \autoref{eq:SuppBias3}
	\begin{equation*}
	|\mathbb{E}[\phi-\overline{\varphi}^{(1)}]|\leq |\mathbb{E}[\phi-\overline{\varphi}^{(3)}]| + 2\pi\cdot (2m+1)2\pi2^{-n}
	\leq 32\pi (m+1)2^{-n}.\qedhere
	\end{equation*}		
\end{proof}

We remark that one can similarly show that our suppressed-bias phase estimators give rise to suppressed-bias estimators of $e^{i \phi}$ by using the constructions of~\autoref{subsec:unbiasedComplex}, and consequently also allows for the implementation of suppressed-bias probability estimators.

\subsection{Application to gradient estimation}

	Now we show that our suppressed-bias phase estimation techniques lead to suppressed-bias gradient estimation further improving over Jordan's gradient estimation algorithm, and its variants \cite[Lemma 5.1]{gilyen2017OptQOptAlgGrad}. Note that if we would use perfect input states and the exact unbiased phase-estimation then we would get a symmetric error distribution. Unfortunately, this no longer holds due to the approximation errors in the input state and the finite bit precision.
	
	To simplify the exposition, let us introduce the notation $B_\infty(\pmb{g},\eps)$ to denote the (closed) $\eps$-ball around $\pmb{g}$ containing all points $\pmb{x}$ such that $\nrm{\pmb{x}-\pmb{g}}_\infty\leq \eps$.

	\begin{theorem}[Suppressed-bias gradient estimation]\label{thm:unbiasedGradient}
		Let $\eps,\delta\in\!(0,\frac{1}{6}]$ and $\pmb{g}\!\in\!\R^d$ such that $\nrm{\pmb{g}}_\infty\!\leq\frac13$. 
		Let $b:=\lceil \log_2(\frac{2}{\eps}) \rceil$ and $B := 2^{b}$. If $\beta:=\nrm{\ket{\psi} - \frac{1}{\sqrt{B^d}}\sum_{\pmb{x} \in G_b^d}e^{2\pi i B \ipc{\pmb{g}}{\pmb{x}}} \ket{\pmb{x}}}\leq \frac{\delta}{24\lceil\ln(6d/\delta)\rceil+3}$ and we are given $8\lceil\ln(6d/\delta)\rceil+1$ copies of $\ket{\psi}$, then we can compute a vector $\pmb{k}\in [-\frac{1}{2},\frac{1}{2}]^d$ such that 
				\begin{equation}\label{eq:boostedGradient}
		\Pr\left[\nrm{\pmb{k}-\pmb{g}}_\infty> \eps\right]\leq \delta
		\end{equation}
		and 	
		\begin{equation}\label{eq:suppressedGradient}
		\nrm{\mathbb{E}[\pmb{k}]-\pmb{g}}_\infty\leq \delta.
		\end{equation}
		Furthermore, the gate complexity of the procedure is $\bigO{d\log(\frac{d}{\delta})\log(\frac{1}{\eps})\log\left(\frac{d}{\delta}\log(\frac{1}{\eps})\right)}$, and the circuit depth is $\bigO{\log(\frac{1}{\eps})\log\left(\frac{d}{\delta}\log(\frac{1}{\eps})\right)}$. Finally, there is a random variable $\pmb{k'}\in B_\infty(\pmb{g},\eps)$ with independent coordinates that is $\delta$-close in total variation distance to $\pmb{k}$ and satisfies $\mathbb{E}[\pmb{k'}]\in B_\infty(\pmb{g},\delta)$.		
	\end{theorem}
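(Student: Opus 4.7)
The plan is to mirror the proof of \autoref{cor:blockToGrad}, replacing the plain inverse QFT applied to each coordinate by the boosted suppressed-bias estimator \autoref{alg:boostedSuppressed}. The key observation is that the ``ideal'' state factorizes
\[
\tfrac{1}{\sqrt{B^d}}\textstyle\sum_{\pmb{x}\in G_b^d}e^{2\pi iB\ipc{\pmb{g}}{\pmb{x}}}\ket{\pmb{x}}=\bigotimes_{i=1}^d\tfrac{1}{\sqrt{B}}\sum_{x\in G_b}e^{2\pi iBg_ix}\ket{x},
\]
and each factor is the phase-estimation eigenstate with phase proportional to $g_i$, up to the single-qubit change of basis relating $\mathrm{QFT}_{G_b}^{-1}$ to the ordinary inverse QFT (cf.~\cite[Claim 5.1]{gilyen2017OptQOptAlgGrad}). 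Thus the $8\lceil\ln(6d/\delta)\rceil+1=2m+1$ copies provided by the hypothesis, with $m:=4\lceil\ln(6d/\delta)\rceil$, furnish exactly the input budget for running \autoref{alg:boostedSuppressed} independently on each coordinate.

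Fix the internal bit-precision $n:=\lceil\log_2(100\pi dm/\delta)\rceil$, which satisfies the $n\geq\log_2(\pi m)$ hypothesis of \autoref{thm:boostedSupressed} and makes $4\pi(m+1)2^{-n}\leq\delta/(3d)$ and $32\pi(m+1)2^{-n}\leq\delta/3$. Applied to the \emph{ideal} state, \autoref{thm:boostedSupressed} gives per coordinate $\Pr[|\tilde g_i-g_i|>\eps]\leq 2e^{-m/4}+4\pi(m+1)2^{-n}\leq 2\delta/(3d)$ (using $B\geq 2/\eps$ to convert the $\tfrac{10}{M}(1+2^{-n})$ phase-accuracy bound into the required $\eps$-accuracy on $g_i$), and $|\mathbb{E}[\tilde g_i]-g_i|\leq\delta/3$. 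A union bound over the $d$ coordinates upgrades the first to $\Pr[\nrm{\tilde{\pmb{g}}-\pmb{g}}_\infty>\eps]\leq 2\delta/3$ and the second extends coordinatewise. To transfer everything from the ideal state to the actual $\ket{\psi}$, I use that measurement statistics on a $\beta$-close state change by at most $\beta$ in total variation (cf.~\cite[Exercise 4.3]{wolf2019QCLectureNotes}), so independent measurements on $2m+1$ copies shift the joint distribution by at most $(2m+1)\beta$; the hypothesis $\beta\leq\delta/(3(2m+1))$ thus adds at most $\delta/3$ to both the failure probability and, since each coordinate estimate is bounded by $1/2$, to the bias, yielding \eqref{eq:boostedGradient} and \eqref{eq:suppressedGradient}.

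The random variable $\pmb{k}'$ is defined as the ideal-state output clamped coordinatewise to $[g_i-\eps,g_i+\eps]$: its coordinates are independent by the product structure, $\pmb{k}'\in B_\infty(\pmb{g},\eps)$ by construction, the TV distance $\leq\delta$ from $\pmb{k}$ is the sum of the ideal-case bad-event mass and the $(2m+1)\beta$ state-perturbation, and the expectation shift induced by clamping a mass of measure $\leq 2\delta/3$ on a bounded variable keeps $\mathbb{E}[\pmb{k}']$ in $B_\infty(\pmb{g},\delta)$. The gate and depth counts come from $2m+1$ runs of a circuit consisting of $d$ parallel approximate inverse QFTs of size $b$ at precision $n$, implemented via the $O(b\log b)$-gate, $O(\log b)$-depth construction of \cite{cleve2000FastParallelQFT}, with the random phase shifts, classical median-like aggregation of Line~\ref{line:sampleOutput}, and other bookkeeping contributing only $\mathrm{polylog}$ overhead. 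The main obstacle is the tight balancing of the two linear-in-$m$ error sources --- the finite-precision bias $O(m2^{-n})$ from \autoref{thm:boostedSupressed} and the state-perturbation TV shift $O(m\beta)$ --- against the exponential-in-$m$ concentration; this is precisely what forces the $1/(24\lceil\ln(6d/\delta)\rceil+3)$ scaling of $\beta$ in the hypothesis. A secondary subtlety is that the clamping step used to define $\pmb{k}'$ introduces no extra uncontrolled bias, which holds because concentration is already strong enough that both the clamped mass and its $\ell_\infty$ displacement are $O(\delta)$.
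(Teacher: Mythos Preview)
Your proposal is correct and follows essentially the same route as the paper: exploit the tensor-product structure of the ideal state to run \autoref{alg:boostedSuppressed} independently on each coordinate with $m=4\lceil\ln(6d/\delta)\rceil$, invoke \autoref{thm:boostedSupressed} for per-coordinate concentration and bias, union-bound over coordinates, then transfer both guarantees to the actual state $\ket{\psi}$ via the $(2m+1)\beta\leq\delta/3$ total-variation perturbation, and finally define $\pmb{k}'$ as the coordinatewise clamp of the ideal-state output. The paper additionally budgets an explicit $\delta/6$ for the approximate-QFT implementation error and is slightly more careful about the $2\pi$ scaling between the phase $\phi=2\pi g_i$ and the output $\tilde k_i$ (so the bias bound from \eqref{eq:suppressed} becomes $16(m+1)2^{-n}$ rather than $32\pi(m+1)2^{-n}$), but these are bookkeeping refinements of the same argument.
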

\begin{proof}
	We apply \autoref{thm:boostedSupressed} with $M=B$, $m=4\lceil\ln(6d/\delta)\rceil$ and $n=\left\lceil\log_2 \left(\frac{96 d (m+1)}{\delta}\right)\right\rceil$. This choice of parameters imply that 
	\begin{align}
		2 d e^{\frac{-m}{4}}& \leq \frac{\delta}{3},\label{eq:unbGradPar1}\\
		16 d (m+1) 2^{-n}	& \leq \frac{\delta}{6},\label{eq:unbGradPar2}\\
		(2m+1) \beta		& \leq \frac{\delta}{3}.\label{eq:unbGradPar3}
	\end{align}
	First let us assume that we have access to the ``ideal'' state 
	$$\ket{\phi}^{\otimes 2m+1}:=\left(\frac{1}{\sqrt{B^d}}\sum_{\pmb{x} \in G_b^d}e^{2\pi i B \ipc{\pmb{g}}{\pmb{x}}} \ket{\pmb{x}}\right)^{\!\!\!\otimes 2m+1}.$$ Since this is a product state when we apply \autoref{alg:boostedSuppressed} on each of the $d$ coordinates independently, then the guarantees of \autoref{thm:boostedSupressed} apply to each coordinate independently. Let us denote the output of \autoref{alg:boostedSuppressed} on the ``ideal' state $\ket{\phi}$ by $\pmb{\tilde{k}}$ after dividing by $2\pi$ and subtracting $\frac12$ from each coordinate. Then we have that 
	\begin{align*}
		\frac{\delta}{2d}
		&\geq 	2 e^{-\frac{m}{4}}+ 4\pi(m+1)2^{-n}\tag{by \autoref{eq:unbGradPar1}-\eqref{eq:unbGradPar2}}\\		
		&\geq \Pr[|2\pi \tilde{k}_i-2\pi g_i|_{2\pi}> \frac{10}{B}(1+2^{-n})]\tag{by \autoref{eq:boosted}}\\
		&\geq \Pr[|2\pi \tilde{k}_i-2\pi g_i|_{2\pi}> \frac{4\pi}{B}] \tag{$n\geq 4\Rightarrow 10(1+2^{-n})\leq 4\pi$}\\	
		&\geq \Pr[|2\pi \tilde{k}_i-2\pi g_i|_{2\pi}> 2\pi\eps] \tag{$\frac{2}{B}\leq\eps$}\\
		&= \Pr[|\tilde{k}_i-g_i|> \eps] \tag{$\eps\leq \frac16, |g_i|\leq \frac13$}.		
	\end{align*}
	By the union bound we get that 
	\begin{equation}\label{eq:idealClose}
	    \Pr\left[\|\pmb{\tilde{k}}-\pmb{g}\|_\infty> \eps\right] \leq \frac{\delta}{2}.
	\end{equation} 
	
	The closeness condition $\nrm{\ket{\psi} - \ket{\phi}}\leq \beta$ guarantees that $\nrm{\ket{\psi}^{\otimes 2m+1} - \ket{\phi}^{\otimes 2m+1}}\leq \frac{\delta}{3}$, and if we use a $\frac{\delta}{(144\lceil\ln(6d/\delta)\rceil+6)d}$-precise implementation\footnote{Probably it is enough if the implementation is about $d$-times less precise analogously to the proof of \autoref{cor:blockToGrad}.} of the quantum Fourier transform in \autoref{thm:boostedSupressed}, then the distance from the ``ideal'' state before the measurement in \autoref{thm:boostedSupressed} can be bounded by $\frac{\delta}{3}+\frac{\delta}{6}=\frac{\delta}{2}$, and so the total variation distance of the ``ideal'' $\pmb{\tilde{k}}$ and the actual $\pmb{k}$ estimators can be bounded by $\frac{\delta}{2}$ (see for example \cite[Exercise 4.3]{wolf2019QCLectureNotes}). In particular the the probability of any event changes by at most $\frac{\delta}{2}$ and therefore \autoref{eq:idealClose} implies \autoref{eq:boostedGradient}. 
	
	A similar argument shows the suppression of bias for all $i\in d$
	\begin{align*}
	\frac{\delta}{6d}
	&\overset{\eqref{eq:unbGradPar2}}{\geq} 16(m+1)2^{-n}
	\overset{\eqref{eq:suppressed}}{\geq} |\mathbb{E}[\tilde{k}_i-g_i]|.
	\end{align*}
	This then implies that $\|\mathbb{E}[\pmb{\tilde{k}}]-\pmb{g}\|_\infty\leq \frac{\delta}{6d}\leq \frac{\delta}{6}$. On the other hand the total variation distance of the distributions of $\pmb{\tilde{k}}$ and $\pmb{k}$ is at most $\frac{\delta}{2}$, therefore $\|\mathbb{E}[\pmb{\tilde{k}}]\|_\infty-\|\mathbb{E}[\pmb{k}]\|_\infty\leq \frac{\delta}{3}$ holds\footnote{One can see this by a coupling argument: if two random variables $X,Y$ satisfy that $\vertiii{X},\vertiii{Y}\leq L$ for some norm $\vertiii{\cdot}$ and their total variation distance is at most $K$, then $\vertiii{\mathbb{E}[X]}-\vertiii{\mathbb{E}[Y]}\leq\vertiii{\mathbb{E}[X-Y]}\leq \mathbb{E}[\vertiii{X-Y}]\leq 2LK$.\label{foot:tv}} implying via the triangle inequality that $\|\mathbb{E}[\pmb{k}]-\pmb{g}\|_\infty\leq \frac{2\delta}{3}$ proving \autoref{eq:suppressedGradient}. 
	
	Let us define $\pmb{k'}$ as the truncation of $\pmb{\tilde{k}}$ into $B_\infty(\pmb{g},\eps)$. Due to \autoref{eq:idealClose} the the total variation distance between $\pmb{k'}$ and $\pmb{\tilde{k}}$ is at most $\frac{\delta}{2}$, so the total variation distance between $\pmb{k'}$ and $\pmb{g}$ is at most $\delta$. Also it is easy to see\textsuperscript{\normalfont\ref{foot:tv}} that $\|\mathbb{E}[\pmb{\tilde{k}}]-\mathbb{E}[\pmb{k'}]\|_\infty\leq \frac{\delta}{2}$ and so by the triangle inequality we get $\|\mathbb{E}[\pmb{k'}]-\pmb{g}\|_\infty\leq \frac{5\delta}{6}$.
	
	Since we run \autoref{alg:boostedSuppressed} independently for each of the $d$ coordinates, the complexity is $d$ times the complexity of executing \autoref{alg:boostedSuppressed}. The gate complexity (and depth) of \autoref{alg:boostedSuppressed} is dominated by the quantum Fourier transform, which we implement approximately\cite{barenco1996ApproxQFourierTrafo} with precision about $\frac{\delta}{d\log(d/\delta)}$. Each such implementation cost $\bigO{b\log\left(\frac{bd\log(d/\delta)}{\delta}\right)}\!=\bigO{\log(\frac{1}{\eps})\log\left(\frac{d}{\delta}\log(\frac{1}{\eps})\right)}$ gates, cf. \cite[Exercise 4.4]{wolf2019QCLectureNotes}. This gives the gate complexity $\bigO{d\log(\frac{d}{\delta})\log(\frac{1}{\eps})\log\left(\frac{d}{\delta}\log(\frac{1}{\eps})\right)}$. The classical computation required by \autoref{alg:boostedSuppressed} can be performed in time $\bigO{\mathrm{poly}(n,m,b)}$ which is $\bigO{\polylog({\frac{d}{\delta\eps}})}$, since $n,m=\bigO{\log(\frac{d}{\delta})}$, and $b=\bigO{\log(\frac{1}{\eps})}$.
\end{proof}

Finally, we prove a corollary analogous to \autoref{cor:blockToGrad} which will be the main technical tool in the following  \autoref{sec:multipleExp}-\ref{s:mixed-state-tomo}.

\begin{corollary}[Almost linear block-Hamiltonian to gradient]\label{cor:unbiasedBlockToGrad}
	Let $\eps,\delta\in\!(0,\frac{1}{6}]$, $b:=\lceil \log_2(\frac{16}{\eps}) \rceil$, $B=2^b$ and  $\beta:=\frac{\delta}{96\lceil\ln(6d/\delta)\rceil+12}$. Suppose that we have an $a$-block-encoding $W$ of a diagonal matrix with diagonal entries $f(\pmb{x})\in \R$ for $\pmb{x} \in G_b^d$ satisfying $|f(\pmb{x}) - \ipc{\pmb{x}}{\pmb{g}}|\leq \frac{\eps\beta}{4\pi}$ for at least a $(1- \beta^2)$ fraction of the points in $G_b^d$. Then with $\bigO{\left(\frac{1}{\eps}+\log(\frac{\log(d)}{\delta})\right)\log(\frac{d}{\delta})}$ (controlled) uses of $W$ (and its inverse) and 
	$\bigO{\left(d\log(\frac{1}{\eps})\log\left(\frac{d}{\delta}\log(\frac{1}{\eps})\right)+a\left(\frac{1}{\eps}+\log(\frac{\log(d)}{\delta})\right)\right) \log(\frac{d}{\delta})}$ other gates with circuit depth $\bigO{\log(\frac{1}{\eps})\log\left(\frac{d}{\delta}\log(\frac{1}{\eps})\right)+\log(a)\left(\frac{1}{\eps}+\log(\frac{\log(d)}{\delta})\right)}$
	we can compute a vector $\pmb{k}\in [-4,4]^d$ such that 
	\begin{equation}\label{eq:boostedGradientCor}
	\Pr\left[\nrm{\pmb{k}-\pmb{g}}_\infty> \eps\right]\leq \delta,
	\end{equation}
	and 	
	\begin{equation}\label{eq:suppressedGradientCor}
	\nrm{\mathbb{E}[\pmb{k}]-\pmb{g}}_\infty\leq 8\delta.
	\end{equation}
	Moreover, there is a random variable $\pmb{k'}\in B_\infty(\pmb{g},\eps)$ with independent coordinates that is $\delta$-close in total variation distance to $\pmb{k}$ and satisfies $\mathbb{E}[\pmb{k'}]\in B_\infty(\pmb{g},8\delta)$.			    
\end{corollary}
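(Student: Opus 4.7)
The strategy parallels the proof of \autoref{cor:blockToGrad}, but using \autoref{thm:unbiasedGradient} (based on the boosted suppressed-bias phase estimator of \autoref{thm:boostedSupressed}) in place of ordinary phase estimation with a median trick. As in \autoref{cor:blockToGrad}, I would first argue that $\nrm{\pmb{g}}_\infty\leq\frac{8}{3}$: the hypothesis $|f(\pmb{x})-\ipc{\pmb{x}}{\pmb{g}}|\leq\frac{\eps\beta}{4\pi}$ on a $(1-\beta^2)$-fraction of points, together with $|f(\pmb{x})|\leq 1$ (since $W$ is a block-encoding), rules out large $\nrm{\pmb{g}}_\infty$ by the same symmetry/volume argument as in \autoref{cor:blockToGrad}. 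Consequently, $\pmb{g}/8$ satisfies $\nrm{\pmb{g}/8}_\infty\leq 1/3$, so we will apply \autoref{thm:unbiasedGradient} to the rescaled gradient $\pmb{g}/8$ with target precision $\eps/8$; the choice $B=2^b$ with $b=\lceil\log_2(16/\eps)\rceil$ ensures $B\geq 2/(\eps/8)$ as required by the theorem.

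Next, the plan is to prepare an approximation of the ``ideal'' state
\[
\ket{\phi}=\frac{1}{\sqrt{B^d}}\sum_{\pmb{x}\in G_b^d}e^{2\pi i B\ipc{\pmb{g}/8}{\pmb{x}}}\ket{\pmb{x}}
\]
by first Hadamard-ing onto the uniform superposition over $G_b^d$ and then applying an approximate phase oracle obtained from block-Hamiltonian simulation (\autoref{lem:blockHamSim}) of $W$ for time $t=2\pi B/8=\Theta(1/\eps)$. Splitting $G_b^d$ into the ``good'' subset $S$ where $|f(\pmb{x})-\ipc{\pmb{x}}{\pmb{g}}|\leq\frac{\eps\beta}{4\pi}$ and its complement, and using $|e^{iz}-e^{iy}|\leq|z-y|$ on $S$ and the trivial bound $2$ off $S$, the pointwise phase error is at most $B\eps\beta/16\leq 2\beta$ on $S$ since $B\leq 32/\eps$; summing gives $\nrm{\ket{\psi'}-\ket{\phi}}^2\leq 8\beta^2$ exactly as in \autoref{cor:blockToGrad}. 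Choosing the Hamiltonian-simulation precision small enough so that the implemented $\widetilde{P}$ induces a state $\ket{\tilde\psi}$ with $\nrm{\ket{\tilde\psi}-\ket{\psi'}}\leq(3-2\sqrt{2})\beta$, we obtain $\nrm{\ket{\tilde\psi}-\ket{\phi}}\leq 3\beta$. Since $\beta$ in this corollary is exactly $1/4$ of the tolerance $\delta/(24\lceil\ln(6d/\delta)\rceil+3)$ required by \autoref{thm:unbiasedGradient}, the precondition of that theorem is satisfied with room to spare.

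With $\ket{\tilde\psi}$ in hand, I would invoke \autoref{thm:unbiasedGradient} directly: prepare $8\lceil\ln(6d/\delta)\rceil+1$ copies of $\ket{\tilde\psi}$ (each copy costs $\bigO{t+\log(1/\beta)}=\bigO{1/\eps+\log(\log(d/\delta)/\delta)}$ (controlled) uses of $W$ and $W^\dagger$ plus $\bigO{a}$ times as many additional gates with depth $\bigO{\log(a)(1/\eps+\log(\log(d/\delta)/\delta))}$), and run the boosted suppressed-bias phase-estimation procedure independently on each of the $d$ coordinates. The output estimate $\pmb{k}/8$ for $\pmb{g}/8$ scales back to $\pmb{k}\in[-4,4]^d$ satisfying \autoref{eq:boostedGradientCor} with failure probability $\leq\delta$ and bias $\nrm{\mathbb{E}[\pmb{k}]-\pmb{g}}_\infty\leq 8\delta$ (the factor $8$ coming precisely from undoing the rescaling). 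The coupled random variable $\pmb{k}'\in B_\infty(\pmb{g},\eps)$ with independent coordinates, $\delta$-close in total variation and with $\mathbb{E}[\pmb{k}']\in B_\infty(\pmb{g},8\delta)$, is exported directly from the analogous statement at the end of \autoref{thm:unbiasedGradient}, again scaled by $8$.

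Summing the contributions yields the advertised complexities: the query count is $\bigO{(1/\eps+\log(\log(d)/\delta))\log(d/\delta)}$ (queries per copy times the number of copies), the extra gate count adds the $d$ parallel approximate $\text{QFT}_{G_b}^{-1}$ circuits of cost $\bigO{\log(1/\eps)\log(d\log(1/\eps)/\delta)}$ each from \cite{barenco1996ApproxQFourierTrafo} plus $\bigO{a}$ overhead on the block-Hamiltonian-simulation portion, and the depth follows from running these in parallel across coordinates while keeping the Hamiltonian-simulation depth $\bigO{\log(a)(1/\eps+\log(\log(d)/\delta))}$. The main subtlety I expect is bookkeeping the three different precision scales---the $(1-\beta^2)$-fractional closeness of $f$ to $\ipc{\pmb{x}}{\pmb{g}}$, the Hamiltonian-simulation error $\sim\beta$, and the approximate-QFT error $\sim\delta/(d\log(d/\delta))$---so that their combined effect on the measurement distribution is within the $\delta/2$ slack used in \autoref{thm:unbiasedGradient} to preserve both the tail bound and the bias bound.
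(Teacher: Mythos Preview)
Your proposal is correct and follows essentially the same route as the paper: rescale to $\pmb{g}/8$ after bounding $\nrm{\pmb{g}}_\infty\leq\frac{8}{3}$ exactly as in \autoref{cor:blockToGrad}, prepare the approximate state via block-Hamiltonian simulation with the same $\nrm{\ket{\psi'}-\ket{\phi}}^2\leq 8\beta^2$ calculation, and then feed the copies into \autoref{thm:unbiasedGradient}. The only cosmetic difference is that the paper budgets the Hamiltonian-simulation error as $(4-2\sqrt2)\beta$ to land at $\nrm{\ket{\tilde\psi}-\ket{\phi}}\leq 4\beta$ (hitting the tolerance exactly), whereas you use $(3-2\sqrt2)\beta$ to get $3\beta$; both satisfy the hypothesis of \autoref{thm:unbiasedGradient} since $4\beta$ equals the required bound $\frac{\delta}{24\lceil\ln(6d/\delta)\rceil+3}$.
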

\begin{proof}
	We proceed similarly to the proof of \autoref{cor:blockToGrad}.
	The main idea is to apply \autoref{thm:unbiasedGradient} with preparing the (approximate) initial state via block-Hamiltonian simulation \autoref{lem:blockHamSim}. In the proof of \autoref{cor:blockToGrad} it is shown that the assumptions in the statement imply $\nrm{\pmb{g}}_\infty\leq \frac{8}{3}$.
	Therefore, we will apply \autoref{thm:unbiasedGradient} to the gradient $\frac{\pmb{g}}{8}$ with precision $\frac{\eps}{8}$.
	The first step is to prepare a uniform superposition over the grid $G_b^d$ by applying a Hadamard gate to all $d\cdot b$ qubits, that are initially in the $\ket{0}$ state. 
		
	First let us assume that we have access to a perfect phase oracle $P:=\sum_{\pmb{x}\in G_b^d} \ketbra{\pmb{x}}{\pmb{x}} e^{2\pi i \frac{B}{8} f(\pmb{x})}$ so that we can prepare the sate $\ket{\psi}=\frac{1}{\sqrt{B^d}}\sum_{\pmb{x}\in G_b^d}\ket{\pmb{x}}e^{2\pi i \frac{B}{8} f(\pmb{x})}$. We bound the difference from the ideal state $\ket{\phi}$ analogously to the proof of \cite[Lemma 5.1]{gilyen2017OptQOptAlgGrad}. Let $S\subseteq G_b^d$ be the set of points for which $|f(\pmb{x}) - \ipc{\pmb{x}}{\pmb{g}}|\leq \frac{\eps\beta}{4\pi}$ holds, then
	\begin{align*}
	\nrm{\ket{\psi}\!-\!\ket{\phi}}^2\!
	&=\!\frac{1}{B^d}\sum_{\pmb{x}\in G_b^d}\left|e^{2\pi i \frac{B}{8} f(\pmb{x})}-e^{2\pi i \frac{B}{8} \ipc{\pmb{x}}{\pmb{g}}}\right|^2\\
	&=\!\frac{1}{B^d}\!\sum_{\pmb{x}\in S}\left|e^{2\pi i \frac{B}{8} f(\pmb{x})}-e^{2\pi i \frac{B}{8} \ipc{\pmb{x}}{\pmb{g}}}\right|^2
	\!\!+\!\frac{1}{B^d}\!\!\sum_{\pmb{x}\in G_b^d\setminus S}\!\left|e^{2\pi i \frac{B}{8} f(\pmb{x})}-e^{2\pi i \frac{B}{8} \ipc{\pmb{x}}{\pmb{g}}}\right|^2\\
	&\leq \!\frac{1}{B^d}\!\sum_{\pmb{x}\in S}\left|2\pi \frac{B}{8} f(\pmb{x})-2\pi \frac{B}{8} \ipc{\pmb{x}}{\pmb{g}}\right|^2
	\!\!+\!\frac{1}{B^d}\!\!\sum_{\pmb{x}\in G_b^d\setminus S}\!4 \tag{$|e^{iz}-e^{iy}|\leq |z-y|$}\\
	&=\!\frac{1}{B^d}\!\sum_{\pmb{x}\in S}(2\pi \frac{B}{8} )^2\left|f(\pmb{x})-\ipc{\pmb{x}}{\pmb{g}}\right|^2
	\!\!+4\frac{|G_b^d\setminus S|}{B^d} \\
	&\leq\!\frac{1}{B^d}\!\sum_{\pmb{x}\in S}4\beta^2+4\beta^2 \tag{by the assumptions of the corollary}\\	
	&\leq 8\beta^2.
	\end{align*}
	Finally, we can implement a $(4-2\sqrt{2})\beta$-approximation $\widetilde{P}$ of the perfect phase oracle $P$ by applying block-Hamiltonian simulation \autoref{lem:blockHamSim} to $W$.\footnote{An $\eps$-precise $(a+2)$-block-encoding of $e^{itH}$ is $\bigO{\sqrt{\eps}}$-close in operator norm to a perfect Hamiltonian simulation unitary $U$ of the form $\ketbra{0}{0}^{\otimes a+2}\otimes e^{\ci t A} + V$, where $V(\ket{0}^{a+2}\otimes I)=0$.} This lets us preparing an approximate state $\ket{\tilde{\psi}}$ such that $\nrm{\ket{\tilde{\psi}}-\ket{\psi}}\leq (4-2\sqrt{2})\beta$ and so $\nrm{\ket{\tilde{\psi}}\!-\!\ket{\phi}}\leq4\beta$, enabling us to apply \autoref{thm:unbiasedGradient}.
	
	The query complexity follows from the fact that we prepare the state $\ket{\tilde{\psi}}$ a total of $\bigO{\log(\frac{d}{\delta})}$ times, each time making $\bigO{\frac{1}{\eps}+\log(\frac{1}{\beta})}=\bigO{\frac{1}{\eps}+\log(\frac{\log(d)}{\delta})}$ (controlled) queries to $W$. The additional gate complexity of preparing $\ket{\tilde{\psi}}$ is $\bigO{a}$ times the query complexity plus $d\cdot b$ for the Hadamard gates. We get the overall gate complexity by adding the gate cost in \autoref{thm:unbiasedGradient}.	
\end{proof}

\subsection{Application to low depth probability estimation}

In this section, we sketch a quick application of our results. If we have access to the operation
\[U : \ket{0} \mapsto \sqrt{1-p}\ket{\psi_0}\ket{0} + \sqrt{p}\ket{\psi_1}\ket{1},\]
then we can estimate $p$ with a depth-$t$ algorithm, by running our version of unbiased probability estimation with $1/M = \bigO{\log(t)/t}$ and $m = \bigO{\log(t)}$, to obtain an estimate with variance $\bigOt{p(1-p)/t^2 + 1/t^4}$. Moreover, we can run this procedure $K$ times in parallel, and take the average of the outcomes. This gives an estimator of $p$ that is still unbiased, and whose variance is
\[\bigOt{\frac{p(1-p)}{Kt^2} + \frac{1}{Kt^4}}.\]
Thus, we obtain a way to estimate $p$, when we are constrained to using depth-$t$ quantum algorithms, and we can obtain precision $\eps$ with high probability if we set $K = \Theta(\max\{p(1-p)/(\eps t)^2, 1/(\eps^2 t^4)\})$.

Now, let $\beta \in (0,1]$, and suppose the depth that we can use is $t = \Theta(1/\eps^{1-\beta})$. Then, in order to achieve precision $\eps$, we can set $K = \bigOt{\max\{p(1-p)/\eps^{2\beta}, 1/\eps^{4\beta-2}\}}$, from which we find that the total number of calls to $U$ becomes $Kt = \bigOt{\max\{p(1-p)/\eps^{1+\beta}, 1/\eps^{3\beta-1}\}}$. If we use the crude upper bound $p(1-p) \leq 1$, this reduces to $Kt = \bigOt{1/\eps^{1+\beta}}$, and hence we recover the result obtained in \cite{giurgica2020low}. Moreover, we get a slight improvement if we know some small upper bound $q \geq p$ a priori.

\section{Second intermezzo: estimating multiple expectation values with a state-preparation oracle}\label{sec:multipleExp}
To perform efficient mixed-state tomography we rely on an algorithm to estimate $m$ expectations with few copies of the state. The algorithm is based on constructing the phase oracle for a function whose gradient is the vector of the desired expectation values, similarly to what we did for pure states. The task here is however more complicated, because to ensure that the function is properly normalized we need to bound the weighted combination of expectation values, where the weights are taken from a hypergrid in $[-\frac12, \frac12]^m$ (as these are the points used by the gradient algorithm of \cite{gilyen2017OptQOptAlgGrad}). This requires some results on random matrices, which we use by translating properties that hold for uniformly random matrices into properties that hold for all but a constant fraction of the points in the hypergrid.

Formally, we assume access to a unitary that prepares a purification of a state $\rho\in \C^{d\times d}$, and its inverse. Our goal is to estimate the expectation values $\tr{E_j\rho}$ of measurement operators $E_j$ for $j=1,\dots,m$ up to corresponding errors $\eps_j$, with as few applications of the state-preparation unitary for $\rho$ as possible. We do not apply any gates to the purifying register, other than the state-preparation oracle and its inverse; thus, we do not need to impose any restrictions on how the purification of $\rho$ is constructed. We assume that $\nrm{E_j} \le 1$ for all $j=1,\dots,m$, which is w.l.o.g.\ as we can always scale $E_j$ and $\eps_j$ down by $\nrm{E_j}$ to achieve this. Finally, we assume that we are given access to each $E_j$ via a block-encoding. Note that other models are possible as our algorithm only requires the ability to compute $\Tr(E_j \rho)$:  the block-encoding framework is general and simplifies our exposition. For example, if we have an implementation of a POVM for $E_j, I-E_j$, then we can convert this to a block-encoding for $E_j$ via \autoref{lem:povmtoblock}.

This task was recently studied in \cite{huggins2021QAlgMultipleExpectationValues}, yielding an algorithm that solves the problem using $\bigO{\sqrt{m}/\eps}$ applications of the state-preparation unitary and its inverse, in the case where all $\eps_j$ are equal to $\eps$. Their algorithm is however not optimal in our setting: we want to give an algorithm with a sample complexity that depends on $\sqrt{\nrm{\sum_{j=1}^m E_j^2/\eps_j^2}}$, because this leads to a saving of a factor $d$ when applied to mixed-state tomography compared to the algorithm of \cite{huggins2021QAlgMultipleExpectationValues}. The details are discussed subsequently in this section. For a discussion of other existing approaches to solve the problem of computing expectation values, we refer to the excellent introduction in \cite{huggins2021QAlgMultipleExpectationValues}.

\subsection{Bounds on uniform matrix series}

As mentioned above, we first need to prove some properties of uniform random matrices. We do this by adapting a result on Gaussian / Rademacher random matrices given below. Here and in the remainder, for a random matrix $Y$ we define $v(Y) := \nrm{\mathbb{E}[Y^2] - (\mathbb{E}[Y])^2}$ as its variance.

\begin{theorem}[Gaussian \& Rademacher matrix series inequality {\cite[Theorem 4.6.1]{tropp2016IntroMatrixConcInequalities}}]
  \label{thm:gaussianMatrixBound}
  Let $E_1,\dots,E_m$ be $d\times d$ Hermitian matrices. Let $\lambda_1,\dots,\lambda_m$ be drawn from iid standard normal distributions and let $Y=\sum_{j=1}^m\lambda_j E_j$. Then $\mathbb{E}[Y] = 0, v(Y) = \nrm{\sum_j E^2_j}$ and
  \[
  \mathbb{P}[\nrm{Y}\geq t]\leq 2d e^{-\frac{t^2}{2v(Y)}}.
  \]
  The same bounds hold when $\{\lambda_j\}$ is iid uniformly random over $\{-1,1\}$.
\end{theorem}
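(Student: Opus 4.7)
The plan is to use the matrix Laplace transform (Chernoff) method. First I would dispatch the easy preliminary claims: $\mathbb{E}[Y]=0$ by linearity (since $\mathbb{E}[\lambda_j]=0$ in both the Gaussian and Rademacher cases), and $\mathbb{E}[Y^2] = \sum_{j,k}\mathbb{E}[\lambda_j\lambda_k]E_jE_k = \sum_j E_j^2$ by independence with $\mathbb{E}[\lambda_j^2]=1$, so that $v(Y) = \nrm{\sum_j E_j^2}$. Next I would reduce to a one-sided tail on $\lambda_{\max}(Y)$: since $\nrm{Y} = \max\{\lambda_{\max}(Y),-\lambda_{\min}(Y)\}$ and $-Y$ has the same distribution as $Y$ by the sign-symmetry of the $\lambda_j$, a union bound yields the prefactor $2d$.

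The core of the argument is the matrix Chernoff inequality. Using $e^{\theta\lambda_{\max}(Y)} \leq \Tr\,e^{\theta Y}$ combined with Markov's inequality gives, for every $\theta > 0$,
\[\P[\lambda_{\max}(Y) \geq t] \leq e^{-\theta t}\cdot\mathbb{E}\,\Tr\,e^{\theta Y}.\]
To bound the matrix moment generating function I would first handle each summand: a direct eigendecomposition of $E_j$ yields $\mathbb{E}_{\lambda_j}e^{\theta\lambda_j E_j} = e^{\theta^2 E_j^2/2}$ in the Gaussian case, and $\mathbb{E}_{\lambda_j}e^{\theta\lambda_j E_j} = \cosh(\theta E_j) \preceq e^{\theta^2 E_j^2/2}$ in the Rademacher case via the scalar inequality $\cosh(x)\leq e^{x^2/2}$.

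Then I would invoke the subadditivity of matrix cumulant generating functions (Lieb's inequality), which for independent Hermitian $X_j$ states
\[\mathbb{E}\,\Tr\exp\Bigl(\sum_j X_j\Bigr) \leq \Tr\exp\Bigl(\sum_j \log\mathbb{E}\,e^{X_j}\Bigr).\]
Applied to $X_j = \theta\lambda_j E_j$ this gives $\mathbb{E}\,\Tr\,e^{\theta Y} \leq \Tr\exp\bigl(\tfrac{\theta^2}{2}\sum_j E_j^2\bigr) \leq d\,e^{\theta^2 v(Y)/2}$, using the crude estimate $\Tr\,e^M \leq d\,e^{\lambda_{\max}(M)}$. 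Substituting back and optimizing over $\theta$ with $\theta = t/v(Y)$ yields the single-sided tail $d\,e^{-t^2/(2v(Y))}$, and the two-sided norm statement follows by the symmetry argument above.

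The main obstacle is precisely this subadditivity step: unlike the scalar setting, matrix moment generating functions do not factor as products, because $e^{A+B}\neq e^A e^B$ when $[A,B]\neq 0$, so the textbook independent-sum argument is unavailable. The resolution is Lieb's concavity theorem for the map $A\mapsto \Tr\exp(H + \log A)$ on positive-definite $A$ with fixed Hermitian $H$; this is the nontrivial ingredient and the heart of the matrix-concentration framework underlying the cited bound.
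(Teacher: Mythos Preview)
Your proposal is correct and follows exactly the standard matrix Laplace transform method. Note that the paper does not actually prove this theorem---it is cited directly from Tropp---but the paper's own proof of the closely related \autoref{lem:uniformMatrixBound} proceeds identically: it invokes the Master Bound (\autoref{prop:masterBound}, which packages Lieb's inequality in the form $\mathbb{P}[\lambda_{\max}\geq t]\leq\inf_\theta e^{-\theta t}\Tr\exp(\sum_j\log\mathbb{E}e^{\theta E_j})$), establishes the single-term bound $\mathbb{E}e^{\lambda_j E_j}\preceq e^{E_j^2/2}$, pushes this through via operator monotonicity of $\log$ and trace-exponential, and optimizes at $\theta=t/v(Y)$.
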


In order to adapt the above result to our setting we invoke a technical statement from~\cite{tropp2016IntroMatrixConcInequalities}:

\begin{proposition}[Master Bound for a Sum of Independent Random Matrices, {\cite[Theorem 3.6.1]{tropp2016IntroMatrixConcInequalities}}]\label{prop:masterBound}
Consider a finite sequence $\{E_j\}$ of independent, random, Hermitian matrices of the same size. Then for all $t\in \mathbb{R}$ we have
$$\mathbb{P}[\lambda_{max}(\sum_j E_j)\geq t] \leq \inf_{\theta>0} e^{-\theta t} \tr{\exp\trm{ \sum_j \log \mathbb{E}[e^{\theta E_j}]}}.$$
\end{proposition}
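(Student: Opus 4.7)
The plan is to follow the standard matrix Laplace transform (matrix Chernoff) method of Ahlswede--Winter, refined by Tropp via Lieb's concavity theorem. The overall structure mirrors the classical scalar Chernoff argument, but each step requires a matrix-analytic substitute. Write $Y := \sum_j E_j$. The goal is to bound $\mathbb{P}[\lambda_{\max}(Y)\geq t]$ by producing a moment-generating-function-like object that factorizes over the independent summands.

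First I would apply the exponential Markov inequality in the spectral sense: for any $\theta>0$,
\[
\mathbb{P}[\lambda_{\max}(Y)\geq t] \;=\; \mathbb{P}[e^{\theta\lambda_{\max}(Y)}\geq e^{\theta t}] \;\leq\; e^{-\theta t}\,\mathbb{E}\bigl[e^{\theta\lambda_{\max}(Y)}\bigr].
\]
Because $e^{\theta Y}$ is positive semidefinite, its largest eigenvalue is bounded by its trace, so by the spectral mapping theorem $e^{\theta\lambda_{\max}(Y)} = \lambda_{\max}(e^{\theta Y})\leq \tr(e^{\theta Y})$. This replaces the scalar moment generating function with the object $\mathbb{E}[\tr(e^{\theta Y})]$, which is the matrix analogue amenable to factorization.

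The crucial step is showing the subadditivity-type bound
\[
\mathbb{E}\bigl[\tr(e^{\theta\sum_j E_j})\bigr] \;\leq\; \tr\!\Bigl(\exp\Bigl(\sum_j \log\mathbb{E}[e^{\theta E_j}]\Bigr)\Bigr).
\]
I would prove this by induction on the number of summands, peeling off one matrix at a time. Fix all $E_j$ for $j\neq k$ and set $H := \theta\sum_{j\neq k} E_j$. Conditioning on these, the inner expectation is $\mathbb{E}_{E_k}[\tr(\exp(H + \log(e^{\theta E_k})))]$. Here I invoke Lieb's concavity theorem: the map $A\mapsto \tr(\exp(H + \log A))$ is concave on the positive-definite cone. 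Jensen's inequality then pushes the expectation inside the logarithm, yielding $\tr(\exp(H + \log\mathbb{E}[e^{\theta E_k}]))$. Iterating this for every index $j$, using independence at each step, collapses the product into the advertised sum inside the matrix exponential. Combining with the Markov step and taking infimum over $\theta>0$ gives the claimed bound.

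The main obstacle is Lieb's concavity theorem itself, which is the only genuinely deep ingredient; everything else is bookkeeping. Its proof is a nontrivial exercise in operator convexity (typically shown via the Kubo--Ando characterization of operator means, or via interpolation/variational arguments involving the relative entropy), so I would simply cite it rather than reprove it. A secondary subtlety is that the inductive step requires $\mathbb{E}[e^{\theta E_k}]$ to be strictly positive definite so that $\log\mathbb{E}[e^{\theta E_k}]$ is well defined; this is automatic since $e^{\theta E_k}\succ 0$ for Hermitian $E_k$, and taking expectations preserves strict positivity. Finally, the infimum over $\theta>0$ is left implicit in the statement because the optimal choice depends on the particular $E_j$'s; in applications (such as recovering \autoref{thm:gaussianMatrixBound} from this master bound), one specializes $\theta$ to the value that balances the linear term $-\theta t$ against the quadratic growth of $\log\mathbb{E}[e^{\theta E_j}]$ coming from the variance proxy $\sum_j E_j^2$.
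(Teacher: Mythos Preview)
The paper does not actually prove this proposition; it is quoted verbatim from Tropp's monograph and used as a black box in the proof of \autoref{lem:uniformMatrixBound}. Your sketch is correct and is precisely the argument Tropp gives in the cited reference: exponential Markov on $\lambda_{\max}$, the bound $\lambda_{\max}(e^{\theta Y})\leq \tr(e^{\theta Y})$, and then the inductive peeling of summands using Lieb's concavity theorem to push the expectation through $\tr\exp(H+\log(\cdot))$. So there is nothing to compare against here---your proposal simply fills in the proof that the paper chose to cite rather than reproduce.
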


With the help of this result we prove the following variant of \autoref{thm:gaussianMatrixBound} for bounded random variables:

\begin{theorem}[Bounded Matrix series inequality]
  \label{lem:uniformMatrixBound}
  Let $E_1,\dots,E_m$ be $d\times d$ Hermitian matrices. Let $\lambda_1,\dots,\lambda_m$ be independent symmetrically distributed random variables supported on $[-1,1]$ and let $Y=\sum_{j=1}^m\lambda_jE_j$. Then $\mathbb{E}[Y] = 0, v(Y) \leq  \nrm{\sum_j E^2_j}$ and
  \[
  \mathbb{P}[\nrm{Y}\geq t]\leq 2d e^{-\frac{t^2}{2v(Y)}}.
  \]
\end{theorem}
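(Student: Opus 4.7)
The plan is to dispatch the mean and variance claims by direct calculation, and then to establish the tail bound by exactly the same master-bound strategy as in the proof of the Rademacher case of \autoref{thm:gaussianMatrixBound}: combine \autoref{prop:masterBound} with a Hoeffding-type operator MGF bound derived from the symmetry of $\lambda_j$ and $|\lambda_j|\leq 1$.

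The mean is immediate: symmetry of each $\lambda_j$ gives $\mathbb{E}[\lambda_j]=0$, so $\mathbb{E}[Y]=0$. Expanding $\mathbb{E}[Y^2]$ and eliminating the off-diagonal cross-terms via independence,
\begin{equation*}
\mathbb{E}[Y^2]\;=\;\sum_j\mathbb{E}[\lambda_j^2]\,E_j^2\;\preceq\;\sum_j E_j^2,
\end{equation*}
where the PSD inequality uses $\mathbb{E}[\lambda_j^2]\leq 1$ (since $|\lambda_j|\leq 1$). Taking the operator norm gives $v(Y)=\nrm{\mathbb{E}[Y^2]}\leq\nrm{\sum_j E_j^2}$.

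The core of the tail bound is the operator MGF estimate $\mathbb{E}[e^{\theta\lambda_j E_j}]\preceq e^{\theta^2 E_j^2/2}$. Symmetry of $\lambda_j$ kills every odd term of the matrix Taylor expansion of $e^{\theta\lambda_j E_j}$ in expectation, so $\mathbb{E}[e^{\theta\lambda_j E_j}]=\mathbb{E}[\cosh(\theta\lambda_j E_j)]$; termwise, $(\theta\lambda_j E_j)^{2k}=\theta^{2k}\lambda_j^{2k}E_j^{2k}\preceq\theta^{2k}E_j^{2k}$ (using $\lambda_j^{2k}\leq 1$ and $E_j^{2k}\succeq 0$), hence $\mathbb{E}[\cosh(\theta\lambda_j E_j)]\preceq\cosh(\theta E_j)\preceq e^{\theta^2 E_j^2/2}$, the last inequality being the standard scalar bound $\cosh(x)\leq e^{x^2/2}$ lifted to PSD matrices via the even Taylor series. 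Applying the operator-monotone logarithm and summing gives $\sum_j\log\mathbb{E}[e^{\theta\lambda_j E_j}]\preceq(\theta^2/2)\sum_j E_j^2$, and feeding this into \autoref{prop:masterBound} together with $\tr{e^A}\leq d\,e^{\nrm{A}}$ for PSD $A$ yields
\begin{equation*}
\mathbb{P}[\lambda_{\max}(Y)\geq t]\;\leq\;\inf_{\theta>0}d\exp\!\left(\tfrac{\theta^2}{2}\nrm{\sum_j E_j^2}-\theta t\right)\;=\;d\exp\!\left(-\tfrac{t^2}{2\nrm{\sum_j E_j^2}}\right),
\end{equation*}
with the optimum at $\theta=t/\nrm{\sum_j E_j^2}$. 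Applying the same argument to $-Y$ and union-bounding over $\pm Y$ produces the stated $2d$ prefactor, with $\nrm{\sum_j E_j^2}$ playing the role of $v(Y)$ in the exponent (the two quantities coincide in the Rademacher case, where $\mathbb{E}[\lambda_j^2]=1$).

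The main obstacle is deriving the operator MGF bound, and in particular recognising that the symmetry of $\lambda_j$ is what makes it go through: it eliminates the odd terms of the Taylor expansion, so the remaining comparison $\lambda_j^{2k}\leq 1$ suffices term by term without having to worry about interference between non-commuting powers of $E_j$. Without symmetry one would need a centering step $\lambda_j=(\lambda_j-\mathbb{E}[\lambda_j])+\mathbb{E}[\lambda_j]$ and the full Bernstein-type machinery, losing the clean sub-Gaussian form of the tail bound.
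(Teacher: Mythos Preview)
Your proof is correct and essentially identical to the paper's: both kill the odd Taylor terms via symmetry, bound the even terms using $|\lambda_j|\leq 1$ and $(2q)!\geq 2^q q!$ (equivalently $\cosh(x)\leq e^{x^2/2}$), pass through the operator-monotone logarithm, feed into \autoref{prop:masterBound}, optimize over $\theta$, and use symmetry of $Y$ for the lower tail. You are in fact slightly more careful than the paper: you explicitly flag that the exponent you obtain carries $\nrm{\sum_j E_j^2}$ rather than $v(Y)$, whereas the paper silently replaces one by the other in its last displayed line.
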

\begin{proof}
    We follow the proof of \cite[Theorem 4.6.1]{tropp2016IntroMatrixConcInequalities} and modify it where necessary.
    First we note that
  \begin{align}
    \mathbb{E}[e^{\lambda_jE_j}] &= \mathbb{E}\left[\sum_{k=0}^{\infty} \frac{\lambda_j^k}{k!}E_j^k \right ] \nonumber\\
    &= \sum_{k=0}^{\infty} \frac{\mathbb{E}[\lambda_j^k]}{k!} E_j^k \tag{linearity of expectation}\\
    &= \sum_{q=0}^{\infty} \frac{\mathbb{E}[\lambda_j^{2q}]}{(2q)!} E_j^{2q} \tag{$\lambda_j$ is symmetrically distributed}\\
    &\preceq \sum_{q=0}^{\infty} \frac{1}{(2q)!} E_j^{2q} \tag{$\lambda_j$ is bounded}\\
    &\preceq \sum_{q=0}^{\infty} \frac{1}{q!} (E_j^2/2)^q \tag{$(2q)!\geq 2^q q!$}\\
    &= e^{E_j^2/2}. \label{eq:matExpExpBound}
  \end{align}
  Now we show that the above inequality implies that 
  \begin{equation}
      \tr{\exp\trm{ \sum_j \log \mathbb{E}[e^{\lambda_jE_j}]}}\leq \tr{\exp\trm{ \frac{1}{2}\sum_j E_j^2}}.
  \end{equation}\label{eq:matExpExpBound2}
  Indeed, we know that \cite[Chapter 4.1]{hiai2014IntroMatrixAnalAndApp} the logarithm is operator monotone for positive matrices. Therefore \autoref{eq:matExpExpBound} implies that $\log(\mathbb{E}[e^{\lambda_jE_j}]) \preceq \log(e^{E_j^2/2})$, and consequently 
$\sum_j \log \mathbb{E}[e^{\lambda_j2E_j}] \preceq \sum_j\log(e^{E_j^2/2})=\sum_j E_j^2/2$. We conclude by using the fact that the trace of a monotone function is operator monotone \cite[Example 3.24]{hiai2014IntroMatrixAnalAndApp}, i.e., $A\preceq B$ implies $\tr{\exp(A)}\leq \tr{\exp(A)}$. We now use this (by absorbing $\theta$ into the $E_j$-s) to get
  \begin{align*}
    \mathbb{P}[\lambda_{max}(Y)\geq t] &\leq \inf_{\theta>0} e^{-\theta t} \tr{\exp\trm{ \sum_j \log \mathbb{E}[e^{\theta \lambda_j E_j}]}} \tag{by \autoref{prop:masterBound}}\\
    &\leq \inf_{\theta>0} e^{-\theta t} \tr{\exp\trm{ \frac{\theta^2}{2}\sum_j E_j^2}} \tag{by \autoref{eq:matExpExpBound}}\\
    &\leq \inf_{\theta>0} e^{-\theta t} d \cdot \nrm{\exp\trm{ \frac{\theta^2}{2}\sum_j E_j^2}}\\
    &= \inf_{\theta>0} e^{-\theta t} d \cdot \exp\trm{ \frac{\theta^2}{2} \nrm{\sum_j E_j^2}}\\
    &= d \inf_{\theta>0} e^{-\theta t +\frac{\theta^2}{2} v(Y)}.
  \end{align*}

As the exponential function is monotone increasing, the minimum is attained at the minimum of $-\theta t+ \frac{v(Y)}{2}  \theta^2$. By differentiating and setting equal to zero we find
\[
-t +  v(Y) \theta = 0
\]
and hence $\theta = \frac{t}{v(Y)}$. Substituting this back we find
\[
\mathbb{P}[\lambda_{max}(Y)\geq t] \leq d e^{-\frac{t^2}{2v(t)}}.
\]
By symmetry we get the same bound for the smallest eigenvalue and the theorem follows.
\end{proof}

\subsection{Application to the estimation of multiple expectation values}

With the tools from the previous section we can tighten the analysis of \cite{huggins2021QAlgMultipleExpectationValues} for the estimation of multiple expectation values. Our running time generalizes the results of \cite{huggins2021QAlgMultipleExpectationValues}, and it leads to faster algorithms in some cases that are relevant for tomography.
\begin{lemma}
  \label{lem:linCombEj}
  Let $E_1,\dots, E_m$ be Hermitian matrices with $\nrm{E_j}\leq 1$, and let $U_E=\sum_{j\in [m]}\ketbra{j}{j}\otimes U_{E_j}$, where $U_{E_j}$ is an $a$-block-encoding of $E_j$. Let $\delta > 0$, $\gamma \in \R^m$\!, $\nu =  \nrm{\gamma}_1$,  $\sigma\geq\sqrt{2\nrm{\sum_j \gamma_j^2E_j^2}\ln\left(\frac{2d}{\delta}\right)}$, and $\sigma':=\min\{\nu,\sigma\}$. For any positive integer $b=\bigO{\frac{1}{\eps}}$ we can implement a unitary $V=\sum_{\pmb{x}\in G_b^m}\ketbra{\pmb{x}}{\pmb{x}}\otimes V_x$ such that $V_x$ is an $(a+\ceil{\log_2(m)}+2)$-block-encoding of a matrix $A_x$ that is $\eps$-close in operator norm to $\frac{1}{\sigma'}\sum x_j\gamma_j E_j$ for at least a $1-\delta$ fraction of points $\pmb{x}\in G_b^m$. This implementation of $V$ uses $\bigO{\frac{\nu}{\sigma'} \log(\frac{\nu}{\sigma'\eps})}$ calls to $U_E$, and   
  $\bigO{(a\!+\!m)\frac{\nu}{\sigma'}\polylog(\frac{\nu+m}{\sigma'\eps})}$ additional two-qubit gates having depth $\bigO{\frac{\nu}{\sigma'}\polylog(\frac{\nu+m}{\sigma'\eps})}$.
  \pnote{Joran's first analysis: $\bigO{\nu/\sigma \log(\nu/\eps) m \allowbreak \log \nrm{\gamma}_{\infty}}$ two-qubit gates, and has gate depth $\bigO{\nu/\sigma \log(\nu/\eps) (\log m +\log \nrm{\gamma}_{\infty})}$.}
\end{lemma}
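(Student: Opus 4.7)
The plan is to build $V$ in two stages: first use \autoref{lem:linCombBlocks} to realize $\tfrac{1}{\nu}\sum_j x_j\gamma_j E_j$ as a block-encoded matrix, then apply uniform amplification (\autoref{lem:ampBlock}) to rescale the subnormalization from $\nu$ to $\sigma'$. The rescaling is only sharp when the encoded matrix has norm at most $\sigma'/\nu$, which is where the new bounded matrix-series inequality \autoref{lem:uniformMatrixBound} enters: it guarantees this norm bound on all but a $\delta$ fraction of the grid.

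First I would handle the concentration. A uniformly random $\pmb{x}\in G_b^m$ has independent, symmetrically distributed coordinates supported on $[-\tfrac{1}{2},\tfrac{1}{2}]$. Setting $\lambda_j:=2x_j\in[-1,1]$ and $\widetilde{E}_j:=(\gamma_j/2)E_j$, \autoref{lem:uniformMatrixBound} yields
\[
\Pr\!\left[\nrm{\sum_{j=1}^m x_j\gamma_j E_j}\geq\sigma\right]\leq 2d\exp\!\left(-\frac{2\sigma^2}{\nrm{\sum_j\gamma_j^2 E_j^2}}\right)\leq \delta,
\]
by the hypothesis on $\sigma$, so at least a $(1-\delta)$ fraction of grid points (the \emph{good} ones) satisfy $\nrm{\sum_j x_j\gamma_j E_j}\leq\sigma$.

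Next I would build the inner block-encoding. Controlled on $\ket{\pmb{x}}$ I would prepare $\tfrac{1}{\sqrt{\nu}}\sum_j\sqrt{|x_j\gamma_j|}\ket{j}\ket{0}+\ket{\mathrm{junk}_{\pmb{x}}}\ket{1}$ using a standard logarithmic-depth amplitude-encoding tree whose rotation angles are computed in place from the $b$ bits of each $x_j$; this costs $\bigO{m\polylog(m/\eps)}$ gates with polylogarithmic depth per call and depends only on the digits of $\pmb{x}$ rather than on $|G_b^m|$. The signs of $x_j\gamma_j$ are absorbed into a conditional $Z$ on the $j$-register that I would commute through to the $E_j$-register, leaving $U_y$ with non-negative amplitudes. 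Invoking \autoref{lem:linCombBlocks} with $\beta=\nu\geq\sum_j|x_j\gamma_j|$ then produces, controlled on $\pmb{x}$, an $(a+\lceil\log_2 m\rceil+1)$-block-encoding of $\tfrac{1}{\nu}\sum_j x_j\gamma_j E_j$.

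Finally I would amplify with \autoref{lem:ampBlock}, choosing the input-side bound $\min\{1,\sigma/\nu\}$ so that the resulting subnormalization matches $\sigma'=\min\{\nu,\sigma\}$. For good $\pmb{x}$ the amplified block-encoding is $\eps$-close in operator norm to $\tfrac{1}{\sigma'}\sum_j x_j\gamma_j E_j$; for bad $\pmb{x}$ it still encodes \emph{some} matrix, which is all the statement requires. Amplification invokes the inner block-encoding $\bigO{\tfrac{\nu}{\sigma'}\log\tfrac{\nu}{\sigma'\eps}}$ times, and each invocation costs one call to $U_E$ plus $\bigO{a+m\polylog(m/\eps)}$ additional two-qubit gates, giving the stated totals. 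The hard part will be the coherent sign handling together with confirming that the factor $2$ appearing in the amplification lemma can be absorbed by the factor-$2$ slack already baked into the hypothesis $\sigma\geq\sqrt{2\nrm{\sum_j\gamma_j^2 E_j^2}\ln(2d/\delta)}$; the concentration step itself reduces to recognising the grid distribution as symmetric and citing \autoref{lem:uniformMatrixBound} directly.
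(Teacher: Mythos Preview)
Your proposal is correct and follows essentially the same approach as the paper: concentration via \autoref{lem:uniformMatrixBound}, linear combination via \autoref{lem:linCombBlocks}, then uniform amplification via \autoref{lem:ampBlock}. One small note on the factor-$2$ issue you flag: the paper resolves it by applying the concentration inequality directly to $\sum_j 2x_j\gamma_j E_j$ (so that $v(Y)=\nrm{\sum_j\gamma_j^2E_j^2}$ and the hypothesis on $\sigma$ gives exactly $\delta$), yielding $\nrm{\sum_j x_j\gamma_j E_j}\leq\sigma/2$ on the good set, which then matches the $A/(2\beta)$ output of \autoref{lem:ampBlock}; your observation that the slack in the hypothesis suffices is therefore exactly right.
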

\begin{proof}
  Our goal is to construct a block-encoding of $\frac{1}{\sigma}\sum x_j\gamma_j E_j$. First, we note that this is a valid block-encoding (more precisely, its spectral norm is upper bounded by $\frac{1}{2}$) for at least $1-\delta$ fraction of points $\pmb{x}\in G_b^m$. To see this, we apply \autoref{lem:uniformMatrixBound} to the matrices $\gamma_1E_1, \gamma_2E_2, \dots, \gamma_mE_m$ setting $t =  \sqrt{2\nrm{\sum_j \gamma_j^2E_j^2}\ln\left(\frac{2d}{\delta}\right)}\leq \sigma$ and sampling $\pmb{x}\in G_b^m$ uniformly at random to obtain
  \begin{equation*}
	  \mathbb{P}_{\pmb{x}\in G_b^m}[ \|\sum_j 2x_j \gamma_j E_j \|\geq \sigma] \leq \mathbb{P}_{\pmb{x}\in G_b^m}[ \|\sum_j 2x_j \gamma_j E_j \|\geq t] \leq 2d e^{-\frac{t^2}{2\nrm{\sum_j \gamma^2_j E^2_j}}}
	  = \delta.
  \end{equation*}
  
  Using \autoref{lem:linCombBlocks}, we first prepare a $(a+\ceil{\log_2(m)}+1)$-block-encoding of $\sum_{j=1}^m (x_j \gamma_j/\nrm{\gamma}_1)E_j$. This requires a single application of $U_{E_j}$, and one applications of a state-preparation oracle for $\frac{1}{\sqrt{\nrm{\gamma}_1}} \sum_j \sqrt{x_j\gamma_j}\ket{j} \ket{0} + \ket{\psi}\ket{1}$ (and its inverse), which is trivial to construct with controlled rotations given the binary encoding of $\ket{\pmb{x}}$. We then amplify the block-encoding by a factor $\nu/\sigma = \nrm{\gamma}_1/\sigma$ using \autoref{lem:ampBlock}: this introduces an overhead equal to the amplification factor. Overall, this requires $\bigO{\ceil{\frac{\nu}{\sigma} \log(\frac{\nu}{\sigma\eps})}}$ calls to $U_{E}$.

  The gate complexity of implementing the state-preparation operation to precision $\bigO{\frac{\eps}{\nu m}}$ can be bounded by $\bigO{m \polylog(\frac{\nu + m}{\eps})}$, while \autoref{lem:ampBlock} multiplies this by $\bigO{\ceil{\frac{\nu}{\sigma} \log(\frac{\nu}{\sigma\eps})}}$ and additionally introduces $\bigO{(a+\log(m)+1)\ceil{\frac{\nu}{\sigma} \log(\frac{\nu}{\sigma\eps})}}$ gates proving the gate complexity bound.
  \pnote{Joran's first analysis: 
  Regarding the gate complexity, using \autoref{lem:linCombBlocks} and \autoref{lem:ampBlock}, together with a construction similar to \autoref{lem:uamp} for $\frac{1}{\sqrt{\nrm{\gamma}_1}} \sum_j \sqrt{x_j}\ket{j} \ket{0} + \ket{\psi}\ket{1}$, we see that it is $\bigO{\nu/\sigma \log(\nu/\eps)m \log \nrm{\gamma}_{\infty}}$ and the circuit depth is  $\bigO{\nu/\sigma \log(\nu/\eps) (\log m +\log \nrm{\gamma}_{\infty})}$, where the term $\log \nrm{\gamma}_\infty$ comes from the number of bits necessary to represent each entry of $x$.}
\end{proof}

\begin{theorem}
  \label{thm:expectationValues}
  Let $E_1,\dots, E_m \in \C^{d\times d}$ be Hermitian matrices with $\nrm{E_j}\leq 1$, and let $U_E=\sum_{j\in [m]}\ketbra{j}{j}\otimes U_{E_j}$, where $U_{E_j}$ is an $a_E$-block-encoding of $E_j$.   
  Let $\delta\in(0,\frac{1}{6}]$, $\eps_1,\dots,\eps_m\in(0,2]^m$ be error bounds, $\nu = \sum_j \frac{1}{\eps_j}$, $\sigma \geq\max\left\{\sqrt{2\nrm{\sum_j E_j^2/\eps_j^2}\ln\left(\frac{2d}{\delta}\right)},1\right\}$, and $\sigma':=\min\{\nu,\sigma\}$. Let $U_{\rho}$ be an $a_\rho$-qubit state-preparation unitary for a purification of $\rho\in\C^{d \times d}$. There is a quantum algorithm that makes
  $
  \bigO{\left(\sigma'+\log(\frac{\log(m)}{\delta})\right)\log(\frac{m}{\delta})}
  $
  queries to $U_{\rho}$ and $U_{\rho}^{\dag}$, and produces estimates $\pmb{z}\in\left[-1,1\right]^m$ such that, with probability at least $1-\delta$,
  \begin{equation}\label{eq:boostedGradientMult}
 	\forall j\in [m]\colon |\tr{\rho E_j}-z_j|\leq \eps_j,
  \end{equation}
  moreover 	
  \begin{equation}\label{eq:suppressedGradientMult}
  	\forall j\in [m]\colon |\tr{\rho E_j}-\mathbb{E}[z_j]|\leq 16\sigma' \eps_j \delta.
  \end{equation}
  Furthermore, the quantum algorithm can be implemented by a number of calls to $U_E$ bounded by $\bigO{\nu \log(\frac{\nu\log(m)}{\delta})\log(\frac{m}{\delta})+\frac{\nu}{\sigma'} \log(\frac{\nu\log(m)}{\delta})\log(\frac{m}{\delta})\log(\frac{\log(m)}{\delta})}$,
  and additional number of two-qubit gates bounded by $\bigO{\left(\sigma' a_\rho+\nu a_E+\nu m\right) \polylog(\frac{\nu m}{\delta})}$ and having depth $\bigO{\nu\polylog(\frac{\nu m}{\delta})}$.
  
  Finally, there is a random variable $\pmb{z'}\in \bigtimes_{j\in [m]}\left[\tr{\rho E_j}-\eps_j,\tr{\rho E_j}+\eps_j\right]$ with independent coordinates that is $\delta$-close in total variation distance to $\pmb{z}$ and also satisfies \autoref{eq:suppressedGradientMult}.
\end{theorem}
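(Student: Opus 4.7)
The plan is to reduce the problem to gradient estimation via \autoref{cor:unbiasedBlockToGrad}. Set $\gamma_j := 1/\eps_j$ so that $\nu = \nrm{\gamma}_1$, and take as target gradient $\pmb{g}/\sigma'$ with coordinates $g_j := \gamma_j\tr{\rho E_j}$. An $\ell_\infty$-estimate of $\pmb{g}/\sigma'$ at precision $1/\sigma'$ translates, after rescaling coordinate $j$ by $\sigma'\eps_j$ and truncating into $[-1,1]$ (which only helps since $|\tr{\rho E_j}|\leq 1$), directly into the desired $\eps_j$-accurate estimates $z_j$; the same rescaling inflates the suppressed-bias bound of \autoref{cor:unbiasedBlockToGrad} by a factor $\sigma'\eps_j$, which accounts for the $16\sigma'\eps_j\delta$ term in \autoref{eq:suppressedGradientMult}. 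The auxiliary random variable $\pmb{z'}$ is obtained by applying the same rescaling-and-truncation map to the independent-coordinate random variable $\pmb{k'}$ promised by \autoref{cor:unbiasedBlockToGrad}.

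The block-encoding fed into \autoref{cor:unbiasedBlockToGrad} is constructed in two stages. First, \autoref{lem:linCombEj} produces a controlled unitary $V = \sum_{\pmb{x}\in G_b^m}\ketbra{\pmb{x}}{\pmb{x}}\otimes V_{\pmb{x}}$ such that $V_{\pmb{x}}$ block-encodes some $A_{\pmb{x}}$ that is $\eps'$-close, in operator norm, to $\frac{1}{\sigma'}\sum_j x_j\gamma_j E_j$ on all but a small fraction of grid points. Second, conjugating on the purification register by $U_\rho$ produces
\[
W \;:=\; (I_a\otimes U_\rho^\dag\otimes I_x)\,(V\otimes I_B)\,(I_a\otimes U_\rho\otimes I_x),
\]
and a direct expansion shows that $W$ is a block-encoding of the diagonal matrix $\sum_{\pmb{x}} \tr{A_{\pmb{x}}\rho}\ketbra{\pmb{x}}{\pmb{x}}$ (an analogue of \autoref{lem:blockInnerProd}, with one of the controlled state-preparations replaced by the purification of $\rho$). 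On the good fraction of grid points the diagonal entry $f(\pmb{x}) := \tr{A_{\pmb{x}}\rho}$ is $\eps'$-close to $\langle\pmb{x},\pmb{g}/\sigma'\rangle$, while on the remaining fraction we only have the crude bound $|f(\pmb{x})|\leq\nrm{A_{\pmb{x}}}\leq 1$; together these exactly fit the hypothesis of \autoref{cor:unbiasedBlockToGrad}.

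Applying \autoref{cor:unbiasedBlockToGrad} to $W$ in dimension $m$ at precision $1/\sigma'$ then produces $\pmb{k}\in[-4,4]^m$ estimating $\pmb{g}/\sigma'$ together with the auxiliary $\pmb{k'}$; rescaling and truncating as described yields $\pmb{z}$ and $\pmb{z'}$. The complexity claim follows by multiplying the $\bigO{(\sigma'+\log(\log(m)/\delta))\log(m/\delta)}$ uses of $W$ guaranteed by the corollary by the per-call cost: two applications of $U_\rho$ together with one application of $V$, the latter costing $\bigO{(\nu/\sigma')\polylog(\nu m/\delta)}$ queries to $U_E$ and the associated two-qubit gates at the depth stated in \autoref{lem:linCombEj}. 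The $\sigma' a_\rho$ contribution to the gate count is exactly the overhead incurred by the $U_\rho$-conjugation layer inside $W$.

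The main delicate step is the joint calibration of parameters. \autoref{cor:unbiasedBlockToGrad} requires that the near-linearity condition hold on a $(1-\beta^2)$-fraction of grid points with $\beta\sim \delta/\log(m/\delta)$, so the failure-fraction parameter inside \autoref{lem:linCombEj} (controlled by \autoref{lem:uniformMatrixBound}) must be tightened to $\beta^2$; this only inflates the required $\sigma$ by a factor that gets absorbed into the logarithm inside the stated lower bound $\sigma\geq\sqrt{2\nrm{\sum_j E_j^2/\eps_j^2}\ln(2d/\delta)}$. In parallel, the operator-norm slack $\eps'$ in \autoref{lem:linCombEj} must be chosen of order $\beta/(4\pi\sigma')$ to match the linearity budget of \autoref{cor:unbiasedBlockToGrad}, which feeds only $\polylog$-factors into the amplification cost of \autoref{lem:linCombEj}. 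Executing this bookkeeping carefully yields exactly the query and gate complexities claimed.
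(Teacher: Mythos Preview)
Your proposal is correct and follows essentially the same route as the paper: sandwich the controlled block-encoding from \autoref{lem:linCombEj} between $U_\rho$ and $U_\rho^\dagger$ to obtain the diagonal block-encoding $\pmb{x}\mapsto\tr{\rho A_{\pmb{x}}}$, then invoke \autoref{cor:unbiasedBlockToGrad} at precision $\min\{1/\sigma',1/6\}$ and rescale coordinate-wise by $\sigma'\eps_j$ with truncation to $[-1,1]$. The only details you leave implicit that the paper spells out are the case split $\sigma'=\nu$ versus $\sigma'<\nu$ (in the former no amplification is needed, so \autoref{lem:linCombBlocks} alone suffices) and the fact that the final truncation to $[-1,1]$ contributes an additional $8\sigma'\eps_j\delta$ to the bias, which together with the $8\delta$ bound from the corollary yields the claimed $16\sigma'\eps_j\delta$.
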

\begin{proof}
	The main idea is to apply Jordan's gradient estimation algorithm to a linear function with derivative vector $\pmb{g}$ such that $g_j=\frac{1}{\sigma'}\tr{\rho \frac{E_j}{\eps_j}}$ with accuracy $\eps':=\min\{\frac{1}{\sigma'},\frac{1}{6}\}$.
	
	Let $b:=\lceil \log_2(16/\eps') \rceil$ and let $\beta:= \frac{\delta}{96\lceil\ln(6m/\delta)\rceil+12}$.
	If $\sigma'< \nu$, then we use \autoref{lem:linCombEj} in order to construct a
	unitary $V=\sum_{\pmb{x}\in G_b^m} V_x \otimes\ketbra{\pmb{x}}{\pmb{x}}$ such that $V_x$ is a $c:=(a_E+\ceil{\log_2(m)}+2)$-block-encoding of a matrix $A_x$ that is $\frac{\beta}{4\sigma'\pi}$-close in operator norm to $\frac{1}{\sigma'}\sum \frac{x_j E_j}{\eps_j}$ for at least a $1-\beta^2$ fraction of points $\pmb{x}\in G_b^m$. Otherwise, when $\sigma'= \nu$ then we simply apply the first step in the algorithm of \autoref{lem:linCombEj}, namely \autoref{lem:linCombBlocks}.
	
	We then define $V_\ell:=I_{c}\otimes U_{\rho}\otimes I_{G_b^d}$ and $V_r:=(V\otimes I_P)\cdot V_\ell$, where $I_P$ acts on the purifying register of $U_\rho$. 
	Let use the notation $\ket{\rho}_{PS}:=U_\rho\ket{0}$. Since by definition $\Tr_P\left(\ketbra{\rho}{\rho}_{PS}\right)=\rho$, we have
	\begin{align*}
		\bra{00x}V_\ell^\dagger V_r\ket{00y}
		&=\bra{0}\bra{\rho}_{PS}\bra{x}(V\otimes I_P)\ket{0}\ket{\rho}_{PS}\ket{y}\\
		&=\delta_{xy}\bra{0}\bra{\rho}_{PS}(V_x\otimes I_P)\ket{0}\ket{\rho}_{PS}\\
		&=\delta_{xy}\bra{\rho}_{PS}(A_x\otimes I_P)\ket{\rho}_{PS}\\
		&=\delta_{xy}\tr{\bra{\rho}_{PS}(A_x\otimes I_P)\ket{\rho}_{PS}}\\	
		&=\delta_{xy}\tr{(A_x\otimes I_P)\ketbra{\rho}{\rho}_{PS}}\\
		&=\delta_{xy}\tr{A_x\rho}.	
	\end{align*}
	Since $\nrm{A_x-\frac{1}{\sigma'}\sum \frac{x_j E_j}{\eps_j}}\leq \frac{\beta}{4\sigma'\pi}$ for at least a $1-\beta^2$ fraction of points $\pmb{x}\in G_b^m$ we get that $\left|\Tr_P\left(\rho A_x \right)-\Tr_P\left(\frac{\rho}{\sigma'}\sum \frac{x_j E_j}{\eps_j} \right)\right|\leq \frac{\beta}{4\sigma'\pi}$ also holds for these points. Thus $W:=V_\ell^\dagger V_r$ is an $(a_\rho+c+1)$-block-encoding of $f(\pmb{x})$ that is $\frac{\beta}{4\sigma'\pi}$-close to $\sum_{j\in[m]}\frac{1}{\sigma'}\tr{\rho \frac{x_j E_j}{\eps_j}}$ for at least a $1-\beta^2$ fraction of points $\pmb{x}\in G_b^m$. Then \autoref{eq:boostedGradientMult} follows from \autoref{cor:unbiasedBlockToGrad} after multiplying its output coordinate-wise by $\sigma' \eps_j$ and truncating to $[-1,1]$. Similarly, \autoref{eq:suppressedGradientMult} follows from \autoref{cor:unbiasedBlockToGrad} after incrementing the bias by $8\sigma' \eps_j$ taking into account the truncation error.
	
	The query complexity 
	$\bigO{\left(\sigma'+\log(\frac{\log(m)}{\delta})\right)\log(\frac{m}{\delta})}$ for $U_\rho$ directly follows from \autoref{cor:unbiasedBlockToGrad}.
	The gate complexity of \autoref{cor:unbiasedBlockToGrad} is 
	$\bigO{\left(m+\sigma' (a_\rho+a_E+1)\right) \polylog(\frac{\sigma' m}{\delta})}$, which is supplemented by the complexity of implementing $V$ times the above query complexity. The implementation of $V$ uses $\bigO{\ceil{\frac{\nu}{\sigma'} \log(\frac{\nu}{\beta})}}=\bigO{\frac{\nu}{\sigma'} \log(\frac{\nu\log(m)}{\delta})}$ calls to $U_E$, and   
	$\bigO{(a_E\!+\!m)\ceil{\frac{\nu}{\sigma'}}\polylog(\frac{\nu+m}{\sigma'\eps})}\!$ $=\bigO{(a_E\!+\!m)\frac{\nu}{\sigma'}\polylog(\frac{\nu m}{\delta})}$ additional two-qubit gates having depth $\bigO{\frac{\nu}{\sigma'}\polylog(\frac{\nu m}{\delta})}$. This amounts to a total of $\bigO{\nu \log(\frac{\nu\log(m)}{\delta})\log(\frac{m}{\delta})+\frac{\nu}{\sigma'} \log(\frac{\nu\log(m)}{\delta})\log(\frac{\log(m)}{\delta})\log(\frac{m}{\delta})}$ calls to $U_E$.
\end{proof}

Note that the assumption $\nrm{E_j} \le 1$ is not particularly restrictive, because if $\nrm{E_j} > 1$ the corresponding block-encoding is subnormalized and we simply need to increase the precision by an amount equal to the subnormalization factor. If all $\eps_j$ are equal and we use the assumption $\nrm{E_j} \le 1$, we recover the sample complexity $\bigOt{\sqrt{m}/\eps}$ of the algorithm in \cite{huggins2021QAlgMultipleExpectationValues}. The number of calls to $U_{E}$ is not directly comparable because we use a different input model: in \cite{huggins2021QAlgMultipleExpectationValues} the algorithm assumes access to to $e^{-i \theta E_j}$ and requires $\bigOt{\sqrt{m}/\eps}$ calls to each of these operators for $j=1,\dots,m$, while we give a version that uses $\bigOt{m/\eps}$ calls in total to controlled unitaries $U_{E_j}$ block-encoding $E_j$.

Furthermore, \autoref{thm:expectationValues} also recovers the query complexity results of the probability distribution estimation problem from~\cite{apeldoorn2021QProbOraclesMulitDimAmpEst}, by taking $E_j = \ket{j}\bra{j}$, for $j \in [d]$, and observing that $\sum_{j=1}^d E_j^2 = I$. Thus, even though \cite{apeldoorn2021QProbOraclesMulitDimAmpEst} and \cite{huggins2021QAlgMultipleExpectationValues} seem to be of different flavor, this result unifies both into a single construction.

For convenience, we state a version of our result only in terms of the number of observables rather than the more involved quantity $\nrm{\sum_{j=1}^d E_j^2}$.

\begin{corollary}
	Let $E_1,\dots, E_m \in \C^{d\times d}$ be Hermitian matrices with $\nrm{E_j}\leq 1$, and let $U_E=\sum_{j\in [m]}\ketbra{j}{j}\otimes U_{E_j}$, where $U_{E_j}$ is an $a_E$-block-encoding of $E_j$, and let $\delta,\eps\in(0,\frac{1}{6}]$. Let $U_{\rho}$ be an $a_\rho$-qubit state-preparation unitary for a purification of $\rho\in\C^{d \times d}$. There is a quantum algorithm that makes
	$
	\bigO{\left(\frac{\sqrt{m\log\left(\frac{d}{\delta}\right)}}{\eps}+\log(\frac{\log(m)}{\delta})\right)\log(\frac{m}{\delta})}
	$
	queries to $U_{\rho}$ and $U_{\rho}^{\dag}$, and produces estimates $\pmb{z}\in\left[-1,1\right]^m$ such that, with probability at least $1-\delta$,
	\begin{equation}\label{eq:boostedGradientMultGoo}
	\forall j\in [m]\colon |\tr{\rho E_j}-z_j|\leq \eps,
	\end{equation}
	moreover 	
	\begin{equation}\label{eq:suppressedGradientMultGoo}
	\forall j\in [m]\colon |\tr{\rho E_j}-\mathbb{E}[z_j]|\leq 16 \sqrt{2m\log\left(\frac{2d}{\delta}\right)} \delta.
	\end{equation}
	Furthermore, the quantum algorithm can be implemented by a number of calls to $U_E$ bounded by $\bigO{\frac{m}{\eps} \log(\frac{m}{\eps\delta})\log(\frac{m}{\delta})+\sqrt{m} \log(\frac{m}{\eps\delta})\log(\frac{m}{\delta})\log(\frac{1}{\delta})}$
	and
	$\bigO{\left(\frac{\sqrt{m\log\left(\frac{d}{\delta}\right)}}{\eps} a_\rho+\frac{m}{\eps}a_E+\frac{m^2}{\eps}\right) \polylog(\frac{m}{\eps\delta})}\!$
	additional two-qubit gates while having circuit depth $\bigO{\frac{m}{\eps}\polylog(\frac{m}{\eps\delta})}$.
	
	Finally, there is a random variable $\pmb{z'}\in \bigtimes_{j\in [m]}\left[\tr{\rho E_j}-\eps,\tr{\rho E_j}+\eps\right]$ with independent coordinates that is $\delta$-close in total variation distance to $\pmb{z}$ and also satisfies \autoref{eq:suppressedGradientMultGoo}.
\end{corollary}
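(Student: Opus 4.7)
The plan is to derive this corollary as a direct specialization of \autoref{thm:expectationValues} to uniform precision targets $\eps_j = \eps$ for all $j\in[m]$, using the assumption $\nrm{E_j}\leq 1$ to bound the quantity $\nrm{\sum_j E_j^2/\eps_j^2}$ that appears in the hypothesis on $\sigma$. With this uniform choice the parameter $\nu = \sum_j 1/\eps_j$ becomes simply $m/\eps$, and every other complexity in the statement is obtained by straightforward substitution.

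The first step is to estimate $\sigma$. Since $\nrm{E_j}\leq 1$ implies $E_j^2 \preceq I$, we get
\[
\nrm{\sum_{j\in [m]} \frac{E_j^2}{\eps_j^2}} \;=\; \frac{1}{\eps^2}\nrm{\sum_{j\in[m]} E_j^2} \;\leq\; \frac{m}{\eps^2},
\]
so the hypothesis $\sigma \geq \max\!\bigl\{\sqrt{2\nrm{\sum_j E_j^2/\eps_j^2}\ln(2d/\delta)},\,1\bigr\}$ is met by taking
\[
\sigma \;=\; \max\!\left\{\frac{\sqrt{2m\ln(2d/\delta)}}{\eps},\,1\right\} \;=\; \bigO{\frac{\sqrt{m\log(d/\delta)}}{\eps}}.
\]
Consequently $\sigma' = \min\{\nu,\sigma\}\leq \sigma$, which immediately yields the claimed $U_\rho$-query count. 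A short case analysis also gives $\nu/\sigma' \leq \sqrt{m}$: if $\sigma\leq \nu$ then $\nu/\sigma' = \sqrt{m/\log(d/\delta)} \leq \sqrt{m}$, and otherwise $\sigma' = \nu$ so $\nu/\sigma' = 1$.

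Given these substitutions, the remaining bounds come out by plugging in. The suppressed-bias bound $16\sigma' \eps_j\delta$ of \autoref{thm:expectationValues} becomes $16\sqrt{2m\log(2d/\delta)}\,\delta$ because the $\eps$ in $\eps_j$ cancels the $\eps$ in the denominator of $\sigma$. The $U_E$-query count of \autoref{thm:expectationValues} has the two summands $\nu$ and $\nu/\sigma'$ multiplied by polylogarithmic factors; substituting $\nu=m/\eps$ and $\nu/\sigma' = \bigO{\sqrt{m}}$ produces the two stated terms, with $\log(\log(m)/\delta)$ absorbed into $\log(1/\delta)$ up to constants. Similarly $\sigma' a_\rho$, $\nu a_E$, and $\nu m$ yield the three summands in the gate count, while $\nu$ controls the depth. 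I do not foresee any genuine obstacle here: the argument is a routine repackaging of \autoref{thm:expectationValues} under the uniform-precision regime together with the worst-case estimate $\nrm{E_j}\leq 1$; the only care required is to verify the bound $\nu/\sigma' \leq \sqrt{m}$ in both branches of the minimum and to absorb the mild $\log\log$ terms into the existing polylogarithmic prefactors.
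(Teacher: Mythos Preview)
Your proposal is correct and is precisely the approach the paper intends: the corollary is stated immediately after \autoref{thm:expectationValues} without a separate proof, as a direct specialization with $\eps_j=\eps$ and the crude bound $\nrm{\sum_j E_j^2}\leq m$, so your substitution argument is exactly what is expected. One small caveat: your claim that $\log(\log(m)/\delta)$ can be ``absorbed into $\log(1/\delta)$ up to constants'' is not literally true in general, but in the regime where the second $U_E$-term is not already dominated by the first (namely when $\sqrt{m}$ is small compared to $\log(1/\delta)$) the $\log\log m$ contribution is indeed negligible, so the stated bound still holds.
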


\section{Mixed-state tomography}\label{s:mixed-state-tomo}
In this section we generalize our pure-state results to mixed states. 
Throughout this section we use $r$ to denote the rank of the mixed state. As discussed in the introduction, results from the literature on mixed-state tomography consider the case where only copies of the mixed state are available. Gross et al.~\cite{gross2010} give an algorithm that uses $\bigO{d^2 r^2/\eps^2}$ samples; a tighter analysis of their algorithm shows that $\bigO{d r^2/\eps^2}$ suffice to get an $\eps$-trace-norm estimate, when measurements are performed on single (i.e., unentangled) copies of the state \cite{haah2017OptTomography}, and \cite{chen2022tight} shows that this is optimal even for adaptive (but still unentangled) measurements. Haah et al.~\cite{haah2017OptTomography} and O'Donnell and Wright \cite{odonnell2016EfficientQuantumTomography} further improve the sample complexity to $\bigOt{dr/\eps^2}$, at the cost of requiring joint measurements on many states at once, and with an algorithm that has super-polynomial time complexity. \cite{haah2017OptTomography} also shows matching lower bounds for both settings, up to polylogarithmic factors, so these complexities are essentially optimal.  

We consider a stronger input model where we are given access to a purification of a mixed state.
Assume that we are interested in a rank-$r$ mixed state $\rho = \sum_{j=1}^r p_j \ketbra{\psi_j}{\psi_j}$ for some orthonormal $\ket{\psi_j}$. A purification of $\rho$ is a state on two registers, $A$ and $B$ that can be written as
\[
    \ket{\rho} = \sum_{j=1}^r \sqrt{p_j} \ket{\psi_j}_A\ket{\phi_j}_B,
\]
for some orthonormal $\{\ket{\phi_j}\}$ given by Schmidt's decomposition. Note that tracing out the $B$ register yields just $\rho$. Note that there are many possible purifications for the same mixed state. 

The simplest idea to use a purification for tomography is to apply our pure-state algorithms directly to the purification, and then post-process by tracing out the unwanted part. The following lemma relates the error in a pure-state estimate to that in the resulting mixed-state estimate.
\begin{lemma}
\label{lem:infty_norm_to_trace}
  Let $\ket{\psi} := \sum_{j \in [r]} \sqrt{p_j} \ket{\psi_j}_A \ket{\phi_j}_B$ and $\rho = \Tr_B\trm{\ket{\psi}\bra{\psi}}$. Let $d,s$ be the Hilbert space dimensions of subsystems $A$ and $B$ respectively. Let $\ket{\tilde{\psi}} : \nrm{\ket{\psi} - \ket{\tilde{\psi}}}_\infty \le \eps/\sqrt{ds}$, and let $\tilde{\rho} = \Tr_B\trm{ \ket{\tilde{\psi}} \bra{\tilde{\psi}}}$. Then $\frac{1}{2}\nrm{\rho - \tilde{\rho}}_1 \leq \eps$.
\end{lemma}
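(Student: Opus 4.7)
The plan is to chain three standard inequalities.

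First I would convert the hypothesis from an $\ell_\infty$ bound to an $\ell_2$ bound. Viewing $\ket{\psi}$ and $\ket{\tilde{\psi}}$ as vectors in the $ds$-dimensional composite Hilbert space $\mathcal{H}_A \otimes \mathcal{H}_B$, the standard norm conversion yields
\[
\nrm{\ket{\psi} - \ket{\tilde{\psi}}}_2 \;\le\; \sqrt{ds}\cdot\nrm{\ket{\psi} - \ket{\tilde{\psi}}}_\infty \;\le\; \sqrt{ds}\cdot\frac{\eps}{\sqrt{ds}} \;=\; \eps.
\]

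Second I would pass from the $\ell_2$ distance between the purifications to the trace distance between the outer products. Using (as is standard) that $\ket{\tilde{\psi}}$ is a normalized pure state, one has the identity $\tfrac12\nrm{\ket{\psi}\bra{\psi} - \ket{\tilde{\psi}}\bra{\tilde{\psi}}}_1 = \sqrt{1 - |\langle \psi | \tilde{\psi} \rangle|^2}$, and the usual computation $\nrm{\ket{\psi} - \ket{\tilde{\psi}}}_2^2 = 2 - 2\Re\langle \psi | \tilde{\psi} \rangle \ge 2(1 - |\langle \psi | \tilde{\psi} \rangle|)$ gives $|\langle \psi | \tilde{\psi} \rangle| \ge 1 - \tfrac{1}{2}\nrm{\ket{\psi}-\ket{\tilde{\psi}}}_2^2$, so that
\[
  \tfrac{1}{2}\nrm{\ket{\psi}\bra{\psi} - \ket{\tilde{\psi}}\bra{\tilde{\psi}}}_1 \;\le\; \nrm{\ket{\psi} - \ket{\tilde{\psi}}}_2 \;\le\; \eps.
\]

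Finally I would invoke contractivity of the partial trace under the trace norm (a standard consequence of the dual characterization of the trace norm together with the fact that $\Tr_B$ is a CPTP map): $\nrm{\Tr_B(X)}_1 \le \nrm{X}_1$ for every Hermitian $X$ on $\mathcal{H}_A \otimes \mathcal{H}_B$. Applied to $X = \ket{\psi}\bra{\psi} - \ket{\tilde{\psi}}\bra{\tilde{\psi}}$, this yields
\[
  \tfrac{1}{2}\nrm{\rho - \tilde{\rho}}_1 \;\le\; \tfrac{1}{2}\nrm{\ket{\psi}\bra{\psi} - \ket{\tilde{\psi}}\bra{\tilde{\psi}}}_1 \;\le\; \eps.
\]

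There is no serious obstacle here; the proof is essentially a concatenation of three well-known inequalities. The only minor subtlety is that the $\sqrt{ds}$-factor in the hypothesis exactly cancels the dimension dependence of the $\ell_\infty$-to-$\ell_2$ conversion, which motivates the precise form of the bound $\eps/\sqrt{ds}$ stated in the lemma.
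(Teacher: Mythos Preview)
Your proposal is correct and follows essentially the same three-step argument as the paper: the $\ell_\infty$-to-$\ell_2$ conversion, the pure-state trace-distance bound via $\sqrt{1-|\braket{\psi}{\tilde\psi}|^2}\le \nrm{\ket{\psi}-\ket{\tilde\psi}}_2$, and finally contractivity of the partial trace under the trace norm. The only cosmetic difference is that the paper cites a reference for $\nrm{\Tr_B(M)}_1\le\nrm{M}_1$ rather than invoking the CPTP argument, but the content is identical.
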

\begin{proof}
By a standard norm conversion, as in \autoref{cor:norm-conversion}, we have $\nrm{\ket{\psi}-\ket{\tilde{\psi}}}_2\leq \eps$.
Hence:
\begin{align*}
    \frac{1}{2}\nrm{\ket{\psi}\bra{\psi}- \ket{\tilde{\psi}}\bra{\tilde{\psi}}}_1 &=\sqrt{1-|\braket{\psi}{\tilde{\psi}}|^2}\\
    &= \sqrt{1-|\braket{\psi}{\tilde{\psi}}|}\sqrt{1+|\braket{\psi}{\tilde{\psi}}|}\\
    &\leq \sqrt{2} \sqrt{1-\Re \braket{\psi}{\tilde{\psi}}}\\
    &=  \nrm{\ket{\psi}-\ket{\tilde{\psi}}}_2\\
    &\leq \eps,
  \end{align*}
  were we used the fact that, for pure states, $\frac{1}{2}\nrm{\ketbra{\psi}{\psi}- \ketbra{\phi}{\phi}}  = \sqrt{1-|\braket{\psi}{\phi}|^2}$ and $\nrm{\ket{\psi}-\ket{\phi}}_2 = \sqrt{2-2\Re \braket{\psi}{\phi}}$. Note that for any matrix $M$, the relationship $\nrm{\Tr_B(M)}_1\leq \nrm{M}_1$ holds, see, e.g., \cite{rastegin2012relations} for a proof. Applying this to our pure state, and using the linearity of the trace, we find: 
  \begin{align*}
  \frac{1}{2}\nrm{\rho - \tilde{\rho}}_1 &=  \frac{1}{2} \nrm{\Tr_B(\ketbra{\psi}{\psi}) - \Tr_B(\ketbra{\tilde{\psi}}{\tilde{\psi}})}_1 \\
  &=  \frac{1}{2}\nrm{\Tr_B(\ketbra{\psi}{\psi} - \ketbra{\tilde{\psi}}{\tilde{\psi}})}_1 \\
  &\leq  \frac{1}{2}\nrm{\ketbra{\psi}{\psi} - \ketbra{\tilde{\psi}}{\tilde{\psi}}}_1 \\
  &\leq \eps.\qedhere
  \end{align*}
  \end{proof}

If the purifying register is of size $s$ then our sampling and phase estimation algorithms would get a sample complexity of $\bigOt{\frac{ds}{\eps^2}}$ and query complexity of $\bigOt{\frac{ds}{\eps}}$ respectively to obtain a trace-norm estimate. As $s$ can be as small as $r$ in certain settings, this might lead to interesting results in certain settings, but in general one can not upper-bound the size of $s$. 

In the rest of this section we describe a tomography algorithm with sample complexity $\bigOt{\frac{dr}{\eps}}$, for trace norm error $\eps$, when a unitary (and its inverse) preparing a purification of $\rho$ is available. 

\subsection{Coordinate-wise unbiased tomography}
Applying \autoref{thm:expectationValues} to the set of observables $E^{(0)}_{ij}:=\frac{\ketbra{i}{j}+\ketbra{j}{i}}{2}$ and $E^{(1)}_{ij}:=\frac{\ketbra{i}{j}-\ketbra{j}{i}}{2\ci}$ we get the next result. 
\begin{theorem}	\label{thm:coordianteWiseDensTomo}
	Let $\eps,\delta\in(0,\frac{1}{6}]$, and let $U_{\rho}$ be an $a$-qubit state-preparation unitary for a purification of $\rho\in\C^{d \times d}$. There is a quantum algorithm that makes
	$
	\bigO{\left(\frac{\sqrt{d\log\left(\frac{d}{\delta}\right)}}{\eps}+\log(\frac{\log(d)}{\delta})\right)\log(\frac{d}{\delta})}
	$
	queries to $U_{\rho}$ and $U_{\rho}^{\dag}$, and produces estimates $\pmb{z}\in\left([-1,1]\times[-\ci,\ci] \right)^{d^2}$ such that, with probability at least $1-\delta$,
	\begin{equation}\label{eq:boostedGradientCoor}
	\forall i,j\in [d]\colon |\rho_{ij}-z_{ij}|\leq \sqrt{2}\eps,
	\end{equation}
	moreover 	
	\begin{equation}\label{eq:suppressedGradientCoor}
		\forall i,j\in [d]\colon |\rho_{ij}-\mathbb{E}[z_{ij}]|\leq 16\sqrt{2}\sqrt{2d\ln\left(\frac{2d}{\delta}\right)} \delta
		\leq32\sqrt{d\ln\left(2d\right)\delta}.
	\end{equation}
	Furthermore, the quantum algorithm can be implemented by  $\bigO{\frac{\sqrt{d} a+d^3}{\eps} \polylog(\frac{d}{\delta})}$ additional two-qubit gates having depth $\bigO{\frac{d}{\eps}\polylog(\frac{d}{\delta})}$.
	
	Finally, there is a random variable $\pmb{z'}\in \bigtimes_{i,j\in [d]}\left([\rho_{ij}-\eps,\rho_{ij}+\eps]\times[\rho_{ij}-\ci\eps,\rho_{ij}+\ci\eps]\right)$ with independent coordinates that is $\delta$-close in total variation distance to $\pmb{z}$ and also satisfies \autoref{eq:suppressedGradientCoor}.	
\end{theorem}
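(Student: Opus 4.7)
The plan is to reduce this to a direct invocation of \autoref{thm:expectationValues} with a carefully chosen set of observables that packages the real and imaginary parts of every entry of $\rho$ as expectation values of Hermitian operators of norm at most one.

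First I would set up the observables. For each ordered pair $(i,j)\in[d]\times[d]$ define
\[
  E^{(0)}_{ij}:=\frac{\ketbra{i}{j}+\ketbra{j}{i}}{2},\qquad E^{(1)}_{ij}:=\frac{\ketbra{i}{j}-\ketbra{j}{i}}{2\ci},
\]
both Hermitian with $\nrm{E^{(k)}_{ij}}\leq 1$. Using $\rho_{ij}=\overline{\rho_{ji}}$ a direct calculation gives $\tr{\rho E^{(0)}_{ij}}=\Re(\rho_{ij})$ and $\tr{\rho E^{(1)}_{ij}}=-\Im(\rho_{ij})$. Each $E^{(k)}_{ij}$ admits an $(a_E)$-block-encoding with $a_E=\bigO{\log d}$, since it has only two nonzero entries whose positions are encoded in the index $(i,j,k)$.

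Next I would bound $\nrm{\sum_{i,j,k}(E^{(k)}_{ij})^{2}}$. For $i\neq j$ both $(E^{(0)}_{ij})^2$ and $(E^{(1)}_{ij})^2$ equal $(\ketbra{i}{i}+\ketbra{j}{j})/4$, while for $i=j$ we get $\ketbra{i}{i}$. Summing yields $\sum_{i,j,k}(E^{(k)}_{ij})^{2}=d\,I$, so its operator norm equals $d$. Setting all precisions equal to $\eps$ makes $\nu=m/\eps=2d^{2}/\eps$, and we may take $\sigma=\sqrt{2d\ln(2d/\delta)}/\eps$, which is smaller than $\nu$, hence $\sigma'=\sigma$.

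Now I would invoke \autoref{thm:expectationValues} with these $m=2d^2$ observables and precisions to obtain estimates $y^{(k)}_{ij}\in[-1,1]$. By \autoref{eq:boostedGradientMult} we get $|y^{(k)}_{ij}-\tr{\rho E^{(k)}_{ij}}|\leq \eps$ for all $(i,j,k)$ with probability $\geq 1-\delta$. Defining $z_{ij}:=y^{(0)}_{ij}-\ci\,y^{(1)}_{ij}$, the triangle inequality (in $\ell_2$ across real and imaginary parts) gives $|z_{ij}-\rho_{ij}|\leq\sqrt{2}\eps$, which is \autoref{eq:boostedGradientCoor}. For the bias, \autoref{eq:suppressedGradientMult} yields $|\mathbb{E}[y^{(k)}_{ij}]-\tr{\rho E^{(k)}_{ij}}|\leq 16\sigma\eps\delta=16\sqrt{2d\ln(2d/\delta)}\delta$ per component, so combining the two components multiplies this by $\sqrt{2}$, giving \autoref{eq:suppressedGradientCoor}. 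The random variable $\pmb{z'}$ is constructed from the coordinate-wise independent $\pmb{y'}$ provided by \autoref{thm:expectationValues}, exactly by the same linear combination $z'_{ij}:=y'^{(0)}_{ij}-\ci\,y'^{(1)}_{ij}$; independence of the pairs $(y^{(0)}_{ij},y^{(1)}_{ij})$ across $(i,j)$ is inherited, and each $z'_{ij}$ lies in the claimed product of intervals.

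Finally, plugging the parameters into the complexity bounds of \autoref{thm:expectationValues} gives the stated $\bigO{(\sqrt{d\log(d/\delta)}/\eps+\log(\log(d)/\delta))\log(d/\delta)}$ queries to $U_\rho$ and its inverse, together with the stated additional two-qubit gate count and depth. The one subtlety I anticipate as the main obstacle is a careful accounting of the gate complexity contributed by implementing the controlled access $U_E=\sum_{i,j,k}\ketbra{ijk}{ijk}\otimes U_{E^{(k)}_{ij}}$ and the linear-combination step inside \autoref{lem:linCombEj}: one must exploit the very simple structure of the $E^{(k)}_{ij}$ (two nonzero entries at known coordinates) so that each call to $U_E$ uses only $\bigO{\log d}$ two-qubit gates together with a bounded number of controlled index operations, and confirm that with $\nu/\sigma'=\bigO{d^{3/2}}$ amplification rounds the total additional-gate bound collapses to the claimed $\bigOt{(\sqrt{d}\,a+d^{3})/\eps}$, dominated by the $\nu m$-type term. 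Everything else is a direct application of \autoref{thm:expectationValues}.
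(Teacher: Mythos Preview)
Your reduction to \autoref{thm:expectationValues} with the observables $E^{(0)}_{ij},E^{(1)}_{ij}$ and the computation $\sum_{i,j,k}(E^{(k)}_{ij})^2=dI$ is exactly what the paper does, and your derivation of \autoref{eq:boostedGradientCoor}, \autoref{eq:suppressedGradientCoor}, the query complexity, and the independent random variable $\pmb{z'}$ is correct and matches the paper.

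The gap is in the gate complexity (and depth). A direct invocation of \autoref{thm:expectationValues} gives additional gates $\bigO{(\sigma' a_\rho+\nu a_E+\nu m)\polylog(\cdot)}$, and with $m=2d^2$, $\nu=2d^2/\eps$ the dominant term $\nu m$ is $\Theta(d^4/\eps)$, not $d^3/\eps$; similarly the depth $\bigO{\nu\polylog(\cdot)}$ is $\bigOt{d^2/\eps}$, not $d/\eps$. Making $U_E$ cheap (your ``only $\bigO{\log d}$ gates per call'') does not help here, because the $\nu m$ term comes from the $\bigO{m}$-gate state-preparation inside the LCU step of \autoref{lem:linCombEj}, which is called $\bigOt{\sigma'\cdot(\nu/\sigma')}=\bigOt{\nu}$ times overall regardless of how simple each $E^{(k)}_{ij}$ is. To get the claimed $d^3/\eps$ gates and $d/\eps$ depth you must \emph{replace} \autoref{lem:linCombEj} inside the proof of \autoref{thm:expectationValues}: since $\sum_{i,j,k} x_{ijk}\gamma_{ijk}E^{(k)}_{ij}$ is $\bigO{d}$-sparse with bounded entries, the sparse block-encoding construction of \cite[Lemmas~47--48]{gilyen2018QSingValTransfArXiv} yields a block-encoding with subnormalization $\bigO{d}$ instead of the $\Theta(d^2)$ from LCU, cutting both the amplification cost and the state-preparation cost by a factor $d$. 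The paper makes exactly this substitution; without it your gate and depth bounds are off by a factor of~$d$.
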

\begin{proof}
	The result follows from \autoref{thm:expectationValues} by observing that
	$\sum_{p\in\{0,1\};i,j\in[d]}(E^{(p)}_{ij})^2=d I$, and a block-encoding of $\sum_{p\in\{0,1\};i,j\in[d]}\ketbra{pij}{pij}\otimes E^{(p)}_{ij}$ can be implemented by $\bigO{\polylog(d)}$ two-qubit gates. We define our estimate as $z_{i,j}:=z_{i,j}^{(0)}+\ci z_{i,j}^{(1)}$.
	
	The second inequality in \autoref{eq:suppressedGradientCoor} follows from the following little computation:
	\begin{align*}
	\sqrt{\ln\left(\frac{2d}{\delta}\right)} \delta
	&\leq \sqrt{\ln\left(2d\right)\delta}\\
	&\Updownarrow\\
	\ln\left(\frac{2d}{\delta}\right)\delta
	&\leq \ln\left(2d\right)\\
	&\Uparrow (\delta\leq \frac16)\\
	\ln\left(\frac{1}{\delta}\right)\delta
	&\leq \frac{5}{6}\ln\left(2d\right)\\
	&\Uparrow (d\geq 1)\\	
	\ln\left(\frac{1}{\delta}\right)\delta
	&\leq \frac{1}{2}\qedhere
	\end{align*}
	The gate complexities follow by replacing \autoref{lem:linCombEj} in the proof or \autoref{thm:expectationValues} by ``sparse block-encoding'' \cite[Lemma 47-48]{gilyen2018QSingValTransfArXiv}. This results in reducing the $\Theta(d^2)$ subnormalization factor coming from the generic result of \autoref{lem:linCombEj} by a factor of $d$ coming from \cite[Lemma 47-48]{gilyen2018QSingValTransfArXiv}.
	\pnote{Add exact implementation with comparison gates.}
	This improves the gate complexities by about a $d$ factor.\footnote{There is a possibility that using the block-encoding of \cite{low2018HamSimNearlyOptSpecNorm} even an about $d^{1.5}$ factor improvement is possible, but one needs to be careful with the error bounds, since they are not poly-logarithmic in \cite{low2018HamSimNearlyOptSpecNorm}.}
\end{proof}

\subsection{Matrix norm conversions}

We now consider the relation between the element-wise $\max$-norm, and the operator norm. To do so we use the following definition and lemma due to~\cite{rudelson2010NonasymptoticTheoryRandMat}:
\begin{definition}[Subgaussian random variable~{\cite[Definition 2.2]{rudelson2010NonasymptoticTheoryRandMat}}]
	A random variable $X$ is \emph{subgaussian} if there exists a $K>0$, called the subgaussian moment of $X$, such that
	\[
	\Prob{|X|>t}\leq 2e^{-t^2/K^2} \text{ for all } t>0.
	\]
\end{definition}

Note that a bounded random variable $X\in[-B,B]$ has subgaussian moment $\leq \sqrt{\ln(2)}B$.

\begin{lemma}[Operator norm of subgaussian matrices~{\cite[Proposition 2.4]{rudelson2010NonasymptoticTheoryRandMat}}]\label{lem:matrixConcentrate}
	Let $X$ be an $N \times n$ random matrix whose entries are independent mean zero subgaussian random variables whose subgaussian moments are bounded by $K$. Then
	\[
	\Prob{\frac{\nrm{X}}{K}> C(\sqrt{N} + \sqrt{n})+t} \leq 2e^{-ct^2}, t \geq 0,
	\]
	where $C$ and $c$ denote positive absolute constants.
\end{lemma}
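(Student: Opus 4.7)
The plan is to follow the standard $\varepsilon$-net approach for controlling the operator norm of a random matrix. By the variational characterization $\nrm{X} = \sup_{u \in S^{n-1},\, v \in S^{N-1}} \ipc{Xu}{v}$, it suffices to uniformly bound this bilinear form over the product of unit spheres, and the key idea is to replace these uncountable suprema by suprema over finite nets.

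First I would fix $(1/4)$-nets $\mathcal{N}_u \subseteq S^{n-1}$ and $\mathcal{N}_v \subseteq S^{N-1}$; standard volume-packing arguments give $|\mathcal{N}_u| \leq 9^n$ and $|\mathcal{N}_v| \leq 9^N$. A short triangle-inequality argument (approximating any $u \in S^{n-1}$ by the nearest $\bar u \in \mathcal{N}_u$, and similarly for $v$, then iterating) yields $\nrm{X} \leq 2 \sup_{\bar u \in \mathcal{N}_u,\, \bar v \in \mathcal{N}_v} |\ipc{X \bar u}{\bar v}|$.

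Next, for fixed unit vectors $\bar u, \bar v$, I would view $Y := \ipc{X \bar u}{\bar v} = \sum_{i,j} X_{ij}\, \bar u_j \bar v_i$ as a weighted sum of independent mean-zero subgaussian random variables with weights $a_{ij} = \bar u_j \bar v_i$ satisfying $\sum_{i,j} a_{ij}^2 = \nrm{\bar u}_2^2 \nrm{\bar v}_2^2 = 1$. A Hoeffding-type inequality for sums of subgaussians then gives $\Pr[|Y| > s] \leq 2 \exp(-c_0 s^2 / K^2)$ for some absolute constant $c_0$. Taking a union bound across all at most $9^{n+N}$ pairs in the product net yields
\[
\Pr\!\left[\nrm{X} > 2s\right] \leq 2 \cdot 9^{n+N} \exp\!\left(-c_0 s^2 / K^2\right).
\]
Setting $2s = K\bigl(C(\sqrt{n}+\sqrt{N}) + t\bigr)$ with $C$ chosen large enough that $c_0 C^2/4 \geq 2\ln 9$, the net cardinality $9^{n+N}$ is absorbed into the $(\sqrt{n}+\sqrt{N})^2$ part of the exponent, and the cross-term and $t^2$ contribution produce the claimed tail of the form $2\exp(-c t^2)$ with an absolute constant $c$.

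The main obstacle is not conceptual --- this is a textbook $\varepsilon$-net argument for operator norms --- but rather the bookkeeping needed to obtain clean absolute constants, in particular verifying that a sum of independent mean-zero subgaussians with $\sum a_{ij}^2 = 1$ remains subgaussian with moment controlled by a constant multiple of $K$. This relies on the equivalence between tail-based and moment-generating-function characterizations of subgaussianity, which itself requires some care but is standard in the literature on concentration of measure.
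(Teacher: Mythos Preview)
Your proposal is correct and follows the standard $\varepsilon$-net argument. Note, however, that the paper does not actually prove this lemma: it is quoted verbatim from \cite[Proposition~2.4]{rudelson2010NonasymptoticTheoryRandMat} and used as a black box. The proof you sketch is essentially the one given in that reference, so there is nothing to compare against in the paper itself.
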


The above lemma shows that if we can estimate all entries of a state $\rho$ independently and in an unbiased way, then in the conversion to the operator norm error we save an essentially $\sqrt{d}$ factor compared to the worst case: this was one of the main motivation for us to develop unbiased phase estimation in \autoref{sec:unbiasedPhaseEst}. We formalize this in the following lemma:

\begin{lemma}
	\label{lem:maxToOperator}
	Let $X\in \C^{d\times d}$ be a matrix, and let $\tilde{X}$ be an $\eps$-approximation of $X$ in the entry-wise $\max$-norm, i.e., $|X_{ij} - \tilde{X}_{ij}| \le \eps$ for each $i,j\in[d]$. Then for the operator norm error we have $\nrm{X-\tilde{X}} \leq \eps d$. Also, there are absolute constants $C',c'>0$ such that, if all entries of $\tilde{X}$ are drawn from independent distributions and $\mathbb{E}[\tilde{X}]=X$, then for every $\tau\geq 1$ we have that  $\nrm{X-\tilde{X}} \leq C' \sqrt{d}\tau \eps$ with probability at least $1-2e^{-c'd\tau^2}$. 
\end{lemma}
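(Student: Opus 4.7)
The plan is to treat the two claims separately. For the deterministic bound, I would observe that $Y := X - \tilde{X}$ has every entry of magnitude at most $\eps$, so both its maximum absolute row sum and its maximum absolute column sum are bounded by $\eps d$. Using the standard inequality $\|Y\| \leq \sqrt{\|Y\|_{1\to 1} \|Y\|_{\infty\to\infty}}$ (or, equivalently, bounding the operator norm by the Frobenius norm $\sqrt{\sum_{i,j}|Y_{ij}|^2}\leq \eps d$) gives $\|Y\| \leq \eps d$ directly.

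For the randomized bound, I would reduce to \autoref{lem:matrixConcentrate}. Write $Y = Y_R + \ci Y_I$ with $Y_R := \Re(Y)$ and $Y_I := \Im(Y)$. Since the entries of $\tilde{X}$ are independent and $\mathbb{E}[\tilde{X}] = X$, the entries of $Y$ are mutually independent and mean zero; so are the entries of the real matrices $Y_R$ and $Y_I$. Because $|Y_{ij}| \leq \eps$ deterministically, each entry of $Y_R$ and of $Y_I$ takes values in $[-\eps,\eps]$, which (via Hoeffding's lemma, or from the trivial observation $\Pr[|Y_{R,ij}|> t]=0$ for $t\geq \eps$ and $\leq 1\leq 2e^{-t^2/K^2}$ for $t<\eps$ with $K = \eps/\sqrt{\ln 2}$) yields subgaussian moment at most $K = O(\eps)$.

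I would then apply \autoref{lem:matrixConcentrate} with $N = n = d$ to $Y_R$ and to $Y_I$, choosing $t = \sqrt{d}\tau$. This gives
\[
\|Y_R\| \leq K\bigl(2C\sqrt{d} + \sqrt{d}\tau\bigr) \leq K(2C+1)\sqrt{d}\tau
\]
with probability at least $1-2e^{-cd\tau^2}$, where in the last step I use $\tau \geq 1$ to absorb the additive constant. An identical bound holds for $\|Y_I\|$, and the triangle inequality $\|Y\| \leq \|Y_R\| + \|Y_I\|$ combined with a union bound yields $\|Y\| \leq C' \sqrt{d}\tau\eps$ with probability at least $1 - 4e^{-cd\tau^2}$, for an appropriate constant $C' = O(K(2C+1)/\eps) = O(1)$.

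The final cosmetic step is replacing the prefactor $4$ by $2$: this is done by noting that the claim is vacuous once $4e^{-cd\tau^2} \geq 1$, and in the non-trivial regime $4e^{-cd\tau^2} \leq 2 e^{-c'd\tau^2}$ for any $c' < c$ with $d\tau^2$ large enough; choosing $c'$ slightly smaller than $c$ suffices for all $\tau \geq 1$. The only mildly subtle point in the proof is ensuring that the splitting into real and imaginary parts preserves independence across entries (which it does) and subgaussianity of each (which follows from boundedness); the rest is constant-tuning.
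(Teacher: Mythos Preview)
Your proposal is correct and follows essentially the same approach as the paper: the deterministic bound via the Frobenius norm (which the paper uses explicitly), and the randomized bound via \autoref{lem:matrixConcentrate} applied to the centered, bounded-entry error matrix with $t$ proportional to $\sqrt{d}\tau$. The only difference is that you split $Y$ into real and imaginary parts and apply the concentration lemma to each separately, whereas the paper applies it directly to $E = \tilde{X}-X$ (relying on the subgaussian definition via $|X|$, which accommodates complex entries); your version is arguably more careful given that the cited Rudelson--Vershynin result is stated for real matrices, but the substance of the argument is the same.
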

\begin{proof} 
	Let $E = \tilde{X}-X$ be the matrix of errors. For the first statement we note that
	\[ \nrm{E} \leq \nrm{E}_2 \leq d\nrm{E}_{\max} \leq \eps d, \]
	where the first inequality follows from the relation between the operator and the Frobenius norms, and the second follows from a standard norm conversion on the $d^2$-dimensional vector of entries (we use $\nrm{\cdot}_{\max}$ for the entry-wise $\max$-norm).

	For the second statement we apply \autoref{lem:matrixConcentrate} to $E$, noting that each matrix element has subgaussian moment $\leq \sqrt{\ln(2)}\eps$, thus implying that for every $t\geq 0$ we have
	\[
	\Prob{\nrm{E} > \sqrt{\ln(2)}\eps\trm{2C\sqrt{d}+t}} \leq 2e^{-ct^2}.
	\]
	We conclude by setting $t:= 2C\sqrt{d}\tau$ showing that
	\[
	\Prob{\nrm{E} > (2+2\tau)\sqrt{\ln(2)}C\sqrt{d}\eps} \leq
	\Prob{\nrm{E} > 4\sqrt{\ln(2)}C\sqrt{d}\tau\eps} \leq 2e^{-4cC^2d\tau^2},
	\]
	so that we can choose $C':=4\sqrt{\ln(2)}C$ and $c':=4cC^2$.
\end{proof}

When considering estimates of mixed quantum states we mostly consider Schatten $q$-norms for error bounds (that is, the $q$-norm of the vector of singular values of the difference between the actual state and our estimate). The most common values for $q$ are $q=\infty$ (operator norm), $q=2$ (Frobenius norm), and $q=1$ (trace norm). Using a tiny modification of our norm conversion result, \autoref{cor:norm-conversion}, we can obtain the following as a corollary.

\begin{corollary}\label{lem:operatorToSchatten}
    Let $\eps \in (0,1]$, $1 \leq q$, and let $\rho \in \C^{d \times d}$ be a rank-$r$ quantum state. In order to obtain an $\eps$-Schatten-$q$-norm estimate of $\rho$, an $\eta$-operator norm estimate suffices, with
    \[
    \eta = \max\left\{\left(\frac{\eps}{10}\right)^{\frac{1}{1-\frac1q}}, \frac{\eps}{2(2r)^{\frac1q}}\right\}.
    \]
\end{corollary}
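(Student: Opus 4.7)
The plan is to follow the template of \autoref{cor:norm-conversion}, using two different post-processings of $\tilde\rho$ for the two terms in the max. In both cases we may assume $\tilde\rho$ is Hermitian: replacing it by $(\tilde\rho + \tilde\rho^{\dagger})/2$ only improves the operator-norm error to $\rho$, because $\rho^{\dagger}=\rho$ gives $\|\rho - \frac{1}{2}(\tilde\rho + \tilde\rho^{\dagger})\| = \|\frac{1}{2}(C + C^{\dagger})\| \leq \|C\|$ with $C := \rho-\tilde\rho$. Diagonalize $\tilde\rho = \sum_i \tilde\lambda_i \ketbra{v_i}{v_i}$.

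For the second (rank-based) term, set $\bar\rho := \sum_{|\tilde\lambda_i| > \eta}\tilde\lambda_i \ketbra{v_i}{v_i}$, mirroring the truncation of \autoref{lem:new-norm-conversion}. By Weyl's inequality the sorted eigenvalues of $\tilde\rho$ and of $\rho$ differ by at most $\eta$; since $\rho$ is rank-$r$, its sorted eigenvalues from position $r+1$ onward vanish, forcing the corresponding eigenvalues of $\tilde\rho$ to satisfy $|\tilde\lambda_i|\leq \eta$, so at most $r$ of the $\tilde\lambda_i$ can exceed $\eta$ in magnitude. Hence $\mathrm{rank}(\bar\rho) \leq r$ and $\mathrm{rank}(\rho-\bar\rho)\leq 2r$. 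The discarded eigenvalues have magnitude at most $\eta$, so $\|\tilde\rho-\bar\rho\|\leq\eta$ and $\|\rho-\bar\rho\|\leq 2\eta$. Since $\rho-\bar\rho$ has at most $2r$ nonzero singular values, each bounded by $2\eta$, we get $\|\rho-\bar\rho\|_q \leq (2r)^{1/q}\cdot 2\eta$, which is $\leq\eps$ exactly when $\eta\leq \eps/(2(2r)^{1/q})$, matching the second term.

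For the first (rank-independent) term, we instead take $\bar\rho$ to be any operator-norm nearest point of $\tilde\rho$ inside the convex set $\{X:\|X\|_1\leq 1\}$. Since $\rho$ itself lies in this set and $\|\rho-\tilde\rho\|\leq\eta$, the projection satisfies $\|\bar\rho-\tilde\rho\|\leq\eta$, hence $\|\rho-\bar\rho\|\leq 2\eta$ and $\|\rho-\bar\rho\|_1 \leq \|\rho\|_1+\|\bar\rho\|_1 \leq 2$. The standard H\"older-type interpolation for Schatten norms, $\|A\|_q \leq \|A\|^{1-1/q}\|A\|_1^{1/q}$ (which follows by writing $\sigma_i^q \leq \|A\|^{q-1}\sigma_i$ and summing over the singular values), then yields
\[
\|\rho-\bar\rho\|_q \leq (2\eta)^{1-1/q}\cdot 2^{1/q},
\]
and a routine inversion shows this is at most $\eps$ whenever $\eta\leq (\eps/2)^{q/(q-1)}$, which is comfortably implied by the stated $\eta\leq (\eps/10)^{q/(q-1)}$ (the constant $1/10$ absorbs the slack in the triangle inequalities).

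The conceptual point where care is needed is that singular values of $\rho-\bar\rho$ are \emph{not} simply coordinate-wise differences of singular values of $\rho$ and $\bar\rho$ (their eigenbases need not align), so one cannot invoke \autoref{lem:new-norm-conversion} directly on a singular-value vector. The fix is to use two distinct post-processings: the spectral truncation controls the rank of $\rho-\bar\rho$, while the Schatten-$1$-ball projection controls its trace norm. Taking the larger (hence looser) of the two resulting sufficient values of $\eta$ yields the stated $\max$; the Schatten-$1$-ball projection is only used non-constructively, which is fine for a sufficiency statement of this form.
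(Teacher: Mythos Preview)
Your proof is correct. The rank-based (second) term is handled essentially as in the paper: both of you produce a rank-$\leq r$ Hermitian matrix within $\eta$ of $\tilde\rho$ (you by spectral truncation plus Weyl, the paper by taking any PSD rank-$\leq r$ trace-$\leq 1$ matrix within $\eta$, with an explicit construction in a footnote), then bound $\mathrm{rank}(\rho-\bar\rho)\leq 2r$ and apply H\"older.

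For the rank-independent (first) term your route genuinely differs. The paper reuses the \emph{same} $\tilde\rho'$ and invokes \autoref{lem:new-norm-conversion} at the level of singular values: it truncates both $\rho$ and $\tilde\rho'$ below $2\eta$, bounds the outer two terms by $4\eta^{(q-1)/q}$ each, and the middle term by $(1/\eta)^{1/q}\cdot 2\eta$, summing to $10\eta^{(q-1)/q}$. You instead project $\tilde\rho$ to the trace-norm unit ball and apply the Schatten interpolation $\|A\|_q\le \|A\|^{1-1/q}\|A\|_1^{1/q}$ directly to $\rho-\bar\rho$, which is shorter and in fact yields the sharper constant $2\eta^{(q-1)/q}$ (so $(\eps/2)^{q/(q-1)}$ would already suffice). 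Note your projection is not ``non-constructive'': it is a finite-dimensional convex program, and in any case all you really use is the existence of \emph{some} point in the ball within operator-norm distance $\eta$ of $\tilde\rho$, for which $\rho$ itself is a witness. The only structural difference is that the paper bounds both terms with a single post-processed estimate $\tilde\rho'$, whereas you use two; this is harmless for the stated sufficiency claim since $r,q,\eps$ (hence which term in the $\max$ is active) are known.
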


\begin{proof}
    Let $\tilde{\rho}$ be an $\eta$-operator norm estimate of $\rho$.
    We can assume without loss of generality that $\tilde{\rho}$ is Hermitian (otherwise take $\frac{\tilde{\rho}+\tilde{\rho}^\dagger}{2}$). First, since $\nrm{\tilde{\rho} - \rho} \leq \eta$, there must exist a density matrix $\sigma$ such that $\sigma \succeq 0$, $\nrm{\sigma}_1 \leq 1$, $\sigma$ is of rank at most $r$, and $\nrm{\tilde{\rho} - \sigma} \leq \eta$, because after all $\rho$ is an example of such a density matrix $\sigma$. Let $\tilde{\rho}'$ be any such $\sigma$.\footnote{Removing all negative eigenvalues of $\tilde{\rho}-\eta I$ produces such a matrix $\tilde{\rho}'$. Clearly, $\tilde{\rho}-\rho\preceq \eta I$ and so $\tilde{\rho}-\eta I\preceq \rho$, implying that the $n$-th largest eigenvalue of $\rho$ majorates that of $\tilde{\rho}-\eta I$ and consequently also that of $\tilde{\rho}'$. This then implies that the rank of $\tilde{\rho}'$ is at most $r$ and that $\nrm{\tilde{\rho}'}_1\leq \nrm{\rho}_1=1$; $\nrm{\tilde{\rho}'-\tilde{\rho}}\leq \eta$ can be shown similarly.} Then, by the triangle inequality we obtain that $\nrm{\tilde{\rho}' - \rho} \leq \nrm{\tilde{\rho}' - \tilde{\rho}} + \nrm{\tilde{\rho} - \rho} \leq \eta + \eta = 2\eta$.
    
    Thus, we find that $\rho$ and $\tilde{\rho}'$ are both of rank at most $r$, and therefore by the subadditivity of rank, $\rho - \tilde{\rho}'$ is of rank at most $2r$. This implies by H\"older's inequality that $\nrm{\rho - \tilde{\rho}'}_q \leq (2r)^{\frac1q} \cdot 2\eta$.
    
    On the other hand, from the norm conversion lemma, \autoref{lem:new-norm-conversion}, there exist operators $\tilde{\rho}'_{\geq 2\eta}$ and $\rho_{\geq 2\eta}$ such that they are both $\min\{4\eta^{(q-s)/q}, 3r^{1/q}\eta\}$-close to their originals in Schatten-$q$-norm and both have rank at most $1/(2\eta)$. Then, we obtain by the triangle inequality that
    \[
    \nrm{\tilde{\rho}' - \rho}_q \leq \nrm{\tilde{\rho}' - \tilde{\rho}'_{\geq2\eta}}_q + \nrm{\tilde{\rho}'_{\geq2\eta} - \rho_{\geq2\eta}}_q + \nrm{\rho_{\geq2\eta} - \rho}_q \leq 2 \cdot 4\eta^{\frac{q-1}{q}} + \left(\frac{1}{\eta}\right)^{\frac{1}{q}}2\eta = 10\eta^{\frac{q-1}{q}}.
    \]
    Combining both results yields $\nrm{\tilde{\rho}' - \rho}_q \leq \min\{10\eta^{(q-1)/q},(2r)^{\frac1q}2\eta\} = \eps$.	
\end{proof}

\subsection{Generic mixed-state tomography}\label{subsec:mixed-state-tomo}

We now have all the necessary tools to construct a tomography algorithm with $\bigOt{dr/\eps}$ sample complexity.

\begin{theorem}
	\label{thm:mixedTomoSch}
	Let $\eps,\delta\in(0,\frac{1}{3}]$, $q\in[1,\infty]$, and let $U_{\rho}$ be an $a$-qubit state-preparation unitary for a purification of $\rho\in\C^{d \times d}$. There is a quantum algorithm that makes
	$
	\bigO{\frac{d}{\eps}r^{\frac{1}{q}}\log^\frac{3}{2}\left(\frac{d}{\delta\eps}\right)\sqrt{\ceil{\frac{\log(1/\delta)}{d}}}}
	$
	queries to $U_{\rho}$ and $U_{\rho}^{\dag}$, and outputs a positive semidefinite $\tilde{\rho}'$ such that with probability at least $1-\delta$ we have $\nrm{\rho-\tilde{\rho}'}_q\leq \eps$.
	The quantum algorithm can be implemented by 
	$\bigO{\!\frac{d a+d^{3.5}}{\eps}r^{\frac{1}{q}} \polylog(\frac{d}{\delta\eps})}\!$ additional two-qubit gates having depth $\bigO{\frac{d^{1.5}}{\eps}r^{\frac{1}{q}}\polylog(\frac{d}{\delta\eps})}$.
\end{theorem}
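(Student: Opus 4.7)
The plan is to combine three ingredients developed earlier in the paper: the coordinate-wise unbiased entry estimator of \autoref{thm:coordianteWiseDensTomo}, the max-to-operator norm conversion of \autoref{lem:maxToOperator} that saves a $\sqrt{d}$ factor for unbiased estimates, and the operator-to-Schatten $q$-norm conversion of \autoref{lem:operatorToSchatten} that costs only a factor $r^{1/q}$. Running these inequalities backwards, Schatten-$q$ error $\eps$ requires operator-norm error $\eta_{\mathrm{op}}=\Theta(\eps / r^{1/q})$ (taking the rank-dependent branch of \autoref{lem:operatorToSchatten}; the other branch only dominates in the trivial large-$\eps$ regime), which in turn requires entry-wise error
$$\eta=\Theta\!\left(\frac{\eps}{r^{1/q}\sqrt{d\lceil\log(1/\delta)/d\rceil}}\right),$$
and \autoref{thm:coordianteWiseDensTomo} estimates each entry to precision $\eta$ with $\widetilde{O}(\sqrt{d}/\eta)$ queries, giving the claimed bound after absorbing polylogarithmic factors into $\log^{3/2}(d/(\delta\eps))$.

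More concretely, I would first invoke \autoref{thm:coordianteWiseDensTomo} with entry-wise precision $\eta$, with some small failure probability $\delta':=\mathrm{poly}(\eps/d)\cdot\delta$ chosen so that the bias bound $O(\sqrt{d\ln(d/\delta')}\,\delta')$ is much smaller than $\eta$. Let $z\in\C^{d\times d}$ denote the entry-wise estimates returned. I would then form the Hermitian matrix $\tilde\rho:=(Z+Z^\dagger)/2$; since \autoref{thm:coordianteWiseDensTomo} samples each observable $E^{(p)}_{ij}$ independently, the upper triangle of $\tilde\rho$ has independent (complex) entries, which is exactly the setting required by \autoref{lem:maxToOperator}. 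Working with the coupled ``clean'' variable $\pmb{z'}$ guaranteed by \autoref{thm:coordianteWiseDensTomo}, each $\tilde\rho_{ij}$ lies in the $O(\eta)$-ball around $\rho_{ij}$, and the corresponding Hermitian matrix $\tilde\rho'$ differs from $\rho$ by at most $\eta$ in $\max$-norm and only $O(\sqrt{d\ln(1/\delta')}\delta')$ in expected entry.

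Second, I would decompose $\tilde\rho'-\rho=(\tilde\rho'-\mathbb{E}[\tilde\rho'])+(\mathbb{E}[\tilde\rho']-\rho)$. The first summand has independent mean-zero entries bounded by $O(\eta)$, so \autoref{lem:maxToOperator} (with $\tau$ set so that $c'd\tau^2\geq \ln(4/\delta)$, i.e., $\tau=\Theta(\max\{1,\sqrt{\lceil\log(1/\delta)/d\rceil}\})$) bounds its operator norm by $O(\sqrt{d}\,\tau\,\eta)$ except with probability $\delta/2$. The second summand is a deterministic matrix whose operator norm is at most $d\cdot O(\sqrt{d\ln(1/\delta')}\delta')$, and by our choice of $\delta'$ this is negligible compared to $\eta_{\mathrm{op}}$. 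A final union bound with the $\delta/2$ failure probability of \autoref{thm:coordianteWiseDensTomo} and the TV-distance $\delta'$ between $\pmb{z}$ and $\pmb{z'}$ (also absorbed in $\delta$) yields $\nrm{\tilde\rho-\rho}\leq \eta_{\mathrm{op}}$ with probability at least $1-\delta$. To produce a PSD output as required, I would follow the construction in the footnote of \autoref{lem:operatorToSchatten}: form $\tilde\rho'$ by zeroing out the negative eigenvalues of $\tilde\rho-\eta_{\mathrm{op}}I$; this both ensures $\tilde\rho'\succeq 0$ and at most doubles the operator-norm error, which \autoref{lem:operatorToSchatten} converts to the target Schatten-$q$ bound (after rescaling $\eps$ by a constant).

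The main obstacle I anticipate is bookkeeping: one has to check that the bias term of \autoref{thm:coordianteWiseDensTomo} can be made negligible by sufficiently shrinking $\delta'$ \emph{without} degrading the query complexity beyond the claimed polylogarithmic factors, since the complexity of \autoref{thm:coordianteWiseDensTomo} depends on $\log(d/\delta')$, and since $\tau$ itself carries an additional $\sqrt{\log(1/\delta)}$ whenever $d\leq \log(1/\delta)$. A careful choice such as $\delta'=\delta\eps/\mathrm{poly}(d)$ keeps $\log(1/\delta')=O(\log(d/(\eps\delta)))$, so that the product of the $\log(d/\delta')$ factor from \autoref{thm:coordianteWiseDensTomo} and the $\sqrt{\log(d/\delta')}$ factor induced by the subgaussian tail in \autoref{lem:matrixConcentrate} yields exactly the $\log^{3/2}(d/(\delta\eps))$ factor appearing in the statement; the gate complexity and depth then follow directly from the corresponding bounds of \autoref{thm:coordianteWiseDensTomo}.
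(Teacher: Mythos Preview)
Your proposal is correct and follows essentially the same route as the paper: invoke \autoref{thm:coordianteWiseDensTomo} with a sufficiently small $\delta'$ so that the bias term becomes negligible, pass to the coupled independent-coordinate variable $\pmb{z'}$, split the error into the mean-zero fluctuation (controlled by \autoref{lem:maxToOperator} with $\tau=\Theta(\sqrt{1+\log(1/\delta)/d})$) and the deterministic bias, and finish with \autoref{lem:operatorToSchatten}. The one cosmetic difference is the order of Hermitization: you symmetrize to $(Z+Z^\dagger)/2$ before applying the subgaussian concentration, which formally breaks the ``all entries independent'' hypothesis of \autoref{lem:maxToOperator}; the paper instead applies the concentration to the raw (non-Hermitian) $\pmb{z'}$ and Hermitizes only at the very end when invoking \autoref{lem:operatorToSchatten}, which sidesteps this issue since $\|(E+E^\dagger)/2\|\le\|E\|$.
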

\begin{proof}
	First we prove the statement for operator norm by combining \autoref{thm:coordianteWiseDensTomo} and 	\autoref{lem:maxToOperator}.
	We set $\delta':=\min\{\frac{\delta}{2},\frac{\eps^2}{2^{11} d^3 \ln(2d)}\}$,
	$\tau:=\sqrt{1+\frac{\ln(4/\delta)}{d c'}}$, and 
	$\eps':=\min\left\{\frac{\eps}{4 C' \sqrt{d} \tau},\frac16\right\}$ and invoke \autoref{thm:coordianteWiseDensTomo} providing us an estimate 
	$\pmb{\tilde{\rho}}\in\C^{d^2}$ that is $\frac{\delta}{2}$-close in total variation distance to a random variable $\pmb{\rho'}\in \bigtimes_{i,j\in [d]}\left([\rho_{ij}-\eps,\rho_{ij}+\eps]\times[\rho_{ij}-\ci\eps,\rho_{ij}+\ci\eps]\right)$. Due to \eqref{eq:suppressedGradientCoor} we have that $\mathbb{E}[\rho'_{ij}]-\rho_{ij}\leq \frac{\eps}{2d}$ and so by \autoref{lem:maxToOperator} we have $\nrm{\mathbb{E}[\rho']-\rho}\leq \frac{\eps}{2}$. Using our choice of $\tau$ \autoref{lem:maxToOperator} also implies that $\nrm{\mathbb{E}[\rho']-\rho'} \leq\frac{\eps}{2}$ with probability at least $1-\frac{\delta}{2}$. By the triangle inequality we get that $\nrm{\rho-\rho'} \leq\eps$ with probability at least $1-\frac{\delta}{2}$. Since $\pmb{\rho'}$ and $\pmb{\tilde{\rho}}$ are $\frac{\delta}{2}$-close in total variation distance this also implies that $\nrm{\rho-\tilde{\rho}} \leq\eps$ with probability at least $1-\delta$.
	
	As per \autoref{thm:coordianteWiseDensTomo} the algorithm makes
	$
	\bigO{\left(\frac{d\sqrt{\log\left(\frac{d}{\delta\eps}\right)\left(1+\frac{\log(1/\delta)}{d}\right)}}{\eps}+\log\left(\frac{d}{\delta\eps}\right)\right)\log\left(\frac{d}{\delta\eps}\right)}
	$
	queries to $U_{\rho}$ and $U_{\rho}^{\dag}$, and can be implemented by $\bigO{\frac{d a+d^{3.5}}{\eps} \polylog(\frac{d}{\delta\eps})}$ additional two-qubit gates having depth $\bigO{\frac{d^{1.5}}{\eps}\polylog(\frac{d}{\delta\eps})}$.
	
	In order to get a positive semidefinite $\tilde{\rho}'$ and to get the results for all Schatten-norms we apply \autoref{lem:operatorToSchatten} to $\frac{\tilde{\rho}+\tilde{\rho}^\dagger}{2}$ and adjust the value of $\eps$ accordingly.
\end{proof}

\section{Lower bounds}
In this section we prove lower bounds for state tomography in several different access models. The first model we consider, in \autoref{sec:sampling-lb}, is the case in which we have access to conditional copies of the state, i.e., we receive states of the form $(\ket{0}\ket{\psi}+\ket{1}\ket{0})/\sqrt{2}$. From here, we derive matching lower bounds for all the algorithms constructed in \autoref{s:classicaltomo}. In the second model, considered in \autoref{subsec:lb-with-inverses}, we assume to have access to a state-preparation unitary and its inverse. The lower bounds derived in this subsection match the complexities of the algorithm constructed in \autoref{s:quantumtomo}, up to logarithmic factors. Finally, in \autoref{subsec:lb-mixed-state}, we consider the setting where we access to a unitary constructing a purification of a density matrix that we wish to estimate. We derive a lower bound when the desired precision is w.r.t.\ the Frobenius norm, and it matches the complexity of the algorithm constructed in \autoref{s:mixed-state-tomo}.

Note that van Apeldoorn~\cite{apeldoorn2021QProbOraclesMulitDimAmpEst} gives a very similar lower bound result in the pure-state setting where we have access to a state-preparing unitary. However, in the setting of van Apeldoorn the unitary prepares a state of the form $\sum_j \sqrt{p_j} \ket{j} \ket{\phi_j}$, whereas in this paper the state is of the form $\sum_j \sqrt{p_j} \ket{j}$, i.e., without the additional states entangled with $\ket{j}$. Hence our input model is stricter and requires its own lower bound.

The general proof strategy in \autoref{sec:sampling-lb} and \autoref{subsec:lb-with-inverses} is very similar -- we start by proving a lower bound on estimating probability distributions in the $\ell_1$-norm, then use a sequence of reductions to obtain lower bounds on quantum pure-state tomography in any $\ell_q$-norm with $q \geq 2$. Since the reductions we use are identical in both cases, we start by presenting it here, and then focus in \autoref{sec:sampling-lb} and \autoref{subsec:lb-with-inverses} on proving the lower bound for probability distribution reconstruction in $\ell_1$-norm separately for both input models afterwards.

\begin{lemma}
\label{lem:eps_ellq_prob_est}
  Let $0 < \eps \leq 1$, $s > 0$, $q \in [1,\infty]$, and suppose that in order to produce an $\eps$-$\ell_1$-estimate of any probability distribution $p \in \Delta_d$ with high probability, one needs to perform at least $\Omega(d/\eps^s)$ queries. Then
  \[
 \Omega\trm{\min\left\{\frac{1}{\eps^{\frac{1}{1-\frac1q}}},\frac{d^{1-s+\frac{s}{q}}}{\eps^s} \right\}}
  \]
  queries are necessary to find a $\eps$-$\ell_q$-estimate of $p$ with high probability.
\end{lemma}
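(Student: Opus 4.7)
The plan is to reduce $\ell_1$-probability estimation to $\ell_q$-estimation via padding and norm conversion. Suppose we have an algorithm $\mathcal{A}$ that produces an $\eps$-$\ell_q$-estimate of any $d$-dimensional probability vector using $T(d,\eps)$ queries. For any $d' \le d$, I can embed a $d'$-dimensional probability distribution $p'$ into $\R^d$ by appending $d-d'$ zero coordinates, run $\mathcal{A}$ to obtain an $\eps$-$\ell_q$-estimate, and then truncate the output to its first $d'$ coordinates. Truncation does not increase any $\ell_q$-norm of the error, so this yields an $\eps$-$\ell_q$-estimate of $p'$ in dimension $d'$ using $T(d,\eps)$ queries.

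Next, Hölder's inequality (standard norm conversion) gives $\|v\|_1 \le (d')^{1-1/q}\|v\|_q$ for any $v\in\R^{d'}$, so the estimate of $p'$ is automatically an $\eps'$-$\ell_1$-estimate with $\eps' := \eps (d')^{1-1/q}$. Provided that $\eps' \le 1$ (so that the hypothesized $\ell_1$ lower bound is nontrivial), the assumption of the lemma forces
\[
T(d,\eps) \;\ge\; \Omega\!\left(\frac{d'}{(\eps')^{s}}\right) \;=\; \Omega\!\left(\frac{(d')^{\,1 - s(1-1/q)}}{\eps^{s}}\right) \;=\; \Omega\!\left(\frac{(d')^{\,1-s+s/q}}{\eps^{s}}\right).
\]
The free parameter $d'$ must satisfy $d' \le d$ and $\eps (d')^{1-1/q} \le 1$, i.e.\ $d' \le \eps^{-1/(1-1/q)}$.

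To obtain the strongest lower bound I pick $d'$ as large as these two constraints permit, namely $d' = \min\{d,\, \lfloor \eps^{-1/(1-1/q)} \rfloor\}$. Substituting the two candidate values: if $d \le \eps^{-1/(1-1/q)}$ then $d' = d$ and the bound becomes $\Omega(d^{\,1-s+s/q}/\eps^{s})$; otherwise $d' = \eps^{-1/(1-1/q)}$ and a short exponent calculation,
\[
\frac{1-s+s/q}{1-1/q} + s \;=\; \frac{(1-s+s/q) + s(1-1/q)}{1-1/q} \;=\; \frac{1}{1-1/q},
\]
gives the bound $\Omega\!\bigl(\eps^{-1/(1-1/q)}\bigr)$. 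The minimum of the two expressions is exactly $\Omega\!\bigl(\min\{\,\eps^{-1/(1-1/q)},\, d^{\,1-s+s/q}/\eps^{s}\}\bigr)$, as claimed.

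The only subtle step is step four, verifying that the derived $\eps' = \eps(d')^{1-1/q}$ is a legal precision for the $\ell_1$ hypothesis and that, at the transition between the two regimes of the min, both branches of the bound line up. This is settled by the choice $d' = \min\{d,\eps^{-1/(1-1/q)}\}$, which makes $\eps'\le 1$ automatic and simultaneously saturates whichever of the two constraints is binding; all remaining steps are routine algebra, as sketched above.
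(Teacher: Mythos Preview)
Your argument is correct and is essentially identical to the paper's proof: both embed a $d'$-dimensional distribution into the first $d'$ coordinates, apply H\"older to convert the $\ell_q$-error into an $\ell_1$-error $\eps(d')^{1-1/q}$, invoke the assumed $\ell_1$ lower bound, and then optimize over $d'=\min\{d,\lfloor \eps^{-1/(1-1/q)}\rfloor\}$. The paper splits explicitly into the two regimes $\eps\le d^{-(1-1/q)}$ and $\eps> d^{-(1-1/q)}$, whereas you phrase it as a single choice of $d'$, but the substance is the same.
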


\begin{proof}
  Suppose that $q > 1$ and $\varepsilon \leq 1/d^{1-\frac1q}$. Then, by H\"older's inequality,
  \[
  \nrm{\widetilde{p} - p}_1 \leq d^{1-\frac1q} \nrm{\widetilde{p} - p}_q \leq d^{1-\frac1q}\varepsilon \leq 1,
  \]
  and the number of queries that is required scales as
  \[
  \Omega\trm{\frac{d}{\trm{d^{1-\frac1q}\varepsilon}^s}} = \Omega\trm{\frac{d^{1-s+\frac{s}{q}}}{\varepsilon^s}}.
  \]
  
  This leaves the case where $q > 1$ and $1/d^{1-\frac1q} < \eps \leq 1$. This immediately implies that
  \[
    1 \leq \frac{1}{\eps^{\frac{1}{1-\frac1q}}} < d^{\frac{1-\frac1q}{1-\frac1q}} = d.
  \]
  Thus, we can choose
  \[
    d' = \left\lfloor \frac{1}{\eps^{\frac{1}{1-\frac1q}}} \right\rfloor,
  \]
  and observe that it is an integer between $1$ and $d$. It follows that
  \[
    (d')^{1-\frac1q}\eps \leq \trm{\frac{1}{\eps^{\frac{1}{1-\frac1q}}}}^{1-\frac1q} \cdot \eps = 1.
  \]
  Note that we can embed any $d'$-dimensional probability distribution $p' \in \Delta_{d'}$ into the first $d'$ coordinates of $p \in \Delta_d$. Moreover, any $\eps$-$\ell_q$-estimate $\tilde{p}$ of $p$ naturally leads to an approximation $\tilde{p}'$ to $p'$ by only considering the first $d'$ entries of $\tilde{p}$. Using H\"older's inequality, we find that
  \[
    \nrm{\tilde{p}' - p'}_1 \leq (d')^{1-\frac1q}\nrm{\tilde{p}' - p'}_q \leq (d')^{1-\frac1q}\nrm{\tilde{p} - p}_q \leq (d')^{1-\frac1q}\eps \leq 1,
  \]
  and hence the number of queries in order to find an $\eps$-$\ell_q$-estimate of $p$ scales at least as
  \[
  \Omega\trm{\frac{d'}{\trm{(d')^{1-\frac1q}\eps}^s}} = \Omega\trm{\frac{(d')^{1-s+\frac{s}{q}}}{\eps^s}} = \Omega\trm{\frac{1}{\eps^{\frac{1}{1-\frac1q}}}}.
  \]
  Since this expression is indeed smaller than $d^{1/q}/\eps$ precisely when $\eps > 1/d^{1-1/q}$, we find that the lower bound becomes
  \[\Omega\trm{\min\left\{\frac{1}{\eps^{\frac{1}{1-\frac1q}}}, \frac{d^{1-s+\frac{s}{q}}}{\eps^s}\right\}},\]
  as claimed.
\end{proof}

We know from the first norm-conversion lemma, i.e., \autoref{lem:linftol2}, that obtaining an estimate of the amplitudes of a quantum state also gives you an estimate of the probability distribution arising from their absolute values squared. Therefore, our reduction from the previous lemma can be extended to give lower bounds on the problem of estimating a quantum state as well.

\begin{lemma}
\label{lem:ell1-reduction}
  Let $0 < \eps \leq 1$, $q,s \in [1,\infty]$, and suppose that in order to produce an $\eps$-$\ell_1$-estimate of any probability distribution $p \in \Delta_d$, defined as $p_j = |\alpha_j|^2$ with $\alpha \in \C^d$, with high probability, one needs to perform at least $\Omega(d/\eps^s)$ queries. Then
  \[
 \Omega\trm{\min\left\{\frac{1}{\eps^{\frac{1}{\frac12-\frac1q}}},\frac{d^{1-\frac{s}{2}+\frac{s}{q}}}{\eps^s} \right\}}
  \]
  queries are necessary to find a $\eps$-$\ell_q$-estimate of $\alpha$ with high probability.
\end{lemma}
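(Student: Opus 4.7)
The plan is to reduce $\ell_q$-norm estimation of the amplitudes $\alpha$ to $\ell_t$-norm estimation of the probability vector $p$ via \autoref{lem:linftol2}, then invoke \autoref{lem:eps_ellq_prob_est} to obtain the claimed lower bound. Throughout I will focus on $q \geq 2$, which is the regime where the statement is non-trivial; for $q \in [1,2)$ the first term in the $\min$ is already smaller than a constant so the bound is vacuous (or only the second term applies after a trivial H\"older conversion).

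Step one: set $t := \frac{1}{\frac{1}{q}+\frac{1}{2}} = \frac{2q}{q+2} \in [1,2]$ and apply the previous lemma with the norm exponent equal to $t$. The assumption that $\Omega(d/\eps^s)$ queries are required for $\eps$-$\ell_1$-estimation of $p$ feeds directly into \autoref{lem:eps_ellq_prob_est}, giving
\[
\Omega\!\left(\min\!\left\{\frac{1}{\eps^{\frac{1}{1-\frac{1}{t}}}},\frac{d^{1-s+\frac{s}{t}}}{\eps^{s}}\right\}\right)
\]
queries for $\eps$-$\ell_t$-estimation of $p$. Substituting $\frac{1}{t} = \frac{1}{q}+\frac{1}{2}$ yields $1-\frac{1}{t} = \frac{1}{2}-\frac{1}{q}$ and $1-s+\frac{s}{t} = 1-\frac{s}{2}+\frac{s}{q}$, which already matches the exponents appearing in the target statement.

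Step two: show that any algorithm producing an $\eps$-$\ell_q$-estimate $\tilde{\alpha}$ of $\alpha$ also produces an $O(\eps)$-$\ell_t$-estimate of $p$. Pointwise absolute value is a contraction on $\C$ (i.e.\ $\bigl||\tilde{\alpha}_j| - |\alpha_j|\bigr| \leq |\tilde{\alpha}_j-\alpha_j|$), so $|\tilde{\alpha}|$ is an $\eps$-$\ell_q$-estimate of $|\alpha|$. \autoref{lem:linftol2} (applied with precisely the pairing $q,t$ as defined above) then converts this into a $4\eps$-$\ell_t$-estimate of $p$, using only $O(d)$ classical post-processing gates and no additional queries to the input.

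Step three: combine the two reductions. If some algorithm solved $\eps$-$\ell_q$-estimation of $\alpha$ using $o\!\left(\min\{\eps^{-1/(1/2-1/q)},\,d^{1-s/2+s/q}/\eps^s\}\right)$ queries, then by Step two it would also solve $4\eps$-$\ell_t$-estimation of $p$ in the same number of queries, contradicting the lower bound from Step one (after absorbing the factor $4$ into constants). Hence the same asymptotic lower bound applies to $\ell_q$-estimation of $\alpha$, which is exactly the claim.

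There is essentially no hard step here: the argument is a clean composition of two already-established lemmas. The only mild subtlety is bookkeeping the exponents through the substitution $t = 2q/(q+2)$, and confirming that the parameter ranges required by \autoref{lem:linftol2} (namely $q \in [2,\infty]$ and $t \in [1,2]$) and by \autoref{lem:eps_ellq_prob_est} ($t \geq 1$) are both satisfied by this choice. The constant $4$ from \autoref{lem:linftol2} is harmless since it is absorbed in the asymptotic $\Omega(\cdot)$.
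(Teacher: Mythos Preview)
Your proof is correct and follows essentially the same route as the paper: apply \autoref{lem:linftol2} to convert an $\eps$-$\ell_q$-estimate of $\alpha$ into a $4\eps$-$\ell_t$-estimate of $p$ with $t = 1/(1/q+1/2)$, then invoke \autoref{lem:eps_ellq_prob_est} and substitute the exponents. Your presentation is more explicit about the bookkeeping (the absolute-value contraction, the exponent arithmetic, and the $q\geq 2$ restriction needed for \autoref{lem:linftol2}), but the argument is the same.
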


\begin{proof}
  By \autoref{lem:linftol2}, we know that an $\eps$-$\ell_q$-norm estimate of $\alpha$ immediately gives an $4\eps$-$\ell_r$-estimate of $p$, with $r = 1/(1/q+1/2)$. Therefore, the result simply follows by substituting $r$ for $q$ into the bounds displayed in \autoref{lem:eps_ellq_prob_est}.
\end{proof}

\subsection{Lower bound on conditional samples}\label{sec:sampling-lb}
To lower bound the use of conditional samples we will use a prove based on communication complexity. In particular, we give an ensemble of states corresponding to conditional samples, such that an $\ell_1$-norm estimate of any of the states would give $\Omega(d)$ bits of information about which state was given, but a copy of the state can only communicate $\tilde{O}(\eps^2)$ bits of information.
\begin{lemma}\label{lem:probset}
Let $\eps > 0, d \ge 12$ with  $d\in \mathbb{N}$. There exists a set of $2^{d/2}$ probability distributions $\{p^{(b)}\}_{b\in\{0,1\}^{d/2}}\in \Delta^d$ indexed by length $d/2$ bit strings, such that for all $b$ and $b'$ where $\nrm{p^{(b)}-p^{(b')}}_1< 2\eps$ we have that $b$ and $b'$ differ on at most $d/6$ bits. Furthermore, let $\ket{\psi_j} = \sum_i \sqrt{p_i^{(j)}}\ket{i}$ and $\ket{\phi_j} = \frac{\ket{0}\ket{\psi_j}+\ket{1}\ket{0}}{\sqrt{2}}$; then, denoting the entropy by $S$, we have:
\[
S\trm{\frac{1}{2^{d/2}}\sum_j \ket{\phi_j}\bra{\phi_j}} \leq 27\eps^2 (1+ \log(d/\eps^2)).
\]
\end{lemma}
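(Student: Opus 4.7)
The plan is to use a pair-encoded family of near-uniform distributions. For each $b\in\{0,1\}^{d/2}$ define $p^{(b)}$ on $[d]$ by grouping consecutive coordinates into pairs: $p^{(b)}_{2k-1}=(1+\delta(-1)^{b_k})/d$ and $p^{(b)}_{2k}=(1-\delta(-1)^{b_k})/d$ for $k\in\{1,\ldots,d/2\}$, with $\delta:=3\varepsilon$. (I restrict to $\varepsilon\leq 1/3$ so that all entries are nonnegative; for $\varepsilon>1/3$ the claimed entropy bound already exceeds $\log_2(2d+2)$, so the statement is trivial.) A direct calculation gives
\begin{equation*}
\nrm{p^{(b)}-p^{(b')}}_1 \;=\; \tfrac{4\delta}{d}\,\bigl|\{k : b_k\neq b_k'\}\bigr|,
\end{equation*}
so $\nrm{p^{(b)}-p^{(b')}}_1<2\varepsilon$ forces the Hamming distance between $b$ and $b'$ to be strictly less than $2\varepsilon d/(4\delta)=d/6$, giving the first claim.

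For the entropy bound I plan to diagonalize $\rho:=\mathbb{E}_b\ketbra{\phi_b}{\phi_b}$ explicitly by changing basis on the amplitude register. Let $\ket{u_k}:=(\ket{2k-1}+\ket{2k})/\sqrt 2$ and $\ket{v_k}:=(\ket{2k-1}-\ket{2k})/\sqrt 2$; then
\begin{equation*}
\ket{\psi_b}\;=\; a\sum_{k=1}^{d/2}\ket{u_k}\;+\;c\sum_{k=1}^{d/2}(-1)^{b_k}\ket{v_k}, \qquad a,c\;:=\;\tfrac{1}{\sqrt{2d}}\bigl(\sqrt{1+\delta}\pm\sqrt{1-\delta}\bigr).
\end{equation*}
Since $\mathbb{E}_b[(-1)^{b_k+b_{k'}}]=\delta_{kk'}$, writing $\ket{U}:=\sum_k\ket{u_k}$ and letting $P_V$ be the projector onto $\mathrm{span}\{\ket{v_k}\}$ yields
\begin{equation*}
\mathbb{E}_b\ketbra{\psi_b}{\psi_b}\;=\; a^2\ketbra{U}{U}+c^2 P_V,\qquad \mathbb{E}_b\ket{\psi_b}\;=\;a\ket{U}.
\end{equation*}
In the orthonormal basis $\{\ket{0}\ket{\bar U},\ket{1}\ket{0}\}\cup\{\ket{0}\ket{v_k}\}_k$ (with $\ket{\bar U}:=\ket{U}/\sqrt{d/2}$), $\rho$ splits as a $2\times 2$ block of trace $(3+s)/4$ (where $s:=\sqrt{1-\delta^2}$) and vanishing determinant, plus $d/2$ equal diagonal entries of value $c^2/2=(1-s)/(2d)$. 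Hence the spectrum of $\rho$ consists of $(3+s)/4$, $0$, and $(1-s)/(2d)$ with multiplicity $d/2$.

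Setting $\mu:=(1-s)/4$, which satisfies $\mu\leq\delta^2/4=9\varepsilon^2/4$ via the elementary inequality $1-\sqrt{1-x}\leq x$, the entropy reduces to $S(\rho)=H_2(\mu)+\mu\log_2(d/2)$. Using the bound $H_2(\mu)\leq\mu\log_2(4/\mu)$ valid for $\mu\leq 1/2$ (which applies here since $\mu\leq 1/4$) and monotonicity of $\mu\mapsto\mu\log_2(2d/\mu)$ on $(0,2d/e)$, I obtain $S(\rho)\leq (9\varepsilon^2/4)\log_2(8d/(9\varepsilon^2))\leq (9\varepsilon^2/4)\bigl(1+\log_2(d/\varepsilon^2)\bigr)$, well within the factor of $27$ claimed. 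The main technical step is verifying the rank-$1$ structure of the $2\times 2$ block --- it holds because the off-diagonal contribution $(a/2)\ket{U}\bra{0}$ is supported in $\mathrm{span}\{\ket{\bar U}\}$, which is the only non-$P_V$ vector appearing on the diagonal; once this is observed, the remaining computations are routine algebra and entropy bookkeeping.
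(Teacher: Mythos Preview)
Your proof is correct and follows essentially the same approach as the paper: a pair-encoded family of near-uniform distributions, followed by explicit diagonalization of the mixture to find one large eigenvalue $(3+s)/4$ and $d/2$ equal small ones $(1-s)/(2d)$. Your Hadamard-type basis change $\{\ket{u_k},\ket{v_k}\}$ and the choice $\delta=3\varepsilon$ (versus the paper's $6\varepsilon$) are cosmetic variations that streamline the bookkeeping and sharpen the constant, but the construction and spectral analysis are identical in substance.
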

\begin{proof}
  We index the family of probability distributions with bit strings $b\in\{0,1\}^{d/2}$, writing $p^{(b)}$ for the distribution corresponding to string $b$.
  Each distribution is over $[d/2]\times \{0,1\}$ and is defined as
  \[
  p^{(b)}_{i,c} = \frac{1+(-1)^{b_i\oplus c}6\eps}{d}.
  \]
  In other words, it consists of $d/2$ pairs of entries that correspond to the bits of $b$, where the bit determines which of the entries in the pair is increased by $6\eps/d$ and which is decreased. 
    
  For two bit strings $b$ and $b'$ with Hamming distance $|b\oplus b'|$, the corresponding distributions will be $|b\oplus b'| \frac{12\eps}{d}$ apart in $\ell_1$-norm. Hence, if two distributions are less than $2 \eps$ apart, then for their bit strings we get $|b\oplus b'| < d/6$, i.e., less than a $1/3$ fraction of the positions differ. 
    
  It remains to upper bound the entropy of a uniform mixture of conditional samples.
  For ease of notation we will write 
  \[
    c_{\pm} = \sqrt{1 \pm 6\eps}
  \]
  and note that $(c_++c_-)^2 = 2+ 2c_+c_-$.
  In our notation, we have:
  \[
        \ket{\phi_b}\bra{\phi_b} = \frac{1}{2} \begin{pmatrix}
         \ket{\psi_b}\bra{\psi_b} & \ket{\psi_b}\\
         \bra{\psi_b} & 1
        \end{pmatrix},
  \]
  where for simplicity and without impacting subsequent calculations we have dropped the the $d-1$ all-zero columns on the right and the $d-1$ all-zero rows at the bottom.
  Let 
  \[
      \sigma = \frac{1}{2^{d/2}}\sum_b \ket{\phi_b}\bra{\phi_b} = \frac{1}{2^{d/2+1}} \begin{pmatrix}
       \sum_b \ket{\psi_b}\bra{\psi_b} & \sum_b \ket{\psi_b}\\
       \sum_b \bra{\psi_b} & 1
      \end{pmatrix}
  \]
  Considering a single entry of $\sum_b \ket{\psi_b}$, exactly half of the terms will be $\frac{c_+}{\sqrt{d}}$ and half will be $\frac{c_-}{\sqrt{d}}$. So $\frac{1}{2^{d/2+1}} \sum_b \ket{\psi_b} = \frac{c_++c_-}{4\sqrt{d}} \vecone$ and similar for the row vectors. 
    
  We now analyze the term $\ket{\psi_b}\bra{\psi_b}$. Consider the $2\times 2$ block of the matrix corresponding to $b_i$ for the rows and $b_k$ for the columns. Depending on the values of those two bits, this block can take four different forms:
\begin{multicols}{2}
\begin{enumerate}
        \item If $b_i = b_k = 0$ then the block is
        \[\frac{1}{d}
        \begin{pmatrix}
         c_+^2 & c_+c_- \\ c_+c_- & c_-^2
        \end{pmatrix}
        \]
        \item If $b_i = b_k = 1$ then the block is
        \[\frac{1}{d}
        \begin{pmatrix}
        c_-^2 & c_+c_-\\ c_-c_+ & c_+^2
        \end{pmatrix}
        \]
        \item If $b_i =0, b_k = 1$ then the block is
        \[\frac{1}{d}
        \begin{pmatrix}
        c_-c_+ & c_+^2 \\ c_-^2 & c_+c_- 
        \end{pmatrix}
        \]
        \item If $b_i =1, b_k = 0$ then the block is
        \[\frac{1}{d}
        \begin{pmatrix}
        c_-c_+ & c_-^2 \\ c_+^2 & c_+c_- 
        \end{pmatrix}
        \]
  \end{enumerate}
  \end{multicols}
  If $i=k$, i.e., on the diagonal, only (1) and (2) can happen, and by averaging over all possible $b$ (and putting back in the extra factor $1/2$ that appears in the denominator $1/(2^{d/2+1})$ in $\sigma$), we get:
       \[
         \frac{1}{2d}  \begin{pmatrix}
         1& c_+c_- \\  c_+c_- & 1
        \end{pmatrix}.
       \]
  Off-diagonally we average over all 4 possibilities, and get:
  \[
    \frac{1+ c_+c_-}{4d} \begin{pmatrix}
      1 & 1 \\ 1 & 1
    \end{pmatrix}.
  \]
  Denoting by $J$ the all-ones matrix and by $D := \bigoplus_{i\in [d/2]} \begin{pmatrix}
      1 & -1 \\ -1 & 1
      \end{pmatrix}$, the top left block of $\sigma$ can be written as
   \[
           \frac{1+c_+c_-}{4d} J +  
        \frac{ 1-c_+c_-}{4d} D.
      \]
  So
  \[
        \sigma = \begin{pmatrix}
         \frac{1+c_+c_-}{4d} J & \frac{c_++c_-}{4\sqrt{d}} \vecone \\ \frac{c_++c_-}{4\sqrt{d}} \vecone^{\top} & 1/2
        \end{pmatrix}+ \begin{pmatrix}
         \frac{ 1-c_+c_-}{4d} D & 0 \\ 0 & 0
        \end{pmatrix}.
  \]
  The first term in the above equation is a rank-1 matrix, as it is equal to the outer product of the column vector $(\frac{c_++c_-}{2\sqrt{2d}}\vecone, 1/\sqrt{2})$ with itself (recall that $(c_++c_-)^2/2 = 1 + c_+c_-$). The corresponding eigenvalue is just the norm of this vector, and it is equal to $\frac{3+c_+c_-}{4}$.
  The second term has $d/2$ nonzero eigenvalues, all equal to $\frac{1-c_+c_-}{2d}$. As  $(\frac{c_++c_-}{2\sqrt{2d}}\vecone, 1/\sqrt{2})$ is in the kernel of the second term, the $d/2+1$ eigenvalues listed above are in fact the eigenvalues of $\sigma$.  So:
  \begin{align*}
            S(\sigma) &= - \frac{3+c_+c_-}{4} \log\trm{\frac{3+c_+c_-}{4}} - \frac{d}{2} \frac{1-c_+c_-}{2d} \log\trm{\frac{1-c_+c_-}{2d}}\\
            &\leq \log\trm{\frac{4}{3+c_+c_-}} + \frac{36\eps^2}{4} \log\trm{\frac{2d}{1-c_+c_-}}\\
            &\leq \log\trm{\frac{1}{1-9\eps^2}} + 9\eps^2 \trm{ 1+\log(d/18\eps^2) }\\
            &\leq 18\eps^2 + 9\eps^2 \trm{ 1+\log(d/\eps^2)}\\
            &\leq 27\eps^2 (1+ \log(d/\eps^2)).
  \end{align*}
  In the chain of inequalities above, we used the fact that 
  \[
  1 - 36\eps^2 \leq \sqrt{1-36\eps^2} =c_+c_- \leq 1-18\eps^2
  \]
  and hence
  \[
  18\eps^2 \leq 1-c_+c_-  \leq 36\eps^2,
  \]
  together with the fact that the logarithm is monotonically increasing.
\end{proof}

With this entropy bound we are now ready to prove our sample complexity lower bound.
\begin{proposition}
\label{prop:lb-cond-samples}
 Let $p\in \Delta^d$ be a probability distribution, and let $\ket{\psi} = \sum_j \sqrt{p_j}\ket{j}$. Then 
 $\tilde{\Omega}(d/\eps^2)$ copies of 
 \[
 \frac{\ket{0}\ket{\psi}+\ket{1}\ket{0}}{\sqrt{2}}
 \] 
 are required to learn $p$ up to $\ell_1$-norm error $\eps$.
\end{proposition}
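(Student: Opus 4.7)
The plan is to use a communication-complexity and information-theoretic argument based on the ensemble of distributions constructed in \autoref{lem:probset}, combining Holevo's bound with a Fano-type inequality.

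First, I would reduce distribution recovery to recovering a random bit string. Draw $B$ uniformly at random from $\{0,1\}^{d/2}$, and suppose a $k$-copy tomography algorithm produces an estimate $\tilde{p}$ of $p^{(B)}$ with $\nrm{p^{(B)}-\tilde{p}}_1 \leq \eps'$ with probability at least $2/3$, where $\eps'$ is a constant strictly less than $\eps$ (the precise constant will not matter). Post-processing $\tilde{p}$ into $\tilde{B} := \arg\min_b \nrm{p^{(b)}-\tilde{p}}_1$ gives $\nrm{p^{(\tilde B)}-p^{(B)}}_1 \leq 2\eps' < 2\eps$ with the same probability, so by \autoref{lem:probset} the Hamming distance $|B \oplus \tilde B|$ is at most $d/6$ with probability at least $2/3$.

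Next, I would lower bound $I(B;\tilde B)$ by a standard Fano-type argument. Conditioned on the good event, $B$ can be encoded given $\tilde B$ by specifying the at most $d/6$ differing positions, contributing at most $(d/2) H_2(1/3) + O(1)$ bits of conditional entropy; on the bad event one uses the trivial bound $H(B \mid \tilde B, \text{bad}) \leq d/2$. Combining these with the uniform prior on $B$ yields
\[
I(B;\tilde B) \;=\; \tfrac{d}{2} - H(B\mid\tilde B) \;\geq\; \tfrac{d}{2}\bigl(1 - \tfrac{2}{3}H_2(1/3) - \tfrac{1}{3}\bigr) - O(1) \;=\; \Omega(d),
\]
since the bracketed constant is strictly positive ($\approx 0.055$).

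Finally, I would upper bound the same mutual information via Holevo's theorem. The classical output $M$ of the algorithm applied to $k$ copies of $\ket{\phi_B}$ satisfies $I(B;\tilde B) \leq I(B;M) \leq S(\sigma_k)$ by the data-processing inequality and the Holevo bound, where $\sigma_k := 2^{-d/2}\sum_b (\ket{\phi_b}\bra{\phi_b})^{\otimes k}$. Subadditivity of the von Neumann entropy across the $k$ copies gives $S(\sigma_k) \leq k \cdot S(\sigma_1)$, and \autoref{lem:probset} bounds $S(\sigma_1) = O(\eps^2 \log(d/\eps^2))$. Combining the two bounds yields $\Omega(d) \leq k \cdot O(\eps^2 \log(d/\eps^2))$, hence $k = \tilde{\Omega}(d/\eps^2)$. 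The main delicate point is the Fano computation: a constant success probability such as $2/3$ is already sufficient because $1 - \tfrac{2}{3}H_2(1/3) - \tfrac{1}{3}$ is strictly positive, but a weaker starting probability can always be boosted by $O(\log d)$-fold repetition without affecting the asymptotics.
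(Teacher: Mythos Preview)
Your proposal is correct and follows essentially the same approach as the paper: encode a uniformly random bit string into the ensemble of \autoref{lem:probset}, argue that an $\ell_1$-recovery algorithm extracts $\Omega(d)$ bits of information via rounding to the nearest $p^{(b)}$, and upper bound the accessible information by $k$ times the single-copy entropy from \autoref{lem:probset}. You are simply more explicit than the paper on two points---spelling out the Fano computation for ``$\Omega(d)$ bits of information'' and invoking subadditivity rather than the paper's slightly loose phrasing that ``the entropy of $k$ copies equals $k$ times the entropy of a single state''---and your use of $\eps'<\eps$ cleanly handles the strict inequality in \autoref{lem:probset}.
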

\begin{proof}
 We consider a communication scenario where Alice picks a $b\in\{0,1\}^{d/2}$ and encodes this in $k$ copies of $\ket{\phi_b}$ from \autoref{lem:probset}. She sends these copies to Bob. If Bob can estimate $p_b$ up to $\eps$-$\ell_1$-norm using $k$ copies then, by rounding to the closest distribution $p_{\tilde{b}}$, he can learn a $\tilde{b}$ that agrees with $b$ on at least a $2/3$ fraction of the bits, and hence he has learned $\Omega(d)$ bits of information about Alice's string. By Holevo's Theorem we know that the maximum amount of information that can be communicated by an ensemble of pure states is upper bounded by its entropy. As the entropy of $k$ copies of a state is equal to $k$ times the entropy of a single state, we have
 \[
  \Omega(d) \leq k S\trm{\frac{1}{2^{d/2}}\sum_j \ket{\phi_j}\bra{\phi_j}} \leq k 36\eps^2 (1+ \log(d/\eps^2))
 \]
 and hence $k = \Omega(\frac{d}{\eps^2 \log(d/\eps^2)})$ copies are needed for an $\ell_1$-norm estimate.
\end{proof}

We can now apply our norm conversion lemmas to obtain the following theorem. 
\begin{theorem}\label{thm:lowerbound-sample}
 Let $\ket{\psi} = \sum_j \alpha_j\ket{j}$. Then 
 \[
 \tilde{\Omega}\trm{\min\left\{\trm{\frac{3}{\eps}}^{\frac{1}{\frac12-\frac1q}}, \frac{d^{\frac2q}}{\eps}\right\}}
 \]
 copies of 
 \[
 \frac{\ket{0}\ket{\psi}+\ket{1}\ket{0}}{\sqrt{2}}
 \] 
 are required to learn $\alpha$ up to $\ell_q$-norm error $\eps$.
\end{theorem}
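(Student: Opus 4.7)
The plan is to reduce pure-state tomography in $\ell_q$-norm to the sample lower bound for $\ell_1$-estimation of the induced probability distribution, which was established in \autoref{prop:lb-cond-samples}. Concretely, the conditional-sample lower bound already gives $\tilde{\Omega}(d/\eps^2)$ copies of $(\ket{0}\ket{\psi}+\ket{1}\ket{0})/\sqrt{2}$ for the specific family of states $\ket{\psi_b}=\sum_j\sqrt{p^{(b)}_j}\ket{j}$ constructed in \autoref{lem:probset}. Because for that family the amplitudes $\alpha_j=\sqrt{p^{(b)}_j}$ are real and non-negative, any $\ell_q$-norm estimator for $\alpha$ immediately satisfies the hypothesis of the norm-conversion statement in \autoref{lem:linftol2} (the vector of absolute values equals the amplitude vector).

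Given this, I would invoke \autoref{lem:ell1-reduction} with $s=2$: that lemma is tailored precisely to turn a lower bound of the form $\Omega(d/\eps^s)$ on $\ell_1$-probability estimation into a lower bound on $\ell_q$-amplitude estimation. Substituting $s=2$ into its conclusion yields
\[
\tilde{\Omega}\!\trm{\min\left\{\trm{\tfrac{3}{\eps}}^{\frac{1}{\frac12-\frac1q}},\,\frac{d^{\frac{2}{q}}}{\eps^2}\right\}},
\]
which matches (modulo the $\eps$ vs.\ $\eps^2$ typo in the second term of the theorem statement, where $\eps^2$ is what the proof delivers and what matches our upper bound in \autoref{thm:condsamp}). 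The first term in the minimum comes from the dimension-truncation trick already carried out inside \autoref{lem:ell1-reduction}: embed the hard family on a subset of size $d'=\lfloor\eps^{-1/(1-1/q)}\rfloor\leq d$ coordinates and pad with zeros, so that the $\ell_q$-error $\eps$ forces an $\ell_1$-error at most $1$ on the truncated distribution; the base lower bound then applies in dimension $d'$.

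I would structure the write-up in three sentences. First, recall \autoref{prop:lb-cond-samples} to get the $\tilde{\Omega}(d/\eps^2)$ conditional-sample lower bound for $\ell_1$-estimation of $p$ defined by $p_j=|\alpha_j|^2$, noting that the hard family consists of real non-negative amplitude states so the input model and the $|\alpha|=\alpha$ identification are legitimate. Second, observe that \autoref{lem:linftol2} (which underlies \autoref{lem:ell1-reduction}) converts any $\ell_q$-amplitude estimator into an $\ell_1$-probability estimator with only a constant blow-up in error. Third, apply \autoref{lem:ell1-reduction} with $s=2$ to obtain the claimed bound.

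The main obstacle is essentially notational: since the heavy lifting, namely the communication-complexity/Holevo-entropy argument and the norm-conversion arithmetic, already lives in \autoref{prop:lb-cond-samples} and \autoref{lem:ell1-reduction} respectively, the theorem is a two-line corollary. The one small point to double-check is that the construction in \autoref{lem:probset} is compatible with the dimension-truncation used in \autoref{lem:ell1-reduction}, which it is: one simply runs the hard instance on the first $d'$ coordinates and places mass zero elsewhere, and the upper bound $d'^{1-1/q}\eps\le 1$ verified inside the proof of \autoref{lem:ell1-reduction} ensures the induced $\ell_1$-error on the truncated distribution stays below the threshold $2\eps$ at which the $b\leftrightarrow b'$ decoding argument of \autoref{lem:probset} succeeds.
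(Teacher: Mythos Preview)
Your proposal is correct and follows exactly the paper's own proof, which is the single sentence ``We combine the $\ell_1$-reduction \autoref{lem:ell1-reduction} with $s=2$, with \autoref{prop:lb-cond-samples}.'' Your observation about the $\eps$ versus $\eps^2$ discrepancy in the second term of the statement is also on point: the reduction genuinely yields $d^{2/q}/\eps^2$, matching the upper bound in \autoref{thm:condsamp}.
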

\begin{proof}
 We combine the $\ell_1$-reduction \autoref{lem:ell1-reduction} with $s=2$, with \autoref{prop:lb-cond-samples}.
\end{proof}

\subsection{Lower bound with inverse state preparation}\label{subsec:lb-with-inverses}

We start with showing that if we want to obtain an $\ell_1$-estimate of a probability distribution $p \in \Delta_d$, we need at least $\Omega(d/\eps)$ queries to the operation that prepares it $U : \ket{0} \mapsto \sum_{j=1}^d \sqrt{p_j}\ket{j}$. We do this by reducing the problem to the problem of recovering a constant fraction of the bits in a bit string, which is known to have a lower bound on of $\Omega(d/\eps)$.

\begin{lemma}
\label{lem:lb-ell1-inverse}
  Let $0 < \eps < 1/16$, $d \in \N$, $p\in \Delta^d$ a probability distribution, and let $U$ be a unitary that prepares $\sum_j \sqrt{p_j}\ket{j}$. Then $\Omega\trm{\frac{d}{\eps}}$ applications of $U$ and its inverse are necessary to find a $\eps$-$\ell_1$-estimate of $p$ with high probability.
\end{lemma}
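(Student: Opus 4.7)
The plan is a reduction from the problem of approximately identifying an unknown bit string $b\in\{0,1\}^d$ given query access (and inverse query access) to the fractional phase oracle
\[
O_b^\eta\colon\ket{j}\mapsto e^{i\pi\eta b_j}\ket{j},
\]
with $\eta=\Theta(\eps)$. A standard hybrid-style argument shows that producing any $\tilde b\in\{0,1\}^d$ with $|b\oplus\tilde b|\le d/4$ and constant success probability over uniformly random $b$ requires $\Omega(d/\eta)$ calls to $O_b^\eta$ and $(O_b^\eta)^\dagger$. I would invoke this bound as a black box.

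First I would embed bit strings into probability distributions on $2d$ outcomes. Choosing $\eta:=\arcsin(8\eps)/\pi=\Theta(\eps)$ (well-defined since $\eps<1/16$) I would set
\[
p^{(b)}_{j,c}:=\frac{1+(-1)^c\sin(\pi\eta b_j)}{2d},\qquad (j,c)\in[d]\times\{0,1\}.
\]
A direct computation gives $\nrm{p^{(b)}-p^{(b')}}_1=\sin(\pi\eta)\,|b\oplus b'|/d=8\eps\,|b\oplus b'|/d$, so any $\eps$-$\ell_1$-estimate of $p^{(b)}$ can be rounded (in $\ell_1$) to the closest member of the family $\{p^{(b')}\}_{b'}$, producing some $\tilde b$ with $|b\oplus\tilde b|\le 2\eps d/\sin(\pi\eta)\le d/4$.

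Next I would realise $p^{(b)}$ as the computational-basis outcome distribution of a state of the required form $\ket{\psi_b}=\sum_k\sqrt{p^{(b)}_k}\ket{k}$ prepared by a unitary $U_b$ that makes only a single call to $O_b^\eta$. A Hadamard test does the job: starting from $\ket{0}\ket{0}^{\otimes\log d}$, apply Hadamards to every qubit to get $\frac{1}{\sqrt{2d}}\sum_j(\ket{0}+\ket{1})\ket{j}$, apply $O_b^\eta$ controlled on the ancilla qubit to introduce the phase $e^{i\pi\eta b_j}$ on the $\ket{1}\ket{j}$ branch, and finally apply $S^\dagger$ and then a Hadamard on the ancilla. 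A one-line computation using $|1\mp ie^{i\theta}|^2=2(1\pm\sin\theta)$ verifies that the resulting measurement distribution is exactly $p^{(b)}$. By symmetry $U_b^\dagger$ uses a single call to $(O_b^\eta)^\dagger$.

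Combining the two parts, every $T$-query $\eps$-$\ell_1$-estimation algorithm for $p^{(b)}$ becomes a $T$-query algorithm for approximate recovery of $b$ from $O_b^\eta$, forcing $T=\Omega(d/\eta)=\Omega(d/\eps)$; relabelling $2d$ as $d$ in the lemma statement costs only a constant factor. The main obstacle will be making the hybrid lower bound for fractional-phase-oracle bit string recovery quantitatively sharp, i.e., without a parasitic $\log d$ factor: I would exploit that $\nrm{O_b^\eta-O_{b'}^\eta}=O(\eta)$ whenever $b$ and $b'$ differ in a single coordinate, and combine this BBBV-type per-query perturbation bound with an Ambainis-style adversary or Fano-type information-theoretic argument showing that an output $\tilde b$ at Hamming distance at most $d/4$ of a uniformly random $b$ carries $\Omega(d)$ bits of information, yielding the matching $T=\Omega(d/\eta)$ bound.
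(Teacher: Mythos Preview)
Your overall strategy matches the paper's: reduce from fractional-phase-oracle bit-string recovery, embed bit strings into a $2d$-outcome probability distribution, build a state-preparation unitary out of $O(1)$ oracle calls, and show that an $\eps$-$\ell_1$-estimate recovers most bits. The black-box $\Omega(d/\eta)$ lower bound you invoke is exactly what the paper uses (with a footnote reference), so that part is fine.

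There is, however, a genuine gap in your construction of $U_b$. The lemma is stated for the strict input model in which $U$ prepares exactly $\sum_k\sqrt{p_k}\ket{k}$, i.e.\ with real non-negative amplitudes; the paper emphasises just before this section that this is a stricter model than one where $U$ prepares $\sum_k\sqrt{p_k}\ket{k}\ket{\phi_k}$ for arbitrary auxiliary states $\ket{\phi_k}$, and that it therefore requires its own lower bound. Your Hadamard-test circuit produces the ancilla state $\tfrac12\bigl((1-ie^{i\pi\eta b_j})\ket{0}+(1+ie^{i\pi\eta b_j})\ket{1}\bigr)$, which indeed has the correct measurement probabilities, but the amplitudes carry $b_j$-dependent complex phases. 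A hypothetical estimation algorithm $\mathcal A$ is only promised to work on unitaries preparing $\sum_k\sqrt{p_k}\ket{k}$, so feeding it your $U_b$ is not a valid reduction. One can check that no single controlled call to $O_b^\eta$ followed by $b$-independent one-qubit gates can yield real amplitudes that still depend on $b_j$: requiring both $\alpha+\beta$ and $\alpha+\beta e^{i\pi\eta}$ real forces $\beta=0$.

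The paper fixes this by calling the oracle \emph{and its inverse} symmetrically: starting from $\tfrac{1}{\sqrt{2d}}\sum_j\ket{j}(e^{-i\pi/4}\ket{0}+e^{i\pi/4}\ket{1})$, applying $O_x^\dagger$ controlled on $\ket{0}$ and $O_x$ controlled on $\ket{1}$, then a Hadamard and an $S$-gate on the ancilla. This yields $\tfrac{1}{\sqrt d}\sum_j\ket{j}\bigl(\cos(\tfrac\pi4+4\pi\eps x_j)\ket{0}+\sin(\tfrac\pi4+4\pi\eps x_j)\ket{1}\bigr)$, which is genuinely of the form $\sum_k\sqrt{p_k}\ket{k}$ (this is where the hypothesis $\eps<1/16$ is used). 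The cost is two oracle calls instead of one, which is immaterial asymptotically; with this change your argument goes through and coincides with the paper's.
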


\begin{proof}
Let $x \in \{0,1\}^d$ be a bit string, and suppose that we have controlled access to $x$ by means of a fractional phase oracle, i.e., we can access a controlled oracle that acts on $\C^d$ as
\[O_x : \ket{j} \mapsto e^{4\pi i\eps x_j}\ket{j}.\]
Recovering more than three quarters of the bits of $x$ with high probability is known to require $\Omega(d/\eps)$ queries to $O_x$.\footnote{Proving this is done in two steps -- first one proves that this takes at least $\Omega(d)$ calls to a regular phase oracle to $x$, which can be easily shown using an information theoretic argument stemming from \cite{farhi1999bounds}. Next, this can be combined with Appendix B from \cite{lee2011QQueryCompStateConv} and the general adversary bound for relations from \cite{belovs2015GeneralAdv}, to get to the desired bound of $\Omega(d/\eps)$. We can also reduce the problem to recovering the bit string exactly, and then directly apply the phase adversary bound from \cite[Ch.~6]{apeldoorn2020QConvexOptThesis}.}

Now, we construct a specific probability distribution $p$, dependent on $x$, whose corresponding quantum state can be constructed using only one call to $O_x$, and that allows for recovering at least 3/4 of the bit string if it is estimated up to $\eps$ in $\ell_1$-norm. To that end, suppose that we start in the state
\[
  \frac{1}{\sqrt{2d}} \sum_{j=1}^d \ket{j} \left(e^{-\frac{\pi i}{4}}\ket{0} + e^{\frac{\pi i}{4}}\ket{1}\right).
\]
Now, we can apply $O_x^{\dagger}$ to the first register if the last qubit is in state $\ket{0}$, and $O_x$ if the last qubit is in state $\ket{1}$. This results in the state
\[
  \frac{1}{\sqrt{2d}} \sum_{j=1}^d \ket{j} \left(e^{-\frac{\pi i}{4} - 4\pi i\eps x_j}\ket{0} + e^{\frac{\pi i}{4} + 4\pi i\eps x_j}\ket{1}\right).
\]
Next, after applying a Hadamard gate to the final qubit, we obtain the state
\[\frac{1}{\sqrt{d}} \sum_{j=1}^d \ket{j} \left(\cos\left(\frac{\pi}{4} + 4\pi \eps x_j\right)\ket{0} - i\sin\left(\frac{\pi}{4} + 4\pi \eps x_j\right)\ket{1}\right),\]
which after applying an $S$-gate to the final qubit is turned into the state
\[
  \ket{\psi} = \sum_{j \in [d]} \sum_{b \in \{0,1\}} \sqrt{p_{j,b}} \ket{j} \ket{b}, \qquad \text{where} \qquad p_{j,b} = \begin{cases}
    \frac{\cos^2\trm{\frac{\pi}{4} + 4\pi \eps x_j}}{d}, & \text{if } b = 0, \\
    \frac{\sin^2\trm{\frac{\pi}{4} + 4\pi \eps x_j}}{d}, & \text{if } b = 1.
  \end{cases}
\]
Since $\eps < 1/16$, we have
\[
  \left|\cos^2\trm{\frac{\pi}{4}} - \cos^2\trm{\frac{\pi}{4} + 4\pi\eps}\right| = \left|\sin^2\trm{\frac{\pi}{4}} - \sin^2\trm{\frac{\pi}{4} + 4\pi\eps}\right| = \frac12\sin\trm{8\pi\eps} > 8\eps.
\]
Now suppose that we can find some estimate $\tilde{p}$ such that $\nrm{\tilde{p} - p}_1 \leq \eps$. Then, define the bit string $\tilde{x} \in \{0,1\}^d$ as:
\[
  \tilde{x}_j = \begin{cases}
    1, & \text{if } \tilde{p}_{j,0} < \frac1d\trm{\frac12 - 4\eps}, \\
    0, & \text{otherwise}.
  \end{cases}
\]
It follows that $4\eps|\tilde{x}_j - x_j|/d \leq |\tilde{p}_{j,0} - p_{j,0}|$, and so the number of bits of $\tilde{x}$ that differ from those in $x$ is at most $d/4$. Hence, finding an $\eps$-$\ell_1$-norm estimate of a $2d$-dimensional probability distribution must take at least $\Omega(d/\eps)$ calls to a state-preparation oracle as well.
\end{proof}

It now remains to apply our $\ell_1$-reduction to complete the lower bound for general $\ell_q$-norms.

\begin{theorem}\label{thm:lowerboundpure}
 Let $\ket{\psi} = \sum_{j \in [d]} \alpha_j \ket{j}$ be a quantum state with and let $U$ be a unitary that prepares $\ket{\psi}$. Then 
  \[
 \tilde \Omega\trm{\min\left\{\frac{1}{\eps^{\frac{1}{\frac12 - \frac1q}}},\frac{d^{\frac1q + \frac12}}{\eps} \right\}}
  \]
  applications of $U$ and its inverse are necessary to find an $\eps$-$\ell_q$-estimate of $|\alpha|$ for $q\in [2,\infty]$.
\end{theorem}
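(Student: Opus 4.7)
The plan is to combine the two preceding results in this section by a direct reduction, mirroring the structure of the proof of \autoref{thm:lowerbound-sample} in the sampling model. The key observation is that both of the ingredients needed are already at hand: a base $\ell_1$-lower bound tailored to the unitary input model (\autoref{lem:lb-ell1-inverse}), and a generic norm-conversion reduction (\autoref{lem:ell1-reduction}) that turns an $\ell_1$-lower bound on probability estimation into an $\ell_q$-lower bound on amplitude estimation.

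First, I would instantiate \autoref{lem:lb-ell1-inverse}: producing an $\eps$-$\ell_1$-estimate of an arbitrary probability distribution $p \in \Delta_d$, given controlled access to a unitary that prepares $\sum_j \sqrt{p_j} \ket{j}$ together with its inverse, requires $\Omega(d/\eps)$ queries. In the hypothesis of \autoref{lem:ell1-reduction}, this corresponds to the parameter choice $s = 1$.

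Next, I would apply \autoref{lem:ell1-reduction} with $s = 1$. For any $\ket{\psi} = \sum_j \alpha_j \ket{j}$ prepared by $U$, the probabilities $p_j = |\alpha_j|^2$ are exactly the computational-basis measurement distribution of the output of $U$, so the hypothesis of the reduction lemma is met with queries measured in units of applications of $U$ and $U^\dagger$. Moreover, the internal use of \autoref{lem:linftol2} within the proof of \autoref{lem:ell1-reduction} in fact converts an $\ell_q$-estimate of $|\alpha|$ (not only of $\alpha$) into an $\ell_r$-estimate of $p$, which is precisely what we need for the statement involving $|\alpha|$. Substituting $s = 1$ into the conclusion gives
\[
\tilde{\Omega}\trm{\min\left\{\frac{1}{\eps^{\frac{1}{\frac12 - \frac1q}}},\,\frac{d^{1 - \frac12 + \frac1q}}{\eps}\right\}} \;=\; \tilde{\Omega}\trm{\min\left\{\frac{1}{\eps^{\frac{1}{\frac12 - \frac1q}}},\,\frac{d^{\frac12 + \frac1q}}{\eps}\right\}},
\]
matching the claimed bound.

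No serious obstacle is expected, since both lemmas are already proven. The only detail worth flagging explicitly is that the reduction via \autoref{lem:linftol2} from an $\ell_q$-estimate of $|\alpha|$ to an $\ell_1$-estimate of $p$ requires only $O(d)$ classical post-processing gates and zero additional queries to $U$ or $U^\dagger$, so the $\Omega(d/\eps)$ query lower bound from \autoref{lem:lb-ell1-inverse} transfers losslessly into the amplitude-estimation setting before the norm-conversion reduction amplifies it to the stated form.
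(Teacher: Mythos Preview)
Your proposal is correct and follows essentially the same approach as the paper: instantiate the base $\ell_1$ lower bound in the unitary model (\autoref{lem:lb-ell1-inverse}) to get $s=1$, and then apply the norm-conversion reduction (\autoref{lem:ell1-reduction}). The paper's one-line proof cites \autoref{lem:ell1-reduction} together with \autoref{lem:eps_ellq_prob_est}, but the substantive input it needs is precisely the $\Omega(d/\eps)$ bound from \autoref{lem:lb-ell1-inverse} that you identify; your additional remark that \autoref{lem:linftol2} only requires an estimate of $|\alpha|$ (not the phases) is a nice clarification that the paper leaves implicit.
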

\begin{proof}
  This follows from combining the $\ell_1$-reduction \autoref{lem:ell1-reduction} with $s = 1$, and \autoref{lem:eps_ellq_prob_est}.
\end{proof}

\subsection{Lower bounds on mixed-state state tomography}\label{subsec:lb-mixed-state}

In this section, we prove optimality of our algorithm to recover a density matrix up to Frobenius norm error $\varepsilon$, given (inverse) access to a unitary that prepares its purification. At a high level, the proof consists of three steps. First, we embed a bit string $b$ of length $rd$ inside a density matrix, whose purification can be prepared using only logarithmically many calls to an $\varepsilon$-fractional phase oracle to $b$. Then, we show that recovering the density matrix up to given precision $\varepsilon$ narrows down the number of possible choices for $b$ to a fraction $2^{-crd}$, for some small $c > 0$. Finally, we argue that consequently, we need to make at least $\widetilde{\Omega}(rd/\varepsilon)$ calls to the state-preparation unitary.

The embedding makes use of mutually unbiased bases, that we define below. Since we our construction requires some auxiliary properties of mutually unbiased bases, we modify the construction presented in \cite{Bandyopadhyay2001new}.

\begin{theorem}[Mutually unbiased bases]\label{thm:mub}
	Let $d \in \N$ be an odd prime, and let $j \in [d]$. For all $k \in [d]$, we define $\ket{\varphi_k^{(j)}} \in \C^d$ as
	\[\ket{\varphi_k^{(j)}} = \frac{1}{\sqrt{d}} \sum_{\ell=0}^{d-1} \omega_d^{-k\ell + j\ell^2 + k^2}\ket{\ell}, \qquad \text{with} \qquad \omega_d = e^{\frac{2\pi i}{d}}.\]
	For all $j \in [d]$, we define $U^{(j)} \in \C^{d \times d}$, and for all $j,j' \in [d]$ we define $\alpha^{(j,j')} \in \C^{d \times d}$ as
	\[U^{(j)} = \frac{1}{\sqrt{d}} \sum_{k=0}^{d-1} \ket{\varphi_k^{(j)}}\bra{k}, \qquad \text{and} \qquad \alpha^{(j,j')}_{k,k'} = \braket{\varphi_k^{(j)}}{\varphi_{k'}^{(j')}} = \sum_{\ell=0}^{d-1} \overline{U}^{(j)}_{\ell,k}U^{(j')}_{\ell,k'}.\]
	Next, for all $j,j' \in [d]$, we define $S^{(j,j')} \in \C$, and we let $A \in \R^{rd \times rd}$ be defined as
	\[S^{(j,j')} = \sum_{k,k'=0}^{d-1} \alpha^{(j,j')}_{k,k'}, \qquad \text{and} \qquad A^{(j,j')}_{k,k'} = \mathrm{Re}\left[\alpha^{(j,j')}_{k,k'}\overline{S}^{(j,j')}\right].\]
	These newly-defined objects satisfy the following properties.
	\begin{enumerate}
		\setlength\itemsep{-.2em}
		\item For all $j \in [d]$, $U^{(j)}$ is unitary.
		\item The bases $\{\ket{\varphi^{(j)}_k} : k \in [d]\}$, where $j \in [d]$, are mutually orthogonal.
		\item If $j = j'$, then $\alpha^{(j,j')}_{k,k'} = 1_{k = k'}$. If $j \neq j'$, then $\alpha^{(j,j')}_{k,k'} = \omega_d^{(k')^2 - k^2 - (4(j' - j))^{-1} (k - k')^2}\left(\frac{j'-j}{d}\right)c_d/\sqrt{d}$, where $\left(\frac{j'-j}{d}\right)$ is the Legendre symbol of $j' - d$ in $\mathbb{F}_d$, and $c_d = 1$ if $d \equiv 1 \mod 4$ and $c_d = i$ if $d \equiv 3 \mod 4$.
		\item If $j = j'$, then $S^{(j,j')} = d$. If $j \neq j'$, then $S^{(j,j')} = c_d\sqrt{d}\left(\frac{j'-j}{d}\right)$.
		\item $\nrm{A} \leq 2d$, and $\nrm{A}_2^2 \leq 4d^3r$.
	\end{enumerate}
\end{theorem}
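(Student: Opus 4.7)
The plan is to verify the five claims by direct computation, leveraging properties of quadratic Gauss sums over $\mathbb{F}_d$.

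Claims 1--3 follow from a uniform computation. Expand
$\alpha^{(j,j')}_{k,k'} = \tfrac{1}{d}\,\omega_d^{(k')^2-k^2}\sum_\ell \omega_d^{(j'-j)\ell^2 - (k'-k)\ell}$.
When $j=j'$ the inner sum reduces to $d\cdot 1_{k=k'}$, which gives claim 1 as a byproduct (each $U^{(j)}$ maps the computational basis to the orthonormal set $\{\ket{\varphi_k^{(j)}}\}_k$). When $j\neq j'$, the coefficient $4(j'-j)$ is invertible in $\mathbb{F}_d$ since $d$ is an odd prime, so completing the square in $\ell$ reduces the inner sum to the standard quadratic Gauss sum $\sum_\ell \omega_d^{a\ell^2} = c_d\sqrt{d}\bigl(\tfrac{a}{d}\bigr)$, yielding claim 3. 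Claim 2 then follows from $|\alpha^{(j,j')}_{k,k'}| = 1/\sqrt{d}$ for $j\neq j'$ together with within-basis orthonormality for $j=j'$.

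For claim 4, I would avoid computing the double sum over $k,k'$ directly. Exchanging the order of summation gives $S^{(j,j')} = \sum_\ell \overline{s^{(j)}_\ell}\,s^{(j')}_\ell$ with $s^{(j)}_\ell := \sum_k U^{(j)}_{\ell,k}$. A completion-of-squares computation yields $s^{(j)}_\ell = c_d\,\omega_d^{(j-1/4)\ell^2}$, after which $S^{(j,j')}$ is itself a quadratic Gauss sum in $\ell$, evaluating to $d$ when $j=j'$ and to $c_d\sqrt{d}\bigl(\tfrac{j'-j}{d}\bigr)$ otherwise.

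The hard part is claim 5. Define $C \in \C^{rd \times rd}$ by $C^{(j,j')}_{k,k'} = \overline{S^{(j,j')}}\,\alpha^{(j,j')}_{k,k'}$. Since $\overline{\alpha^{(j,j')}_{k,k'}} = \alpha^{(j',j)}_{k',k}$ and $\overline{S^{(j,j')}} = S^{(j',j)}$, we have $\overline{C} = C^T$, so $A = \tfrac12(C+C^T)$ satisfies $\nrm{A} \le \nrm{C}$ and $\nrm{A}_2 \le \nrm{C}_2$. The crucial observation is that $C$ factors: setting $T_{(\ell,m),(j,k)} := \overline{U^{(j)}_{\ell,k}}\,s^{(j)}_m$, a direct computation shows $T^\dagger T = \overline{C}$, while $TT^\dagger = I_d \otimes PP^\dagger$ with $P_{m,j} := s^{(j)}_m$ (the latter uses unitarity of each $U^{(j)}$, i.e.\ $\sum_k U^{(j)}_{\ell,k}\overline{U^{(j)}_{\ell',k}} = \delta_{\ell,\ell'}$). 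Therefore $\nrm{C} = \nrm{T}^2 = \nrm{PP^\dagger} = \nrm{P^\dagger P} = \nrm{S_r}$, where $S_r$ is the $r \times r$ matrix with entries $S^{(j,j')}$.

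To bound $\nrm{S_r}$, decompose $S_r = d\,I_r + c_d\sqrt{d}\,L_r$, where $L_{j,j'} = \bigl(\tfrac{j'-j}{d}\bigr)$ (and $L_{j,j}=0$) is the restriction of a $d\times d$ circulant matrix $L$ on $\mathbb{F}_d$. The Fourier eigenvalues of $L$ are the Gauss sums $\sum_{m\in\mathbb{F}_d^*}\bigl(\tfrac{m}{d}\bigr)\omega_d^{km}$, each of modulus $\sqrt{d}$ for $k\neq 0$ (and zero for $k=0$), so $\nrm{L} = \sqrt{d}$; submatrix monotonicity gives $\nrm{L_r} \le \sqrt{d}$, hence $\nrm{S_r} \le d + d = 2d$, yielding $\nrm{A} \le 2d$. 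For the Frobenius bound, use $\nrm{A}_2^2 \le \nrm{C}_2^2 = \sum_{j,j'}|S^{(j,j')}|^2 \sum_{k,k'}|\alpha^{(j,j')}_{k,k'}|^2$: the $j=j'$ terms contribute $r\cdot d^2\cdot d = rd^3$, while the $j\neq j'$ terms contribute $r(r-1)\cdot d\cdot d \le rd^3$ since $r \le d$, giving $\nrm{A}_2^2 \le 2rd^3 \le 4d^3 r$ as required.
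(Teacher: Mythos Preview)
Your argument is correct, and for claim~5 it takes a genuinely different route from the paper's.

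For claims~1--3 you do essentially what the paper does. For claim~4 the paper substitutes the explicit formula for $\alpha^{(j,j')}_{k,k'}$ from claim~3 and evaluates the double sum over $k,k'$ directly via a change of variables; your factorization $S^{(j,j')}=\sum_\ell \overline{s^{(j)}_\ell}\,s^{(j')}_\ell$ is a cleaner alternative that already anticipates the structure you exploit in claim~5.

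For claim~5 the paper proceeds by explicit diagonalization: it defines $M$ to be your $C$ with the diagonal blocks zeroed out (so $A=\mathrm{Re}[M]+dI$), assumes without loss of generality $r=d$, and then exhibits a complete set of eigenvectors $v^{(\ell,m)}$ of $M$ with eigenvalues $dc_d^2\bigl(\tfrac{\ell-m^2}{d}\bigr)$, each of modulus $d$ or zero, yielding $\nrm{M}=d$ and hence $\nrm{A}\le 2d$. The Frobenius bound is then derived from the operator bound via $\nrm{A}_2^2\le rd\cdot\nrm{A}^2\le 4rd^3$. Your approach instead recognizes $\overline{C}=T^\dagger T$ for a concrete $T$, so that $\nrm{C}=\nrm{TT^\dagger}=\nrm{P^\dagger P}=\nrm{S_r}$, and then bounds $\nrm{S_r}$ using the circulant structure of the Legendre-symbol matrix. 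This sidesteps both the reduction to $r=d$ and the eigenvector verification, and it works uniformly in $r$. Your Frobenius argument is also a direct entrywise count rather than a corollary of the operator-norm bound, and in fact yields the sharper $\nrm{A}_2^2\le 2rd^3$. The paper's approach, in exchange, gives the full spectrum of $M$ explicitly.
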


\begin{proof}
	For claim 1, we need to check that $\{\ket{\varphi_k^{(j)}} : k \in [d]\}$ defines an orthonormal basis of $\C^d$. It is immediate that $\ket{\varphi_k^{(j)}}$ has unit norm, so it remains to check that
	\[\braket{\varphi^{(j)}_k}{\varphi^{(j)}_{k'}} = \frac{1}{d} \sum_{\ell=0}^{d-1} \omega_d^{k\ell - j\ell^2 - k^2 - k'\ell + j\ell^2 + (k')^2} = \frac{\omega_d^{(k')^2 - k^2}}{d} \sum_{\ell=0}^{d-1} \omega_d^{\ell(k-k')} = 0,\]
	when $k \neq k'$.
	
	For claim 2, let $j,j' \in [d]$, with $j \neq j'$. Then, for any $k,k' \in [d]$, we have
	\[\braket{\varphi_k^{(j)}}{\varphi_{k'}^{(j')}} = \frac{1}{d} \sum_{\ell=0}^{d-1} \omega_d^{k\ell - j\ell^2 - k^2 - k'\ell + j'\ell^2 + (k')^2} = \frac{\omega_d^{(k')^2 - k^2}}{d} \sum_{\ell=0}^{d-1} \omega_d^{\ell(k-k') + \ell^2(j'-j)}.\]
	The summation on the right-hand side is known as a generalized quadratic Gauss sum. For any $a,b \in \mathbb{F}_d$ with $a \neq 0$, we can calculate
	\[\sum_{\ell=0}^{d-1} \omega_d^{a\ell^2 + b\ell} = \sum_{\ell=0}^{d-1} \omega_d^{a(\ell + (2a)^{-1}b)^2 - (4a)^{-1}b^2} = \omega_d^{-(4a)^{-1}b^2} \sum_{\ell=0}^{d-1} \omega_d^{a\ell^2} = \omega_d^{-(4a)^{-1}b^2} \left(\frac{a}{d}\right)c_d\sqrt{d},\]
	where $c_d = 1$ if $d \equiv 1 \mod 4$, and $c_d = i$ if $d \equiv 3 \mod 4$, and $(\frac{a}{d})$ denotes the Legendre symbol in $\mathbb{F}_d$. Thus, by plugging in the result of this calculation, we obtain
	\[\braket{\varphi_k^{(j)}}{\varphi_{k'}^{(j')}} = \frac{\omega_d^{(k')^2 - k^2 - (4(j'-j))^{-1}(k-k')^2}}{\sqrt{d}}\left(\frac{j'-j}{d}\right)c_d,\]
	which indeed implies that the absolute value of this inner product is $1/\sqrt{d}$. Thus, the bases are unbiased.
	
	The first part of claim 3, i.e., the case where $j = j'$, follows directly from the fact that $\{\ket{\varphi_k^{(j)}} : k \in [d]\}$ is a basis, for all $j \in [d]$. The second part follows from the previous calculation.
	
	The first part of claim 4, i.e., the case where $j = j'$, is also easily verified. Thus it remains to check for the second part that
	\[S^{(j,j')} = \sum_{k,k'=0}^{d-1} \alpha_{k,k'}^{(j,j')} = \frac{c_d}{\sqrt{d}}\left(\frac{j'-j}{d}\right) \sum_{k,k'=0}^{d-1} \omega_d^{(k')^2 - k^2 - (4(j'-j))^{-1} (k - k')^2}.\]
	We can factor the exponent on the right-hand side according to
	\[(k')^2 - k^2 - x(k-k')^2 = (k'-k)(k'+k) - x(k-k')^2 = (k'-k)(k'+k-x(k'-k)),\]
	and hence plugging in $x = (4(j'-j))^{-1}$, and relabeling $k' = m + k$, we obtain
	\[S^{(j,j')} = \frac{c_d}{\sqrt{d}}\left(\frac{j'-j}{d}\right) \sum_{k,m=0}^{d-1} \omega_d^{m(m + 2k - xm)} = \frac{c_d}{\sqrt{d}}\left(\frac{j'-j}{d}\right) \sum_{m=0}^{d-1} \omega_d^{(1-x)m^2} \sum_{k=0}^{d-1} \omega_d^{2mk}.\]
	The rightmost term vanishes for all choices of $m$, except for $m = 0$, in which case it becomes $d$. Thus, we can simplify the expression to
	\[S^{(j,j')} = c_d\sqrt{d}\left(\frac{j'-j}{d}\right).\]
	
	Finally, for claim 5, observe that the second part follows from the first part, since the operator norm is the maximal absolute eigenvalue and the Frobenius norm is the $\ell_2$-norm of the vector of eigenvalues. Thus, it remains to bound the operator norm of $A$.
	
	We first observe that for any $j,j' \in [d]$ with $j \neq j$, and $k,k' \in [d]$, combining results from claims 3 and 4 yields
	\[\alpha_{k,k'}^{(j,j')}\overline{S}^{(j,j')} = \frac{\omega_d^{(k')^2-k^2-(4(j'-j))^{-1}(k-k')^2}}{\sqrt{d}} \left(\frac{j'-j}{d}\right)c_d \cdot \overline{c}_d\sqrt{d}\left(\frac{j'-j}{d}\right) = \omega_d^{(k')^2-k^2-(4(j'-j))^{-1}(k-k')^2}.\]
	Next, we define the matrix $M \in \C^{dr \times dr}$ by
	\[M_{k,k'}^{(j,j')} = \begin{cases}
		\alpha_{k,k'}^{(j,j')}\overline{S}^{(j,j')}, &\text{if } j \neq j', \\
		0, &\text{otherwise},
	\end{cases}\]
	and we observe that $A = \mathrm{Re}[M] + dI$. In particular, it follows that $\nrm{A} \leq (\nrm{M} + \nrm{\overline{M}})/2 + d = \nrm{M} + d$. Thus, it remains to prove $\nrm{M} \leq d$.
	
	Note that without loss of generality, we can assume that $r = d$. Indeed, if $r < d$, we are merely considering a submatrix of the matrix $M$ we obtain when we choose $r = d$, and hence the norm of $M$ is largest whenever $r = d$.
	
	Now, we characterize the spectrum of $M$ and its corresponding eigenvectors. To that end, for all $\ell,m \in [d]$, define the vector $v^{(\ell,m)} \in \C^{rd}$ as
	\[\left(v^{(\ell,m)}\right)^{(j)}_k = \omega_d^{\ell j - k(k-m)}.\]
	First, we prove that all these vectors are orthogonal to one another. For all $\ell,\ell',m,m' \in [d]$, we have
	\[\left(v^{(\ell,m)}\right)^{\dagger}v^{(\ell,m)} = \sum_{j,k=0}^{d-1} \omega_d^{-\ell j + k(k-m) + \ell'j - k(k-m')} = \sum_{j=0}^{d-1} \omega_d^{j(\ell'-\ell)} \sum_{k=0}^{d-1} \omega_d^{k(m'-m)},\]
	from which we easily verify that the right-hand side indeed vanishes when either $\ell \neq \ell'$ or $m \neq m'$.
	
	Next, we prove that all $v^{(\ell,m)}$ are indeed eigenvectors of $M$. To that end, let $\ell,m \in [d]$, and observe that for all $j,k \in [d]$,
	\[\left(Mv^{(\ell,m)}\right)_{j,k} = \sum_{j',k'=0}^{d-1} M^{(j,j')}_{k,k'}\left(v^{(\ell,m)}\right)^{(j')}_{k'} = \sum_{j',k'=0}^{d-1} \omega_d^{(k')^2 - k^2 - (4(j'-j))^{-1} (k'-k)^2 + \ell j' - k'(k'-m)}.\]
	We focus on the exponent on the right-hand side, and simplify the expression to
	\[-k^2 - (4(j'-j))^{-1} (k'-k)^2 + \ell j' + k'm.\]
	Next, observe that we can substitute $k'$ by $k' + k$ and $j'$ by $j' + j$, which simplifies the exponent to
	\[-k^2 - (4j')^{-1} (k')^2 + \ell j' + \ell j + k'm + km = -k(k-m) + \ell j + \ell j' - (4j')^{-1} (k' - 2j'm)^2 - j'm^2.\]
	Thus, we obtain
	\[\left(Mv^{(\ell,m)}\right)^{(j)}_k = \omega_d^{\ell j - k(k-m)} \sum_{j',k'=0}^{d-1} \omega_d^{(\ell-m^2)j' - (4j')^{-1}(k' - 2j'm)^2}.\]
	The phase factor we took outside the summation is equal to $(v^{(\ell,m)})^{(j)}_k$, and we can again substitute $k'$ by $k' + 2j'm$. Then, we recognize that we have a quadratic Gauss sum on the right-hand side, which we can evaluate to obtain
	\[\left(Mv^{(\ell,m)}\right)^{(j)}_k = \left(v^{(\ell,m)}\right)^{(j)}_k \sum_{j',k'=0}^{d-1} \omega_d^{(\ell-m^2)j' - (4j')^{-1}(k')^2} = \left(v^{(\ell,m)}\right)^{(j)}_k \sum_{j'=0}^{d-1} \omega_d^{(\ell-m^2)j'}\left(\frac{j'}{d}\right)c_d\sqrt{d},\]
	where we used standard computation rules to drop the $4$ and the inverse from the Legendre symbol. Now, recall that the Legendre symbol is only $1$ whenever $j'$ is a quadratic residue in $\mathbb{F}_d$. Thus, in general for $x \in [d]$,
	\[\sum_{j'=0}^{d-1} \omega_d^{xj'}\left(\frac{j'}{d}\right) = -\sum_{j'=0}^{d-1} \omega_d^{xj'} + \sum_{j'=0}^{d-1} \omega^{x(j')^2} = \left(\frac{x}{d}\right)c_d\sqrt{d}.\]
	Putting everything together yields
	\[\left(Mv^{(\ell,m)}\right)^{(j)}_k = dc_d^2\left(\frac{\ell-m^2}{d}\right)\left(v^{(\ell,m)}\right)^{(j)}_k.\]
	Thus, we conclude that $v^{(\ell,m)}$ is indeed an eigenvector of $M$, with eigenvalue $dc_d^2\left(\frac{\ell-m^2}{d}\right)$. Since $M$ is symmetric, the operator norm of $M$ is equal to its largest eigenvalue, and hence $\nrm{M} = d$.
\end{proof}

Next, we define the embedding of the a string $b \in \{0,1\}^{rd}$ into a density matrix $\rho_b$.

\begin{definition}
	Let $0 < \varepsilon < 1$, $d \in \N$ an odd prime, $r \in [d]$, and $b \in \{0,1\}^{rd}$. We write $b = (b^{(0)}, \dots, b^{(r-1)})$, where every block $b^{(j)}$ is a length $d$ bit string. For all $j \in [r]$, we define
	\[\ket{\psi_b^{(j)}} = \frac{1}{\sqrt{d}} \sum_{k=0}^{d-1} \left(\sqrt{\frac12 + \frac12\varepsilon(-1)^{b^{(j)}_k}}\ket{0} + \sqrt{\frac12 - \frac12\varepsilon(-1)^{b^{(j)}_k}}\ket{1}\right)\left(U^{(j)}\right)^{\dagger}\ket{k}.\]
	Then, we define the purification $\ket{\psi_b}$, embedding the bit string $b$, as
	\[\ket{\psi_b} = \frac{1}{\sqrt{r}} \sum_{j=0}^{r-1} \ket{\psi_b^{(j)}}\ket{j},\]
	and we find the density matrix $\rho_b$ by tracing out the final register in the above definition, i.e.,
	\[\rho_b = \frac1r \sum_{j=0}^{r-1} \ket{\psi_b^{(j)}}\bra{\psi_b^{(j)}}.\]
\end{definition}

Intuitively, if one learns $\rho_b$ up to high precision, then one also obtains much information about the bit string $b$. Thus, if we are given access to a density matrix $\rho_b$ for some unknown bit string $b \in \{0,1\}^{rd}$, and we find a good approximation of $\rho_b$, we can learn a small set of bit strings, one of which much be $b$ itself. In order to quantify how small this set of bit strings becomes, we analyze the distance between two given density matrices $\rho_b$ and $\rho_{\overline{b}}$, which is the objective of the following lemma.

\begin{lemma}\label{lem:frodist}
	Let $0 < \varepsilon = o(1/(d\sqrt{r}))$, $d \in \N$ an odd prime, $r \in [d]$, and $b,\overline{b} \in \{0,1\}^{rd}$. Let $\delta = (\delta^{(0)}, \dots, \delta^{(r-1)}) \in \{-2,0,2\}^{rd}$ be defined as
	\[\delta^{(j)}_k = (-1)^{b^{(j)}_k} - (-1)^{\overline{b}^{(j)}_k},\]
	and let $A \in \R^{rd \times rd}$ be as in \autoref{thm:mub}. Then,
	\[\nrm{\rho_b - \rho_{\overline{b}}}_2^2 = \frac{\varepsilon^2}{2d^2r^2}\delta^TA\delta + o\left(\frac{\varepsilon^2}{r}\right).\]
\end{lemma}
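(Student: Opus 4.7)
The plan is to Taylor-expand each purification $\ket{\psi_b^{(j)}}$ in $\varepsilon$, isolate the $b$-dependent piece at leading order, and compute the Frobenius norm of the difference as a quadratic form in $\delta$. Using $\sqrt{\tfrac12+\tfrac12\varepsilon\beta}\ket{0}+\sqrt{\tfrac12-\tfrac12\varepsilon\beta}\ket{1}=(1-\tfrac{\varepsilon^2}{8})\ket{+}+\tfrac{\varepsilon\beta}{2}\ket{-}+O(\varepsilon^3)$ for $\beta=\pm 1$, substitution gives
\[\ket{\psi_b^{(j)}}=(1-\tfrac{\varepsilon^2}{8})\ket{+}\ket{u_j}+\tfrac{\varepsilon}{2}\ket{-}\ket{v_b^{(j)}}+O(\varepsilon^3),\]
where $\ket{u_j}$ is the image of the uniform vector $\tfrac{1}{\sqrt d}\sum_k\ket{k}$ on the system register (and so is $b$-independent), while $\ket{v_b^{(j)}}$ is the image of the sign-weighted vector $\tfrac{1}{\sqrt d}\sum_k(-1)^{b_k^{(j)}}\ket{k}$. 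Because $\ket{u_j}$ does not depend on $b$, the dominant $\ket{+}\bra{+}\otimes\ketbra{u_j}{u_j}$ block cancels in $D_j:=\ketbra{\psi_b^{(j)}}{\psi_b^{(j)}}-\ketbra{\psi_{\overline b}^{(j)}}{\psi_{\overline b}^{(j)}}$. Setting $\ket{w_\delta^{(j)}}:=\ket{v_b^{(j)}}-\ket{v_{\overline b}^{(j)}}$, which depends only on $\delta^{(j)}$, I obtain
\[D_j=\tfrac{\varepsilon}{2}\bigl(\ket{+}\bra{-}\otimes\ketbra{u_j}{w_\delta^{(j)}}+\ket{-}\bra{+}\otimes\ketbra{w_\delta^{(j)}}{u_j}\bigr)+O(\varepsilon^2),\]
with the $O(\varepsilon^2)$ remainder confined to the $\ket{-}\bra{-}$ sector of the flag qubit.

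Next I compute $\Tr(D_j D_{j'})$ by multiplying the leading parts: the parity of the flag qubit kills $\ket{+}\bra{-}\cdot\ket{+}\bra{-}$ and $\ket{-}\bra{+}\cdot\ket{-}\bra{+}$, and what survives gives
\[\Tr(D_j D_{j'})=\tfrac{\varepsilon^2}{2}\Re\!\bigl[\braket{u_{j'}}{u_j}\,\braket{w_\delta^{(j)}}{w_\delta^{(j')}}\bigr]+O(\varepsilon^4).\]
I then identify each inner product via \autoref{thm:mub}: the uniform vector pushes forward to $\tfrac{1}{\sqrt d}\sum_k\ket{\varphi_k^{(j)}}$, giving $\braket{u_{j'}}{u_j}=\overline{S^{(j,j')}}/d$, and the sign-weighted vector pushes forward to $\tfrac{1}{\sqrt d}\sum_k\delta_k^{(j)}\ket{\varphi_k^{(j)}}$, giving $\braket{w_\delta^{(j)}}{w_\delta^{(j')}}=\tfrac{1}{d}\sum_{k,k'}\delta_k^{(j)}\delta_{k'}^{(j')}\alpha_{k,k'}^{(j,j')}$. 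Multiplying and taking real parts produces $\tfrac{1}{d^2}\sum_{k,k'}\delta_k^{(j)}\delta_{k'}^{(j')}A_{k,k'}^{(j,j')}$; summing over $j,j'$ assembles the full quadratic form $\tfrac{1}{d^2}\delta^{T}A\delta$.

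To finish, I divide by $r^2$ and bound the remainder. The mixed-order cross terms $\Tr(D_j^{(1)}D_{j'}^{(2)})$ vanish because $\ket{+}\bra{-}\cdot\ket{-}\bra{-}$ traces to $\braket{-}{+}=0$, while $\Tr(D_j^{(2)}D_{j'}^{(2)})$ and the interactions with the $O(\varepsilon^3)$ tail of the Taylor expansion each contribute $O(\varepsilon^4)$ per pair. Summing over the $r^2$ pairs and dividing by $r^2$ leaves a total remainder of $O(\varepsilon^4)$, and the hypothesis $\varepsilon=o(1/(d\sqrt r))$ yields $\varepsilon^2 r=o(1/d^2)=o(1)$, hence $\varepsilon^4=o(\varepsilon^2/r)$, as claimed.

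The main obstacle will be the algebraic identification of $\braket{u_{j'}}{u_j}\braket{w_\delta^{(j)}}{w_\delta^{(j')}}$ with $\tfrac{1}{d^2}\sum_{k,k'}\delta_k^{(j)}\delta_{k'}^{(j')}\alpha_{k,k'}^{(j,j')}\overline{S^{(j,j')}}$, which requires carefully pushing the uniform and sign-weighted vectors through $U^{(j)}$ using $U^{(j)}\ket{k}=\ket{\varphi_k^{(j)}}$ so that they land in combinations of the mutually unbiased bases, and then invoking $\alpha_{k,k'}^{(j,j')}=\braket{\varphi_k^{(j)}}{\varphi_{k'}^{(j')}}$ together with $S^{(j,j')}=\sum_{k,k'}\alpha_{k,k'}^{(j,j')}$ from \autoref{thm:mub}. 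Once this is in place, the cancellation of the $\ket{+}\bra{+}$ block, the trace-orthogonality argument that kills the odd-order cross terms, and the remainder bookkeeping are all mechanical.
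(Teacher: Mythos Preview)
Your route is genuinely different from the paper's and cleaner. The paper expands $(\rho_b)_{(c,k),(c',k')}$ entrywise, subtracts, squares, and then brute-forces the sum over the flag indices $c,c'$ (checking all bit assignments of $b_\ell^{(j)},\overline b_\ell^{(j)},b_{\ell'}^{(j')},\overline b_{\ell'}^{(j')}$ to extract the $\varepsilon^2[\delta_\ell^{(j)}\delta_{\ell'}^{(j')}+\delta_m^{(j)}\delta_{m'}^{(j')}]$ term) before recognizing the remaining $U^{(j)}$ contractions as $\alpha^{(j,j')}$ and $\overline{S^{(j,j')}}$. You instead Taylor-expand the flag qubit in the $\ket{\pm}$ basis so that the $b$-independent $\ket{+}\ket{u_j}$ block cancels directly in $D_j$; flag-qubit orthogonality then kills two of the four products in $\Tr(D_jD_{j'})$, and what survives is already a product of two scalar inner products on the system register. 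This bypasses the $(c,c')$ casework entirely and makes the rank-one structure of the leading term transparent. Your remainder bookkeeping (the $O(\varepsilon^3)$ cross terms vanish because $\Tr(\ket{+}\bra{-}\cdot\ket{-}\bra{-})=0$, and the $O(\varepsilon^4)$ residue summed over $r^2$ pairs is $o(\varepsilon^2/r)$ under $\varepsilon=o(1/(d\sqrt r))$) is correct.

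One step deserves more care. You assert that $\ket{u_j}=\tfrac{1}{\sqrt d}\sum_k\ket{\varphi_k^{(j)}}$ and $\ket{w_\delta^{(j)}}=\tfrac{1}{\sqrt d}\sum_k\delta_k^{(j)}\ket{\varphi_k^{(j)}}$ by invoking $U^{(j)}\ket{k}=\ket{\varphi_k^{(j)}}$, but the definition of $\ket{\psi_b^{(j)}}$ applies $(U^{(j)})^{\dagger}$, not $U^{(j)}$, to $\ket{k}$; so the overlap matrix that actually appears in $\braket{u_{j'}}{u_j}$ and $\braket{w_\delta^{(j)}}{w_\delta^{(j')}}$ is $U^{(j')}(U^{(j)})^{\dagger}$ rather than $(U^{(j')})^{\dagger}U^{(j)}$, and these are not the same object. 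The paper does not take this shortcut: it writes the $\ket{k}$-coefficient of $(U^{(j)})^{\dagger}\ket{\ell}$ as $\overline{U}^{(j)}_{k,\ell}$ and contracts $\sum_k\overline{U}^{(j)}_{k,\ell}U^{(j')}_{k,\ell'}$ directly against the defining formula $\alpha^{(j,j')}_{\ell,\ell'}=\sum_k\overline{U}^{(j)}_{k,\ell}U^{(j')}_{k,\ell'}$ from \autoref{thm:mub}. Your one-line ``push forward'' hides exactly this index bookkeeping; to make the identification with $\alpha^{(j,j')}$ and $\overline{S^{(j,j')}}$ airtight you should spell out that contraction explicitly (matching the paper's coefficient convention) rather than relying on the heuristic $U^{(j)}\ket{k}=\ket{\varphi_k^{(j)}}$.
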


\begin{proof}
	From the definition, we observe that
	\begin{align*}
		\ket{\psi_b^{(j)}} &= \frac{1}{\sqrt{d}} \sum_{\ell=0}^{d-1} \sum_{c \in \{0,1\}} \sqrt{\frac12 + \frac12\varepsilon(-1)^{c+b_{\ell}^{(j)}}}\ket{c} \left(U^{(j)}\right)^{\dagger}\ket{\ell} \\
		&= \frac{1}{\sqrt{d}} \sum_{k,\ell=0}^{d-1} \overline{U}^{(j)}_{k,\ell} \sum_{c \in \{0,1\}} \sqrt{\frac12 + \frac12\varepsilon(-1)^{c + b_{\ell}^{(j)}}}\ket{c}\ket{k},
	\end{align*}
	which implies that
	\begin{align*}
		\rho_b &= \frac1r \sum_{j=0}^{r-1} \ket{\psi_b^{(j)}}\bra{\psi_b^{(j)}} \\
		&= \frac{1}{rd} \sum_{j=0}^{r-1} \sum_{k,k',\ell,m=0}^{d-1} \overline{U}^{(j)}_{k,\ell}U^{(j)}_{k',m} \sum_{c,c' \in \{0,1\}} \sqrt{\frac12 + \frac12\varepsilon(-1)^{c+b_{\ell}^{(j)}}} \sqrt{\frac12 + \frac12\varepsilon(-1)^{c'+b_m^{(j)}}} \ket{c}\ket{k} \bra{c'}\bra{k'}.
	\end{align*}
	From this, it follows directly that
	\[\rho_b - \rho_{\overline{b}} = \frac{1}{rd} \sum_{j=0}^{r-1} \sum_{k,k',\ell,m=0}^{d-1} \overline{U}^{(j)}_{k,\ell}U^{(j)}_{k',m} \sum_{c,c' \in \{0,1\}} \left(C^{(c,c',j)}_{\ell,m} - \overline{C}^{(c,c',j)}_{\ell,m}\right) \ket{c}\ket{k} \bra{c'}\bra{k'},\]
	where we used the abbreviation
	\[C_{\ell,m}^{(c,c',j)} = \frac12 \sqrt{1+\varepsilon(-1)^{c+b_{\ell}^{(j)}}}\sqrt{1+\varepsilon(-1)^{c'+b_m^{(j)}}},\]
	and similarly for $\overline{C}$, where we replace every occurrence of $b$ by $\overline{b}$. Since the square of the Frobenius norm of a matrix is the sum of all entries squared, we obtain
	\begin{align*}
		\nrm{\rho_b - \rho_{\overline{b}}}_2^2 &= \frac{1}{r^2d^2} \sum_{k,k'=0}^{d-1} \sum_{j,j'=0}^{r-1} \sum_{\ell,\ell',m,m'=0}^{d-1} \overline{U}_{k,\ell}^{(j)}U^{(j)}_{k',m} U^{(j')}_{k,\ell'}\overline{U}^{(j')}_{k',m'} \\
		&\qquad \cdot \sum_{c,c' \in \{0,1\}} \left(C_{\ell,m}^{(c,c',j)} - \overline{C}_{\ell,m}^{(c,c',j)}\right)\left(C_{\ell',m'}^{(c,c',j')} - \overline{C}_{\ell',m'}^{(c,c',j')}\right).
	\end{align*}
	First, we focus on the last summation. To that end, observe that
	\[\sum_{c \in \{0,1\}} \sqrt{1 + \varepsilon(-1)^{c+b_{\ell}^{(j)}}}\sqrt{1 + \varepsilon(-1)^{c+b_{\ell'}^{(j')}}} = \begin{cases}
		2, & \text{if } b_{\ell}^{(j)} = b_{\ell'}^{(j')}, \\
		2\sqrt{1-\varepsilon^2}, & \text{if } b_{\ell}^{(j)} \neq b_{\ell'}^{(j')},
	\end{cases}\]
	where we can abbreviate the right-hand side to $2 - 2f_{\varepsilon}1_{b_{\ell}^{(j)} \neq b_{\ell'}^{(j')}}$, where $f_{\varepsilon} = 1 - \sqrt{1-\varepsilon^2}$. Thus, by simply expanding all terms, we obtain
	\begin{align*}
		&4\sum_{c,c' \in \{0,1\}} \left(C_{\ell,m}^{(c,c',j)} - \overline{C}_{\ell,m}^{(c,c',j)}\right)\left(C_{\ell',m'}^{(c,c',j')} - \overline{C}_{\ell',m'}^{(c,c',j')}\right) \\
		&= 4\sum_{c,c' \in \{0,1\}} C_{\ell,m}^{(c,c',j)}C_{\ell',m'}^{(c,c',j')} + 4\sum_{c,c' \in \{0,1\}} \overline{C}_{\ell,m}^{(c,c',j)}\overline{C}_{\ell',m'}^{(c,c',j')} \\
		&\qquad - 4\sum_{c,c' \in \{0,1\}} C_{\ell,m}^{(c,c',j)}\overline{C}_{\ell',m'}^{(c,c',j')} - 4\sum_{c,c' \in \{0,1\}} \overline{C}_{\ell,m}^{(c,c',j)}C_{\ell',m'}^{(c,c',j')} \\
		&= \left(2 - 2f_{\varepsilon}1_{b_{\ell}^{(j)} \neq b_{\ell'}^{(j')}}\right)\left(2 - 2f_{\varepsilon}1_{b_m^{(j)} \neq b_{m'}^{(j')}}\right) + \left(2 - 2f_{\varepsilon}1_{\overline{b}_{\ell}^{(j)} \neq \overline{b}_{\ell'}^{(j')}}\right)\left(2 - 2f_{\varepsilon}1_{\overline{b}_m^{(j)} \neq \overline{b}_{m'}^{(j')}}\right) \\
		&\qquad - \left(2 - 2f_{\varepsilon}1_{b_{\ell}^{(j)} \neq \overline{b}_{\ell'}^{(j')}}\right)\left(2 - 2f_{\varepsilon}1_{b_m^{(j)} \neq \overline{b}_{m'}^{(j')}}\right) - \left(2 - 2f_{\varepsilon}1_{\overline{b}_{\ell}^{(j)} \neq b_{\ell'}^{(j')}}\right)\left(2 - 2f_{\varepsilon}1_{\overline{b}_m^{(j)} \neq b_{m'}^{(j')}}\right) \\
		&= 2f_{\varepsilon}\left[\delta_{\ell}^{(j)}\delta_{\ell'}^{(j')} + \delta_m^{(j)}\delta_{m'}^{(j')}\right] + \mathcal{O}(f_{\varepsilon}^2) = \varepsilon^2\left[\delta_{\ell}^{(j)}\delta_{\ell'}^{(j')} + \delta_m^{(j)}\delta_{m'}^{(j')}\right] + \mathcal{O}(\varepsilon^4),
	\end{align*}
	where the rewriting into $\delta$'s is best checked by brute forcing all assignments of the bits involved. Putting everything back together, we obtain
	\[\nrm{\rho_b - \rho_{\overline{b}}}_2^2 = \frac{\varepsilon^2}{4r^2d^2} \sum_{k,k'=0}^{d-1} \sum_{j,j'=0}^{r-1} \sum_{\ell,\ell',m,m'}^{d-1} \overline{U}^{(j)}_{k,\ell}U^{(j)}_{k',m} U^{(j')}_{k,\ell'} \overline{U}^{(j')}_{k',m'} \left[\delta_{\ell}^{(j)}\delta_{\ell'}^{(j')} + \delta_m^{(j)}\delta_{m'}^{(j')}\right] + \mathcal{O}(d^2\varepsilon^4).\]
	Since we chose $\varepsilon = o(1/(d\sqrt{r}))$, we obtain that the final term becomes $o(\varepsilon^2/r)$. Thus, we can rewrite the summation to obtain
	\begin{align*}
		\nrm{\rho_b - \rho_{\overline{b}}}_2^2 &= \frac{\varepsilon^2}{4r^2d^2}  \sum_{j,j'=0}^{r-1} \sum_{\ell,\ell',m,m'=0}^{d-1} \sum_{k=0}^{d-1} \overline{U}^{(j)}_{k,\ell}U^{(j')}_{k,\ell'} \sum_{k'=0}^{d-1} U^{(j)}_{k',m} \overline{U}^{(j')}_{k',m'} \left[\delta_{\ell}^{(j)}\delta_{\ell'}^{(j')} + \delta_m^{(j)}\delta_{m'}^{(j')}\right] + o\left(\frac{\varepsilon^2}{r}\right) \\
		&= \frac{\varepsilon^2}{4r^2d^2} \sum_{j,j'=0}^{r-1} \sum_{\ell,\ell',m,m'=0}^{d-1} \alpha^{(j,j')}_{\ell,\ell'} \overline{\alpha}^{(j,j')}_{m,m'} \left[\delta_{\ell}^{(j)}\delta_{\ell'}^{(j')} + \delta_m^{(j)}\delta_{m'}^{(j')}\right] + o\left(\frac{\varepsilon^2}{r}\right) \\
		&= \frac{\varepsilon^2}{4r^2d^2} \sum_{j,j'=0}^{r-1} \left[\sum_{\ell,\ell'=0}^{d-1} \delta_{\ell}^{(j)} \alpha^{(j,j')}_{\ell,\ell'} \sum_{m,m'}^{d-1} \overline{\alpha}^{(j,j')}_{m,m'} \delta_{\ell'}^{(j')} + \overline{\sum_{m,m'}^{d-1} \delta_m^{(j)}\alpha^{(j,j')}_{m,m'} \sum_{\ell,\ell'}^{d-1} \overline{\alpha}^{(j,j')}_{\ell,\ell'} \delta_{m'}^{(j')}}\right] + o\left(\frac{\varepsilon^2}{r}\right) \\
		&= \frac{\varepsilon^2}{2r^2d^2} \sum_{j,j'=0}^{r-1} \sum_{\ell,\ell'=0}^{d-1} \delta_{\ell}^{(j)} \mathrm{Re}\left[\alpha^{(j,j')}_{\ell,\ell'} \sum_{m,m'}^{d-1} \overline{\alpha}^{(j,j')}_{m,m'}\right] \delta_{\ell'}^{(j')} + o\left(\frac{\varepsilon^2}{r}\right) \\
		&= \frac{\varepsilon^2}{2r^2d^2} \delta^T A \delta + o\left(\frac{\varepsilon^2}{r}\right).
	\end{align*}
	This completes the proof.
\end{proof}

In the previous lemma, we related the Frobenius norm distance between two density matrices $\rho_b$ and $\rho_{\overline{b}}$ to the inner product matrix $A$ of the mutually unbiased basis, as defined in \autoref{thm:mub}. Next, we can use this characterization to investigate the Frobenius distance we can expect between two density matrices $\rho_b$ and $\rho_{\overline{b}}$, when both bit strings $b$ and $\overline{b}$ are chosen independently and uniformly at random. This is the objective of the following lemma.

\begin{lemma}\label{lem:tailbound}
	Let $d \in \N$ an odd prime, and $r \in [d]$. Let $b,\overline{b} \in \{0,1\}^{rd}$ be be bit strings chosen independently and uniformly at random. Let $\delta$ and $A$ be as in the previous lemma. Then,
	\[\mathbb{E}\left[\delta^TA\delta\right] = 2d^2r,\]
	and there exist absolute constants $c,C > 0$ such that
	\[\P\left[\delta^TA\delta \leq \frac{d^2r}{2}\right] \leq Ce^{-crd}.\]
\end{lemma}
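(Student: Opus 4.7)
The plan is to analyze the quadratic form $\delta^T A \delta$ as a function of independent bounded random variables, using the operator and Frobenius norm bounds on $A$ established in \autoref{thm:mub} (claim 5).

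First I would compute the expectation. Since $b, \overline{b}$ are independent uniform bits, each coordinate $\delta_i$ takes the value $0$ with probability $1/2$ and the values $\pm 2$ each with probability $1/4$. The coordinates of $\delta$ are mutually independent (different coordinates involve disjoint sets of bits), and we have $\mathbb{E}[\delta_i] = 0$ and $\mathbb{E}[\delta_i^2] = 2$. Hence
\[
\mathbb{E}[\delta^T A \delta] = \sum_{i,i'} A_{i,i'} \mathbb{E}[\delta_i \delta_{i'}] = 2 \operatorname{Tr}(A).
\]
From the definition of $A$ in \autoref{thm:mub}, the diagonal entries are $A^{(j,j)}_{k,k} = \mathrm{Re}[\alpha^{(j,j)}_{k,k}\overline{S}^{(j,j)}] = \mathrm{Re}[1 \cdot d] = d$, so $\operatorname{Tr}(A) = rd \cdot d = rd^2$ and $\mathbb{E}[\delta^T A \delta] = 2d^2 r$, proving the first claim.

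For the tail bound, I would apply the Hanson--Wright inequality for quadratic forms in independent sub-Gaussian random variables (see, e.g., Rudelson--Vershynin). Since each $\delta_i$ is bounded in $[-2,2]$ it is sub-Gaussian with an absolute constant sub-Gaussian norm $K$, and the $\delta_i$ are independent with mean zero. Hanson--Wright then yields
\[
\P\!\left[\left|\delta^T A \delta - \mathbb{E}[\delta^T A \delta]\right| \geq t\right] \leq 2 \exp\!\left(-c'\min\!\left\{\frac{t^2}{K^4 \|A\|_2^2},\; \frac{t}{K^2 \|A\|}\right\}\right),
\]
for an absolute constant $c' > 0$. Setting $t = 3d^2 r/2$, the event $\delta^T A \delta \leq d^2 r/2$ is contained in the deviation event on the left. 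Plugging in the bounds $\|A\|_2^2 \leq 4 d^3 r$ and $\|A\| \leq 2d$ from \autoref{thm:mub}, both ratios inside the minimum become $\Omega(rd)$:
\[
\frac{t^2}{\|A\|_2^2} \;\geq\; \frac{9 d^4 r^2/4}{4 d^3 r} \;=\; \frac{9 dr}{16}, \qquad \frac{t}{\|A\|} \;\geq\; \frac{3 dr}{4}.
\]
Absorbing $K$ and numerical constants into fresh absolute constants $c, C$ then yields the desired bound $\P[\delta^T A \delta \leq d^2 r/2] \leq C e^{-c r d}$.

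Since Hanson--Wright is off-the-shelf and the algebraic inputs (independence of the $\delta_i$, the trace and norm bounds on $A$) are already supplied by \autoref{thm:mub}, I do not anticipate a serious obstacle. The only mild subtlety is verifying that the coordinates of $\delta$ are truly independent despite their indexing by $(j,k)$, which is immediate since each $(j,k)$ pair corresponds to a distinct bit of $b$ and $\overline{b}$.
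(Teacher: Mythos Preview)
Your proposal is correct and essentially identical to the paper's proof: both compute the expectation via $\mathbb{E}[\delta_i^2]=2$ and $\operatorname{Tr}(A)=rd^2$, and both obtain the tail bound by applying the Hanson--Wright inequality with $t=3d^2r/2$ together with the bounds $\|A\|\le 2d$ and $\|A\|_2^2\le 4d^3r$ from \autoref{thm:mub}.
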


\begin{proof}
	Observe that $\delta \in \{-2,0,2\}^{rd}$, and all entries of $\delta$ are independent and distributed according to
	\[\P[\delta_j = -2] = \frac14, \qquad \P[\delta_j = 0] = \frac12, \qquad \text{and} \qquad \P[\delta_j = 2] = \frac14.\]
	We immediately observe that $\mathbb{E}[\delta_j] = 0$ and $\mathbb{E}[\delta_j^2] = 2$. Thus,
	\[\mathbb{E}[\delta^TA\delta] = \sum_{j,j'=0}^{r-1} \sum_{k,k'=0}^{d-1} \mathbb{E}\left[\delta_k^{(j)}\delta_{k'}^{(j')}\right] A^{(j,j')}_{k,k'} = 2\sum_{j=0}^{r-1} \sum_{k=0}^{d-1} A_{k,k}^{(j,j)} = 2\sum_{j=0}^{r-1} \sum_{k=0}^{d-1} \mathrm{Re}\left[\alpha^{(j,j)}_{k,k} \sum_{m,m'}^{d-1} \overline{\alpha}^{(j,j)}_{m,m'}\right] = 2d^2r.\]
	Furthermore, observe that all elements $\delta_j$ are subgaussian with some constant parameter, i.e., a parameter that is $\Theta(1)$. This allows us to invoke the Hanson-Wright inequality of subgaussian concentration~\cite{hanson1971bound,rudelson2013hanson}. According to such inequality, there exist positive absolute constants $c,C > 0$ such that for all $t > 0$,
	\[\P\left[\left|\delta^TA\delta - \mathbb{E}[\delta^TA\delta]\right| \geq t\right] \leq C\exp\left(-c\min\left\{\frac{t^2}{\nrm{A}_2^2}, \frac{t}{\nrm{A}}\right\}\right).\]
	Thus, by plugging in $t = 3d^2r/2$, $\nrm{A}_2^2 \leq 4d^3r$ and $\nrm{A} \leq 2d$, as proved in \autoref{thm:mub}, we obtain that
	\[\P\left[\delta^TA\delta \leq \frac{d^2r}{2}\right] \leq \P\left[\left|\delta^TA\delta - \mathbb{E}[\delta^TA\delta]\right| \geq \frac{3d^2r}{2}\right] \leq C\exp\left(-c\min\left\{\frac{9d^4r^2}{16d^3r}, \frac{3d^2r}{4d}\right\}\right) = Ce^{-c'rd},\]
	where $c' = 9c/16$.
\end{proof}

The above observation allows us to conclude that the distribution of $\nrm{\rho_b - \rho_{\overline{b}}}_2^2$ is tightly concentrated around its mean. The next lemma formalizes this statement, and uses the concentration to obtain a tail bound in the low Frobenius norm regime.

\begin{lemma}
	Let $d \in \N$ an odd prime, $r \in [d]$, and $0 < \varepsilon = o(1/(d\sqrt{r}))$. Then,
	\[\mathbb{E}\left[\nrm{\rho_b - \rho_{\overline{b}}}_2^2\right] = \frac{\varepsilon^2}{r}\left(1 + o(1)\right),\]
	and there exist absolute constants $c,C > 0$ such that
	\[\P\left[\nrm{\rho_b - \rho_{\overline{b}}}_2^2 \leq \frac{\varepsilon^2}{8r}\right] \leq Ce^{-crd}.\]
\end{lemma}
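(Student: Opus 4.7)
The claim is essentially a direct translation of the quadratic-form statement \autoref{lem:tailbound} through the formula established in \autoref{lem:frodist}, so the plan is to combine the two with care for the $o(\varepsilon^2/r)$ error terms.

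First, for the expectation, I would take expectation on both sides of the identity
\[
\nrm{\rho_b - \rho_{\overline{b}}}_2^2 = \frac{\varepsilon^2}{2d^2r^2}\delta^TA\delta + o\!\left(\frac{\varepsilon^2}{r}\right)
\]
from \autoref{lem:frodist}, and plug in $\mathbb{E}[\delta^TA\delta] = 2d^2r$ from \autoref{lem:tailbound}. This gives
\[
\mathbb{E}\left[\nrm{\rho_b - \rho_{\overline{b}}}_2^2\right] = \frac{\varepsilon^2}{2d^2r^2}\cdot 2d^2 r + o\!\left(\frac{\varepsilon^2}{r}\right) = \frac{\varepsilon^2}{r}\bigl(1+o(1)\bigr).
\]

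For the tail bound, the idea is contrapositive. If $\delta^TA\delta > d^2 r/2$, then by \autoref{lem:frodist}
\[
\nrm{\rho_b - \rho_{\overline{b}}}_2^2 \;>\; \frac{\varepsilon^2}{2d^2r^2}\cdot\frac{d^2 r}{2} + o\!\left(\frac{\varepsilon^2}{r}\right) \;=\; \frac{\varepsilon^2}{4r} + o\!\left(\frac{\varepsilon^2}{r}\right).
\]
For $\varepsilon = o(1/(d\sqrt{r}))$ small enough, the $o(\varepsilon^2/r)$ correction is, in absolute value, at most $\varepsilon^2/(8r)$, so the right-hand side exceeds $\varepsilon^2/(8r)$. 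Thus the event $\{\nrm{\rho_b-\rho_{\overline b}}_2^2 \le \varepsilon^2/(8r)\}$ is contained in $\{\delta^T A\delta \le d^2 r/2\}$, and the tail bound from \autoref{lem:tailbound} yields
\[
\P\!\left[\nrm{\rho_b - \rho_{\overline{b}}}_2^2 \le \frac{\varepsilon^2}{8r}\right] \le \P\!\left[\delta^T A \delta \le \frac{d^2 r}{2}\right] \le Ce^{-crd},
\]
as required.

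There is no substantial obstacle here: the real work was already done in \autoref{lem:frodist} (algebraic expansion of the Frobenius distance in terms of the mutually unbiased basis inner products, with a controlled higher-order $\mathcal{O}(d^2\varepsilon^4)=o(\varepsilon^2/r)$ remainder) and in \autoref{lem:tailbound} (computing $\mathbb{E}[\delta^T A\delta]$ and invoking Hanson--Wright with the bounds $\nrm{A}\le 2d$, $\nrm{A}_2^2\le 4d^3 r$ from \autoref{thm:mub}). The only minor thing to verify is that the hidden $o(\varepsilon^2/r)$ term is genuinely negligible compared to $\varepsilon^2/(8r)$, which follows because $\varepsilon = o(1/(d\sqrt{r}))$ makes the remainder $\mathcal{O}(d^2\varepsilon^4) = o(\varepsilon^2/r)$ uniformly in $b,\overline{b}$.
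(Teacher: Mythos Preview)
Your proposal is correct and follows essentially the same route as the paper: take expectations in the identity of \autoref{lem:frodist} using $\mathbb{E}[\delta^TA\delta]=2d^2r$, then for the tail bound absorb the $o(\varepsilon^2/r)$ remainder into the gap between $\varepsilon^2/(4r)$ and $\varepsilon^2/(8r)$ so that $\{\nrm{\rho_b-\rho_{\overline b}}_2^2\le \varepsilon^2/(8r)\}\subseteq\{\delta^TA\delta\le d^2r/2\}$ and apply \autoref{lem:tailbound}.
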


\begin{proof}
	The statement follows directly from the previous two lemmas. We know from \autoref{lem:frodist} that
	\[\mathbb{E}\left[\nrm{\rho_b - \rho_{\overline{b}}}_2^2\right] = \frac{\varepsilon^2}{2r^2d^2} \mathbb{E}\left[\delta^TA\delta\right] + o\left(\frac{\varepsilon^2}{r}\right) = \frac{\varepsilon^2}{r}\left(1 + o(1)\right),\]
	where the last equality follows from \autoref{lem:tailbound}. Furthermore, note that by choosing $\varepsilon$ small enough (i.e., choosing a sufficiently small constant in the small-$o$-notation), we can ensure that $o(\varepsilon^2/r)$ is smaller than $\varepsilon^2/(4r)$. Then, we obtain that there are indeed absolute constants $c,C > 0$ such that
	\[\P\left[\nrm{\rho_b - \rho_{\overline{b}}}_2^2 \leq \frac{\varepsilon^2}{8r}\right] \leq \P\left[\frac{\varepsilon^2}{2r^2d^2}\delta^TA\delta \leq \frac{\varepsilon^2}{4r}\right] = \P\left[\delta^TA\delta \leq \frac{d^2r^2}{2}\right] \leq Ce^{-crd},\]
	where the last inequality follows from \autoref{lem:tailbound}. 
\end{proof}

Now, we are able to finish the proof. The proof strategy followed from here onward is very similar to those presented in \cite{cornelissen2021quantum},~Section~5.

\begin{theorem}\label{thm:lb-frob-norm-tomo}
	Let $d \in \N$, and $r \in [d]$. Let $0 < \varepsilon = o(1/(d\sqrt{r}))$. Suppose that we have a $Q$-query quantum algorithm that given access to an (inverse) state-preparation unitary for a purification of an $rd \times rd$ density matrix $\rho$, outputs an approximation $\widetilde{\rho}$ such that $\nrm{\widetilde{\rho} - \rho}_2 \leq \varepsilon/(2\sqrt{8r})$, with probability at least $2/3$. Then, $Q = \Omega(dr/\varepsilon)$.
\end{theorem}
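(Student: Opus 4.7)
The plan is to reduce a known hard problem---recovering a constant fraction of a bit string from queries to a fractional phase oracle---to the tomography task, so that any $Q$-query tomography algorithm yields a $\mathcal{O}(Q)$-query bit-recovery algorithm, forcing $Q = \Omega(rd/\varepsilon)$.

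The first step is to observe that the purifying unitary for $\rho_b$ can be implemented with only $\mathcal{O}(1)$ queries to the controlled $\varepsilon$-fractional phase oracle $O_b\colon \ket{j}\ket{k}\mapsto e^{2\pi i\varepsilon(-1)^{b^{(j)}_k}}\ket{j}\ket{k}$ (and its inverse), plus $\polylog(rd)$ additional gates. Concretely, begin from a uniform superposition over $(j,k,c)\in [r]\times[d]\times\{0,1\}$, and use the same Hadamard-plus-controlled-phase-oracle trick as in the proof of \autoref{lem:lb-ell1-inverse} (conditioning the direction of the phase on the $c$-register) to realize the coefficients $\sqrt{\tfrac12 \pm \tfrac12\varepsilon(-1)^{b^{(j)}_k}}$ on the $c$-qubit. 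Applying the basis-change unitary $\sum_j \ketbra{j}{j}\otimes (U^{(j)})^\dagger$ (data-independent, hence free of phase-oracle queries) produces exactly $\ket{\psi_b}$ with the purifying register $\ket{j}$ traced out giving $\rho_b$.

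The second step is an information-theoretic averaging argument. Suppose the tomography algorithm outputs $\widetilde{\rho}$ with $\nrm{\widetilde{\rho}-\rho_b}_2 \le \varepsilon/(2\sqrt{8r})$ with probability $\ge 2/3$ over its internal randomness. Given such a $\widetilde{\rho}$, define the candidate set $\mathcal{S}(\widetilde{\rho}):=\{\overline{b}\in\{0,1\}^{rd}:\nrm{\widetilde{\rho}-\rho_{\overline{b}}}_2\le \varepsilon/(2\sqrt{8r})\}$; by the triangle inequality any two strings in $\mathcal{S}(\widetilde{\rho})$ produce density matrices within Frobenius distance $\varepsilon/\sqrt{8r}$, i.e., squared distance at most $\varepsilon^2/(8r)$. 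Averaging the tail bound of the previous lemma over a uniformly random second argument gives
\[
\mathbb{E}_{\overline{b}}\bigl[\mathbf{1}_{\overline{b}\in \mathcal{S}(\widetilde{\rho})}\bigr] \le Ce^{-crd},
\]
so for a uniformly random true $b$ the expected size of $\mathcal{S}(\widetilde{\rho})$ is at most $C\cdot 2^{rd}e^{-crd}$. Hence, outputting a uniformly random element of $\mathcal{S}(\widetilde{\rho})$ produces a string agreeing with $b$ in at least $rd - \log_2|\mathcal{S}(\widetilde{\rho})| = \Omega(rd)$ coordinates of mutual information (equivalently, with probability $\Omega(1)$ we can recover a constant fraction of the bits of $b$, using a Fano-type or counting argument and absorbing the $2/3$ success probability and the $e^{-crd}$ concentration failure into the constants).

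The final step invokes the known $\Omega(rd/\varepsilon)$ quantum query lower bound on recovering a constant fraction of a uniformly random bit string via an $\varepsilon$-fractional phase oracle (the same result used in \autoref{lem:lb-ell1-inverse}, obtained from the general adversary method combined with Appendix~B of \cite{lee2011QQueryCompStateConv}). Since each call of the tomography algorithm spends $\mathcal{O}(1)$ phase-oracle queries, this yields $Q=\Omega(rd/\varepsilon)$. The main technical obstacle is the second step: one has to verify carefully that the concentration inequality survives conditioning on the (at most constant-probability) failure events of both the tomography procedure and the tail bound, which is where the specific choice of threshold $\varepsilon/(2\sqrt{8r})$ in the statement---exactly half the concentration threshold $\varepsilon/\sqrt{8r}$---becomes essential for the triangle-inequality step.
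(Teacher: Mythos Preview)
Your overall architecture matches the paper: implement $U_b$ preparing $\ket{\psi_b}$ with $\mathcal{O}(1)$ calls to the $\varepsilon$-fractional phase oracle, run tomography, use the triangle inequality so that any two strings in the candidate set $\mathcal{S}(\widetilde{\rho})$ have density matrices within Frobenius distance $\varepsilon/\sqrt{8r}$, and bound the expected size of $\mathcal{S}(\widetilde{\rho})$ by $C\,2^{rd}e^{-crd}$ using the concentration lemma. Your averaging over~$b$ is essentially equivalent to the paper's explicit removal of a ``bad'' set $B$ of high-degree vertices; either works.

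The gap is in the last sentence of your second step. From $|\mathcal{S}(\widetilde{\rho})|\le 2^{(1-c')rd}$ you conclude that a uniformly random element of $\mathcal{S}(\widetilde{\rho})$ ``agrees with $b$ in at least $rd-\log_2|\mathcal{S}(\widetilde{\rho})|=\Omega(rd)$ coordinates'' and hence that you recover a constant fraction of the bits with probability $\Omega(1)$. This inference is false. Nothing in the construction ties Frobenius closeness of $\rho_b,\rho_{\overline b}$ to Hamming closeness of $b,\overline b$; the concentration lemma only says that \emph{few} strings $\overline b$ are Frobenius-close, not that those few are Hamming-close. A set of size $2^{(1-c')rd}$ can easily consist of strings that each agree with $b$ on only about half the coordinates (think of a random subset of $\{0,1\}^{rd}$). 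So your step~3, which invokes the lower bound for constant-fraction recovery with constant probability, is applied to a conclusion you have not established.

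What the small candidate set \emph{does} buy you is exact recovery of $b$ with probability $\ge \tfrac{2}{3}\cdot|\mathcal{S}(\widetilde{\rho})|^{-1}\ge \Omega(2^{-(1-c')rd})$, still exponentially small but much larger than $2^{-rd}$. The paper closes the argument precisely here, but with a different tool: it converts the fractional oracle to a standard phase oracle at a cost factor $\Theta(\varepsilon)$, and then applies the counting bound of Farhi et al.\ (\cite{farhi1999bounds}, Equation~4), which says that a $T$-query algorithm identifying each of $N$ inputs with probability $\ge p$ must satisfy $Np\le\sum_{k\le T}\binom{rd}{k}\le 2^{rd\cdot H(T/rd)}$. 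Plugging in $N\ge 2^{rd-1}$ and $p\ge \Omega(2^{-(1-c')rd})$ forces $H(T/rd)=\Omega(1)$, hence $T=\Omega(rd)$ and $Q=\Omega(rd/\varepsilon)$. Replacing your ``constant-fraction recovery'' step with this low-probability exact-recovery bound makes the argument go through, and is exactly what the paper does.
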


\begin{proof}
	First, without loss of generality we can assume that $d$ is an odd prime. Indeed, if it is not, we can find the next odd prime that is higher than $d$, which by Bertrand's postulate~\cite{bertrand1845memoire} does not increase $d$ by more than a factor of $2$.
	
	Next, let $G$ be a bipartite graph with $2 \cdot 2^{rd}$ nodes, labeled  by the bit strings $b \in \{0,1\}^{rd}$ and $\overline{b} \in \{0,1\}^{rd}$. Let there be an edge between $b$ and $\overline{b}$, if $\nrm{\rho_b - \rho_{\overline{b}}}_2 \leq \varepsilon/\sqrt{8r}$. From the previous lemma, we obtain that there exist absolute constants $c,C > 0$ such that the number of edges of $m$ in $G$ satisfies
	\[m \leq 2^{2rd} \cdot Ce^{-crd}.\]
	We abbreviate $f = Ce^{-crd}$, and observe that
	\[\sum_{b \in \{0,1\}^{rd}} \deg(b) = m \leq f \cdot 2^{2rd},\]
	where $\deg(b)$ denotes the degree of $b$ in $G$. Next, let $B = \{b \in \{0,1\}^{rd} : \deg(b) \geq 2^{rd}\sqrt{f}\}$, i.e., the set of nodes that have high degree. Then, by an argument that is usually referred to as the pigeonhole principle, we obtain that $|B| \leq 2^{rd}\sqrt{f}$.
	
	Let $b \in \{0,1\}^{rd} \setminus B$, and suppose that we can access to $b$ through the phase oracle
	\[O_{b,\varepsilon'} : \ket{j} \mapsto e^{i\varepsilon b_j}\ket{j}.\]
	We now use our $Q$-query mixed-state tomography algorithm to construct a new algorithm that recovers $b$ with some very low probability.
	
	The first step is to implement the unitary $U_b$ that maps 
	\[U_b : \ket{0} \mapsto \frac{1}{\sqrt{r}} \sum_{j=0}^{r-1} \ket{\psi_b^{(j)}}\ket{j}.\]
	Using the same construction as in \autoref{lem:lb-ell1-inverse}, we can construct a circuit implementing this unitary $U_b$ with $K$ calls to $O_{b,\varepsilon}$, where $K = \Theta(1)$. Next, since this unitary $U_b$ prepares a purification of $\rho_b$, we can use $Q$ queries to it to obtain an estimate $\widetilde{\rho}$, such that $\nrm{\widetilde{\rho} - \rho_b}_2 \leq \varepsilon/(2\sqrt{8r})$, with probability at least $2/3$.
	
	Next, suppose that $\overline{b} \in \{0,1\}^{rd}$ satisfies $\nrm{\widetilde{\rho} - \rho_{\overline{b}}}_2 \leq \varepsilon/(2\sqrt{8r})$. Then, by the triangle inequality, we have that $\nrm{\rho_b - \rho_{\overline{b}}}_2 \leq \nrm{\rho_b - \widetilde{\rho}}_2 + \nrm{\widetilde{\rho} - \rho_{\overline{b}}}_2 \leq \varepsilon/\sqrt{8r}$, and hence we find that $b$ and $\overline{b}$ are neighbors in $G$. Since we chose $b$ to be in $\{0,1\}^{rd} \setminus B$, we know that $\deg(b) \leq 2^{rd}\sqrt{f}$, and hence there are at most $2^{rd}\sqrt{f}$ choices for $\overline{b}$, among which is $b$ itself. Thus, if we uniformly choose one such $\overline{b}$, it will be equal to $b$ with probability at least $2/3 \cdot 2^{-rd}f^{-1/2}$.
	
	The procedure above uses $KQ$ queries to $O_{b,\varepsilon'}$, and recovers $b$ with probability at least $2/3 \cdot 2^{-rd}f^{-1/2}$. It is known that if we can solve this problem with $KQ$ queries to the fractional phase oracle $O_{b,\varepsilon'}$, we can also solve it with at most $K'KQ$ queries to the regular phase oracle $O_b$, with $K' = \Theta(\varepsilon)$.\footnote{See the footnote in \autoref{lem:lb-ell1-inverse} for more details.} According to \cite{farhi1999bounds},~Equation~4, this implies that
	\[2^{rd}-|B| \leq \frac32 \cdot 2^{rd}\sqrt{f} \cdot \sum_{k=0}^{K'KQ} \binom{rd}{k} \leq \frac32 \cdot 2^{rd}\sqrt{f} \cdot 2^{rdH\left(\frac{K'KQ}{rd}\right)},\]
	where $H(x) = -x\log(x) - (1-x)\log(1-x)$ is the binary entropy function, and the rightmost inequality can be found in several text books, e.g., \cite{flum2006parametrized},~Lemma~16.19.
	
	We can now plug everything into the above equation. Since $|B| \leq 2^{rd}\sqrt{f}$, in particular it is smaller than $2^{rd}/2$ for big enough $d$, and hence we write
	\[2^{rd-1} \leq 2^{\log(3)-1 + rd + \log(C) - crd\log(e) + rdH\left(\frac{K'KQ}{rd}\right)}.\]
	Dropping the powers of $2$ leaves us with
	\[\log(3) + \log(C) + rd\left(-c\log(e) + H\left(\frac{K'KQ}{rd}\right)\right) \geq 0,\]
	and thus $H(\frac{K'KQ}{rd}) = \Omega(1)$. Since the binary entropy function is monotonously increasing from $0$ to $1$ in the interval $[0,1/2]$, we find that $K'KQ = \Omega(rd)$, and hence $Q = \Omega(rd/\varepsilon)$.
\end{proof}

We now summarize the known lower bound results on mixed-state tomography with access to a state-preparation unitary.

\begin{theorem}
	Let $d \in \N$, $1 \leq r \leq d$ and $0 < \varepsilon = o(1/(dr))$. Let $\ket{\psi}$ be a purification of a density matrix $\rho \in \C^{d \times d}$ or rank at most $r$. Suppose that we have access to a unitary that prepares $\ket{\psi}$, and its inverse. Then, we have the following lower bounds on mixed-state tomography.
	\begin{enumerate}
		\item In order to obtain an estimate $\widetilde{\rho} \in \C^{d \times d}$ such that $\nrm{\widetilde{\rho} - \rho}_1 \leq \varepsilon$, we must call the state-preparation unitary at least $\Omega(\max\{d\sqrt{r}/\varepsilon,dr/\log(dr)\})$ times.
		\item If $\varepsilon = o(1/(dr))$, then in order to obtain an estimate $\widetilde{\rho} \in \C^{d \times d}$ such that $\nrm{\widetilde{\rho} - \rho}_2 \leq \varepsilon$, we must call the state-preparation unitary at least $\Omega(d\sqrt{r}/\varepsilon)$ times.
	\end{enumerate}
\end{theorem}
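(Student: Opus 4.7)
My plan is to decompose the theorem into three independent bounds and reduce each to either the previously established \autoref{thm:lb-frob-norm-tomo} or a packing/information-theoretic argument.

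For the Frobenius-norm bound (claim 2), the statement is essentially a relabelling of \autoref{thm:lb-frob-norm-tomo}. That theorem shows that any algorithm achieving Frobenius precision $\varepsilon/(2\sqrt{8r})$ must make $\Omega(dr/\varepsilon)$ queries. Substituting $\varepsilon' := \varepsilon/(2\sqrt{8r})$ and relabelling, this becomes $\Omega(d\sqrt{r}/\varepsilon')$ queries for precision $\varepsilon'$, which (up to an irrelevant constant factor in the matrix dimension induced by the embedding in the proof of \autoref{thm:lb-frob-norm-tomo}) is the claimed bound. The trace-norm bound $\Omega(d\sqrt{r}/\varepsilon)$ in claim 1 then follows immediately from the elementary singular-value inequality $\nrm{M}_2 = \sqrt{\sum_i \sigma_i^2} \leq \sum_i \sigma_i = \nrm{M}_1$: any trace-norm $\varepsilon$-approximation is automatically a Frobenius-norm $\varepsilon$-approximation, so we can invoke claim 2.

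The remaining bound $\Omega(dr/\log(dr))$ in the constant-$\varepsilon$ regime requires a separate argument, since the hypothesis $\varepsilon = o(1/(dr))$ of \autoref{thm:lb-frob-norm-tomo} is violated. I would proceed in three steps. First, construct a packing $\mathcal{P}$ of at least $2^{\Omega(dr)}$ rank-$r$ density matrices in $\C^{d\times d}$ that are pairwise at trace distance $\Omega(1)$; such a packing exists by a standard volumetric argument on the manifold of rank-$r$ density matrices, whose real dimension is $\Theta(dr)$. Second, observe that a successful trace-norm tomography algorithm at constant precision must, by rounding to the nearest element of $\mathcal{P}$, identify the unknown $\rho_b \in \mathcal{P}$ with constant probability, thereby extracting $\log_2|\mathcal{P}| = \Omega(dr)$ bits of information about $b$. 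Third, bound the information transferable from $b$ to the algorithm's output by $O(Q\log(dr))$, using the fact that every call to $U_b$ or $U_b^{\dagger}$ acts on an $O(\log(dr))$-qubit register, combined with a Holevo-type or hybrid-argument bound on the mutual information accumulated per query. Combining the last two steps yields $Q\log(dr) \geq \Omega(dr)$, i.e.\ $Q = \Omega(dr/\log(dr))$.

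The hardest step is the third: making the ``$O(\log(dr))$ bits per query'' heuristic rigorous against quantum algorithms with unbounded ancilla. A clean route is a recursive hybrid bound on the quantum mutual information $I(B; A)$ between the label register holding $b$ and the algorithm's working register $A$; each invocation of the state-preparation unitary cannot increase this mutual information by more than the log-dimension of the register on which it acts, up to constants, which is exactly $O(\log(dr))$. With this ingredient in place, all three pieces combine to yield the statement of the theorem.
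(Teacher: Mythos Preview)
Your proposal is correct and follows essentially the same route as the paper. Both arguments derive claim~2 (and the $d\sqrt{r}/\varepsilon$ part of claim~1, via $\nrm{\cdot}_2\le\nrm{\cdot}_1$) directly from \autoref{thm:lb-frob-norm-tomo} after the obvious relabelling, and both obtain the $\Omega(dr/\log(dr))$ bound from an exponential packing of rank-$r$ states combined with the observation that a single call to the state-preparation unitary can convey at most $O(\log(dr))$ bits about the hidden label. The only cosmetic difference is that the paper cites the packing from Haah et al.\ rather than arguing volumetrically; your concern about making the ``bits per query'' step rigorous is legitimate, but the paper is equally informal on this point, so you have not omitted anything the paper supplies.
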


\begin{proof}
It is shown in \cite{haah2017OptTomography} that $\Omega\trm{dr}$ rank-$r$ states exist that are at least a constant trace-distance away from each other. It then follows from an information theoretical argument (as we can learn at most $\log(dr)$ bits from a state-preparation unitary) that at least $\Omega\trm{dr/\log(dr)}$ queries are needed when $\eps = \Theta(1)$.

	The other lower bounds are \autoref{thm:lb-frob-norm-tomo}.
\end{proof}

The proofs of the two lower bounds in the first claim of the above theorem are somewhat different in nature, but nevertheless we expect them to hold simultaneously, i.e., we expect that the right bound is $\Omega(dr/\varepsilon)$, which matches the complexity of the algorithm that we give. The lower bound we present in this section is, to the best of our knowledge, the first to combine both the dependence on $r$ and $1/\varepsilon$. It proves tightness of our mixed-state tomography algorithm for estimating the density matrix in Frobenius norm, albeit only in the low-error regime where $\varepsilon = o(1/(dr))$. We expect that the construction outlined in this section also suffices to prove a lower bound of $\Omega(dr/\varepsilon)$ for the trace norm case, but it seems to require a more involved analysis than the one presented here.

\section{Open problems}
We end the paper with a discussion on some open questions.

\paragraph{State preparation without an inverse.} In \autoref{subsec:cond-samples} we consider tomography using conditional samples, and in \autoref{sec:sampling-lb} we show that our upper bounds are optimal up to log factors. Conditional samples are directly inspired by controlled usage of a state-preparation unitary, without access to the inverse of this unitary. Such a state-preparation unitary is at least as powerful as conditional samples, and at most as powerful as state preparation with the inverse as well. 

Even in the two dimensional case of standard amplitude estimation, the best upper bound of $\tilde{O}(1/\eps^2)$ comes from conditional samples, while the best lower bound of $\Omega(1/\eps)$ also holds when the inverse is allowed. Hence the question of finding a quantum algorithm that requires $o(1/\eps^2)$ applications of a controlled state-preparation unitary to perform amplitude estimation, and that does not require access to the inverse of this unitary. We conjecture that the answer is negative, but we are not aware of any lower bound techniques that differentiate between normal and inverse usage of an input oracle. 

\paragraph{Vector estimate conversions.} The two lemmas in \autoref{s:normstuff} still leave some open questions. While \autoref{lem:linftol2} gives the relation between amplitude and probability estimates in general, it is unclear whether a similar relation holds for amplitudes of a purification and the associated density matrix. \autoref{lem:infty_norm_to_trace} gives a relation between the $\ell_2$-norm for amplitudes and the Schatten-$1$-norm (tracer norm) for the density matrix, does a similar relation hold for the $\ell_q$-norm and Schatten-$\frac{1}{1/2+1/q}$-norm?

Similarly, \autoref{lem:new-norm-conversion} shows how to obtain a $\ell_q$-norm estimate of a $\ell_s$-normalized vector using an $\ell_\infty$-norm estimate. It is still unclear whether an $\ell_p$ norm estimate can be used in a similar manner to obtain an $\ell_q$-norm estimate, when $p>q>s$.

\paragraph{Simple sample-based estimates for mixed states in other norms.} All single-copy sampling methods for pure-state tomography that we are aware of estimate the state directly in a Schatten $q$-norm, and then convert to the trace norm. In order to find the initial estimate, a set of random measurements is performed, and an optimization problem is solved to find a $\tilde{\rho}$ that matches best with the measurement statistics. Could a very efficient estimate in the max-norm possibly lead to a simpler algorithm?  In \autoref{apdx:samplemixed} we show how a probability distribution can be constructed that is proportional to the $d^2$ elements in the density operator, so an $\ell_2$-norm approximation of this distribution gives a Frobenius norm estimate of $\rho$.

An alternative approach could be to use a procedure inspired by shadow tomography to estimate all the $E_{ij}$ and obtain a max-norm estimate with $\bigOt{1/\eps^2}$ samples. If these estimates can be made symmetric and unbiased, then this would imply an operator norm estimate with $\bigOt{d/\eps^2}$ samples, a Frobenius norm estimate with $\bigOt{dr/\eps^2}$ samples, and a trace norm estimate with $\bigOt{dr^2/\eps^2}$ samples. This would matching the optimal bound by~\cite{gross2010,haah2017OptTomography} for single copy measurements. There is some hope for this, as recent shadow tomography results~\cite{Huang2020shaddow} require only $\bigOt{1/\eps^2}$ copies when the Frobenius norm of the measurements is constant. Furthermore, these methods are rather simple, and there is no post processing needed, unlike the result by~\cite{gross2010,haah2017OptTomography} that requires the solution of a convex optimization problem. The main problem to overcome is that the outputs from shadow tomography might not be independent.  

\paragraph{Time complexity of expectation value estimation.}
When we apply expectation value estimation to mixed-state tomography, we give a tailored implementation of the block-encoding of $\sum_i \lambda_i E_i$ in order to avoid a large subnormalization. In general however the block-encoding is sub-normalized by $\sum_i  |\lambda_i|\nrm{E_i}$ due to the usage of the LCU-lemma. The pre-amplification of this block encoding then requires a number of iterations which scales with $N = \sum_i \nrm{E_i}$. 

On the other hand, the set of operators of the form $E_i/\nrm{\sum_j E_j}$ could be turned into a POVM measurement, as their sum has operator norm at most $1$. Hence, by estimating all expectation values of this POVM with precision $\eps/\nrm{\sum_j E_j}$ by simply measuring, we would be able to learn all original expectation values with precision $\eps$, and a sample complexity dependent on $\nrm{\sum_j E_j}$ (as opposed to $\sum_j \nrm{E_j}$. Can our techniques be improved to also depend on $\nrm{\sum_j E_j}$? Or, more generally, is there a version of the LCU-lemma and pre-amplification that achieves this time complexity? Low~\cite{low2018HamSimNearlyOptSpecNorm} uses a technique that might be related to this in order to improve sparse block-encodings for matrices with bounded norm, and a general answer might improve the method by Low slightly.

\paragraph{Closing the gap for mixed-state tomography in trace-norm.} We conjecture that the correct query complexity of mixed-state tomography with trace-norm error is $\Theta\trm{\frac{dr}{\eps}}$, i.e., our upper bounds are tight up to logarithmic factors. Our lower bounds, however, only show that $\Omega\trm{\frac{d\sqrt{r}}{\eps}+dr /\log(dr)}$ queries are needed.

\section*{Acknowledgements}
We are grateful to Srinivasan Arunachalam and Ronald de Wolf for useful discussions. Joran van Apeldoorn is supported by the Dutch Research Council (NWO/OCW), as part of QSC (024.003.037) and by QuantumDelta NL. András Gilyén acknowledges funding provided by the EU's Horizon 2020 Marie Skłodowska-Curie program 891889-QuantOrder.
Giacomo Nannicini is partially supported by the Army Research Office under grant number W911NF-20-1-0014.

\bibliographystyle{alphaUrlePrint}
\bibliography{biblio,qc_gily}

\appendix

\section{Direct mixed-state tomography using copies} \label{apdx:samplemixed}
We show how to perform mixed-state tomography with $O(rd^2/\eps)$
copies of the state and a small amount of quantum power. Note that the
well-known algorithm consisting of performing measurements in random
bases already achieves this sample complexity, see the discussion in
\cite{gross2010,flammia2012quantum,kueng2017low}; this is
optimal for unentangled, non-adaptive algorithms
\cite{haah2017OptTomography}. Thus, the algorithm presented here does not
improve over the known upper bounds. We discuss it anyway for the
following reasons: first, the algorithm is much easier to analyze than
existing algorithms; second, the algorithm uses very similar
techniques to Section~\ref{s:classicaltomo} for pure states; third, it is likely easier to implement.
\begin{proposition}
Let $\rho = \sum_{k=1}^r p_k \ket{\psi_k}\bra{\psi_k}$ for some
orthonormal $\ket{\psi_k} = \sum_{j \in [d]} \alpha^{(k)}_{j}
\ket{j}$. There is a quantum algorithm that, given $O(r d^2 / \eps^2)$
copies of $\rho$ and the ability to perform unitary operations on
them, outputs $\tilde{\rho}$ such that $\nrm{\rho - \tilde{\rho}}_1
\le \eps$ with probability at least $2/3$. The algorithm is
non-adaptive and does not require entangled measurements between
copies of $\rho$.
\end{proposition}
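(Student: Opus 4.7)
The plan is to combine a single-copy unbiased estimator of $\rho$ with bounded Frobenius-norm variance, empirical averaging, a rank-$r$ truncation, and the elementary rank-based inequality $\nrm{A}_1\leq\sqrt{\mathrm{rank}(A)}\nrm{A}_2$. Concretely, I will use the random-Clifford classical shadow of \cite{Huang2020shaddow}, which can be viewed as a randomised POVM whose outcome probabilities linearly encode the $d^2$ entries of $\rho$, uses only single-copy non-adaptive measurements, and requires no entanglement across copies.

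For each $t=1,\dots,T$, independently sample a uniformly random Clifford $U^{(t)}$ on $d$ dimensions (embedding into $\lceil\log_{2} d\rceil$ qubits if needed), apply it to the $t$-th copy of $\rho$, measure in the computational basis to obtain outcome $|b^{(t)}\rangle$, and form
\[
\hat\rho^{(t)} \;:=\; (d+1)\trm{U^{(t)}}^{\dagger}|b^{(t)}\rangle\langle b^{(t)}|U^{(t)} - I.
\]
Because the Clifford group is a $3$-design, the corresponding shadow channel satisfies $\Phi(\rho)=(\rho+I)/(d+1)$, immediately giving $\mathbb{E}[\hat\rho^{(t)}]=\rho$. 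A direct trace computation shows $\tr{\trm{\hat\rho^{(t)}}^{2}}=(d+1)^{2}-2(d+1)+d=d^{2}+d-1$, hence $\mathbb{E}[\nrm{\hat\rho^{(t)}-\rho}_{2}^{2}]\leq d^{2}+d-1$.

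Form the empirical mean $\tilde\rho:=\frac{1}{T}\sum_{t=1}^{T}\hat\rho^{(t)}$. By independence, $\mathbb{E}[\nrm{\tilde\rho-\rho}_{2}^{2}]\leq (d^{2}+d-1)/T$. Choosing $T=\Theta(rd^{2}/\eps^{2})$ and applying Markov's inequality yields $\nrm{\tilde\rho-\rho}_{2}\leq \eps/(2\sqrt{2r})$ with probability at least $2/3$. Let $\tilde\rho'$ be the best rank-$r$ Frobenius-norm approximation of $(\tilde\rho+\tilde\rho^{\dagger})/2$, obtained by truncating its eigendecomposition to the $r$ largest eigenvalues. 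Since $\rho$ itself has rank at most $r$, the triangle inequality gives $\nrm{\tilde\rho'-\rho}_{2}\leq 2\nrm{\tilde\rho-\rho}_{2}$, and because $\tilde\rho'-\rho$ has rank at most $2r$, the rank inequality yields $\nrm{\tilde\rho'-\rho}_{1}\leq\sqrt{2r}\,\nrm{\tilde\rho'-\rho}_{2}\leq\eps$.

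The main obstacle is establishing the sharp $\tr{\trm{\hat\rho^{(t)}}^{2}}=\bigO{d^{2}}$ bound rather than the naive $\bigO{d^{4}}$ that would follow from any construction that unbiasedly estimates a single off-diagonal entry by amplifying a bounded outcome by a $\Theta(d^{2})$ inverse-probability factor; the latter bound would propagate to an $\bigO{rd^{4}/\eps^{2}}$ sample complexity, which is too weak. The cancellation is exactly the Clifford $3$-design property and is the least elementary ingredient in the proof, although it is a standard black-box result; an equivalent route avoiding Cliffords is to construct any POVM $\{E_{k}\}_{k=1}^{\bigO{d^{2}}}$ whose induced probabilities $p_{k}=\tr{E_{k}\rho}$ determine $\rho$ under a bounded-norm linear inversion and estimate $p$ via the classical $\ell_{2}$-sampling bound. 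The remaining ingredients — empirical averaging, rank-$r$ projection, and the $\sqrt{2r}$ Frobenius-to-trace conversion — mirror the reductions already used in \autoref{thm:mixedTomoSch} and \autoref{lem:operatorToSchatten}.
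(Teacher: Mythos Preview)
Your proof is correct but takes a genuinely different route from the paper's. The paper constructs, for each shift $h\in[d]$, an explicit single-copy measurement built from a Hadamard test with a controlled cyclic shift $\ket{j}\mapsto\ket{(j-h)\bmod d}$; the resulting $2d$ outcome probabilities are $\frac14(\rho_{j,j}\pm 2\Re\rho_{j,j+h}+\rho_{j+h,j+h})$, and a classical $\ell_2$ estimate of this distribution with $O(1/\bar\eps^{2})$ samples recovers the $h$-th diagonal of $\Re\rho$ (and similarly $\Im\rho$ via a phase gate). Looping over all $d$ shifts and stitching the estimates together gives a Frobenius-norm estimate with $O(d^{2}/\eps^{2})$ copies, after which the same rank-$r$ truncation and $\nrm{\cdot}_1\leq\sqrt{2r}\nrm{\cdot}_2$ conversion you use yields the trace-norm bound.

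The trade-off is clear: the paper's argument is entirely elementary---no random unitaries, no design properties, no black-box shadow machinery---which is precisely the point of the appendix (it is advertised as ``much easier to analyze'' and ``likely easier to implement''). Your approach is shorter once one is willing to import the Clifford $2$-design fact (that is all you need for $\Phi(\rho)=(\rho+I)/(d+1)$; the $3$-design property is not required for unbiasedness, and your variance bound $\tr{(\hat\rho^{(t)})^{2}}=d^{2}+d-1$ is a deterministic one-line computation on $(d+1)P-I$, not a design statement at all). So your ``main obstacle'' paragraph overstates the difficulty: the sharp $O(d^{2})$ second moment is immediate from the form of the estimator, and the only non-elementary step is really just the unbiasedness. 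Your rank-truncation step is in fact spelled out more carefully than the paper's, which leaves the corresponding projection implicit.
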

\begin{proof}
Recall that $\rho$ is a $d \times d$ matrix with entries:
\begin{align*}
  \rho_{u, v} = \sum_{k=1}^r p_k \alpha^{(k)}_{u}
  (\alpha^{(k)}_{v})^{\dag}.
\end{align*}
To avoid cumbersome equations, it is easier to analyze the algorithm
by working with a purification $\ket{\rho} = \sum_{k=1}^r \sqrt{p_k}
\ket{\psi_k}_A \ket{\phi_k}_B$ of $\rho$, where $\ket{\phi_k}$ are
orthonormal; note that we never act on the purifying register, and the
purification is solely for convenience. Add one fresh qubit in state
$\ket{0}$ to the system; suppose it is the first. For $h \in [d]$,
apply a Hadamard on the first qubit, followed by the unitary
$\ket{0}\bra{0} \otimes I_A \otimes I_B + \ket{1}\bra{1} \otimes
\sum_{j \in [d]} \ket{(j-h) \mod d}\bra{j} \otimes I_B$, and finally
another Hadamard on the first qubit. In the following, for brevity we
write $j+h$ instead of $(j+h) \mod d$: we use this notation only to
index basis elements, so the context should avoid any ambiguity. The
larger system is now described by the following pure state:
\begin{align*}
  \frac{1}{2} \ket{0} \sum_{k=1}^r \sqrt{p_k} \left(\sum_{j \in [d]}
  (\alpha^{(k)}_{j} + \alpha^{(k)}_{j+h}) \ket{j} \ket{\phi_k}\right)
  + \frac{1}{2} \ket{1} \sum_{k=1}^r \sqrt{p_k} \left(\sum_{j \in [d]}
  (\alpha^{(k)}_{j}-\alpha^{(k)}_{j+h})\ket{j} \ket{\phi_k}\right).
\end{align*}
Next, we trace out the purifying register $B$, and compute the
probability of finding the first qubit in state $\ket{0}$ and system
$A$ in state $\ket{j}$:
\begin{align*}
  \frac{1}{4} \sum_{k=1}^r p_k (\alpha^{(k)}_{j} + \alpha^{(k)}_{j+h})
  (\alpha^{(k)}_{j} + \alpha^{(k)}_{j+h})^{\dag} &= \frac{1}{4}
  \sum_{k=1}^r p_k \left(|\alpha^{(k)}_{j}|^2 + 2\Re(\alpha^{(k)}_{j}
  (\alpha^{(k)}_{j+h})^{\dag}) + |\alpha^{(k)}_{j+h}|^2\right) \\ &=
  \frac{1}{4} \left(\rho_{j,j} + 2\Re(\rho_{j,j+h}) +
  \rho_{j+h,j+h}\right) = q^{(h)}_{0j}.
\end{align*}
Similarly, the probability of finding the first qubit in state
$\ket{1}$ and system $A$ in state $\ket{j}$ is:
\begin{align*}
  \frac{1}{4} \left(\rho_{j,j} - 2\Re(\rho_{j,j+h}) +
  \rho_{j+h,j+h}\right) = q^{(h)}_{1j}.
\end{align*}
By definition the vector $q^{(h)}$ represents a discrete probability
distribution. We can obtain an $\ell_2$-norm estimate
$\tilde{q}^{(h)}$ of $q^{(h)}$ with error $\bar{\eps}$ taking
$O(1/\bar{\eps}^2)$ samples, see
\cite{apeldoorn2021QProbOraclesMulitDimAmpEst}. Note that for $h=0$,
this immediately yields an estimate
$(\tilde{\rho}_{0,0},\dots,\tilde{\rho}_{d-1,d-1})$ of the diagonal of
$\rho$ with $\ell_2$-norm error at most $\bar{\eps}$. For $h \in [d]
\setminus \{0\}$, we can then compute an estimate
$\tilde{\rho}_{j,j+h}$ for the real part of $\rho_{j,j+h}$ as
$2(\tilde{q}^{(h)}_{0j} - \frac{1}{2}\tilde{\rho}_{j,j} -
\frac{1}{2}\tilde{\rho}_{j+h,j+h})$. For convenience, let us call $v$
the vector with entries $\rho_{j,j}$ for $j \in [d]$, $v^{(h)}$ the
vector with entries $\rho_{j+h,j+h}$, and similarly for $\tilde{v}$
and $\tilde{v}^{(h)}$. The total $\ell_2$-norm squared error for a set
of $d$ of these off-diagonal elements can be bounded as follows:
\begin{align*}
  \sum_{j \in [d]} (\tilde{\rho}_{j,j+h} - \rho_{j,j+h})^2 = 2\sum_{j
    \in [d]} \left((\tilde{q}^{(h)}_{0j} - \frac{1}{2}\tilde{\rho}_{j,j} -
  \frac{1}{2}\tilde{\rho}_{j+h,j+h}) - (q^{(h)}_{0j} - \frac{1}{2}\rho_{j,j} -
  \frac{1}{2}\rho_{j+h,j+h})\right)^2 = \\ 2 \nrm{(\tilde{q}^{(h)}_{0} -
    \frac{1}{2}\tilde{v} - \frac{1}{2}\tilde{v}^{(h)}) - (q^{(h)}_{0} - \frac{1}{2}v - \frac{1}{2}v^{(h)})}^2 \le
  2 \Big( \nrm{\tilde{q}^{(h)}_{0} - q^{(h)}_{0}}^2 + \frac{1}{4}\nrm{\tilde{v} -
    v}^2 + \frac{1}{4}\nrm{\tilde{v}^{(h)} - v^{(h)}}^2 +
  \\ \frac{1}{2}\nrm{\tilde{q}^{(h)}_{0} - q^{(h)}_{0}}\nrm{\tilde{v} - v} +
  \frac{1}{2}\nrm{\tilde{q}^{(h)}_{0} - q^{(h)}_{0}}\nrm{\tilde{v}^{(h)} -
    v^{(h)}} + \frac{1}{4}\nrm{\tilde{v} - v}\nrm{\tilde{v}^{(h)} - v^{(h)}}\Big)
  \le 6 \bar{\eps}^2,
\end{align*}
where we use Cauchy-Schwarz plus the fact that
$\nrm{\tilde{q}^{(h)}_{0} - q^{(h)}_{0}}$, $\nrm{\tilde{v} - v}$ and
$\nrm{\tilde{v}^{(h)} - v^{(h)}}$ are all $\le \bar{\eps}$. This
implies that we can get an $O(\bar{\eps})$-$\ell_2$-estimate of the
real part of $d$ elements of $\rho$ with $O(1/\bar{\eps}^2)$
samples. A similar approach, with the addition of a phase gate to
multiply all coefficients by $i$, allows us to retrieve the imaginary
part with the same complexity.

The above algorithm is repeated $d$ times, for $h \in [d]$. Combining
these $d$ estimates of $d$ coefficients each, setting $\bar{\eps} =
\eps/\sqrt{d}$, we obtain $\tilde{\rho}$ such that $\nrm{\tilde{\rho}
  - \rho}_F \le \eps$ taking $O(d^2/\eps^2)$ samples. To convert from
Frobenius norm to trace norm, using the fact that there are at most
$r$ nonzero eigenvalues by assumption, we need to decrease the error
$\bar{\eps}$ by a further factor $\sqrt{r}$. Then, this yields a
trace-norm estimate of $\rho$ with $O(rd^2/\bar{\varepsilon}^2)$
samples.
\end{proof}

\section{Implementing a QRAM}\label{apd:qram}
In this appendix we prove our claim that a $d$-qubit QRAM can be implemented with $\bigO{d}$ gates in $\bigO{\log(d)}$ depth. Although QRAM implementations have been discussed at length in the literature, e.g.~\cite{glm2008qram} and follow-up works, these discussions focus on the number of ``activated'' gates. While physically relevant in order to argue about error rates, from a complexity point of view there is no difference between an activated or non-activated gate. 

We expect that the results below appear in the literature, but we were unable to locate them and hence proof them for completeness. If the reader is aware of earlier works with the same results, we would be grateful if the could inform us so that we can update this section to give proper attribution. 

\begin{lemma}
  Let $d$ be a power of $2$. 
  There is a unitary, called indexed-CNOT-out (stylized iCNOTo), acting on $\log(d)+1+d$ qubits plus $2d-3$ ancillary qubits that can be implemented using $2d-2-2\log(d)$ CNOT gates and $4d-4$ Toffoli and X gates in $10\log(d)$ depth, and acts as follows on computational basis states
  \[
  \text{iCNOTo}\ket{i}\ket{b}\ket{q_1}\dots\ket{q_d} = \ket{i}\ket{b\oplus q_i}\ket{q_1}\dots\ket{q_d}.
  \]

  There is also a unitary, called indexed-CNOT-in (stylized iCNOTi), acting on the same amount of qubits, that can be implemented in the same depth and number of gates, acting as
  \[
  \text{iCNOTi}\ket{i}\ket{b}\ket{q_1}\dots\ket{q_d} = \ket{i}\ket{b}\ket{q_1}\dots\ket{q_{i-1}}\ket{q_{i}\oplus b}\ket{q_{i+1}}\dots\ket{q_d}.
  \]
\end{lemma}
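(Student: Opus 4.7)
My plan is to implement iCNOTo via a bucket-brigade binary tree of depth $\log d$ whose leaves are the data qubits $q_1,\dots,q_d$; iCNOTi will follow by a symmetric one-gate modification at the root. At every internal node $v$ of the tree I place an ancilla $a_v$ and route the value of the selected child upward using the primitive
\[
 X(s),\ \mathrm{Toffoli}(s,L,a_v),\ X(s),\ \mathrm{Toffoli}(s,R,a_v),
\]
which produces $a_v=\bar s L\oplus sR$ with two Toffolis and two $X$ gates, where $L,R$ are the children and $s$ is the selector copy dedicated to $v$. After $\log d$ levels, the would-be root is folded directly into $b$ (i.e.\ the final routing step writes into $b$ rather than into a new ancilla), so that $b\leftarrow b\oplus q_i$ is delivered ``for free'' and one ancilla plus one CNOT are saved.

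For every level of the tree to act in parallel, index bit $i_k$ must be available simultaneously at $2^{k-1}$ nodes, so I first fan $i_k$ out with a balanced tree of $2^{k-1}-1$ CNOTs into as many fresh ancillas. Summed over $k=1,\dots,\log d$ the forward fan-out costs $(d-1-\log d)$ CNOTs, and the uncomputation at the end costs the same. The full circuit is then: (1) fan out the $i_k$'s, (2) route the selected $q_i$ up the tree with the folded root writing into $b$, (3) reverse stages 2 and 1. Since the selection tree is a basis-permuting unitary, reversing it restores every $a_v$ to $\ket 0$ and leaves $q_1,\dots,q_d$ untouched. Summing contributions, the two fan-outs yield $2(d-1-\log d)=2d-2-2\log d$ CNOTs, while the forward and reverse routing together yield $4(d-1)=4d-4$ Toffoli/$X$ gates (the $X$ layers batch with their Toffolis at each level and do not add depth); one ancilla from the fan-out pool can be reused as the root position, so the ancilla count comes to $2d-3$.

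For iCNOTi the only change is at the folded root: I CNOT $b$ into the root routing position instead of CNOT-ing it out. The subsequent un-routing of stage 2 then writes the XOR-updated value precisely into leaf $q_i$ because each routing Toffoli is its own inverse, delivering $\ket{q_i\oplus b}$ at that position while all other leaves and all ancillas return to their initial values. The main obstacle is the depth bookkeeping: a naive schedule that finishes each fan-out before starting any routing gives $O(\log^2 d)$ depth. I would therefore pipeline the fan-out of $i_k$ against the routing at levels below $k-1$ so that $i_k$ becomes available exactly when level $k-1$ needs it. With this staggered schedule each of the $\log d$ routing levels plus its accompanying fan-out extension contributes constant depth, giving the desired $10\log d$ bound once the per-level constant ($X,\mathrm{Toffoli},X,\mathrm{Toffoli}$ forward and its mirror in reverse) is tallied.
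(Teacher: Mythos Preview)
Your construction is essentially the same as the paper's: a binary ``tournament'' tree with the routing primitive $X,\mathrm{Toffoli},X,\mathrm{Toffoli}$ at each internal node, fan-out of the index bits to supply the selectors, folding the root into $b$, and uncomputation. Your gate and ancilla counts match the paper's exactly.

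The one place where you diverge is the depth analysis, and there you make things harder than necessary. You worry that ``a naive schedule that finishes each fan-out before starting any routing gives $O(\log^2 d)$ depth'' and therefore propose a pipelined schedule. But this concern is unfounded: the $\log d$ different fan-out trees (one per bit $i_k$) act on pairwise disjoint sets of qubits, so they can all run \emph{in parallel}. The deepest of them (fanning $i_1$ to $d/2$ copies) has depth $\log(d)-1$, so the entire fan-out stage costs depth $\log d$, not $\log^2 d$. After that the tournament contributes depth $4$ per level for $\log d$ levels, giving $5\log d$ forward; doubling for uncomputation yields $10\log d$ directly. This is exactly the paper's argument, and it avoids the need to verify that your pipelined schedule actually achieves the stated constant.
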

\begin{proof}
  We first note that a FANOUT gate acting (for $a\in \01$) as
  \[
  \text{FANOUT} \ket{a}\ket{0^k} = \ket{a^{k}}
  \]
  can be build using $k-1$ CNOT gates in depth $\log(k)$. 

  We will implement the ICNOTo gate as a tournament bracket. In the first step, if $i$ is even then we first copy over all $q_j$ for even $j$ to a fresh layer of $d/2$ qubits. If $i$ is odd then we do this for the odd $j$. The information whether $i$ is even or odd is contained in its least significant bit, which, using a FANOUT to $d/2$ can be distributed to $d/2$ fresh qubits in depth $\log(d)-1$. Now, conditioned on the $k$th of these parity qubits either $q_{2k}$ or $q_{2k+1}$ is put in a fresh qubits, using $2$ Toffoli gates and $2$ X gates in depth $4$.

  We then do exactly the same circuit for the next layer, as if we were implementing a iCNOTo on $d/2$ qubits. After $\log(d)$ levels we end up with a (fixed) register in the state $\ket{q_i}$, and we can CNOT this value with $\ket{b}$. In fact, we can use $\ket{b}$ as the target for the final level, instead of a fresh qubit. After this we can uncompute all intermediate values using the same depth and gate count.

  For the depth, note that all FANOUT gates can be performed in parallel. The deepest has depth $\log(d)$. The tournament bracket has depth $4$ per layer, and $\log(d)$ layers. Including the uncompute the total depth is $10\log(d)$.

  As for the ancillary qubits, there are $d-1$ parity bits used, one for each decision in the tournament bracket. There are $d-2$ intermediate bits used in the tournament, as we use $b$ for the final result. Hence the circuit uses $2d-3$ ancillary qubits.

 The CNOT count of all the fan outs is $\sum_{i=1}^{\log(d)} \left(\frac{d}{2^i}-1\right) = d-1-\log(d)$. The tournament requires $2$ Toffoli gates per decision, of which there are $d-1$, so the Toffoli count of this part is $2d-2$ (and the X count is the same). The total, including uncomputation becomes $2d-2-2\log(d)$ CNOT gates, and $4d-4$ Toffoli and X gates.
  
The iCNOTi gate is implemented in almost the same way, but now $b$ is distributed from the top of the tournament to the leave corresponding to $q_i$. 
\end{proof}

There are two types of indexed SWAP that we may build. The first type has a fixed qubit that can be swaped with the $i$th qubit controlled on $i$. The second, most general indexed SWAP is controlled by both an $i$ and $j$ register and swaps the two. In the body of the paper we do not make this disintion, as there complexities are of the same order, but as the constant differ we will do so here.

\begin{lemma}
  Let $d$ be a power of $2$. 
  There is a unitary, called single-indexed-SWAP (stylized iSWAP\footnote{Note that this is not related to the iSWAP gate that applies the phase $i$ if qubits are swapped, sometimes discussed in the literature.}), acting on $\log(d)+1+d$ qubits plus $2d-3$ ancillary qubits that can be implemented using $2d-2-2\log(d)$ CNOT gates and $12d-12$ Toffoli and X gates in $26\log(d)$ depth, and acts as follows on computational basis states
  \[
  \text{iSWAP}\ket{i}\ket{b}\ket{q_1}\dots\ket{q_d} = \ket{i}\ket{q_i}\ket{q_1}\dots\ket{q_{i-1}}\ket{b}\ket{q_{i+1}}\dots\ket{q_d}.
  \]

  There is also a unitary, called double-indexed-SWAP (stylized iiSWAP), acting on $2\log(d)+d$ qubits plus $4d-5$ ancillary qubits, that can be implemented using $4d-4-4\log(d)$ CNOT gates and $36d-36$ Toffoli and X gates in $74\log(d)$ depth, and acts as follows on computational basis states
  \[
  \text{iiSWAP}\ket{i}\ket{j}\ket{q_1}\dots\ket{q_d} = \ket{i}\ket{j}\ket{q_1}\dots\ket{q_{i-1}}\ket{q_{j}}\ket{q_{i+1}}\dots\ket{q_{j-1}}\ket{q_{i}}\ket{q_{j+1}}\dots\ket{q_d}.
  \]  
\end{lemma}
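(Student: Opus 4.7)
The plan is to build \text{iSWAP} from the indexed CNOT primitives of the previous lemma and then \text{iiSWAP} from \text{iSWAP}. First, observe that SWAP decomposes into three CNOTs as $\text{SWAP}(a,b) = \text{CNOT}(a,b)\cdot \text{CNOT}(b,a)\cdot \text{CNOT}(a,b)$. Substituting the indexed versions in the roles of $b$ and $q_i$ yields $\text{iSWAP} = \text{iCNOTi}\circ \text{iCNOTo}\circ \text{iCNOTi}$, with all three operations sharing the same index register $\ket{i}$; a quick check on computational basis states confirms that an initial pair $(B,Q)=(b,q_i)$ is mapped to $(Q,B)$. A naive implementation costs $3\times$ the resources of one iCNOT, but we can do better by sharing setup work.

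The key observation is that each iCNOT consists of an outer FANOUT layer that distributes the bits of $i$ across the $2d-3$ ancillary qubits, followed by a tournament bracket together with its uncompute. Since all three iCNOTs use the same $i$, we compute the FANOUT once at the start and uncompute it once at the end, running the three tournament brackets sequentially in between. This yields $2(d-1-\log d)=2d-2-2\log d$ CNOTs from the single build+unbuild of FANOUTs, $3\cdot(4d-4)=12d-12$ Toffoli and X gates from three tournament bracket pairs, and depth $\log d + 3\cdot 2\cdot 4\log d + \log d = 26\log d$, matching the claim. The ancilla count is unchanged from iCNOT since no new scratch space is needed.

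For \text{iiSWAP} we introduce a single workspace qubit $b$ initialized to $\ket{0}$ (which joins the ancilla of the two tree structures for indices $i$ and $j$, giving $1 + 2(2d-3)=4d-5$ total ancilla) and apply three iSWAPs: $\text{iSWAP}(i,b)$, $\text{iSWAP}(j,b)$, $\text{iSWAP}(i,b)$. Tracking the state, the first moves $q_i$ into $b$ (and $0$ into position $i$), the second moves $q_j$ into $b$ and $q_i$ into position $j$, and the third deposits $q_j$ into position $i$ while returning $b$ to $\ket{0}$, yielding exactly the desired swap of $q_i$ and $q_j$. We share the FANOUT for $i$ across the first and third iSWAPs by leaving it in its built state during the middle iSWAP, which touches only the $j$-tree ancilla; this gives $4d-4-4\log d$ total FANOUT CNOTs and $3\cdot(12d-12)=36d-36$ Toffoli and X gates.

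The only delicate step is the depth bound $74\log d$ for \text{iiSWAP}: a purely sequential schedule yields $76\log d$, but the two FANOUT layers for $j$ each operate on a set of ancillary qubits disjoint from the qubits touched by the adjacent tournament bracket passes of the $i$-iSWAPs, so each of these two $j$-FANOUT layers (of depth $\log d$) can be executed in parallel with the tail of a neighbouring tournament pass (of depth $4\log d$), saving $2\log d$ overall. The main thing to verify carefully is precisely this disjointness of qubit supports, ensuring that the parallel composition is sound; once that is established, everything else is a routine counting argument that assembles the complexity bounds of the previous lemma.
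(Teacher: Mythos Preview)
Your proof is correct. For \text{iSWAP} it is essentially the paper's argument, up to the harmless choice of which three-CNOT decomposition of SWAP you use (you take $\text{iCNOTi}\circ\text{iCNOTo}\circ\text{iCNOTi}$, the paper takes two $\text{iCNOTo}$ and one $\text{iCNOTi}$).

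For \text{iiSWAP} you take a genuinely different but equally valid route. The paper first builds a \emph{doubly-indexed CNOT} (XOR $q_i$ into $q_j$) as $\text{iCNOTo}(i)\cdot\text{iCNOTi}(j)\cdot\text{iCNOTo}(i)$ via a temporary bit, and then composes three such gadgets into a SWAP; you instead compose three iSWAPs $\text{iSWAP}(i)\cdot\text{iSWAP}(j)\cdot\text{iSWAP}(i)$ through a workspace bit. Both approaches unwind to exactly nine iCNOT tournament passes plus the two FANOUT trees, so all gate and ancilla counts coincide. One small simplification: your depth argument for $74\log d$ is more intricate than needed. Since the $i$-FANOUT and $j$-FANOUT touch disjoint qubit sets, you can simply build both trees in parallel at the very start and tear both down in parallel at the very end, giving $\log d + 9\cdot 8\log d + \log d = 74\log d$ directly, with no need to overlap a $j$-FANOUT with an adjacent $i$-tournament pass. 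This is in effect what the paper does, and it sidesteps the ``delicate step'' you flag.
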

\begin{proof}
For the iSWAP implementation we note that the SWAP gate can be implemented using $3$ CNOT gates. In particular we can use two calls to iCNOTo and a single call to $iCNOTi$. Note that we can reuse the parity information bits and do not need to repeat the FANOUT. 

For the iiSWAP, we note that we can perform a doubly indexed CNOT, i.e., a CNOT from qubit $i$ to qubit $j$, by first retrieving the bit in the $i$th position with a iCNOTo, then running iCNOTi with index $j$, and then erasing the recovered bit with another call to iCNOTo. We can reuse the $b$ bit for this.
Again, $3$ of these doubly indexed CNOTs are sufficient to implement a iiSWAP. We can again reuse the parity bits without redoing the FANOUT, but we have to implement these bits for both $i$ and $j$. The stated counts follow.
\end{proof}

\end{document}